\documentclass[12pt, draftclsnofoot,onecolumn]{IEEEtran}
\ifCLASSINFOpdf

\else

\fi
\usepackage{subeqnarray}
\usepackage{mathrsfs}
\usepackage{amsfonts}
\usepackage{amsmath}
\usepackage{amsthm}
\usepackage{amssymb}
\usepackage{amstext}
\usepackage{graphicx}
\usepackage{subfig}   % for subfloat (IEEE compatible)
\usepackage{indentfirst}
\usepackage{array}
\usepackage{cite}
\usepackage[numbers,sort&compress]{natbib}
\usepackage{enumerate}
\usepackage{stmaryrd}
\usepackage[linesnumbered,ruled,vlined]{algorithm2e}
\usepackage{multirow}
\usepackage{multicol}
\usepackage{verbatim}
\usepackage{stfloats}
\usepackage{color}
\usepackage{bm}
\usepackage{multirow}
\usepackage{multicol}
\usepackage{microtype}

\newtheorem{theorem}{\bf{Theorem}}
\newtheorem{lemma}{\bf{Lemma}}

\newtheorem{corollary}{\bf{Corollary}}

\newtheorem{definition}{\bf{Definition}}
\newtheorem{remark}{\bf{Remark}}
\newtheorem{example}{\bf{Example}}
\usepackage{hyperref}

\makeatletter
\def\endIEEEproof{\hspace*{\fill}~\hollowsquare\par}
\makeatother

\newcommand{\hollowsquare}{\text{$\square$}}

\begin{document}
%
% paper title
% can use linebreaks \\ within to get better formatting as desired
\title{A Mathematical Theory of Pragmatic Information}

% author names and affiliations
% use a multiple column layout for up to three different
% affiliations
\author{Kai Niu, \IEEEmembership{Senior Member,~IEEE}, Ping Zhang,  \IEEEmembership{Fellow,~IEEE}
\thanks
{
This work is supported by the National Key R\&D Program of China (No. 2025YFF0514400, No. 2025YFF0514404), the Fundamental and Interdisciplinary Disciplines Breakthrough Plan of the Ministry of Education of China (No. JYB2025XDXM119), and the National Natural Science Foundation of China (No. 62293481, No. 62471054).\protect\\
\indent K. Niu is with the Key Laboratory of Universal Wireless Communications, Ministry of Education,
and P. Zhang is with the State Key Laboratory of Networking and Switching Technology, Beijing University of Posts and Telecommunications,  Beijing, 100876, China (e-mail: \{niukai, pzhang\}@bupt.edu.cn). \protect\\
}}

\maketitle

\begin{abstract}
We propose a mathematical theory of pragmatic information that unifies communication, control, and decision-making within a coherent framework. At its core is the isoteleia mapping, which formalizes equifinality: distinct semantic paths that lead to the same optimal action are regarded as pragmatically equivalent. This mapping induces a three-tier hierarchy of syntactic, semantic, and pragmatic information, in which each successive abstraction discards distinctions that are irrelevant to the task. Building on this hierarchy, we develop a set of information-theoretic measures at the pragmatic level, including pragmatic entropy, up/down pragmatic mutual information, channel capacity, and rate-distortion. In addition, we prove three coding theorems---lossless source coding, channel coding, and rate-distortion---which generalize Shannon's classical results. We then introduce pragmatic value of information (VoI) and pragmatic cost of information (CoI) as decision-theoretic duals to rate-distortion and capacity, respectively, and formulate a Lagrangian dual framework for cross-layer optimization. The resulting pragmatic efficiency bound \(\mathcal{E}_p(\lambda) = \sup_R [\Phi_p(R) - \lambda \, \mathrm{CoI}_p(R)]\) characterizes the maximum net utility attainable by a resource-constrained intelligent system under a given resource price, thereby providing a limit on purposeful behavior---a \emph{behavioral capacity}. This limit can be viewed as a goal-directed analogue of Shannon`s symbol-level capacity under the stated utility and resource assumptions. Extensions to continuous messages yield closed-form expressions for Gaussian channels and sources, while dynamic settings are addressed through a Bellman equation for sequential decision-making. Finally, this framework provides a foundation for task-oriented communication, networked control, autonomous systems, and embodied AI, shifting the focus from symbol fidelity to the effectiveness of information in guiding actions, and offering a mathematical language for the design of next-generation intelligent systems.
\end{abstract}

\begin{IEEEkeywords}
Pragmatic Information Theory, Isoteleia Mapping, Reification Mapping, Pragmatic Entropy, Pragmatic Mutual Information, Pragmatic Channel Capacity, Pragmatic Rate-Distortion, Value of Information, Cost of Information, Pragmatic Lagrangian, Behavioral Capacity

\end{IEEEkeywords}

\IEEEpeerreviewmaketitle

\section{Introduction}
\label{section_I}

\subsection{Background and Motivation}

Classic information theory (CIT), established by C. E. Shannon in 1948~\cite{Classicpaper_Shannon,Shannon_Weaver}, was a monumental achievement that laid the mathematical foundations for modern communication. By quantifying information through entropy, mutual information, channel capacity, and rate-distortion functions, CIT provided fundamental limits on data compression and reliable transmission over noisy channels. These theoretical bounds have guided the development of practical coding schemes over the past seven decades, with techniques such as Huffman coding, arithmetic coding, turbo codes, LDPC codes, and polar codes approaching the Shannon limits~\cite{Classicpaper_Shannon,Shannon_Weaver}. In parallel, the emergence of cybernetics and control theory established complementary foundations for understanding and designing dynamical systems. Wiener's cybernetics~\cite{Wiener1948} first unified the concepts of communication and control, while Tsien's engineering cybernetics~\cite{Tsien1954} and Bellman's dynamic programming~\cite{Bellman1954} provided powerful tools for optimal decision-making under uncertainty. Kalman's filtering theory~\cite{Kalman1960} further bridged estimation and control, enabling real-time state estimation from noisy observations. Decision theory, particularly through the work of Stratonovich on the value of information (VoI)~\cite{Paper_VoI,Book_VoI}, addressed the fundamental question of how much utility can be gained from acquiring information before making a decision. Despite their shared intellectual roots, these three pillars—communication theory, control theory, and decision theory—have largely developed along independent trajectories, with limited cross-fertilization among them.

Weaver~\cite{Shannon_Weaver,Semantic_Weaver} first articulated that communication involves not only the accurate transmission of symbols (Level A) but also the conveyance of meaning (Level B) and the effectiveness of conduct (Level C). This layered view has motivated numerous efforts to formalize semantic information. Early attempts to formalize semantic information include Carnap and Bar-Hillel's logical approach based on propositional logic~\cite{Semantic_Carnap}, Floridi's theory of strongly semantic information~\cite{Semantic_Floridi}, and fuzzy entropy formulations by De Luca and Termini~\cite{Entropy_Luca,Fuzzy_Luca} and Wu~\cite{Wuweiling}. Zhong \cite{Zhong1986,Zhong1998} and Lu \cite{Lu1994,Lu2025} independently generalized Shannon's information theory to the semantic and pragmatic levels. Bao et al.~\cite{Semantic_Bao} extended the logical framework to derive semantic source and channel coding theorems. More recently, Liu et al.~\cite{RateD_Liu} and others~\cite{SideInfo_Guo,Theory_Shao,Theory_Tang} investigated rate-distortion frameworks for semantic information. However, a systematic information-theoretic treatment of semantic communication remained elusive until the work of Niu and Zhang~\cite{Paper_SIT,Book_SIT,Paper_Beyond_Shannon}, who established a rigorous mathematical theory of semantic communication centered on the concept of synonymy, the principle that the same meaning can be expressed by many different syntactic realizations. Through the synonymous mapping, they introduced semantic entropy, up/down semantic mutual information, semantic channel capacity, and semantic rate-distortion functions, proving corresponding coding theorems that demonstrate performance gains over classical counterparts.

Despite these pioneering efforts, a systematic information-theoretic treatment of pragmatic information has remained largely absent. Early philosophical inquiries~\cite{Weizsacker1972} and historical surveys~\cite{Gernert2006} provided valuable conceptual foundations but stopped short of developing a rigorous mathematical framework. The quantitative formulation by Weinberger~\cite{Weinberger2002,Weinberger2024}, while introducing a formal definition of pragmatic information as the change in an agent's action distribution, was primarily developed within the context of biological evolution and did not establish fundamental coding theorems, rate-distortion limits, or channel capacity characterizations. More critically, none of these prior works provided a unified framework that integrates pragmatic information with syntactic and semantic layers. Consequently, a comprehensive mathematical theory capable of quantifying, compressing, and reliably transmitting pragmatic information in engineered communication systems has been lacking until now.

The recent surge of interest in semantic communication, driven by advances in deep learning and the growing demand for intelligent communications, has further highlighted the limitations of the classical syntactic paradigm. Deep learning-based semantic communication systems have demonstrated significant performance improvements in tasks such as image transmission, speech communication, and text understanding~\cite{Survey_Shi,Survey_Qin,Survey_Xie,Survey_Gunduz}. These systems extract and transmit semantic features relevant to the task, moving beyond bit-level fidelity to meaning-level effectiveness. The convergence of communication and artificial intelligence has opened new possibilities for task-oriented communication systems that prioritize the end goal over exact symbol reconstruction. Zhang et al.~\cite{Semantic_ZhangPing} and Niu et al.~\cite{Semantic_Niu} have articulated the vision of wisdom-evolutionary and primitive-concise communication networks for 6G, where semantic information plays a central role.

Equally important is the convergence of communication and control, extensively studied in networked control systems (NCS) and cyber-physical systems (CPS), where joint design of communication and control policies is essential for stability and performance under bandwidth constraints. Foundational results on this co-design have been established by Xiao et al.~\cite{Xiao2003Joint}, with comprehensive surveys provided by Park et al.~\cite{Park2018WNCS}, Wang et al.~\cite{Wang2023WNCS}, Zhao et al.~\cite{Zhao2019CoDesign}, Negi and Chakrabortty~\cite{Negi2022Optimal}, and Lu and Guo~\cite{Lu2023Survey}. Tishby and Polani~\cite{Tishby2011} incorporated information-theoretic costs into the Bellman equation from a perception-action perspective, though their focus remained on action complexity rather than physical communication resources. Critically, these works treat information at the syntactic level, emphasizing signal fidelity over semantic or pragmatic content.

At a deeper level, modern information-driven systems, whether they are large language models interacting with users, autonomous vehicles navigating complex environments, embodied AI agents performing physical tasks, or human learners acquiring knowledge, share a common operational structure that can be abstracted as a perception-communication-decision-execution (PCDE) closed loop. In such systems, information is not merely transmitted for its own sake; it is used to guide decisions and actions that ultimately determine system performance. The pragmatic dimension of information, its effect on conduct, is therefore paramount. The following examples illustrate the ubiquity of the PCDE loop and the central role of pragmatic information across diverse domains.

\begin{figure}[htbp]
\setlength{\abovecaptionskip}{0.cm}
\setlength{\belowcaptionskip}{-0.cm}
  \centering{\includegraphics[scale=0.06]{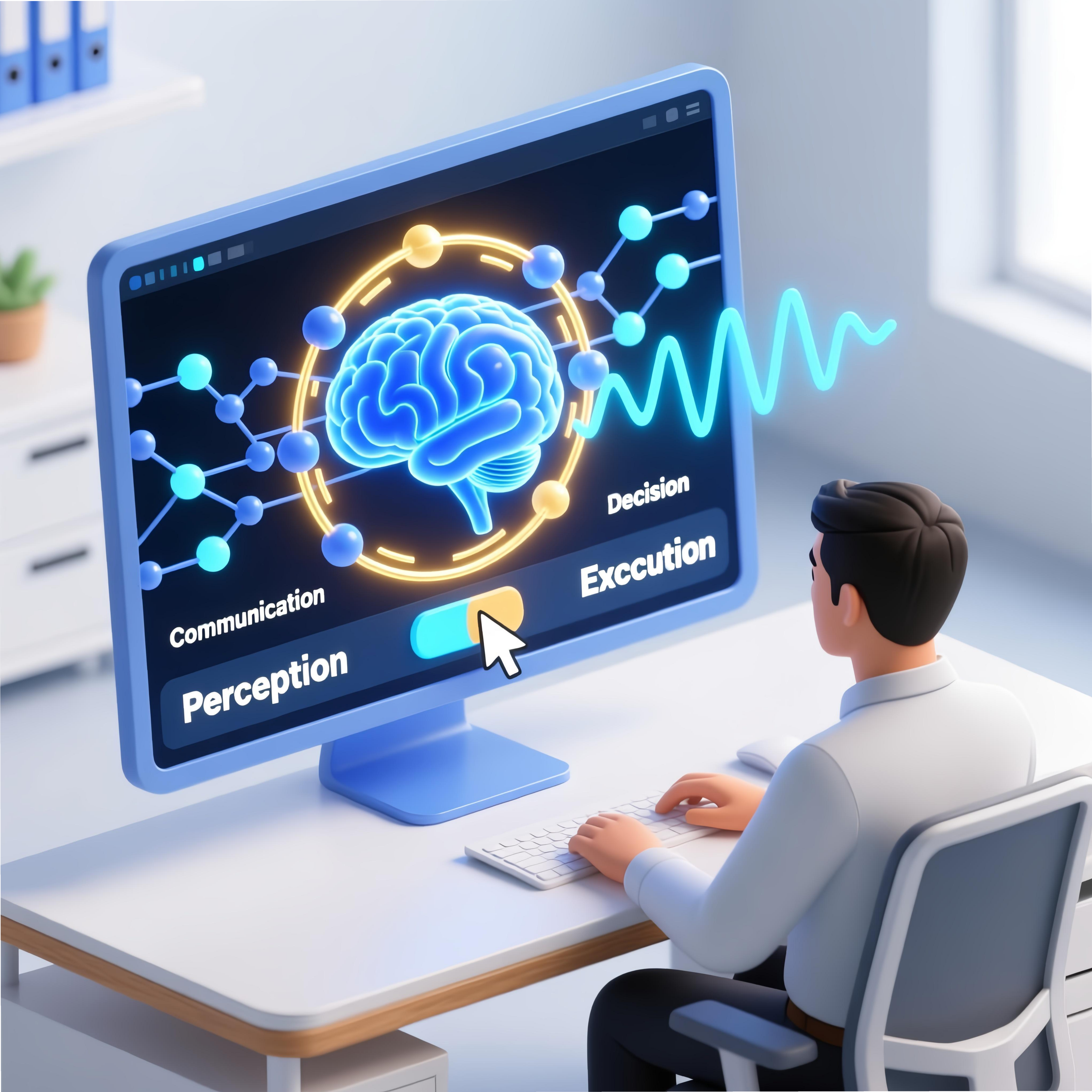}}
\caption{The perception-communication-decision-execution (PCDE) loop in large language model (LLM) and AI agent interactions. The user query (perception) is processed by the LLM (communication/understanding), which generates a response (decision) that drives the user's subsequent action (execution). Pragmatic information resides in the effectiveness of the response in achieving the user's intended goal.}
\label{fig:pcdc-llm}
\end{figure}

\textit{Large Language Models and AI Agents.} In a typical interaction with a large language model (LLM) or AI agent, the user submits a query (perception), the model processes the query through its internal reasoning (communication), generates a response or takes an action (decision), and the user applies the response to their task (execution). The pragmatic value of the response lies not in its syntactic correctness but in its effectiveness in helping the user achieve their goal—whether solving a problem, generating code, or making a decision. The same response may be pragmatically equivalent or entirely different depending on the user's context and intent. Figure~\ref{fig:pcdc-llm} illustrates this PCDE loop, where pragmatic information bridges the gap between meaning and effective action.

\begin{figure}[htbp]
\setlength{\abovecaptionskip}{0.cm}
\setlength{\belowcaptionskip}{-0.cm}
  \centering{\includegraphics[scale=0.06]{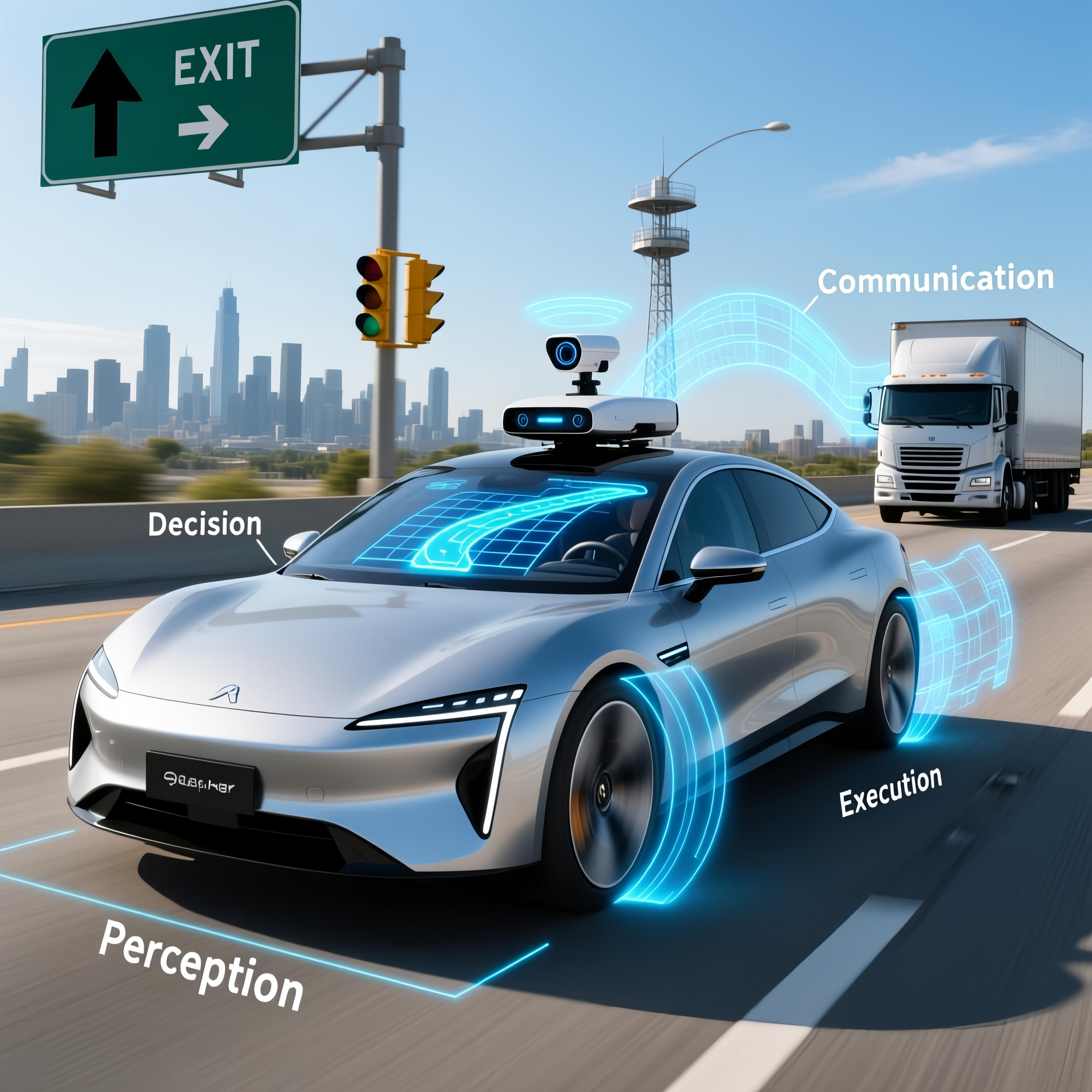}}
\caption{The perception-communication-decision-execution loop in autonomous driving. Sensors perceive the environment (perception), the vehicle communicates with other vehicles and infrastructure (communication), the planning module decides on a trajectory (decision), and the actuators execute the maneuver (execution). Pragmatic information—the correct action to take—is the ultimate objective that transcends individual sensor measurements.}
\label{fig:pcdc-autonomous}
\end{figure}

\textit{Autonomous Driving.} An autonomous vehicle operates within a complex PCDE loop: sensors (cameras, LiDAR, radar) perceive the surrounding environment (perception), the vehicle communicates with other vehicles and infrastructure via V2X (communication), the planning module decides on acceleration, braking, and steering (decision), and the actuators execute the planned actions (execution). The effectiveness of the system is measured by safety, efficiency, and comfort—all pragmatic objectives that depend on the correct interpretation of sensory data and the appropriate selection of actions. The same set of sensor readings may lead to different actions depending on the context (e.g., yielding to a pedestrian vs. proceeding through an intersection), highlighting the need for pragmatic reasoning. Figure~\ref{fig:pcdc-autonomous} depicts this loop, where pragmatic information—the optimal action class—is the key to safe and efficient navigation.

\begin{figure}[htbp]
\setlength{\abovecaptionskip}{0.cm}
\setlength{\belowcaptionskip}{-0.cm}
  \centering{\includegraphics[scale=0.06]{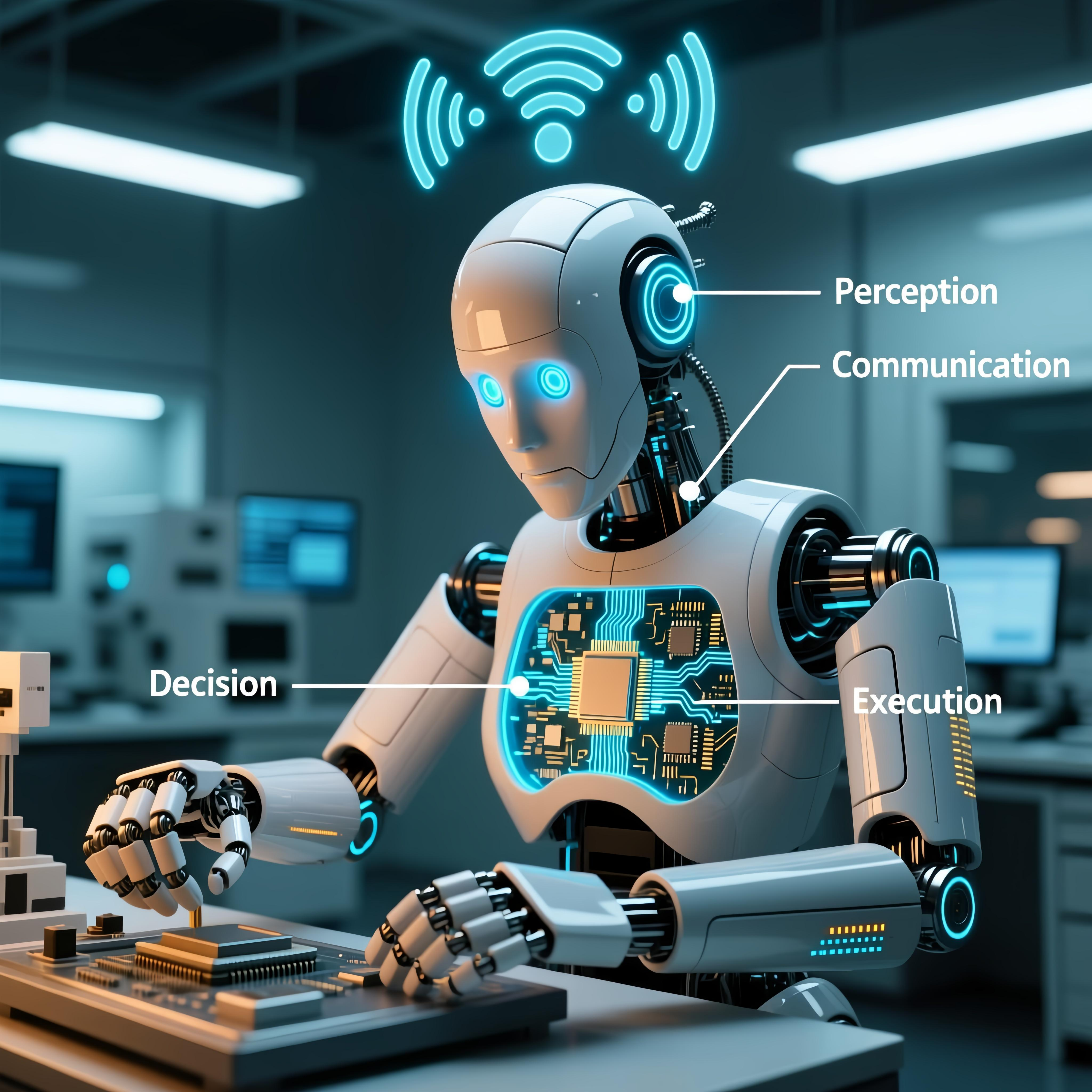}}
\caption{The perception-communication-decision-execution loop in embodied AI and robotics. Sensors perceive the environment (perception), internal representations communicate state information (communication), the policy selects an action (decision), and the actuators interact with the physical world (execution). Pragmatic information is embodied in the action's effectiveness in achieving the task objective.}
\label{fig:pcdc-embodied}
\end{figure}

\textit{Embodied AI and Robotics.} Embodied AI agents and robots operate in physical environments, where perception, communication, decision, and execution are tightly coupled. Sensors perceive the environment (perception), internal representations communicate state and intent across modules (communication), the control policy decides on the next action (decision), and motors and actuators execute the motion (execution). The success of the system depends on the pragmatic effectiveness of the actions—whether the robot grasps the object, navigates to the target, or completes the task. The same sensory input may lead to different actions depending on the task goal, illustrating the context-dependent nature of pragmatic information. Figure~\ref{fig:pcdc-embodied} shows this loop, where pragmatic information is the bridge between perception and effective physical action.

\begin{figure}[htbp]
\setlength{\abovecaptionskip}{0.cm}
\setlength{\belowcaptionskip}{-0.cm}
  \centering{\includegraphics[scale=0.06]{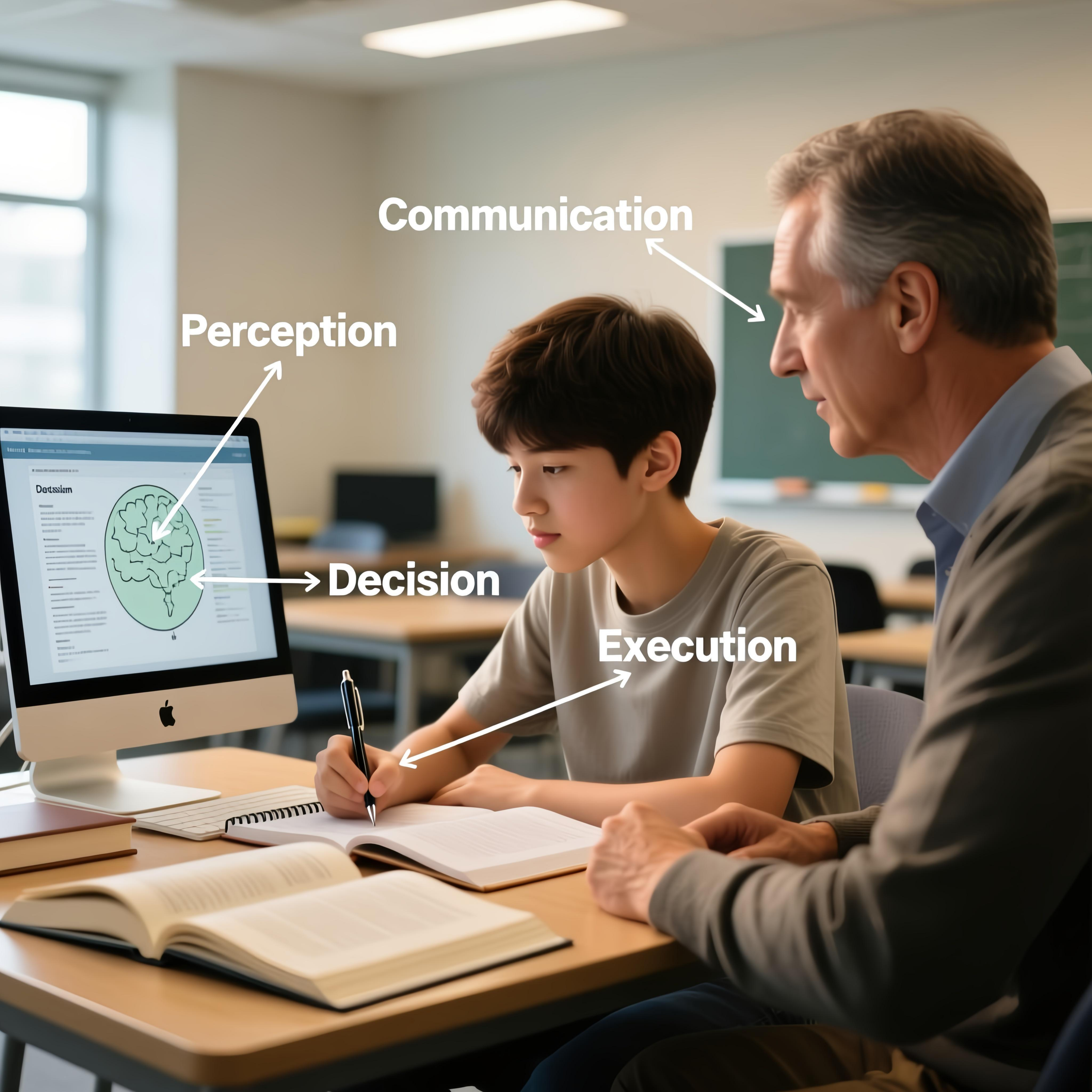}}
\caption{The perception-communication-decision-execution loop in human learning. Instructional content is perceived (perception), communicated through explanation and feedback (communication), the learner decides on a learning strategy (decision), and applies it to practice (execution). Pragmatic information is the knowledge that effectively improves performance, beyond mere factual recall.}
\label{fig:pcdc-learning}
\end{figure}

\textit{Human Learning and Education.} Human learning follows a PCDE loop where instructional content is perceived (perception), communicated through explanation, demonstration, and feedback (communication), the learner decides on a learning strategy (decision), and applies it through practice and application (execution). The effectiveness of instruction depends on its pragmatic value—whether it enables the learner to perform better in real-world tasks. The same instructional content may have high pragmatic value for one learner but low value for another, depending on prior knowledge and learning context. Figure~\ref{fig:pcdc-learning} illustrates this loop, where pragmatic information is the knowledge that translates into improved performance.

These four examples, spanning AI, autonomous systems, robotics, and human cognition, share a common underlying structure: information flows through a perception-communication-decision-execution (PCDE) loop, and the ultimate measure of success is the effectiveness of the resulting actions. In each domain, pragmatic information, the information that makes a difference to decisions and actions, is the central concern. The value of information (VoI) quantifies the utility gain from information, while the cost of information (CoI) captures the resources needed to acquire and transmit it. Despite the ubiquity of this structure across diverse domains, no unified mathematical framework exists for describing, analyzing, and optimizing pragmatic information systems. Classical information theory emphasizes symbol fidelity; control theory emphasizes system stability; and decision theory emphasizes utility maximization. None provides a comprehensive account of the interplay among perception, communication, decision, and execution. A unified theory is therefore needed to bridge these disciplines and to provide a rigorous foundation for the design of modern information-driven systems.

\subsection{Contributions and Organization}

This paper presents a mathematical theory of pragmatic information that unifies communication, control, and decision-making within a single coherent framework. The proposed theory extends classical information theory by incorporating the pragmatic dimension of information, its effect on conduct, while subsuming semantic information theory and classical information theory as special cases. By introducing the isoteleia mapping, which formalizes the principle of equifinality (distinct semantic paths converging to the same optimal action), we establish a three-tier hierarchy of syntactic, semantic, and pragmatic information that provides a complete characterization of information in goal-directed systems.

\begin{figure}[htbp]
\setlength{\abovecaptionskip}{0.cm}
\setlength{\belowcaptionskip}{-0.cm}
  \centering{\includegraphics[scale=0.95]{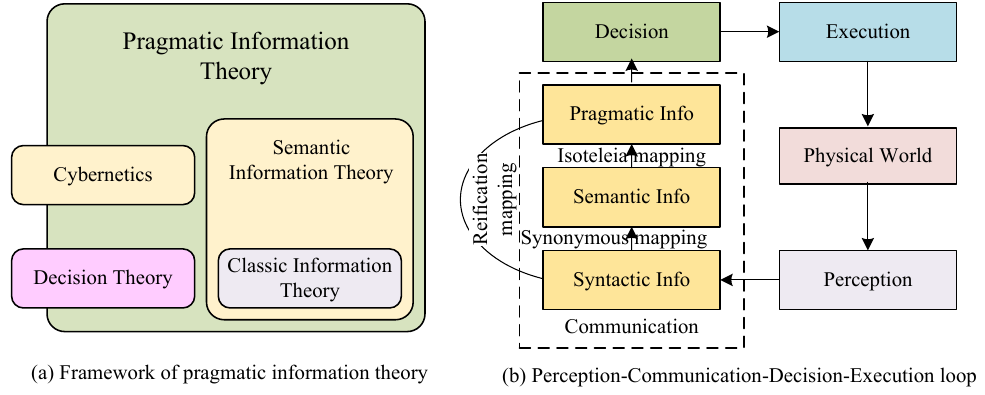}}
\caption{Three-tier framework of pragmatic information theory and its integration with communication, control, and decision.}
\label{fig:information-theory-framework}
\end{figure}

Figure~\ref{fig:information-theory-framework} illustrates the three-tier hierarchy of information and its integration with the PCDE (perception-communication-decision-execution) closed loop. As shown in Fig.~\ref{fig:information-theory-framework}(a), this unified framework positions pragmatic information theory as a natural generalization that subsumes both classical and semantic information theories, while simultaneously integrating decision theory and cybernetics into a single coherent structure. The three theories are nested and compatible: pragmatic information theory contains semantic information theory as a special case when the isoteleia mapping is trivial, and both contain classical information theory when the synonymous mapping is also trivial. This hierarchical compatibility ensures backward compatibility with existing theories while extending their reach into decision-making and control. Fig.~\ref{fig:information-theory-framework}(b) depicts the perception-communicatio-decision-execution (PCDE) loop, where syntactic information, at the bottom, is the domain of classical information theory and concerns the faithful transmission of symbols. Semantic information, the middle layer captured by the synonymous mapping, formalizes the principle that the same meaning can be expressed by many different syntactic realizations. At the top lies pragmatic information, the focus of this paper, which concerns the effectiveness of information in guiding actions. The isoteleia mapping groups semantic classes that lead to the same optimal terminal action, embodying the principle of equifinality. Their composition, the reification mapping, directly bridges pragmatic purposes to syntactic signals, enabling the materialization of intent into physical communication. The PCDE loop connects the Physical World through perception, communication, decision, and execution, with pragmatic information serving as the essential link that closes the loop, answering not only ``What signal should be sent?'' (syntax) and ``What does it mean?'' (semantics), but fundamentally ``What difference does this information make to the decisions and actions of the receiver?'' (pragmatics). 

Figure~\ref{fig:pragmatic-information-theory-framework} illustrates the mathematical framework of pragmatic information theory. 
\begin{figure}[htbp]
\setlength{\abovecaptionskip}{0.cm}
\setlength{\belowcaptionskip}{-0.cm}
  \centering{\includegraphics[scale=0.85]{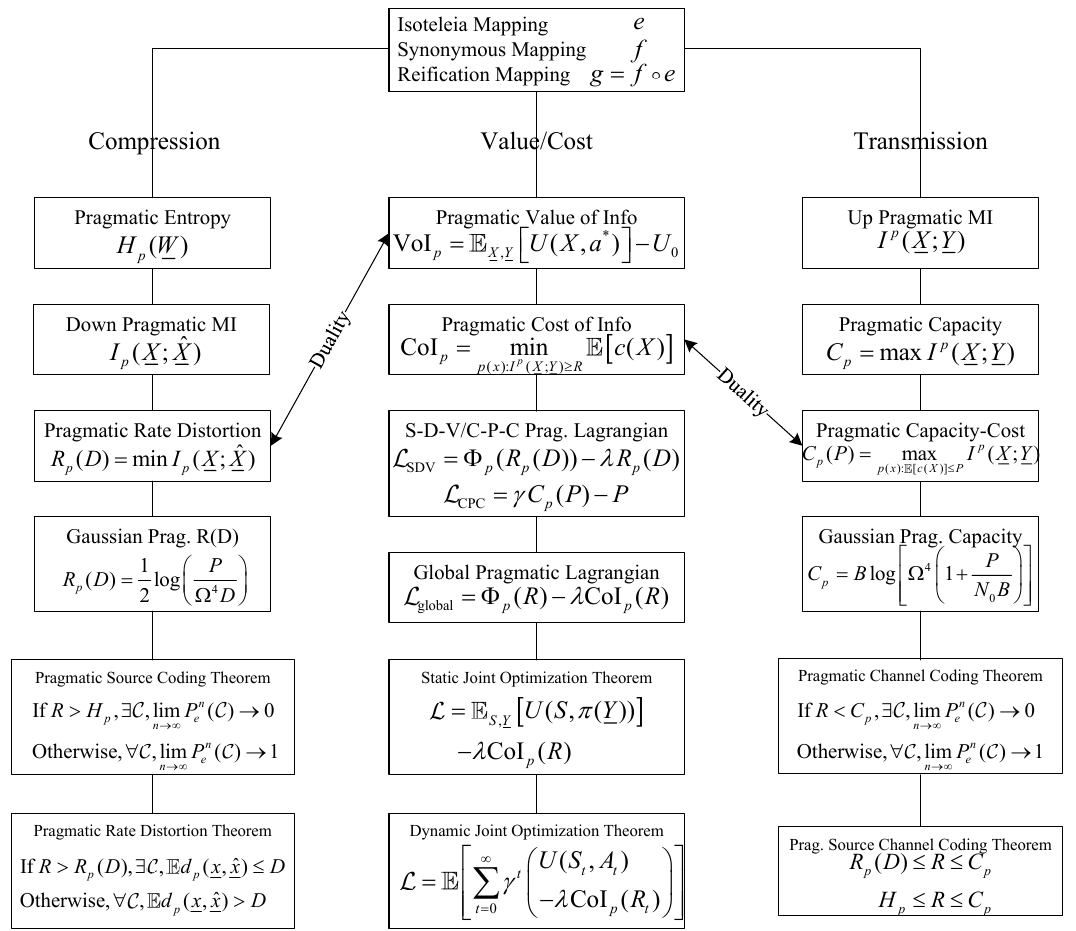}}
\caption{The mathematical framework of pragmatic information theory.}
\label{fig:pragmatic-information-theory-framework}
\end{figure}

The main contributions of this paper are summarized as follows:

\begin{enumerate}
    \item \textbf{Unified Pragmatic Information System Framework.} We develop a comprehensive system model for pragmatic information, integrating Shannon's communication paradigm with cybernetic control. The system is structured as a perception-communication-decision-execution (PCDE) closed loop, comprising ten core modules organized into two functional subsystems: the Cognitive-Regulation Subsystem (CRS), which handles reasoning, intent formulation, and value/cost evaluation, and the Symbolic-Conduction Subsystem (SCS), which handles encoding, transmission, and decoding. The reification mapping $g = f\circ e$ serves as the critical bridge between these subsystems, enabling the direct materialization of pragmatic intent into syntactic signals.

    \item \textbf{Information-Theoretic Measures for Pragmatic Information.} We define a complete set of information-theoretic measures at the pragmatic level, including pragmatic entropy $H_p(\underline{W})$, up/down pragmatic mutual information $I^p(\underline{X};\underline{Y})$ and $I_p(\underline{X};\underline{Y})$, pragmatic channel capacity $C_p$, and pragmatic rate-distortion function $R_p(D)$. We establish the fundamental hierarchies that relate these measures to their syntactic and semantic counterparts: $H_p(\underline{W}) \le H_s(\tilde{W}) \le H(W)$, $C_p \ge C_s \ge C$, and $R_p(D) \le R_s(D) \le R(D)$. These hierarchies quantify the performance gains achievable by exploiting pragmatic abstraction.

    \item \textbf{Pragmatic Value of Information and Pragmatic Cost of Information.} We introduce the pragmatic value of information (VoI) as the utility gain obtained from task-relevant information, and the pragmatic cost of information (CoI) as the minimum resource expenditure required to convey information at a given pragmatic rate. We establish the dualities between VoI and the rate-distortion function, and between CoI and the channel capacity, and develop single-sided extensions for perceptual and expressive paths. These measures provide a decision-theoretic and economic interpretation of the information-theoretic limits.

  \item \textbf{Lagrangian Dual Framework and Pragmatic Shadow Price for Cross-Layer Optimization.} We develop a unified Lagrangian dual framework that integrates the rate-distortion-value duality (down loop) and the capacity-cost duality (up loop) into a coherent cross-layer optimization principle. The global pragmatic Lagrangian $\mathcal{L}_{\mathrm{global}}(R; \lambda) = \Phi_p(R) - \lambda \, \mathrm{CoI}_p(R)$ expresses the net benefit as value minus cost, with the optimality condition $\lambda\,\mathrm{CoI}_p'(R^*) = \Phi_p'(R^*)$ balancing marginal value against marginal cost. The resulting pragmatic efficiency functional \(\mathcal{E}_p(\lambda) = \sup_R [\Phi_p(R) - \lambda \, \mathrm{CoI}_p(R)]\) defines a fundamental behavioral limit for any resource-constrained intelligent system—a \emph{behavioral capacity} that parallels Shannon's physical capacity, but replaces symbol fidelity with the effectiveness of information in guiding actions. Central to this framework is the shadow price \(\lambda\), which serves as a unified currency that converts physical resource consumption—such as transmit power, bandwidth, and latency—into utility-equivalent units. In static settings, \(\lambda\) determines the optimal operating point where marginal value equals marginal cost; in dynamic settings, it acts as a real-time scarcity signal that continuously adapts system behavior to varying channel conditions, power budgets, and task priorities. This framework provides a principled, practical, and dimensionally consistent approach for resource allocation and system design in task-oriented communication systems, bridging information-theoretic limits with decision-theoretic objectives without ad hoc heuristics.
    
    \item \textbf{Coding Theorems for Pragmatic Communication.} We prove three fundamental coding theorems for pragmatic information systems: (i) the pragmatic lossless source coding theorem, establishing that the minimum rate for lossless pragmatic compression is the pragmatic entropy $H_p(\underline{W})$; (ii) the pragmatic channel coding theorem, establishing that the maximum reliable transmission rate is the pragmatic capacity $C_p$; and (iii) the pragmatic rate-distortion coding theorem, establishing that the minimum rate for lossy pragmatic compression with distortion $D$ is $R_p(D)$. These theorems generalize Shannon's classical coding theorems \cite{Classicpaper_Shannon,Shannon_Weaver} and the semantic coding theorems of Niu and Zhang \cite{Paper_SIT,Book_SIT}.

    \item \textbf{Continuous-Domain Pragmatic Information Measures.} We extend the pragmatic information measures to continuous messages through the concept of isoteleic volumes $\Omega$, which are the continuous analogues of pragmatic equivalence classes. We derive closed-form expressions for the pragmatic capacity of the Gaussian channel $C_p = \frac{1}{2}\log(\Omega^4(1+P/\sigma^2))$ and the pragmatic rate-distortion function for Gaussian sources $R_p(D) = \frac{1}{2}\log(P/(\Omega^4 D))$, along with the corresponding CoI and VoI functions. We also derive the band-limited capacity $C_p = B\log(\Omega^4(1+P/(N_0B)))$.

    \item \textbf{Joint Optimization of Communication, Control, and Decision.} We formulate the joint optimization of information, control, and decision-making in both static and dynamic settings. For the static case, we derive the optimality conditions for the global pragmatic Lagrangian, showing that the optimal policy is the posterior-maximizing Bayes decision rule and the optimal rate satisfies marginal value equals marginal cost. For the dynamic case, we derive the Bellman equation for the dynamic pragmatic Lagrangian, providing a principled framework for co-designing communication and control in sequential decision-making systems.

    \item \textbf{Generalization and Unification.} We demonstrate that the pragmatic information theory framework seamlessly subsumes classical communication systems (when semantic and pragmatic layers are trivialized), closed physical systems (with explicit utility functions), embodied and interactive systems (with multiple agents), and open social and cognitive systems (with implicit and evolving utility functions). This establishes the pragmatic information system as a primary reference model for analyzing and designing a system that involves goal-directed information processing under uncertainty.

\end{enumerate}

The remainder of the paper is organized as follows. Section~\ref{section_II} presents the pragmatic information system model, including the system architecture, the three-tier hierarchy, the reification mapping, and the axiomatic foundations. Section~\ref{section_III} defines pragmatic entropy and its joint and conditional counterparts, establishing the fundamental hierarchies. Section~\ref{section_IV} introduces pragmatic relative entropy and up/down pragmatic mutual information. Section~\ref{section_V} defines pragmatic channel capacity and pragmatic rate-distortion functions, including single-sided extensions. Section~\ref{section_VI} establishes the pragmatic value of information and pragmatic cost of information, along with their Lagrangian dual framework. Section~\ref{section_VII} proves the pragmatic lossless source coding theorem. Section~\ref{section_VIII} proves the pragmatic channel coding theorem. Section~\ref{section_IX} proves the pragmatic rate-distortion coding theorem. Section~\ref{section_X} extends the pragmatic information measures to continuous messages, deriving the pragmatic capacity of Gaussian channels and the pragmatic rate-distortion function for Gaussian sources, along with the corresponding CoI and VoI functions. Section~\ref{section_XI} addresses the joint optimization of communication, control, and decision-making in pragmatic information systems. Section~\ref{section_XII} concludes the paper.

\section{Pragmatic Information System and Isoteleia Mapping}
\label{section_II}

This section presents the theoretical foundations of the pragmatic information system. We establish a unified framework that integrates Shannon's communication paradigm with cybernetic control, introducing the three-tier hierarchy of syntactic, semantic, and pragmatic information. Central to this framework is the Isoteleia Mapping, which formalizes the principle of equifinality, that is, distinct semantic paths converging to the same optimal action.

\subsection{Notation Conventions}
\label{subsec:notation}

We adopt standard information-theoretic notation throughout. Calligraphic letters ($\mathcal{X}, \mathcal{Y}, \mathcal{W}, \mathcal{V}$) denote alphabets; uppercase letters ($X,Y,W$) are random variables, lowercase ($x,y,w$) their realizations. Semantic and pragmatic counterparts are marked with a tilde ($\tilde{W}$) and an underline ($\underline{W}$), respectively. Probabilities are written as $P(\cdot)$ or $p(\cdot)$, expectations as $\mathbb{E}[\cdot]$, and $\log(\cdot)$ denotes the logarithm to base~2. The main symbols are summarized in Table~\ref{tab:main_symbols_extended}; additional notation is introduced as needed in the text.

\begin{table}[htbp]
\centering
\small
\caption{Principal symbols, meanings, and intuitive explanations}
\label{tab:main_symbols_extended}
\begin{tabular}{p{4.0cm} p{11.0cm}}
\hline
\textbf{Symbol} & \textbf{Meaning and Intuitive Explanation} \\
\hline
$\mathcal{X}, \mathcal{Y}, \mathcal{W}, \mathcal{V}$ & Alphabets: collections of possible symbols \\
$X, Y, W, V$ / $x, y, w, v$ & Random variables / realizations at syntactic layer \\
$\tilde{X}, \tilde{Y}, \tilde{W}, \tilde{V}$ / $\underline{X}, \underline{Y}, \underline{W}, \underline{V}$ & Semantic layer (meanings) / Pragmatic layer (actions) \\
$P(\cdot)$ or $p(\cdot)$, $\mathbb{E}[\cdot]$, $\log(\cdot)$ & Probability, expectation, log (bits/sebits/prabits) \\
\hline
$H(W)$, $H_s(\tilde{W})$, $H_p(\underline{W})$ & Entropy hierarchy: symbol $\to$ meaning $\to$ action uncertainty \\
$H(\cdot|\cdot)$ & Conditional entropy: residual uncertainty given context \\
$I(X;Y)$ & Syntactic mutual information: symbol-level dependence \\
$I_s(\tilde{X};\tilde{Y})$, $I^s(\tilde{X};\tilde{Y})$ & Down/Up semantic MI: meaning dependence (marginal/joint) \\
$I_p(\underline{X};\underline{Y})$, $I^p(\underline{X};\underline{Y})$ & Down/Up pragmatic MI: action dependence (marginal/joint) \\
$C$, $C_s$, $C_p$ & Capacity hierarchy: max reliable symbol/meaning/action rate \\
$R(D)$, $R_s(D)$, $R_p(D)$ & Rate-distortion hierarchy: min symbol/meaning/action rate for distortion $D$ \\
\hline
$f: \tilde{\mathcal{W}} \to 2^{\mathcal{W}}$ & Synonymous mapping: one meaning $\to$ many signals \\
$e: \underline{\mathcal{W}} \to 2^{\tilde{\mathcal{W}}}$ & Isoteleia mapping: one action $\to$ many meanings \\
$g = f \circ e: \underline{\mathcal{W}} \to 2^{\mathcal{W}}$ & Reification mapping: one action $\to$ all realizing signals \\
$g^n$, $f^n$, $u^n$, $\mathcal{U}^n$ & Sequential extensions: block-level mappings and sequences \\
$\Omega$ & Average isoteleic volume: avg size of action-equivalence set (continuous) \\
\hline
$\mathrm{VoI}_p(R)$ or $\Phi_p(R)$ & Pragmatic value of information: max utility gain from rate $R$ \\
$\mathrm{CoI}_p(R)$ & Pragmatic cost of information: min resource (power/bandwidth) for rate $R$ \\
$\mathcal{L}_{\mathrm{SDV}}$, $\mathcal{L}_{\mathrm{CPC}}$, $\mathcal{L}_{\mathrm{global}}$ & Lagrangians: down-loop value$-$cost, up-loop rate$-$cost, global net benefit \\
$\lambda, \gamma$ & Lagrange multipliers (shadow prices): convert resource cost to utility \\
$V(S)$ & Value function in dynamic programming: optimal expected future reward \\
\hline
\end{tabular}
\end{table}

For sequences and conditional quantities, we adopt the usual conventions: $u^n = (u_1,\dots,u_n)$, $\mathcal{U}^n$ denotes the $n$-fold Cartesian product, and $f^n$ is the sequential extension of a mapping $f$. Conditional entropy and mutual information are written as $H(\cdot|\cdot)$ and $I(\cdot;\cdot|\cdot)$, with the convention $0\log 0=0$. Syntactic quantities are measured in bits, semantic in sebits, and pragmatic in prabits.

\subsection{Pragmatic Information System}
\label{subsec:system-model}

The pragmatic information system is a perception-communication-decision-execution (PCDE) closed-loop architecture that integrates the classic Shannon communication paradigm with a cybernetic control framework. Its fundamental purpose is not merely to transmit symbols, but to ensure that the received information leads to terminal actions that maximize a predefined task utility. In this section, we present the system model in detail, describe its processing flow with state-space dynamics, formulate its axiomatic foundations, and discuss its broad applicability across diverse domains.

\subsubsection{System Structure and Processing Flow}

The system consists of ten core modules that operate in a sequential cascade, forming a complete perception–communication–decision–execution loop. These modules are: (1) Physical World / Task Environment, (2) Task Intent Initiator, (3) Source, (4) Reification Mapping, (5) Encoder, (6) Channel, (7) Decoder, (8) Demapping, (9) Destination, and (10) Task Execution Responder. Figure~\ref{Fig_Pragmatic_information_system} provides a conceptual illustration of the overall architecture.

\begin{figure*}[htbp]
\setlength{\abovecaptionskip}{0.cm}
\setlength{\belowcaptionskip}{-0.cm}
  \centering{\includegraphics[scale=0.9]{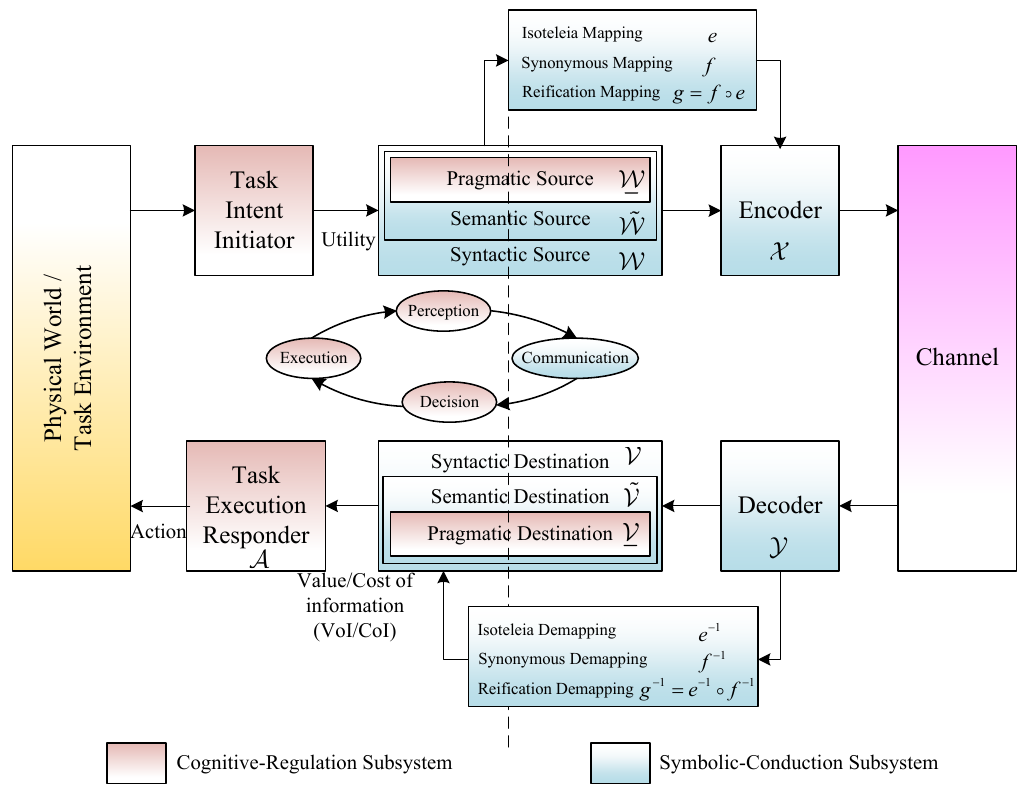}}
  \caption{The block diagram of Pragmatic Information System with PCDE loop, showing the interaction between the Physical World, the Cognitive-Regulation Subsystem (CRS), and the Symbolic-Conduction Subsystem (SCS).}\label{Fig_Pragmatic_information_system}
\end{figure*}

We now describe each module and the information flow that connects them.

\paragraph*{Module 1: Physical World / Task Environment}

The system operates within a Physical World characterized by a Plant with state $S_t \in \mathcal{S}$ at discrete time $t$, governed by:
\begin{equation}
    S_{t+1} = F_m(S_t, A_t) + Z_t,
    \label{eq:state-evolution}
\end{equation}
where $A_t \in \mathcal{A}$ is the action applied by the actuator, $F_m: \mathcal{S} \times \mathcal{A} \to \mathcal{S}$ is the nominal transition dynamics, and $Z_t$ represents exogenous disturbances. Observations are obtained through a sensor model:
\begin{equation}
    Y_t = F_o(S_t) + V_t,
    \label{eq:sensor-model}
\end{equation}
with $F_o: \mathcal{S} \to \mathcal{Y}$ the observation function and $V_t$ measurement noise.

\paragraph*{Module 2: Task Intent Initiator}

The Task Intent Initiator compares the observed state (or belief) $\hat{S}_t$ with a task goal $G_t$ to generate a task intent:
\begin{equation}
    T_t = \Phi(\hat{S}_t, G_t),
    \label{eq:intent-generation}
\end{equation}
where $\Phi$ encodes the system's strategy. This module answers: \emph{``What should be done?''}

\paragraph*{Module 3: Source}

The Source is a three-tier hierarchical structure corresponding to the pragmatic, semantic, and syntactic levels \cite{Semantic_Weaver,Shannon_Weaver}. The Pragmatic Source $\underline{\mathcal{W}}$ first maps the intent $T_t$ to a pragmatic symbol $\underline{w} \in \underline{\mathcal{W}}$ representing the optimal action class, answering: \emph{``What is the best terminal action?''} This is then expanded by the Semantic Source $\tilde{\mathcal{W}}$ into a semantic symbol $\tilde{w} \in \tilde{\mathcal{W}}$ justifying the action, answering: \emph{``Why is this action justified?''} Finally, the Syntactic Source $\mathcal{W}$ maps to a syntactic symbol $w \in \mathcal{W}$ for transmission, answering: \emph{``What signal should be sent?''}

\paragraph*{Module 4: Reification Mapping}

The Reification Mapping bridges the Pragmatic and Syntactic layers as the composition of the isoteleia and synonymous mappings:
\begin{equation}
    g \triangleq f \circ e: \underline{\mathcal{W}} \to \mathcal{P}(\mathcal{P}(\mathcal{W})),
    \label{eq:reification-def}
\end{equation}
where $e: \underline{\mathcal{W}} \to \mathcal{P}(\tilde{\mathcal{W}})$ maps a pragmatic symbol to all semantic justifications (equifinality), and $f: \tilde{\mathcal{W}} \to \mathcal{P}(\mathcal{W})$ maps a semantic symbol to all syntactic realizations (synonymy). Thus,
\begin{equation}
    g(\underline{w}) = \bigcup_{\tilde{w} \in e(\underline{w})} f(\tilde{w}) \subset \mathcal{W},
    \label{eq:reification-expanded}
\end{equation}
enumerating all syntactic realizations that fulfill a given pragmatic purpose.

\paragraph*{Module 5: Encoder}

The Encoder maps a block of $n$ syntactic symbols to channel input symbols:
\begin{equation}
    \phi: \mathcal{W}^n \to \mathcal{X}^n,
    \label{eq:encoder}
\end{equation}
performing a unified source-channel encoding operation designed to maximize end-to-end task utility.

\paragraph*{Module 6: Channel}

The Channel is the physical medium characterized by the conditional distribution $P(Y_t | X_t)$, subject to resource constraints such as power, bandwidth, and delay.

\paragraph*{Module 7: Decoder}

The Decoder recovers an estimate of the transmitted syntactic sequence:
\begin{equation}
    \psi: \mathcal{Y}^n \to \mathcal{W}^n,
    \label{eq:decoder}
\end{equation}
ensuring that the recovered symbols preserve semantic and pragmatic content for effective decision-making.

\paragraph*{Module 8: Demapping}

The Demapping module reverses the Reification process: recovered syntactic symbols are first mapped to estimated semantic symbols via an inverse synonymous mapping, then to estimated pragmatic intentions via an inverse isoteleia mapping, recovering meaning and purpose from the received signals.

\paragraph*{Module 9: Destination}

The Destination is the hierarchical counterpart to the Source. The Syntactic Destination $\mathcal{V}$ receives decoded syntactic symbols; the Semantic Destination $\tilde{\mathcal{V}}$ reconstructs semantic interpretations; and the Pragmatic Destination $\underline{\mathcal{V}}$ reconstructs the intended pragmatic purpose, answering: \emph{``What was the intended purpose?''}

\paragraph*{Module 10: Task Execution Responder}

The Task Execution Responder (Actuator) translates the recovered pragmatic intention $\hat{\underline{w}}$ into a physical action:
\begin{equation}
    A_t = \mu(\hat{\underline{w}}_t),
    \label{eq:actuator-policy}
\end{equation}
where $\mu: \underline{\mathcal{V}} \to \mathcal{A}$ maps pragmatic intentions to actions, closing the perception-communication-decision-execution loop via Eq.~(\ref{eq:state-evolution}).

The effectiveness of the entire process is evaluated by the Value of Information (VoI) and Cost of Information (CoI). The VoI quantifies the expected utility gain:
\begin{equation}
    \text{VoI} = \mathbb{E}_{S,Y}[U(S_t, \mu(\hat{\underline{w}}_t))] - \mathbb{E}_{S}[U(S_t, \mu(\underline{w}_0^*))],
    \label{eq:voi-def}
\end{equation}
where $\underline{w}_0^*$ is the optimal action based on prior beliefs alone. The CoI quantifies the minimum resource expenditure required to convey pragmatic information at a given rate. Their joint consideration enables principled trade-off optimization between task performance and resource consumption, a central theme of the pragmatic Lagrangian framework developed in Section~\ref{section_VI}.

\subsubsection{Functional Subsystems}

The ten modules are naturally grouped into two functional subsystems: the \textbf{Cognitive-Regulation Subsystem (CRS)} and the \textbf{Symbolic-Conduction Subsystem (SCS)}.

The CRS handles high-level reasoning, intent formulation, and information value evaluation. It comprises the Task Intent Initiator (generating intent from state and goals), the Pragmatic Source (formalizing intent as pragmatic symbols), the Pragmatic Destination (reconstructing pragmatic intention), and the Task Execution Responder (translating intention into action). The isoteleia mapping $e$ embodies equifinality by grouping semantic justifications that lead to the same optimal action. The CRS answers: \emph{``What should be done, and is it worth doing?''}

The SCS handles encoding, transmission, and decoding. It comprises the Semantic Source (generating semantic symbols from pragmatic intent), the Syntactic Source (generating syntactic symbols from semantic meanings), the Encoder (converting symbols to physical signals), the Decoder (recovering symbols from received signals), and the Syntactic and Semantic Destinations (reconstructing meanings along the reverse path). The synonymous mapping $f$ captures synonymy by mapping each meaning to multiple syntactic realizations. The SCS answers: \emph{``How should it be expressed and transmitted?''}

The reification mapping $g = f \circ e$ serves as the critical bridge between these subsystems, allowing CRS intent to be directly materialized into syntactic signals by the SCS.

\subsubsection{Axiomatic Foundations}

The operation of the pragmatic information system is governed by three fundamental axioms that ensure its mathematical tractability, general applicability, and consistency with both information theory and control theory.

\paragraph*{\textbf{Axiom 1: The Probability Axiom}}

The system operates in a world where all observations, states, and intentions are subject to irreducible uncertainty. This uncertainty is quantified by probability measures. Formally, for any time $t$, the joint distribution over the system variables is:
\begin{equation}
    P(S_t, W_t, Y_t) \in \Delta(\mathcal{S} \times \mathcal{W} \times \mathcal{Y}),
    \label{eq:joint-distribution}
\end{equation}
where $\Delta$ denotes the space of all probability distributions. The channel is characterized by the conditional distribution $P(Y_t | X_t)$. This axiom ensures that all information-theoretic quantities—entropy, mutual information, channel capacity, rate-distortion functions, and VoI/VoC—are well-defined and computable.

\paragraph*{\textbf{Axiom 2: The Decision Axiom}}

All deliberate actions of the system are governed by the principle of maximizing expected utility. Formally, the \textbf{utility function} is defined as:
\begin{equation}
    U: \mathcal{S} \times \mathcal{A} \to \mathbb{R},
    \label{eq:utility-def}
\end{equation}
which assigns a real-valued scalar to each state-action pair $(S_t, A_t)$, representing the immediate reward or cost associated with taking action $A_t$ in state $S_t$. For any decision node where the system has access to information $I$, it chooses an action $a^*$ such that:
\begin{equation}
    a^* = \arg\max_{a \in \mathcal{A}} \mathbb{E}_{P(S_t | I)}[U(S_t, a)].
    \label{eq:decision-axiom}
\end{equation}
This axiom underpins the isoteleia mapping, as it defines the equivalence classes of semantic states that lead to the same optimal action (the same \emph{telos}). It also underpins the Value of Information (VoI), which measures the utility gain obtained from additional information, and the Pragmatic Lagrangian Functional, which formalizes the trade-off between information value and communication cost.

\paragraph*{\textbf{Axiom 3: The Markov Axiom}}

The system's dynamics satisfy the Markov property:
\begin{equation}
    P(S_{t+1} | S_t, S_{t-1}, \dots, A_t, A_{t-1}, \dots) = P(S_{t+1} | S_t, A_t).
    \label{eq:markov-axiom}
\end{equation}
This property allows the system to be modeled as a Partially Observable Markov Decision Process (POMDP). Furthermore, the Markov property implies that the system can be decomposed across time: by removing the time indices, the dynamic control problem can be reduced to a static communication and encoding problem. In this static formulation, the channel coding and source coding theorems can be applied directly to characterize the fundamental limits of information transmission, treating each time step independently while preserving the overall structure of the pragmatic system. This temporal decomposition forms the basis for the coding theorems developed in subsequent sections.

\subsubsection{Generality and Scope: A Unified Framework for Information-Driven Systems}

The pragmatic information system provides a mathematical language for describing a class of systems that involve goal-directed information processing under uncertainty. The framework spans a continuous spectrum of systems, ranging from pure symbol transmission to tightly coupled physical control loops, and further to open-ended social and cognitive interactions.

We organize the discussion into four conceptual classes, ordered according to the \emph{degree of closure} and the \emph{nature of the utility function}. At one extreme lies the classical communication paradigm, where utility is defined solely by symbol fidelity. Moving along the spectrum, we encounter closed physical systems with well-defined utility functions; embodied and interactive systems with multiple agents; and finally, open social and cognitive systems where utility is implicit, dynamic, and subject to interpretation. This progression reflects increasing complexity in the relationship between information, meaning, and value.

\paragraph*{\textbf{Case I: Classical Communication Systems}}

When the Semantic and Pragmatic layers are trivialized---i.e., when utility depends solely on symbol fidelity (mean squared error, Hamming distance, or bit error rate)---the isoteleia and synonymous mappings reduce to identity mappings. The three-tier source collapses to a single syntactic source, and the system reverts to the familiar source-channel separation architecture. All classical results, including Shannon's source coding, channel coding, and rate-distortion theorems, emerge as special cases. The framework thus serves not as a replacement but as a rigorous generalization of classical information theory, extending its reach from ``how accurately can symbols be transmitted?'' to ``how effectively does the received meaning affect conduct in the desired way?''

\paragraph*{\textbf{Case II: Closed Physical Systems}}

Here the Physical World is governed by well-defined physical laws, the utility function is explicitly specified and measurable (e.g., tracking error, energy consumption, or task success rate), and system behavior is constrained by strict physical and temporal limits. The framework naturally embeds the control-theoretic feedback loop via the Plant, Task Intent Initiator, and Actuator. Representative domains include networked control systems, robotics and industrial automation, autonomous vehicles and intelligent transportation, UAV swarm control, smart grids, and healthcare cyber-physical systems. In all these cases, VoI provides a rigorous metric for quantifying the trade-off between communication resource consumption and closed-loop performance, while the isoteleia mapping captures how diverse sensory inputs converge to the same control or navigation command, enabling robust decision-making under noise and delays.

\paragraph*{\textbf{Case III: Embodied and Interactive Systems}}

In this class, the physical world remains central, but the system now involves multiple agents (human or artificial) with possibly distinct sensors, actuators, and world-views. Utility may be shared or conflicting, and communication becomes strategic or collaborative rather than a mere data pipe. Representative domains include embodied AI and autonomous agents, multi-agent systems and collaborative AI, human-machine interaction (HMI), and intelligent decision support systems. The framework provides a reference model for signaling games, negotiation protocols, and shared intentionality. The isoteleia mapping captures how different perceptual or contextual states lead to the same joint action or collaborative outcome, while VoI quantifies the benefit of sharing information among agents with potentially asymmetric information and conflicting objectives.

\paragraph*{\textbf{Case IV: Open Social and Cognitive Systems}}

Here the physical world recedes into the background, and the focus shifts to internal cognitive processes or emergent social dynamics. Utility is no longer a single measurable objective but may be implicit, dynamic, subjective, or even contradictory across agents. Representative domains include human learning and education, human communication and language, social and economic systems (where agents exchange signals to coordinate actions under uncertainty), biological and cognitive systems (aligning with predictive processing and active inference), and recommendation and content delivery systems. The framework may provide a useful starting point for modeling information exchange in open social and cognitive systems, but the present theory relies on explicit probabilistic and utility assumptions. Relaxing these assumptions is left for future work. The isoteleia mapping may offer a conceptual analogy for the human cognitive ability to generalize from diverse experiences to coherent principles, while VoI suggests a way to quantify the effectiveness of communication in reducing uncertainty and improving decisions under subjective and evolving objectives.

\paragraph*{\textbf{The Unity of the Framework}}

Despite the apparent diversity across these four cases, they can be discussed within a common conceptual core: a state space $\mathcal{S}$, a utility function $U: \mathcal{S} \times \mathcal{A} \to \mathbb{R}$, a three-tier information hierarchy (Syntactic, Semantic, Pragmatic), communication constraints, and a VoI metric guiding resource allocation and decision-making. The spectrum from Case I to Case IV is not a rigid partition but a continuum: classical communication systems can be enhanced with semantic and pragmatic layers; closed physical systems can involve human operators; and social systems can incorporate physical sensors and actuators. The pragmatic information system can serve as a reference model for analyzing a class of goal-directed systems, ranging from classical communication and closed-loop control to embodied and interactive systems. Extensions to social, economic, and cognitive domains are conceptually appealing but are not developed formally in this paper and remain open research directions.

\subsection{Design Principles of Pragmatic Information Systems}
\label{subsec:design-principles}

Building upon the system architecture and the axiomatic foundations established in the preceding subsections, we now articulate three fundamental design principles that guide the analysis, synthesis, and operation of pragmatic information systems. These principles are not arbitrary engineering heuristics; they are direct consequences of the mathematical structure of the pragmatic framework and the nature of information as it flows from the physical world through semantic interpretation to pragmatic action.

\paragraph*{\textbf{Principle I: Pragmatic Imperceptibility and Isoteleia}}

Pragmatic information, the ultimate purpose or terminal action that a communication system is intended to achieve, is not directly observable. It cannot be read from the received signal as one would read a bit string. Instead, pragmatic information must be inferred or evaluated indirectly through the lens of a utility function, which measures the consequences of actions taken in response to received information. Consequently, the fundamental characteristic of pragmatic information is \emph{Isoteleia}: different semantic interpretations, and even different syntactic realizations, are pragmatically equivalent if they lead to the same optimal terminal action under the given utility. This is precisely the equivalence relation captured by the isoteleia mapping $e: \underline{\mathcal{W}} \to \mathcal{P}(\tilde{\mathcal{W}})$, which partitions the semantic space into classes of meanings that share the same optimal action. The Value of Information (VoI) serves as the primary metric for quantifying pragmatic information, measuring the utility gain obtained from information. This principle implies that the system must be designed around the utility function, not around the fidelity of symbol reconstruction; performance metrics should be defined in terms of the quality of the resulting actions; and communication resources should be allocated according to the marginal Value of Information they provide.

\paragraph*{\textbf{Principle II: Semantic Imperceptibility and Synonymy}}

Semantic information, the meaning or interpretation conveyed by a signal, is also not directly observable. It cannot be extracted from the signal by simple demodulation or decoding. Instead, semantic information must be inferred from the observed syntactic symbols through a process of pattern recognition, contextual reasoning, and knowledge-based interpretation. Consequently, the fundamental characteristic of semantic information is \emph{synonymy}: the same meaning can be expressed by many different syntactic realizations. This equivalence relation is captured by the synonymous mapping $f: \tilde{\mathcal{W}} \to \mathcal{P}(\mathcal{W})$, which partitions the syntactic space into equivalence classes of symbols that share the same semantic interpretation. This principle implies that the system should exploit synonymy by allowing lossy compression at the syntactic level, as long as the semantic content is preserved; that the decoder must incorporate contextual information to resolve ambiguities; and that the encoder should extract and transmit semantic features relevant to the task, rather than attempting to reconstruct the original signal.

\paragraph*{\textbf{Principle III: Backward Compatibility and Unification}}

The pragmatic information system must be backward compatible with both classical communication systems and classical control systems. This means that the pragmatic framework is not a replacement for existing theories but a strict generalization that contains them as special cases. When the pragmatic and semantic layers are trivialized, that is, when the utility function depends solely on symbol fidelity and the isoteleia and synonymous mappings become identity mappings, the pragmatic framework must collapse exactly into the classical Shannon communication paradigm. Similarly, when communication is ignored, the framework must collapse into classical control theory. This principle imposes that the three-tier hierarchy must be implemented in a modular fashion, allowing each layer to be independently analyzed, optimized, or bypassed; that the system must include explicit mechanisms to detect when the utility function reduces to syntactic fidelity and automatically switch to classical operation; and that the encoder and decoder must support both pragmatic encoding and syntactic encoding to interface with legacy infrastructure.

Together, these three principles establish the philosophical and mathematical foundation for the design of pragmatic information systems. They ensure that the system is simultaneously \emph{task-effective} (Principle I), \emph{meaning-aware} (Principle II), and \emph{infrastructure-compatible} (Principle III). 

\subsection{Pragmatic Information and Isoteleia Mapping}
\label{subsec:isoteleia-mapping}

We now turn to the purely structural account of the mappings that bridge the three layers of information: Syntactic, Semantic, and Pragmatic.

\subsubsection{The Nature of Pragmatic Information}

In classical information theory, information is defined syntactically: it concerns the statistical properties of symbols and their transmission, independent of meaning or purpose. Semantic information, as formalized in the framework of Niu and Zhang~\cite{Paper_SIT,Book_SIT}, extends this by considering the meaning conveyed by symbols: it concerns the relationship between symbols and their interpretations.

\textbf{Pragmatic information}, in contrast, concerns the \emph{use} or \emph{purpose} of information. It addresses the fundamental question: \emph{``What difference does this information make to the decisions and actions of the receiver?''} Pragmatic information is inherently value-laden: it is defined not by what symbols mean, but by what they \emph{achieve}. Two messages that convey different syntactic symbols and even different semantic meanings may carry the same pragmatic information if they lead the receiver to the same optimal terminal action. Conversely, two messages that convey identical semantic content may carry different pragmatic information if they are interpreted in different contexts or lead to different decisions.

The defining characteristic of pragmatic information is therefore \textbf{Isoteleia}, the principle of \emph{equifinality}: distinct semantic paths converging to the same ultimate end, or \emph{telos}. This principle captures the essence of goal-directed communication: the receiver's objective is not to reconstruct the sender's exact message, but to extract from it the guidance necessary to achieve a desired outcome.

\subsubsection{Foundations: Power Sets, Equivalence Relations, and Fibers}

To rigorously formalize the relationships among syntactic, semantic, and pragmatic information, by using the concepts in \cite{Book_RandomSet}, we introduce the mathematical structures of power sets, equivalence relations, and fibers. Let \(\mathcal{W}\) be the \textbf{syntactic alphabet} (the set of all possible physical symbols or signals). A \textbf{semantic partition} of \(\mathcal{W}\) is a collection of disjoint subsets whose union is \(\mathcal{W}\), where each subset groups symbols that share the same meaning. Equivalently, a semantic partition is induced by an equivalence relation \(\sim_s\) on \(\mathcal{W}\): \(w_1 \sim_s w_2\) iff they convey the same semantic meaning. The quotient set \(\mathcal{W} / \sim_s\) is the \textbf{semantic alphabet}, denoted by \(\tilde{\mathcal{W}}\).

The collection of all subsets of \(\mathcal{W}\) is its \textbf{power set}, denoted \(\mathcal{P}(\mathcal{W})\). The semantic partition \(\Pi_{sem}\) is a subset of \(\mathcal{P}(\mathcal{W})\):
\begin{equation}
    \Pi_{sem} \triangleq \mathcal{W} / \sim_s = \bigl\{ \tilde{w}_1, \tilde{w}_2, \dots, \tilde{w}_K \bigr\} \subset \mathcal{P}(\mathcal{W}).
\end{equation}
Each element \(\tilde{w}_k \in \Pi_{sem}\) is a \textbf{semantic fiber} (or synonymous class): it is a subset of \(\mathcal{W}\) containing all syntactic symbols that carry the same meaning.

Similarly, a \textbf{pragmatic partition} is a collection of disjoint subsets of the semantic alphabet, induced by an equivalence relation \(\sim_p\) on \(\tilde{\mathcal{W}}\): \(\tilde{w}_i \sim_p \tilde{w}_j\) iff they lead to the same optimal terminal action. The quotient set \(\tilde{\mathcal{W}} / \sim_p\) is the \textbf{pragmatic alphabet}, denoted by \(\underline{\mathcal{W}}\). The pragmatic partition \(\Pi_{prag}\) is a subset of the power set of \(\tilde{\mathcal{W}}\):
\begin{equation}\label{eq:pragmatic_partition}
    \Pi_{prag} \triangleq \tilde{\mathcal{W}} / \sim_p = \bigl\{ \underline{w}_1, \underline{w}_2, \dots, \underline{w}_L \bigr\} \subset \mathcal{P}(\tilde{\mathcal{W}}) \subset \mathcal{P}(\mathcal{P}(\mathcal{W})).
\end{equation}
Thus, each element \(\underline{w}_l \in \Pi_{prag}\) is a set of semantic fibers, i.e., a subset of \(\mathcal{P}(\mathcal{W})\), forming a \textbf{second-order power set}. The fibers at each level form a hierarchy: syntactic fibers are individual elements \(w \in \mathcal{W}\); semantic fibers are subsets \(\tilde{w} \subset \mathcal{W}\); and pragmatic fibers are subsets \(\underline{w} \subset \tilde{\mathcal{W}} \subset \mathcal{P}(\mathcal{W})\), i.e., collections of semantic fibers. This hierarchy is the mathematical backbone of pragmatic information theory.

\subsubsection{Formal Definitions of the Mappings}

We now formally define the two fundamental mappings—the synonymous mapping and the isoteleia mapping—along with their composite, the reification mapping.

\begin{definition}[Synonymous Mapping]
Let \(\mathcal{W}\) be the syntactic alphabet and \(\tilde{\mathcal{W}}\) be the semantic alphabet. Let \(\sim_s\) be an equivalence relation on \(\mathcal{W}\) defined by meaning identity. The \textbf{Synonymous Mapping}
\begin{equation}
    f: \tilde{\mathcal{W}} \to 2^{\mathcal{W}}
\end{equation}
is defined for each semantic symbol \(\tilde{w} \in \tilde{\mathcal{W}}\) as:
\begin{equation}
    f(\tilde{w}) \triangleq \left\{ w \in \mathcal{W} \;\middle|\; w \sim_s \tilde{w} \right\} \subset \mathcal{W}.
\end{equation}
Here, \(2^{\mathcal{W}}\) denotes the power set of \(\mathcal{W}\). The mapping is \emph{one-to-many} in the forward direction: a single meaning corresponds to a set of syntactic realizations.
\end{definition}

\begin{definition}[Isoteleia Mapping]\label{definition:Isoteleia-Mapping}
Let \(\tilde{\mathcal{W}}\) be the semantic alphabet and let \(U: \mathcal{W} \times \mathcal{A} \to \mathbb{R}\) be a utility function. For each semantic symbol \(\tilde{w} \in \tilde{\mathcal{W}}\), define its induced optimal action as:
\begin{equation}
    a^*(\tilde{w}) \triangleq \arg\max_{a \in \mathcal{A}} \mathbb{E}_{W \sim P(W|\tilde{w})}\left[ U(W, a) \right].
\end{equation}
Define an equivalence relation \(\sim_p\) on \(\tilde{\mathcal{W}}\) by \( \tilde{w}_i \sim_p \tilde{w}_j \iff a^*(\tilde{w}_i) = a^*(\tilde{w}_j) \). The \textbf{Isoteleia Mapping}
\begin{equation}
    e: \underline{\mathcal{W}} \to 2^{\tilde{\mathcal{W}}}
\end{equation}
is defined for each pragmatic class \(\underline{w} \in \underline{\mathcal{W}} \triangleq \tilde{\mathcal{W}} / \sim_p\) as:
\begin{equation}
    e(\underline{w}) \triangleq \left\{ \tilde{w} \in \tilde{\mathcal{W}} \;\middle|\; a^*(\tilde{w}) = \underline{w} \right\} \subset \tilde{\mathcal{W}}.
\end{equation}
The mapping is \emph{one-to-many} in the forward direction: a single pragmatic purpose corresponds to a set of semantic justifications.
\end{definition}

\begin{definition}[Dynamic Compatible Isoteleia Mapping]\label{definition:Dynamic-Isoteleia-Mapping}
For sequential decision-making problems (Section~\ref{section_XI}), the optimal action must account for 
future consequences. Let $V: \Delta(\mathcal{S}) \to \mathbb{R}$ be the value function over belief states. 
We define the \emph{dynamic optimal action} for a semantic symbol $\tilde{w}$ as:
\begin{equation}
a^*(\tilde{w}, V) \triangleq \arg\max_{a\in \mathcal{A}} 
\mathbb{E}_{S|\tilde{w}}\left[ U(S,a) + \gamma \sum_{s'} P(s'|S,a) V(b_{s'}') \right],
\end{equation}
where $b_{s'}'$ is the updated belief after transitioning to $s'$. 
Consequently, the dynamic isoteleia mapping is:
\begin{equation}
e_V(\underline{w}) \triangleq \left\{ \tilde{w} \in \tilde{\mathcal{W}} \mid a^*(\tilde{w}, V) = \underline{w} \right\}.
\end{equation}
\end{definition}
When $\gamma = 0$ (static decision), we have $e_V(\underline{w}) \equiv e(\underline{w})$, recovering the static Definition \ref{definition:Isoteleia-Mapping}. All dynamic Bellman equations in Section~\ref{subsec:joint-optimization-dynamic} implicitly adopt this conditional extension of the isoteleia mapping.

\begin{definition}[Reification Mapping]
The \textbf{Reification Mapping}
\begin{equation}
    g: \underline{\mathcal{W}} \to 2^{\mathcal{W}}
\end{equation}
is defined as the composition of the isoteleia and synonymous mappings:
\begin{equation}
    g(\underline{w}) \triangleq \bigcup_{\tilde{w} \in e(\underline{w})} f(\tilde{w}) \subset \mathcal{W}.
\end{equation}
This mapping directly associates a pragmatic purpose with the complete set of syntactic symbols that can realize it.
\end{definition}

Figure \ref{Fig_Isoteleia_mapping} illustrates the three-tier quotient mapping structure. The isoteleia mapping \(e: \underline{\mathcal{W}} \to 2^{\tilde{\mathcal{W}}}\) groups semantic classes that lead to the same optimal action, forming pragmatic equivalence classes. The synonymous mapping \(f: \tilde{\mathcal{W}} \to 2^{\mathcal{W}}\) associates each semantic meaning with a set of syntactic realizations. Their composition, the reification mapping \(g = f \circ e: \underline{\mathcal{W}} \to 2^{\mathcal{W}}\), directly links a pragmatic purpose to all possible syntactic signals. The resulting pragmatic classes are evaluated by the utility function \(U\) and mapped to physical actions \(A_t \in \mathcal{A}\), forming the closed loop of pragmatic information processing.

\begin{figure*}[htbp]
\setlength{\abovecaptionskip}{0.cm}
\setlength{\belowcaptionskip}{-0.cm}
  \centering{\includegraphics[scale=0.85]{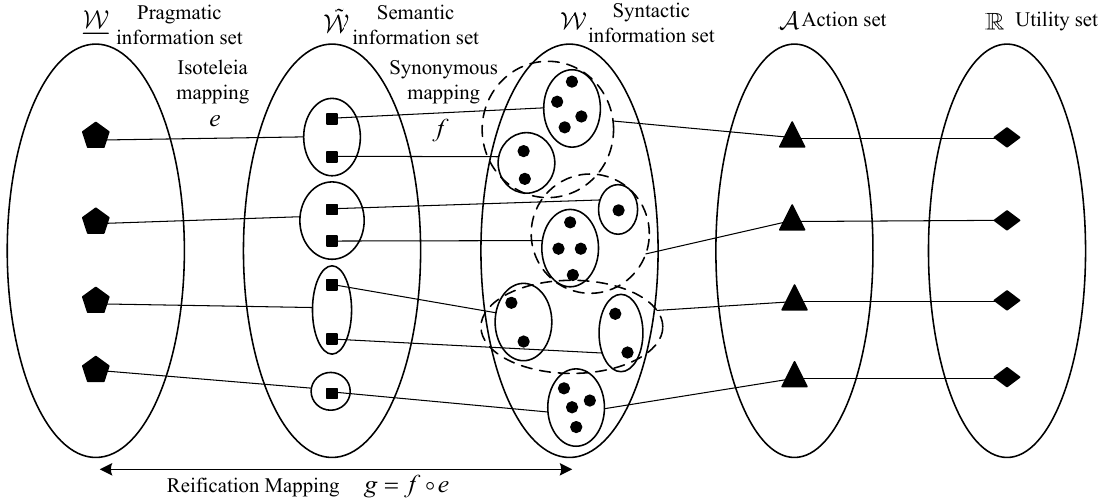}}
  \caption{The hierarchical mapping from pragmatic classes to semantic classes and syntactic symbols, with correspondence to the action space $\mathcal{A}$ and the utility evaluation $U$.}\label{Fig_Isoteleia_mapping}
\end{figure*}

\subsubsection{The Five Relations Between Layers}

The three mappings give rise to five fundamental relations. The isoteleia mapping \(e\) is a pragmatic-to-semantic relation: a single pragmatic class corresponds to multiple semantic classes. Conversely, each semantic class maps to exactly one pragmatic class via the inverse isoteleia, ensuring the partition is well-defined. The synonymous mapping \(f\) relates semantic to syntactic: a single meaning corresponds to multiple syntactic symbols, while each syntactic symbol belongs to exactly one semantic class via the inverse synonymous mapping. Finally, the reification mapping \(g\) directly relates pragmatic purposes to syntactic signals, associating a single purpose with the union of all syntactic symbols in its associated semantic classes.

\subsubsection{Consistency Condition: Semantic Refinement}

A critical condition must be satisfied for the isoteleia mapping to be well-defined: every semantic class \(\tilde{w}\) must be \emph{pragmatically homogeneous}, meaning that for any two syntactic symbols \(w_1, w_2 \in \tilde{w}\), we must have \(a^*(w_1) = a^*(w_2)\). If this fails, the semantic equivalence is too coarse, and we must \emph{refine} the semantic partition by splitting \(\tilde{w}\) into smaller classes, each homogeneous with respect to the optimal action. This operation, termed \textbf{semantic refinement}, is defined as:
\begin{equation}
    \tilde{w} \to \{\tilde{w}^{(1)}, \tilde{w}^{(2)}, \dots\}, \quad \tilde{w}^{(j)} \triangleq \left\{ w \in \tilde{w} \;\middle|\; a^*(w) = a_j \right\}.
\end{equation}
Only after such refinement is the isoteleia mapping uniquely defined.

\subsubsection{Quotient-Space Structure}

The syntactic, semantic, and pragmatic layers are successive quotients of the same original set \(\mathcal{W}\): \(\mathcal{W} \xrightarrow{\sim_s} \tilde{\mathcal{W}} \xrightarrow{\sim_p} \underline{\mathcal{W}}\). Syntactic information is the finest description, retaining all symbol distinctions. Semantic information is the quotient by meaning equivalence, discarding differences that do not affect meaning. Pragmatic information is the quotient by action equivalence, discarding semantic differences that do not affect optimal actions. Thus, the three layers are different levels of abstraction of the same underlying signal, with the pragmatic layer representing the coarsest, most task-relevant description. This quotient-space perspective provides the mathematical foundation for all subsequent information-theoretic analyses.

\section{Pragmatic Entropy}
\label{section_III}

In this section, we begin with pragmatic entropy, the fundamental measure of decision uncertainty, and then extend to joint and conditional pragmatic entropies, establishing their compatibility with syntactic and semantic counterparts.

\subsection{Formal Definition of Pragmatic Entropy}
\label{subsec:pragmatic-entropy}

Pragmatic entropy is the fundamental measure of decision uncertainty in a pragmatic information system. Unlike syntactic entropy, which measures uncertainty about which symbol will be transmitted, or semantic entropy, which measures uncertainty about which meaning is conveyed, pragmatic entropy measures the uncertainty about which terminal action will be optimal, that is, the uncertainty that directly affects the system's ability to achieve its task objectives. This measure is derived from the utility function \(U\) through the Isoteleia Mapping \(e: \underline{\mathcal{W}} \to 2^{\tilde{\mathcal{W}}}\), which groups semantic symbols into pragmatic equivalence classes based on their induced optimal actions.

Let \(\mathcal{W}\) be the syntactic alphabet with probability mass function \(P(W)\), and let \(\tilde{\mathcal{W}}\) be the semantic alphabet induced by the Synonymous Mapping \(f: \tilde{\mathcal{W}} \to 2^{\mathcal{W}}\). Let the Isoteleia Mapping \(e: \underline{\mathcal{W}} \to 2^{\tilde{\mathcal{W}}}\) partition the semantic space according to the utility function \(U: \mathcal{W} \times \mathcal{A} \to \mathbb{R}\), as defined in Eq.~\eqref{eq:utility-def}.

For each pragmatic symbol \(\underline{w} \in \underline{\mathcal{W}}\), its probability is given by:
\begin{equation}
    P(\underline{w}) = \sum_{\tilde{w} \in e(\underline{w})} P(\tilde{w}) = \sum_{\tilde{w} \in e(\underline{w})} \sum_{w \in f(\tilde{w})} P(w).
    \label{eq:pragmatic-probability}
\end{equation}

\begin{definition}[Pragmatic Entropy]
The \textbf{pragmatic entropy} of a pragmatic source \(\underline{\mathcal{W}}\) is defined as:
\begin{equation}
    H_p(\underline{W}) \triangleq -\sum_{\underline{w} \in \underline{\mathcal{W}}} P(\underline{w}) \log P(\underline{w}),
    \label{eq:pragmatic-entropy}
\end{equation}
where \(P(\underline{w})\) is given by Eq.~\eqref{eq:pragmatic-probability}. The unit of pragmatic entropy is the \textbf{pragmatic bit} (\textbf{Prabit}), reflecting that it measures uncertainty about terminal actions rather than about symbols or meanings.
\end{definition}

Equivalently, pragmatic entropy can be expressed directly in terms of the syntactic distribution and the composite Reification Mapping \(g = f \circ e: \underline{\mathcal{W}} \to 2^{\mathcal{W}}\):
\begin{equation}
    H_p(\underline{W}) = -\sum_{\underline{w} \in \underline{\mathcal{W}}} \left( \sum_{w \in g(\underline{w})} P(w) \right) \log \left( \sum_{w \in g(\underline{w})} P(w) \right).
    \label{eq:pragmatic-entropy-reification}
\end{equation}

The pragmatic entropy quantifies the minimum average number of pragmatic bits required to specify the optimal terminal action. It represents the irreducible decision uncertainty that must be resolved through communication and control.

We now establish the fundamental properties of pragmatic entropy. These properties are direct consequences of the quotient-space structure and the deterministic nature of the Isoteleia and Synonymous mappings.

\begin{lemma}[Non-negativity of Pragmatic Entropy]
\label{lem:pragmatic-nonnegativity}
The pragmatic entropy is non-negative:
\begin{equation}
    H_p(\underline{W}) \ge 0.
\end{equation}
Equality holds if and only if the pragmatic alphabet has a single element, i.e., there is only one possible optimal action regardless of the state.
\end{lemma}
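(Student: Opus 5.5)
The plan is to reduce the statement to the elementary non-negativity of Shannon entropy, applied to the pushforward distribution on the pragmatic alphabet. First, I will check that \(P(\underline{w})\) defined by Eq.~\eqref{eq:pragmatic-probability} is a genuine probability mass function on \(\underline{\mathcal{W}}\). This follows from the quotient-space structure: the semantic fibers \(f(\tilde{w})\) partition \(\mathcal{W}\), and the pragmatic fibers \(e(\underline{w})\) partition \(\tilde{\mathcal{W}}\) (after the semantic refinement guaranteeing that \(\sim_p\) is well defined). Hence the sets \(g(\underline{w})\) are pairwise disjoint with union \(\mathcal{W}\). Consequently \(P(\underline{w})\ge 0\) and \(\sum_{\underline{w}} P(\underline{w}) = \sum_{w\in\mathcal{W}} P(w) = 1\), which in particular gives \(0\le P(\underline{w})\le 1\).

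Next comes the termwise argument. For each \(\underline{w}\), \(0 \le P(\underline{w}) \le 1\) implies \(\log P(\underline{w}) \le 0\), so \(-P(\underline{w})\log P(\underline{w}) \ge 0\), with the convention \(0\log 0 = 0\) covering null classes. Summing these terms in Eq.~\eqref{eq:pragmatic-entropy} yields \(H_p(\underline{W})\ge 0\).

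For the equality case, a sum of non-negative terms vanishes iff every term vanishes. The function \(-p\log p\) on \([0,1]\) is zero exactly at \(p\in\{0,1\}\). Since the probabilities sum to one, \(H_p(\underline{W})=0\) iff exactly one class \(\underline{w}_0\) has \(P(\underline{w}_0)=1\) and all others have probability zero. Equivalently, \(P\) is supported on \(g(\underline{w}_0)\), so every syntactic symbol occurring with positive probability induces the same optimal action \(a^*\). Conversely, if only one class exists, its probability is \(1\) and the entropy is \(0\).

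The only delicate step is reconciling this with the phrase ``the pragmatic alphabet has a single element''. Pragmatic classes of zero probability could exist, for instance semantic symbols \(\tilde{w}\) whose syntactic members all have \(P(w)=0\). I will therefore state the equality condition in the form ``the effective pragmatic alphabet \(\{\underline{w}: P(\underline{w})>0\}\) is a singleton''. Alternatively, I will adopt the standing convention that alphabets contain only symbols of positive probability, under which the two formulations coincide. With either choice, the statement becomes exactly the degenerate-distribution characterization above.
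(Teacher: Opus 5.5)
Your proposal is correct and follows the paper's own argument: each term $-P(\underline{w})\log P(\underline{w})$ is non-negative for $P(\underline{w})\in[0,1]$, and equality holds exactly when the pragmatic distribution is degenerate. Your check that the sets $g(\underline{w})$ partition $\mathcal{W}$, together with your restriction to the effective alphabet $\{\underline{w}: P(\underline{w})>0\}$, makes precise what the paper's final step, ``which means the pragmatic alphabet has cardinality 1,'' tacitly assumes, namely that every pragmatic class has positive probability.
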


\begin{proof}
By definition, \(P(\underline{w}) \in [0, 1]\) for all \(\underline{w} \in \underline{\mathcal{W}}\), and \(\sum_{\underline{w}} P(\underline{w}) = 1\). Thus \(-\log P(\underline{w}) \ge 0\), and the weighted sum is non-negative. Equality holds if and only if \(P(\underline{w}) = 1\) for some \(\underline{w}\) and \(P(\underline{w}') = 0\) for all \(\underline{w}' \neq \underline{w}\), which means the pragmatic alphabet has cardinality 1. \qedhere
\end{proof}

\begin{lemma}[Pragmatic Entropy Hierarchy]
\label{lem:pragmatic-hierarchy}
Let \(H(W)\) denote the syntactic entropy (classical Shannon entropy), \(H_s(\tilde{W})\) denote the semantic entropy as defined in semantic information theory, and \(H_p(\underline{W})\) denote the pragmatic entropy. Then:
\begin{equation}
        H_p(\underline{W}) \le H_s(\tilde{W}) \le H(W).
    \label{eq:entropy-hierarchy}
\end{equation}
\end{lemma}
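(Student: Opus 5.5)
The plan is to view the three layers as successive deterministic quotients of one random variable and to apply, twice, the standard fact that entropy cannot increase under a deterministic function. The quotient-space structure gives maps $\pi_s:\mathcal{W}\to\tilde{\mathcal{W}}$, sending $w$ to its synonymous class, and $\pi_p:\tilde{\mathcal{W}}\to\underline{\mathcal{W}}$, sending $\tilde{w}$ to its isoteleic class. For the latter I will rely on the semantic refinement condition, which guarantees that every semantic class lies in exactly one pragmatic class. Then $\tilde{W}=\pi_s(W)$ and $\underline{W}=\pi_p(\tilde{W})$, and the probabilities in Eq.~\eqref{eq:pragmatic-probability} are precisely the pushforward distributions. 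I will also take the semantic entropy $H_s(\tilde{W})$ to be the entropy of this pushforward, $-\sum_{\tilde{w}}P(\tilde{w})\log P(\tilde{w})$ with $P(\tilde{w})=\sum_{w\in f(\tilde{w})}P(w)$, as in the semantic information theory of Niu and Zhang.

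The key step is a general lemma. If $Z=h(X)$ for a deterministic $h$, then by the chain rule
\begin{equation}
H(X)=H(X,Z)=H(Z)+H(X|Z)\ge H(Z).
\end{equation}
Here the first equality holds because $H(Z|X)=0$, and the inequality uses $H(X|Z)\ge 0$. Applying this with $h=\pi_s$ gives $H_s(\tilde{W})\le H(W)$, and applying it with $h=\pi_p$ gives $H_p(\underline{W})\le H_s(\tilde{W})$.

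Alternatively, one can argue termwise, which matches the style of the proof of Lemma~\ref{lem:pragmatic-nonnegativity}. For each class $\underline{w}$, every member $\tilde{w}\in e(\underline{w})$ satisfies $P(\tilde{w})\le P(\underline{w})$, so $-\log P(\tilde{w})\ge-\log P(\underline{w})$. Summing,
\begin{equation}
-\sum_{\tilde{w}\in e(\underline{w})}P(\tilde{w})\log P(\tilde{w})\ge -P(\underline{w})\log P(\underline{w}).
\end{equation}
Summing over $\underline{w}$ then yields the first inequality, and the identical argument with $f$ in place of $e$ yields the second. I would also record the equality conditions. Equality holds in the first inequality exactly when each pragmatic class contains at most one semantic class of positive probability, so that $e$ is trivial almost surely; the analogous statement holds for $f$.

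I expect the main obstacle to be definitional rather than analytic, namely ensuring that $\underline{W}$ really is a function of $\tilde{W}$, since otherwise neither inequality follows. This is why I would explicitly invoke the semantic refinement consistency condition, together with the fact that $\sim_p$ is an equivalence relation whose classes $e(\underline{w})$ partition $\tilde{\mathcal{W}}$. I would also note that if $H_s$ were defined differently in the semantic theory, some care would be needed; the argument assumes the pushforward form stated above.
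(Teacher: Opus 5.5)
Your proposal is correct and follows the paper's argument. The paper also writes $\underline{W}=q(\tilde{W})$ for a deterministic quotient map and uses the fact that a function of a random variable cannot have larger entropy; for $H_s(\tilde{W})\le H(W)$ it simply cites Niu--Zhang, where you re-derive it with the same lemma under the (correct) pushforward definition of $H_s$, and your equality condition (at most one positive-probability semantic class per pragmatic class) is slightly sharper than the paper's ``$e$ injective'' claim, which overlooks merged zero-probability classes.
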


\begin{proof}
We prove the two inequalities separately.

\textbf{First inequality: \(H_p(\underline{W}) \le H_s(\tilde{W})\).}

Since the Isoteleia Mapping \(e: \underline{\mathcal{W}} \to 2^{\tilde{\mathcal{W}}}\) partitions the semantic alphabet \(\tilde{\mathcal{W}}\) into disjoint equivalence classes, the pragmatic symbol \(\underline{W}\) is a deterministic function of the semantic symbol \(\tilde{W}\). That is, there exists a deterministic function \(q: \tilde{\mathcal{W}} \to \underline{\mathcal{W}}\) such that \(\underline{W} = q(\tilde{W})\). By the data processing inequality for entropy, the entropy of a function of a random variable cannot exceed the entropy of the original variable:
\begin{equation}
    H_p(\underline{W}) = H(q(\tilde{W})) \le H(\tilde{W}) = H_s(\tilde{W}).
\end{equation}
Equality holds if and only if the isoteleia mapping is injective (no two semantic classes are merged into the same pragmatic class).

\textbf{Second inequality: \(H_s(\tilde{W}) \le H(W)\).}
This is the fundamental result of semantic information theory, established in \cite{Paper_SIT,Book_SIT}.

Combining both inequalities yields the desired hierarchy.\qedhere
\end{proof}

This hierarchy captures the essence of pragmatic information theory: each successive layer of abstraction reduces uncertainty by discarding distinctions that are irrelevant to the current level of analysis. The syntactic layer retains all symbol distinctions; the semantic layer discards distinctions that do not affect meaning; and the pragmatic layer discards distinctions that do not affect the optimal terminal action.

We now establish the maximum entropy principle for pragmatic entropy, which characterizes the upper bound of decision uncertainty given the pragmatic alphabet.

\begin{lemma}[Maximum Pragmatic Entropy]
\label{lem:pragmatic-maximum}
Let \(\underline{\mathcal{W}}\) be the pragmatic alphabet with cardinality \(|\underline{\mathcal{W}}| = L\). The pragmatic entropy satisfies:
\begin{equation}
    H_p(\underline{W}) \le \log L.
    \label{eq:pragmatic-maximum}
\end{equation}
Equality holds if and only if the pragmatic symbols are uniformly distributed, i.e., \(P(\underline{w}) = 1/L\) for all \(\underline{w} \in \underline{\mathcal{W}}\).
\end{lemma}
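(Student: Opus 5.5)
The plan is to reduce the statement to the classical maximum-entropy bound for a random variable on a finite alphabet. By Eq.~\eqref{eq:pragmatic-probability}, the numbers $P(\underline{w})$, $\underline{w}\in\underline{\mathcal{W}}$, form a probability mass function on a set of $L$ elements. This holds because the isoteleia mapping partitions $\tilde{\mathcal{W}}$ and the synonymous mapping partitions $\mathcal{W}$, so summing over classes gives $\sum_{\underline{w}}P(\underline{w})=\sum_{w}P(w)=1$. Hence $H_p(\underline{W})$ is exactly the Shannon entropy of a distribution on $L$ points, and nothing specific to the pragmatic layer enters after this observation.

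Next I will bound $\log L - H_p(\underline{W})$ below by zero using the Kullback--Leibler divergence. Let $\mathcal{S}=\{\underline{w}:P(\underline{w})>0\}$ be the support, with the convention $0\log 0=0$. Then
\begin{equation*}
\log L - H_p(\underline{W}) = \sum_{\underline{w}\in\mathcal{S}} P(\underline{w})\log\frac{P(\underline{w})}{1/L} + \Bigl(1-\sum_{\underline{w}\in\mathcal{S}} P(\underline{w})\Bigr)\log L .
\end{equation*}
The bracketed term vanishes, so the right-hand side is $D(P_{\underline{W}}\,\|\,U_L)$, where $U_L$ is the uniform distribution on $\underline{\mathcal{W}}$.

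To show this divergence is nonnegative I will apply Jensen's inequality to the concave function $\log$:
\begin{equation*}
-D(P_{\underline{W}}\|U_L)=\sum_{\underline{w}\in\mathcal{S}}P(\underline{w})\log\frac{1/L}{P(\underline{w})}\le\log\sum_{\underline{w}\in\mathcal{S}}\frac{1}{L}=\log\frac{|\mathcal{S}|}{L}\le 0 .
\end{equation*}
This gives $H_p(\underline{W})\le\log L$.

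The only delicate step is the equality case, which requires tracking both inequalities in the display above. The second inequality is tight only if $|\mathcal{S}|=L$, i.e.\ every pragmatic class has positive probability. Since $\log$ is strictly concave, the Jensen step is tight only if the ratio $(1/L)/P(\underline{w})$ is constant on $\mathcal{S}$. Together these force $P(\underline{w})=1/L$ for all $\underline{w}$. Conversely, substituting the uniform distribution into Eq.~\eqref{eq:pragmatic-entropy} gives $H_p(\underline{W})=\log L$ directly, which completes the ``if and only if''.

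As an alternative to Jensen, the same conclusion follows from the Gibbs inequality $\ln x\le x-1$, with equality iff $x=1$.
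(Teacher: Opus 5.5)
Your proof is correct and takes the same route as the paper, which simply invokes the standard maximum-entropy argument for a distribution on $L$ points. You spell out the details the paper leaves implicit: the identity $\log L - H_p(\underline{W}) = D(P_{\underline{W}}\,\|\,U_L)$, the Jensen step, and the support bookkeeping needed for the equality case.
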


\begin{proof}
Applying the standard entropy maximization argument, we know that the entropy is maximized when the distribution is uniform. Therefore, \(H_p(\underline{W}) \le \log L\), with equality if and only if \(P(\underline{w}) = 1/L\) for all \(\underline{w} \in \underline{\mathcal{W}}\). \qedhere
\end{proof}

\begin{corollary}[Hierarchy of Maximum Entropies]
\label{cor:pragmatic-max-hierarchy}
Let \(L_p = |\underline{\mathcal{W}}|\), \(L_s = |\tilde{\mathcal{W}}|\), and \(L = |\mathcal{W}|\) denote the cardinalities of the pragmatic, semantic, and syntactic alphabets, respectively. Since the Isoteleia and Synonymous mappings are quotient mappings, we have \(L_p \le L_s \le L\). Consequently:
\begin{equation}
    \log L_p \le \log L_s \le \log L.
\end{equation}
Combining with Lemma~\ref{lem:pragmatic-hierarchy}, we obtain the full hierarchy of maximum entropies:
\begin{equation}
    H_p(\underline{W}) \le \log L_p \le \log L_s \le \log L.
    \label{eq:max-entropy-hierarchy}
\end{equation}
\end{corollary}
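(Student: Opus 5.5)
The plan is to obtain the chain in the corollary by combining two ingredients: a cardinality comparison coming from the quotient structure of Section~\ref{section_II}, and Lemma~\ref{lem:pragmatic-maximum} applied to the pragmatic alphabet.

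The first step is to prove \(L_p \le L_s \le L\). By the quotient-space structure \(\mathcal{W} \xrightarrow{\sim_s} \tilde{\mathcal{W}} \xrightarrow{\sim_p} \underline{\mathcal{W}}\), each layer is the set of equivalence classes of the layer below.
\begin{itemize}
\item The canonical projection \(\pi_s: \mathcal{W} \to \tilde{\mathcal{W}}\) sends \(w\) to its synonymous class. It is surjective, because every semantic fiber \(\tilde{w}_k\) is a nonempty subset of \(\mathcal{W}\), as the blocks of a partition are.
\item The map \(q: \tilde{\mathcal{W}} \to \underline{\mathcal{W}}\), already used in the proof of Lemma~\ref{lem:pragmatic-hierarchy}, sends \(\tilde{w}\) to its isoteleic class. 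It is surjective because each \(e(\underline{w})\) is a nonempty block of \(\Pi_{prag}\).
\end{itemize}
A surjection from a finite set onto another cannot increase cardinality, so \(L_s \le L\) and \(L_p \le L_s\). The same argument works when only a pragmatic class happens to be empty, provided \(\underline{\mathcal{W}}\) is understood as the quotient set, so that every class is nonempty by construction.

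The second step applies monotonicity of \(\log\) on \([1,\infty)\) to get \(\log L_p \le \log L_s \le \log L\).

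The third step invokes Lemma~\ref{lem:pragmatic-maximum} with \(L = L_p\) to get \(H_p(\underline{W}) \le \log L_p\). Concatenating the three steps gives \eqref{eq:max-entropy-hierarchy}.

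As a consistency check, one can also note that Lemma~\ref{lem:pragmatic-hierarchy} together with the classical bounds \(H_s(\tilde{W}) \le \log L_s\) and \(H(W) \le \log L\) yields the parallel chain \(H_p \le H_s \le \log L_s\). This is not needed for the stated inequality.

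I expect no real obstacle. The only point that needs care is the well-definedness of the quotient maps. For \(q\), this relies on the semantic refinement consistency condition, which makes each semantic class pragmatically homogeneous; I will cite that condition explicitly, so that \(q\) is a genuine function and the surjection argument applies.
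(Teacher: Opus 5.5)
Your proposal is correct and follows the same route as the paper: the quotient maps $\mathcal{W}\to\tilde{\mathcal{W}}\to\underline{\mathcal{W}}$ give $L_p \le L_s \le L$, monotonicity of $\log$ gives the middle chain, and the maximum-entropy bound on the pragmatic alphabet gives $H_p(\underline{W}) \le \log L_p$. Your citation for that last step, Lemma~\ref{lem:pragmatic-maximum}, is in fact the right one; the paper's text attributes the step to Lemma~\ref{lem:pragmatic-hierarchy}, but that lemma only yields $H_p(\underline{W}) \le H_s(\tilde{W}) \le H(W)$, not a bound in terms of $L_p$.
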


\begin{remark}
The maximum pragmatic entropy \(\log L_p\) is strictly less than the maximum semantic entropy \(\log L_s\) whenever the Isoteleia Mapping is non-injective (i.e., when multiple semantic classes merge into the same pragmatic class). This reflects the fundamental principle that the goal-directed compression of semantic distinctions reduces the theoretical upper bound of decision uncertainty. In the extreme case where all semantic classes map to a single pragmatic class (i.e., the same action is optimal for all states), we have \(L_p = 1\) and \(H_p(\underline{W}) = 0\), indicating that there is no decision uncertainty whatsoever.
\end{remark}

The maximum pragmatic entropy lemma provides a theoretical limit on the amount of decision uncertainty that can exist in a pragmatic information system. It also serves as a design guideline: to maximize the information content of pragmatic communication, the system should be designed such that the pragmatic symbols are as uniformly distributed as possible, within the constraints imposed by the utility function and the environment.

\begin{example}[Autonomous Vehicle Decision-Making]
\leavevmode\newline\indent
We now present a concrete numerical example to illustrate the calculation and interpretation of pragmatic entropy, comparing it with syntactic and semantic entropies. Consider an autonomous vehicle approaching an intersection. The vehicle's perception system observes three possible traffic light states, which serve as the syntactic alphabet \(\mathcal{W} = \{w_1, w_2, w_3\}\) with probabilities \(P(W) = \{0.6, 0.3, 0.1\}\), representing green, yellow, and red lights, respectively.
\end{example}

The vehicle's semantic interpretation groups traffic lights into two meaning classes: \(\tilde{w}_1 =\) ``Proceed with caution'' (green and yellow) with \(P(\tilde{w}_1) = 0.9\), and \(\tilde{w}_2 =\) ``Stop'' (red) with \(P(\tilde{w}_2) = 0.1\). The decision utility \(U(w, a)\) for actions \(a_1 = \text{Go}\) and \(a_2 = \text{Stop}\) is given in Table~\ref{tab:utility-example}.

\begin{table}[htbp]
\centering
\caption{Utility function \(U(w, a)\) for autonomous vehicle decision example.}
\label{tab:utility-example}
\begin{tabular}{c|c|c}
\hline
State \(w\) & Go (\(a_1\)) & Stop (\(a_2\)) \\ \hline
\(w_1\) (green) & +10 & -5 \\
\(w_2\) (yellow) & -2 & +1 \\
\(w_3\) (red) & -100 & +5 \\
\hline
\end{tabular}
\end{table}

For each semantic class, the expected utility is \(\mathbb{E}[U(\text{Go})|\tilde{w}_1] = (0.6/0.9)\cdot 10 + (0.3/0.9)\cdot (-2) = 6\), \(\mathbb{E}[U(\text{Stop})|\tilde{w}_1] = (0.6/0.9)\cdot (-5) + (0.3/0.9)\cdot (+1) = -3\), \(\mathbb{E}[U(\text{Go})|\tilde{w}_2] = (0.1/0.1)\cdot (-100) = -100\) and \(\mathbb{E}[U(\text{Stop})|\tilde{w}_2] = (0.1/0.1)\cdot (+5) = +5\). Thus, the optimal action for \(\tilde{w}_1\) is Go (\(6 > -3\)), while for \(\tilde{w}_2\) it is Stop (\(-100 < +5\)). The Isoteleia Mapping induces two pragmatic classes: \(\underline{w}_1 =\) ``Go'' containing \(\tilde{w}_1\) with \(P(\underline{w}_1) = 0.9\), and \(\underline{w}_2 =\) ``Stop'' containing \(\tilde{w}_2\) with \(P(\underline{w}_2) = 0.1\).

The syntactic entropy is \(H(W) = -[0.6\log 0.6 + 0.3\log 0.3 + 0.1\log 0.1] \approx 1.295\) bits. The semantic entropy is \(H_s(\tilde{W}) = -[0.9\log 0.9 + 0.1\log 0.1] \approx 0.469\) sebits. The pragmatic entropy is \(H_p(\underline{W}) = -[0.9\log 0.9 + 0.1\log 0.1] \approx 0.469\) prabits. In this example, \(H_p = H_s\) because each semantic class maps to a distinct pragmatic class (the Isoteleia Mapping is injective), yet the hierarchy \(H_p = 0.469 < 1.295 = H(W)\) still holds.

The syntactic entropy of 1.295 bits represents the uncertainty about the specific traffic light color observed. The semantic entropy of 0.469 sebits captures the uncertainty about whether the vehicle should ``Proceed with caution'' or ``Stop'' based on the meaning of the observation. The pragmatic entropy of 0.469 prabits reflects the uncertainty about the optimal terminal action, which ultimately determines the vehicle's behavior. The reduction from syntactic to pragmatic entropy (from 1.295 to 0.469 bits) demonstrates that the vehicle does not need to distinguish between green and yellow lights to make the correct decision; it only needs to distinguish between ``Go'' and ``Stop'' situations.

This example demonstrates how the pragmatic entropy measure provides a principled way to quantify the minimum decision uncertainty that must be resolved through communication and control, thereby establishing the theoretical foundation for resource allocation and system design in pragmatic information systems.

\subsection{Pragmatic Joint Entropy and Conditional Entropy}
\label{subsec:pragmatic-joint-entropy}

Building upon the definition of pragmatic entropy for a single variable, we now extend the framework to multiple semantic/pragmatic variables. This extension is essential for analyzing the dependencies and information flows between different components of a pragmatic information system, such as the relationship between the transmitted pragmatic intention and the received pragmatic interpretation. 

To formalize the joint behavior of two pragmatic sources, we first extend the Isoteleia Mapping to the joint and conditional cases. Let \(\underline{\mathcal{W}}\) and \(\underline{\mathcal{V}}\) be two pragmatic alphabets, with associated semantic alphabets \(\tilde{\mathcal{W}}\) and \(\tilde{\mathcal{V}}\), respectively. Let the corresponding syntactic alphabets be \(\mathcal{W}\) and \(\mathcal{V}\), with a joint distribution \(P(W,V)\).

\begin{definition}[Joint Isoteleia Mapping]
The \textbf{joint Isoteleia Mapping}
\begin{equation}
    e_{wv}: \underline{\mathcal{W}} \times \underline{\mathcal{V}} \longrightarrow 2^{\tilde{\mathcal{W}} \times \tilde{\mathcal{V}}}
\end{equation}
is defined by its preimage: for each pair of pragmatic symbols \((\underline{w}, \underline{v}) \in \underline{\mathcal{W}} \times \underline{\mathcal{V}}\), the image is the set of all semantic pairs \((\tilde{w}, \tilde{v})\) that jointly induce the same optimal terminal actions:
\begin{equation}
    e_{wv}(\underline{w}, \underline{v}) \triangleq \left\{ (\tilde{w}, \tilde{v}) \in \tilde{\mathcal{W}} \times \tilde{\mathcal{V}} \;\middle|\; 
    \arg\max_{a \in \mathcal{A}} \mathbb{E}[U(W,a) | \tilde{w}] = \underline{w},\ 
    \arg\max_{b \in \mathcal{B}} \mathbb{E}[U(V,b) | \tilde{v}] = \underline{v}
    \right\}.
\end{equation}
This mapping partitions the joint semantic space \(\tilde{\mathcal{W}} \times \tilde{\mathcal{V}}\) into equivalence classes according to the pair of optimal actions.
\end{definition}

\begin{definition}[Conditional Isoteleia Mapping]
Given a specific pragmatic symbol \(\underline{w} \in \underline{\mathcal{W}}\), the \textbf{conditional Isoteleia Mapping}
\begin{equation}
    e_{v|w}(\underline{v} | \underline{w}): \underline{\mathcal{V}} \longrightarrow 2^{\tilde{\mathcal{V}}}
\end{equation}
is defined as the set of semantic symbols \(\tilde{v}\) that, together with the given \(\underline{w}\), lead to the optimal action \(\underline{v}\):
\begin{equation}
    e_{v|w}(\underline{v} | \underline{w}) \triangleq \left\{ \tilde{v} \in \tilde{\mathcal{V}} \;\middle|\; 
    \arg\max_{b \in \mathcal{B}} \mathbb{E}[U(V,b) | \tilde{v}, \underline{w}] = \underline{v}
    \right\}.
\end{equation}
This mapping captures the pragmatic equivalence of the second variable conditioned on the first pragmatic class.
\end{definition}

The joint and conditional synonymous mappings have been established in the semantic information theory literature~\cite{Paper_SIT,Book_SIT}. The joint synonymous mapping partitions the syntactic product space according to meaning identity, while the conditional synonymous mapping partitions the syntactic space of one variable given the semantic class of the other. We will leverage these existing definitions in the subsequent entropy formulations.

With the joint and conditional isoteleia mappings defined, we can now introduce the corresponding entropy measures.

\begin{definition}[Pragmatic Joint Entropy]
Let \((\underline{W}, \underline{V})\) be a pair of pragmatic random variables derived from the syntactic pair \((W,V)\) via the joint isoteleia mapping \(e_{wv}\). The \textbf{pragmatic joint entropy} is defined as the Shannon entropy of the joint pragmatic distribution:
\begin{equation}
    H_p(\underline{W}, \underline{V}) \triangleq -\sum_{\underline{w} \in \underline{\mathcal{W}}} \sum_{\underline{v} \in \underline{\mathcal{V}}} P(\underline{w}, \underline{v}) \log P(\underline{w}, \underline{v}),
\end{equation}
where the joint pragmatic probabilities are obtained by aggregating over all semantic and syntactic fibers:
\begin{equation}
    P(\underline{w}, \underline{v}) = \sum_{(\tilde{w}, \tilde{v}) \in e_{uv}(\underline{w}, \underline{v})} P(\tilde{w}, \tilde{v}) = \sum_{(\tilde{w}, \tilde{v}) \in e_{uv}(\underline{w}, \underline{v})} \sum_{(w,v) \in f_{uv}(\tilde{w}, \tilde{v})} P(w,v),
\end{equation}
with \(f_{uv}\) denoting the joint Synonymous Mapping.
\end{definition}

\begin{definition}[Pragmatic Conditional Entropy]
There are two natural variants of pragmatic conditional entropy, depending on whether the conditioning variable is syntactic or pragmatic.

\textbf{(a) Pragmatic conditional entropy of \(\underline{V}\) given the syntactic variable \(W\):}
\begin{equation}
    H_p(\underline{V} | W) \triangleq -\sum_{w \in \mathcal{W}} \sum_{\underline{v} \in \underline{\mathcal{V}}} P(w) P(\underline{v} | w) \log P(\underline{v} | w),
\end{equation}
where \(P(\underline{v} | w) = \sum_{\tilde{v} \in e(\underline{v})} P(\tilde{v} | w)\) is the pragmatic distribution of \(\underline{V}\) conditioned on a specific syntactic observation \(w\).

\textbf{(b) Pragmatic conditional entropy of \(\underline{V}\) given the pragmatic variable \(\underline{W}\):}
\begin{equation}
    H_p(\underline{V} | \underline{W}) \triangleq -\sum_{\underline{w} \in \underline{\mathcal{W}}} \sum_{\underline{v} \in \underline{\mathcal{V}}} P(\underline{w}, \underline{v}) \log P(\underline{v} | \underline{w}),
\end{equation}
where \(P(\underline{v} | \underline{w}) = P(\underline{w}, \underline{v}) / P(\underline{w})\).
\end{definition}

Similarly, one can define \(H_p(\underline{W} | V)\) and \(H_p(\underline{W} | \underline{V})\) by symmetry. The choice between conditioning on syntactic or pragmatic variables depends on the application: \(H_p(\underline{V} | W)\) measures the residual decision uncertainty after observing the exact syntactic signal, while \(H_p(\underline{V} | \underline{W})\) measures the uncertainty of one decision given the other decision.

Analogous to semantic entropy, pragmatic entropy does not satisfy the exact additive chain rule of classical entropy; instead, it satisfies a series of inequalities that reflect the coarsening of information. We first establish the chain rule for a pair of pragmatic variables.

\begin{theorem}[Chain Rule for Binary Pragmatic Entropy]
For any two pragmatic variables \(\underline{W}\) and \(\underline{V}\) derived from syntactic variables \(W\) and \(V\) via the isoteleia and synonymous mappings, the following chain of inequalities holds:
\begin{equation}
H_{p}(\underline{W}) + H_{p}(\underline{V}|W) \le H_{p}(\underline{W},\underline{V}) \le H(W) + H_{p}(\underline{V}|W),
\label{eq:binary-pragmatic-chain}
\end{equation}
and symmetrically,
\begin{equation}
H_{p}(\underline{V}) + H_{p}(\underline{W}|V) \le H_{p}(\underline{W},\underline{V}) \le H(V) + H_{p}(\underline{W}|V).
\label{eq:binary-pragmatic-chain2}
\end{equation}
\end{theorem}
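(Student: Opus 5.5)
The plan is to reduce everything to classical Shannon identities applied to the induced random variables. First I would record that, by the quotient-space structure, there are deterministic maps \(\pi_W=q_W\circ s_W:\mathcal{W}\to\underline{\mathcal{W}}\) and \(\pi_V=q_V\circ s_V:\mathcal{V}\to\underline{\mathcal{V}}\). Here \(s\) is the inverse synonymous projection and \(q\) is the inverse isoteleia projection, the same \(q\) used in the proof of Lemma~\ref{lem:pragmatic-hierarchy}. With these maps, \(\underline{W}=\pi_W(W)\) and \(\underline{V}=\pi_V(V)\).

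The joint isoteleia mapping \(e_{wv}\) is defined componentwise, through the pair of optimal actions. Hence the aggregated probability \(P(\underline{w},\underline{v})\) in the definition of pragmatic joint entropy is exactly the law of the pair \((\pi_W(W),\pi_V(V))\) under \(P(W,V)\). Likewise, \(P(\underline{v}\mid w)\) in definition (a) is the conditional law of \(\pi_V(V)\) given \(W=w\). It follows that \(H_p(\underline{W},\underline{V})=H(\underline{W},\underline{V})\), \(H_p(\underline{W})=H(\underline{W})\), and \(H_p(\underline{V}\mid W)=H(\underline{V}\mid W)\), all as ordinary Shannon quantities. I expect this identification step to be the main obstacle. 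It is where one must check that the summations over fibers in the paper's definitions really coincide with pushforwards of \(P(W,V)\). In particular, the conditional isoteleia mapping must not alter the partition of \(\tilde{\mathcal{V}}\); I would adopt the unconditional partition \(e(\underline{v})\), as in definition (a).

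For the \emph{upper bound}, I would use that \(\underline{W}\) is a function of \(W\), so \(H(\underline{W},\underline{V})\le H(W,\underline{V})\). This holds because \((\underline{W},\underline{V})\) is a deterministic function of \((W,\underline{V})\), and the entropy of a function cannot exceed the entropy of its argument, the same fact used in Lemma~\ref{lem:pragmatic-hierarchy}. The classical chain rule then gives \(H(W,\underline{V})=H(W)+H(\underline{V}\mid W)\), which is the right-hand inequality of \eqref{eq:binary-pragmatic-chain}.

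For the \emph{lower bound}, I would start from the exact chain rule \(H(\underline{W},\underline{V})=H(\underline{W})+H(\underline{V}\mid\underline{W})\). Since \(\underline{W}=\pi_W(W)\) is a coarsening of \(W\), we have \(H(\underline{V}\mid W)=H(\underline{V}\mid W,\underline{W})\le H(\underline{V}\mid\underline{W})\), because conditioning reduces entropy. Substituting this into the chain rule yields the left-hand inequality. The symmetric statement \eqref{eq:binary-pragmatic-chain2} follows by exchanging the roles of \((W,\underline{W})\) and \((V,\underline{V})\) in both arguments.
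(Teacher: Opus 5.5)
Your proposal is correct and follows essentially the same route as the paper: the upper bound is argued identically, via \(H(\underline{W},\underline{V})\le H(W,\underline{V})=H(W)+H(\underline{V}\mid W)\). For the lower bound, the paper proves \(H_p(\underline{V}\mid\underline{W})\ge H_p(\underline{V}\mid W)\) by writing \(p(\cdot\mid\underline{w})\) as a mixture of the distributions \(p(\cdot\mid w)\), with \(w\) ranging over the fiber of \(\underline{w}\), and applying concavity of entropy; this is just the ``conditioning reduces entropy'' step you cite through \(H(\underline{V}\mid W)=H(\underline{V}\mid W,\underline{W})\), written out from first principles.
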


\begin{proof}
Let \(g_W: \underline{\mathcal{W}} \to 2^{\mathcal{W}}\) and \(g_V: \underline{\mathcal{V}} \to 2^{\mathcal{V}}\) be the reification mappings, so that \(\underline{W} = g_W^{-1}(W)\) and \(\underline{V} = g_V^{-1}(V)\) are deterministic functions of \(W\) and \(V\), respectively.

For any \(\underline{w} \in \underline{\mathcal{W}}\), define its preimage set
\[
\mathcal{W}_{\underline{w}} \triangleq \{ w \in \mathcal{W} : g_W(w) = \underline{w} \},
\]
and similarly \(\mathcal{V}_{\underline{v}} \triangleq \{ v \in \mathcal{V} : g_V(v) = \underline{v} \}\). The aggregated probabilities are
\[
p(\underline{w}) = \sum_{w \in \mathcal{W}_{\underline{w}}} p(w), \qquad
p(\underline{w}, \underline{v}) = \sum_{w \in \mathcal{W}_{\underline{w}}} \sum_{v \in \mathcal{V}_{\underline{v}}} p(w, v).
\]

For a fixed \(\underline{w}\), the conditional distribution of \(\underline{V}\) given \(\underline{W} = \underline{w}\) is
\begin{equation}
p(\underline{v} \mid \underline{w}) = \frac{p(\underline{w}, \underline{v})}{p(\underline{w})}
= \sum_{w \in \mathcal{W}_{\underline{w}}} \frac{p(w)}{p(\underline{w})} \cdot p(\underline{v} \mid w),
\label{eq:mix1}
\end{equation}
where \(p(\underline{v} \mid w) \triangleq \sum_{v \in \mathcal{V}_{\underline{v}}} p(v \mid w)\). Thus, for each \(\underline{w}\), the distribution \(p(\cdot \mid \underline{w})\) is a mixture of the distributions \(\{ p(\cdot \mid w) : w \in \mathcal{W}_{\underline{w}} \}\) with weights \(\frac{p(w)}{p(\underline{w})}\).

Since the Shannon entropy is concave,
\begin{equation}
H\big(p(\cdot \mid \underline{w})\big) \ge \sum_{w \in \mathcal{W}_{\underline{w}}} \frac{p(w)}{p(\underline{w})} H\big(p(\cdot \mid w)\big).
\label{eq:concavity}
\end{equation}
Multiplying by \(p(\underline{w})\) and summing over \(\underline{w}\):
\begin{equation}
\sum_{\underline{w}} p(\underline{w}) H\big(p(\cdot \mid \underline{w})\big)
\ge
\sum_{\underline{w}} \sum_{w \in \mathcal{W}_{\underline{w}}} p(w) H\big(p(\cdot \mid w)\big).
\label{eq:sum_mix}
\end{equation}
The left-hand side is \(H_p(\underline{V} \mid \underline{W})\), and the right-hand side is \(H_p(\underline{V} \mid W)\). Hence,
\begin{equation}
H_p(\underline{V} \mid \underline{W}) \ge H_p(\underline{V} \mid W).
\label{eq:cond_ineq}
\end{equation}

Since \(\underline{W}\) and \(\underline{V}\) are ordinary random variables, the classical chain rule applies:
\begin{equation}
H_p(\underline{W}, \underline{V}) = H_p(\underline{W}) + H_p(\underline{V} \mid \underline{W}).
\label{eq:classical_chain}
\end{equation}
Substituting \eqref{eq:cond_ineq} into \eqref{eq:classical_chain} yields the left inequality of \eqref{eq:binary-pragmatic-chain}:
\begin{equation}
H_p(\underline{W}, \underline{V}) \ge H_p(\underline{W}) + H_p(\underline{V} \mid W).
\label{eq:left_ineq}
\end{equation}

For the right inequality of \eqref{eq:binary-pragmatic-chain}, since \(\underline{W}\) is a function of \(W\),
\begin{equation}
H_p(\underline{W}, \underline{V}) = H(\underline{W}, \underline{V}) \le H(W, \underline{V}) = H(W) + H(\underline{V} \mid W) = H(W) + H_p(\underline{V} \mid W),
\label{eq:right_ineq}
\end{equation}
which proves the right inequality.

The symmetric chain \eqref{eq:binary-pragmatic-chain2} follows by interchanging \(W\) and \(V\), noting that \(\underline{V}\) is a function of \(V\), and applying the same mixture-entropy argument to \(p(\cdot \mid \underline{v})\). This completes the proof.
\end{proof}

The chain rule (Eq.~\ref{eq:binary-pragmatic-chain}) reveals that pragmatic joint entropy lies between two bounds: the lower bound is the sum of the marginal pragmatic entropy of \(\underline{W}\) and the pragmatic uncertainty of \(\underline{V}\) given the syntactic observation \(W\); the upper bound is the sum of the syntactic entropy of \(W\) and the same pragmatic conditional term. This intermediate position reflects the fact that pragmatic abstraction reduces uncertainty, but not as much as semantic abstraction.

\begin{theorem}[Hierarchy of Binary Joint Entropies]
For any two syntactic variables \(W\) and \(V\) with their associated semantic variables \(\tilde{W}, \tilde{V}\) and pragmatic variables \(\underline{W}, \underline{V}\), the following inequality holds:
\begin{equation}
     H_p(\underline{W}, \underline{V}) \;\le\; H_s(\tilde{W}, \tilde{V}) \;\le\; H(W, V).
    \label{eq:binary-joint-hierarchy}
\end{equation}
Moreover, equality in the first inequality holds if and only if the isoteleia mapping is injective on the joint semantic space (i.e., no two distinct semantic pairs are merged into the same pragmatic pair). Equality in the second holds if and only if the joint synonymous mapping is injective (no two distinct syntactic pairs share the same semantic meaning).
\end{theorem}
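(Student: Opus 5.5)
The plan is to mirror the proof of Lemma~\ref{lem:pragmatic-hierarchy}, now on the product alphabets. The key observation is that both joint variables arise as deterministic functions of coarser and coarser quotients.

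By definition, the joint synonymous mapping $f_{wv}$ partitions $\mathcal{W}\times\mathcal{V}$ into disjoint fibers. Hence there is a deterministic map $r:\mathcal{W}\times\mathcal{V}\to\tilde{\mathcal{W}}\times\tilde{\mathcal{V}}$ with $(\tilde W,\tilde V)=r(W,V)$. Likewise, the joint isoteleia mapping $e_{wv}$ assigns each semantic pair to exactly one pragmatic pair, namely the pair of optimal actions $(a^*(\tilde w),b^*(\tilde v))$. This gives a deterministic map $q:\tilde{\mathcal{W}}\times\tilde{\mathcal{V}}\to\underline{\mathcal{W}}\times\underline{\mathcal{V}}$ with $(\underline W,\underline V)=q(\tilde W,\tilde V)$.

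I will first check that these maps are well defined. For $e_{wv}$ this requires the semantic refinement condition from Section~\ref{subsec:isoteleia-mapping}: each semantic pair must be pragmatically homogeneous, so that its image is a single pragmatic pair. I will also verify that the aggregated probabilities $P(\underline w,\underline v)$ in the definition of pragmatic joint entropy coincide with the pushforward of $P(\tilde w,\tilde v)$ under $q$. Both facts follow directly from the fibers being disjoint and covering the space.

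With these maps in hand, the two inequalities follow from the fact that a function of a random variable has no more entropy than the variable itself. Writing $X=(\tilde W,\tilde V)$, we have $H(q(X))=H(X)-H(X\mid q(X))\le H(X)$, since $H(q(X)\mid X)=0$. This gives $H_p(\underline W,\underline V)\le H_s(\tilde W,\tilde V)$. The same argument applied to $r$ gives $H_s(\tilde W,\tilde V)\le H(W,V)$; this second inequality is also the joint version of the semantic result in \cite{Paper_SIT,Book_SIT}, which I can cite instead.

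The step I expect to need the most care is the equality characterization. Equality holds iff $H(X\mid q(X))=0$, that is, iff $X$ is almost surely determined by $q(X)$. This is equivalent to $q$ being injective on the support of $P(\tilde w,\tilde v)$. So the ``if and only if injective'' clause is exact only when every semantic pair has positive probability, or when injectivity is read as injectivity on the support. I will state this convention explicitly: merging a zero-probability pair into another class does not change the entropy. Under that convention, both directions are immediate. If $q$ is injective on the support, the two distributions are relabelings of each other and the entropies agree. If two positive-probability pairs merge, some fiber of $q$ has positive conditional entropy, and the inequality is strict. The same reasoning applied to $r$ gives the equality condition for the second inequality.
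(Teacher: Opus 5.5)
Your proof is correct and follows the paper's argument. Both inequalities come from writing $(\underline W,\underline V)$ as a deterministic function of $(\tilde W,\tilde V)$, and $(\tilde W,\tilde V)$ as a deterministic function of $(W,V)$, then applying $H(q(X))\le H(X)$. Your caveat that equality holds iff the map is injective \emph{on the support} genuinely sharpens the paper's equality clause, which as stated is exact only when every semantic (resp.\ syntactic) pair has positive probability.
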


\begin{proof}
We prove the two inequalities separately.

\textbf{First inequality: \(H_p(\underline{W}, \underline{V}) \le H_s(\tilde{W}, \tilde{V})\)}.

By definition, the pragmatic variables are deterministic functions of the semantic variables via the joint Isoteleia mapping:
\begin{equation}
    (\underline{W}, \underline{V}) = e_{uv}^{-1}(\tilde{W}, \tilde{V}),
\end{equation}
where \(e_{uv}^{-1}\) is the inverse of the isoteleia mapping that collapses each semantic equivalence class to a single pragmatic symbol. Since \((\underline{W}, \underline{V})\) is a deterministic function of \((\tilde{W}, \tilde{V})\), the joint entropy of a function of a random variable cannot exceed the joint entropy of the original variable. Applying the data processing inequality for entropy:
\begin{equation}
    H_p(\underline{W}, \underline{V}) = H(e_{uv}^{-1}(\tilde{W}, \tilde{V})) \le H(\tilde{W}, \tilde{V}) = H_s(\tilde{W}, \tilde{V}).
\end{equation}
Equality holds if and only if the mapping is injective, i.e., no two distinct semantic pairs are merged.

\textbf{Second inequality: \(H_s(\tilde{W}, \tilde{V}) \le H(W, V)\)}.

Similarly, the semantic variables are deterministic functions of the syntactic variables via the joint synonymous mapping:
\begin{equation}
    (\tilde{W}, \tilde{V}) = f_{uv}^{-1}(W, V),
\end{equation}
where \(f_{uv}^{-1}\) collapses each syntactic equivalence class to a single semantic symbol. Hence,
\begin{equation}
    H_s(\tilde{W}, \tilde{V}) = H(f_{uv}^{-1}(W, V)) \le H(W, V).
\end{equation}
Equality holds if and only if the joint synonymous mapping is injective.

Combining both inequalities yields the desired hierarchy.\qedhere
\end{proof}

We now extend this result to sequences of pragmatic variables. To do so, we first define the sequential extension of the joint reification mapping.

\begin{definition}[Sequential Joint Reification Mapping]
Let \(g_{WV}: \underline{\mathcal{W}} \times \underline{\mathcal{V}} \to 2^{\mathcal{W} \times \mathcal{V}}\) be the joint Reification Mapping that partitions the joint syntactic space \(\mathcal{W} \times \mathcal{V}\) into pragmatic equivalence classes. Its \(n\)-th extension
\begin{equation}
    g_{WV}^n: \underline{\mathcal{W}}^n \times \underline{\mathcal{V}}^n \longrightarrow 2^{\mathcal{W}^n \times \mathcal{V}^n}
\end{equation}
is defined by
\begin{equation}
    g_{WV}^n(\underline{w}^n, \underline{v}^n) \triangleq \prod_{i=1}^{n} g_{WV}(\underline{w}_i, \underline{v}_i),
\end{equation}
where \(\underline{w}^n = (\underline{w}_1, \dots, \underline{w}_n)\) and \(\underline{v}^n = (\underline{v}_1, \dots, \underline{v}_n)\) are pragmatic sequences, and the product denotes the Cartesian product of sets. In other words, a pragmatic sequence pair \((\underline{w}^n, \underline{v}^n)\) corresponds to the set of all syntactic sequence pairs \((w^n, v^n)\) such that for every position \(i\), \((w_i, v_i) \in g_{WV}(\underline{w}_i, \underline{v}_i)\).

The marginal sequential pragmatic variables \(\underline{W}^n\) and \(\underline{V}^n\) are then obtained by projecting from the joint pragmatic sequence. The probability of a pragmatic sequence pair is given by
\begin{equation}
    P(\underline{w}^n, \underline{v}^n) = \sum_{(w^n, v^n) \in g_{WV}^n(\underline{w}^n, \underline{v}^n)} P(w^n, v^n).
\end{equation}
\end{definition}

\begin{theorem}[Chain Rule for Sequential Pragmatic Entropy]\label{theorem:Chain-Rule-Sequential-Pragmatic-Entropy}
Let \((\underline{W}_1, \underline{W}_2, \dots, \underline{W}_n)\) be a sequence of pragmatic variables derived from the syntactic sequence \((W_1, W_2, \dots, W_n)\) via the sequential joint Reification Mapping. Define the sequential partial entropy
\begin{equation}\label{eq:sequential-partial-entropy}
    \tilde{H}_p(\underline{W}_1^m, W_{m+1}^n) \triangleq \sum_{k=1}^{m} H_p(\underline{W}_k | \underline{W}_1^{k-1}) + \sum_{k=m+1}^{n} H_p(W_k | \underline{W}_1^m, W_{m+1}^{k-1}),
\end{equation}
where the notation \(\underline{W}_1^m\) denotes the block \((\underline{W}_1, \dots, \underline{W}_m)\), and \(W_{m+1}^n\) denotes the block \\ \((W_{m+1}, \dots, W_n)\). Then the following chain of inequalities holds:
\begin{equation}
      \sum_{k=1}^{n} H_p(\underline{W}_k | W_1^{k-1}) \;\le\; H_p(\underline{W}_1^n) \;\le\; \tilde{H}_p(\underline{W}_1^{n-1}, W_n) \;\le\; \cdots \;\le\; \tilde{H}_p(\underline{W}_1, W_2^n) \;\le\; H_p(W_1^n),
\end{equation}
where the last term \(H_p(W_1^n)\) denotes the syntactic entropy of the full sequence.
\end{theorem}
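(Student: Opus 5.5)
The plan is to reduce every quantity in the chain to an ordinary Shannon entropy of ordinary random variables, and then use two classical facts: conditioning on a function of a variable cannot give less entropy than conditioning on the variable itself, and applying a deterministic map cannot increase joint entropy.

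\textbf{Step 1: the pragmatic variables are componentwise functions.} Because the sequential joint reification mapping is a Cartesian product of single-letter classes, each $\underline{W}_k$ depends only on $W_k$. So there is a deterministic map $q$ with $\underline{W}_k=q(W_k)$, where $q$ is obtained by composing the inverse synonymous and inverse isoteleia mappings. Consequently $P(\underline{w}^n)$ is the push-forward of $P(w^n)$ under $q^{\otimes n}$. Every pragmatic and mixed term in the statement, such as $H_p(\underline{W}_k\mid W_1^{k-1})$ or $H_p(W_k\mid \underline{W}_1^m,W_{m+1}^{k-1})$, is then an ordinary conditional entropy of jointly distributed random variables. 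The same identification was already used in the proof of the binary chain rule. I expect this step to be the main obstacle. It needs the semantic-refinement consistency condition, so that $q$ is well defined, and it needs the mixed conditional entropies to be read classically. Once this is settled, the rest is routine.

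\textbf{Step 2: the leftmost inequality.} By the classical chain rule,
\[
H_p(\underline{W}_1^n)=\sum_{k=1}^n H(\underline{W}_k\mid \underline{W}_1^{k-1}).
\]
Since $\underline{W}_1^{k-1}=q^{\otimes(k-1)}(W_1^{k-1})$, we have $H(\underline{W}_k\mid \underline{W}_1^{k-1})\ge H(\underline{W}_k\mid W_1^{k-1})$. This can be shown either by the mixture/concavity argument of \eqref{eq:concavity}--\eqref{eq:cond_ineq}, or directly from $H(X\mid q(Y))\ge H(X\mid Y)$. Summing over $k$ gives the first inequality.

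\textbf{Step 3: the middle terms and the endpoints.} For $0\le m\le n$, the classical chain rule applied to the ordinary vector $(\underline{W}_1^m,W_{m+1}^n)$ shows that \eqref{eq:sequential-partial-entropy} equals the joint entropy:
\[
\tilde H_p(\underline{W}_1^m,W_{m+1}^n)=H(\underline{W}_1^m,W_{m+1}^n).
\]
At $m=n$ this is $H_p(\underline{W}_1^n)$, and at $m=0$ it is $H(W_1^n)$. Each link of the chain compares $m$ with $m-1$. The vector $(\underline{W}_1^m,W_{m+1}^n)$ is the deterministic image of $(\underline{W}_1^{m-1},W_m,W_{m+1}^n)$ under the map that applies $q$ to the $m$-th coordinate only. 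Entropy does not increase under deterministic maps, so
\[
H(\underline{W}_1^m,W_{m+1}^n)\le H(\underline{W}_1^{m-1},W_m^n).
\]
Letting $m$ run from $n$ down to $1$ gives all the remaining inequalities. As in Lemma~\ref{lem:pragmatic-hierarchy}, equality in the $m$-th link holds iff $q$ is injective on the support of $W_m$ given the other coordinates.
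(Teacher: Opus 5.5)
Your proof is correct and is essentially the paper's argument. The paper's sketch iterates the binary chain rule, whose two ingredients are exactly your Step~2 fact (conditioning on a function of a variable cannot lower entropy) and your Step~3 fact (a deterministic map cannot raise joint entropy); your identification \(\tilde{H}_p(\underline{W}_1^m, W_{m+1}^n) = H(\underline{W}_1^m, W_{m+1}^n)\) makes each link explicit where the paper defers to the semantic analogue, and Step~1 is less of an obstacle than you suggest, since the paper already treats \(\underline{W}_k\) as a deterministic function of \(W_k\) in the binary chain rule.
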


\begin{proof}
The proof follows by iteratively applying the binary chain rule to each adjacent pair in the sequence. Starting from the full pragmatic joint entropy \(H_p(\underline{W}_1^n)\), we use the binary chain rule to bound it in terms of the entropy of the first \(n-1\) variables and the last variable, either conditioned on the syntactic or pragmatic prefix. By gradually replacing pragmatic conditioning with syntactic conditioning, we obtain the chain of inequalities. The details are analogous to the semantic sequential chain rule established in~\cite{Paper_SIT,Book_SIT}, with the substitution of pragmatic entropies.
\end{proof}

\begin{theorem}[Hierarchy of Sequential Joint Entropies]
Let \(W^n = (W_1, \dots, W_n)\) be a syntactic sequence, with associated semantic sequence \(\tilde{W}^n\) and pragmatic sequence \(\underline{W}^n\) obtained by applying the Synonymous and Isoteleia mappings element-wise (or, equivalently, by the sequential joint Reification Mapping). Then:
\begin{equation}
    H_p(\underline{W}^n) \;\le\; H_s(\tilde{W}^n) \;\le\; H(W^n),
    \label{eq:sequential-joint-hierarchy}
\end{equation}
where \(H_p(\underline{W}^n)\), \(H_s(\tilde{W}^n)\), and \(H(W^n)\) denote the joint entropies of the entire sequences.
\end{theorem}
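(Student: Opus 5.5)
The plan is to reduce the claim to the elementary fact that the entropy of a deterministic function of a random variable never exceeds the entropy of the variable itself. This is the argument already used for Lemma~\ref{lem:pragmatic-hierarchy} and for the binary joint hierarchy (\ref{eq:binary-joint-hierarchy}). The only new ingredient is to verify that, at the block level, the three sequences are successive deterministic images of one another.

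First, define the per-letter quotient maps. Let $s:\mathcal{W}\to\tilde{\mathcal{W}}$ send $w$ to its synonymous class, which is the inverse of $f$ and well defined because $\sim_s$ partitions $\mathcal{W}$. Let $q:\tilde{\mathcal{W}}\to\underline{\mathcal{W}}$ send $\tilde w$ to its isoteleic class, which is the inverse of $e$ and well defined after the semantic refinement of Section~\ref{subsec:isoteleia-mapping}. Their element-wise extensions are
\[
s^n(w^n)=(s(w_1),\dots,s(w_n)), \qquad q^n(\tilde w^n)=(q(\tilde w_1),\dots,q(\tilde w_n)).
\]
I will then check that the fibers of $q^n\circ s^n$ are exactly the Cartesian products $\prod_i g(\underline w_i)$ appearing in the sequential reification mapping. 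Consequently
\[
P(\underline w^n)=\sum_{w^n\in (q^n\circ s^n)^{-1}(\underline w^n)}P(w^n),
\]
so $\underline{W}^n=q^n(\tilde W^n)$ and $\tilde W^n=s^n(W^n)$ hold as identities of random variables on the same probability space. This identification is where the ``equivalently, by the sequential joint Reification Mapping'' clause of the statement is used.

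Second, apply $H(\phi(X))\le H(X)$ twice. This inequality follows from the chain rule
\[
H(X)=H(X,\phi(X))=H(\phi(X))+H(X\mid\phi(X))\ge H(\phi(X)).
\]
With $X=\tilde W^n$ and $\phi=q^n$ it gives $H_p(\underline W^n)\le H_s(\tilde W^n)$. With $X=W^n$ and $\phi=s^n$ it gives $H_s(\tilde W^n)\le H(W^n)$. As a byproduct, equality holds exactly when $H(\tilde W^n\mid \underline W^n)=0$, that is, when $q^n$ is injective on the support, and similarly for $s^n$.

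The main obstacle is not the entropy inequality but the consistency of the block-level mappings. The pragmatic sequence must really be a function of the semantic sequence, and the joint pragmatic probabilities must really aggregate the syntactic ones over product fibers. This could fail if the isoteleia classes at position $i$ depended on other positions, for instance through context-dependent optimal actions as in the conditional isoteleia mapping. I would handle this by stating explicitly that the statement concerns the element-wise, memoryless-mapping case, in which $q^n$ is a genuine function regardless of any dependence in $P(W^n)$, since the inequality does not require independence.

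If a context-dependent mapping $\underline W^n=\Psi(\tilde W^n)$ is wanted instead, the same argument goes through verbatim provided $\Psi$ is a deterministic function of the whole semantic block. So I would phrase the final step in that generality.
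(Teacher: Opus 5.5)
Your proposal is correct and follows essentially the same route as the paper. Both arguments note that, because the mappings act element-wise, $\underline{W}^n$ is a deterministic function of $\tilde{W}^n$ (the paper's $e^{-n}$, your $q^n$) and $\tilde{W}^n$ is a deterministic function of $W^n$ (the paper's $f^{-n}$, your $s^n$), and then apply $H(\phi(X)) \le H(X)$ twice, with your explicit fiber check and equality conditions simply making explicit what the paper leaves implicit.
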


\begin{proof}
The proof follows directly from the element-wise deterministic mappings. Since the sequential joint reification mapping is defined as the product of the individual mappings, the entire pragmatic sequence \(\underline{W}^n\) is a deterministic function of the semantic sequence \(\tilde{W}^n\), and \(\tilde{W}^n\) is a deterministic function of the syntactic sequence \(W^n\). Therefore:
\begin{equation}
    H_p(\underline{W}^n) = H(e^{-n}(\tilde{W}^n)) \le H(\tilde{W}^n) = H_s(\tilde{W}^n),
\end{equation}
and
\begin{equation}
    H_s(\tilde{W}^n) = H(f^{-n}(W^n)) \le H(W^n).
\end{equation}
Combining these yields the desired hierarchy.
\end{proof}

\begin{corollary}[Hierarchy of Sequential Conditional Entropies]
For any \(m\) with \(1 \le m \le n\), the sequential partial entropies also satisfy the same ordering:
\begin{equation}
    \tilde{H}_p(\underline{W}_1^m, W_{m+1}^n) \;\le\; \tilde{H}_s(\tilde{W}_1^m, W_{m+1}^n) \;\le\; H(W^n),
\end{equation}
where the partial entropies are defined as in the chain rule (Eq.~(\ref{eq:sequential-partial-entropy})). This follows because the partial entropies are linear combinations of conditional entropies, each of which obeys the single-variable hierarchy.
\end{corollary}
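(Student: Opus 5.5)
I would prove the corollary by comparing the two partial entropies term by term, using that each term is a conditional entropy of a variable that is a deterministic function of a finer one. Fix $m$ and write
\[
\tilde{H}_p(\underline{W}_1^m, W_{m+1}^n)=\sum_{k=1}^{m} H_p(\underline{W}_k\mid \underline{W}_1^{k-1})+\sum_{k=m+1}^{n} H_p(W_k\mid \underline{W}_1^m, W_{m+1}^{k-1}).
\]
Write $\tilde{H}_s(\tilde{W}_1^m, W_{m+1}^n)$ the same way, with tildes in place of underlines. Since the pragmatic alphabet is the quotient $\tilde{\mathcal{W}}/\sim_p$, the element-wise isoteleia quotient $q$ gives $\underline{W}_k=q(\tilde{W}_k)$ for every $k$. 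Likewise the synonymous quotient gives $\tilde{W}_k$ as a function of $W_k$.

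I would not compare the two sums summand by summand. A single conditional term can move in either direction under coarsening: coarsening the target lowers the term, while coarsening the conditioning raises it. Instead I would collapse each prefix sum using the classical chain rule, which applies because $\underline{W}_k$ and $\tilde{W}_k$ are ordinary random variables. This gives $\sum_{k\le m} H(\underline{W}_k\mid\underline{W}_1^{k-1})=H(\underline{W}_1^m)$. For the suffix sum, the chain rule gives $\sum_{k>m}H(W_k\mid \underline{W}_1^m,W_{m+1}^{k-1})=H(W_{m+1}^n\mid \underline{W}_1^m)$. Hence
\[
\tilde{H}_p(\underline{W}_1^m, W_{m+1}^n)=H(\underline{W}_1^m, W_{m+1}^n),\qquad \tilde{H}_s(\tilde{W}_1^m, W_{m+1}^n)=H(\tilde{W}_1^m, W_{m+1}^n).
\]

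The hierarchy then follows from data processing for entropy. The block $(\underline{W}_1^m, W_{m+1}^n)$ is a deterministic function of $(\tilde{W}_1^m, W_{m+1}^n)$, obtained by applying $q$ coordinatewise on the first $m$ entries. That block is in turn a function of $W^n$. This yields $H(\underline{W}_1^m,W_{m+1}^n)\le H(\tilde{W}_1^m,W_{m+1}^n)\le H(W^n)$, by the same argument as in the Hierarchy of Sequential Joint Entropies theorem.

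The main obstacle is the interpretation of the statement, not the computation. The corollary's stated reason, "each conditional term obeys the single-variable hierarchy," is not literally valid term by term, so I would replace it with the chain-rule collapse above. I would also check that the paper's $H_p(\cdot\mid\cdot)$ notation coincides with ordinary conditional entropy of the aggregated variables. This holds under definition (b) of pragmatic conditional entropy. For syntactic targets, I would read $H_p(W_k\mid\cdot)$ as ordinary $H(W_k\mid\cdot)$.
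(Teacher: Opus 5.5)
Your proof is correct, and it follows a different and sounder route than the paper. The paper justifies the corollary in one line by comparing $\tilde{H}_p$ and $\tilde{H}_s$ summand by summand, claiming that each conditional entropy ``obeys the single-variable hierarchy.'' As you suspected, that comparison fails.

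\begin{itemize}
\item In the suffix sum the target $W_k$ is the same in both expressions, while the conditioning $\underline{W}_1^m$ is a function of $\tilde{W}_1^m$. So every suffix term satisfies the reverse inequality $H(W_k\mid \underline{W}_1^m, W_{m+1}^{k-1}) \ge H(W_k\mid \tilde{W}_1^m, W_{m+1}^{k-1})$. It is strict whenever $W_k$ depends on semantic distinctions that the isoteleia mapping merges.
\item The prefix terms $H(\underline{W}_k\mid\underline{W}_1^{k-1})$ and $H(\tilde{W}_k\mid\tilde{W}_1^{k-1})$ can be ordered either way for $k\ge 2$.
\end{itemize}

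Your chain-rule collapse to $H(\underline{W}_1^m, W_{m+1}^n)$ and $H(\tilde{W}_1^m, W_{m+1}^n)$ avoids this entirely. It reduces the claim to data processing for deterministic maps.

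It also makes the slack explicit. Since $\underline{W}_1^m$ is a function of $\tilde{W}_1^m$, you get $H(\tilde{W}_1^m, W_{m+1}^n) - H(\underline{W}_1^m, W_{m+1}^n) = H(\tilde{W}_1^m \mid \underline{W}_1^m, W_{m+1}^n) \ge 0$. So the entropy saved in the prefix always outweighs what the coarser conditioning adds to the suffix.

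Your identification gives a further benefit. The monotonicity in $m$ asserted in the paper's Chain Rule for Sequential Pragmatic Entropy follows from the same argument, because $(\underline{W}_1^{m}, W_{m+1}^n)$ is a function of $(\underline{W}_1^{m-1}, W_m^n)$.

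Your reading of $H_p(\cdot\mid\cdot)$ is ordinary conditional entropy of the aggregated variables: definition (b) for pragmatic targets and plain $H$ for syntactic targets. This is consistent with the paper's definitions and with its sequential chain rule, so nothing further is needed.
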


\begin{remark}
The hierarchy theorems reveal a layered structure of uncertainty: syntactic entropy captures all symbol-level distinctions; semantic entropy removes meaningless syntactic variations; and pragmatic entropy further removes decision-irrelevant semantic distinctions. Each successive coarsening reduces the joint entropy, providing a principled foundation for multi-level source coding and compression in pragmatic communication systems.
\end{remark}

\begin{example}[Multi-Vehicle Autonomous Driving Scenario]
\label{example:multi-vehicle-driving}
\leavevmode\newline\indent
We now present a concrete numerical example to illustrate the computation and interpretation of pragmatic joint and conditional entropies, with particular emphasis on distinguishing pragmatic entropy from semantic entropy. Consider two autonomous vehicles, Vehicle A and Vehicle B, approaching an intersection. Each vehicle must decide whether to ``Go'' or ``Stop'' based on its observed traffic light state and, potentially, the state of the other vehicle.
\end{example}

Let the syntactic state of Vehicle A be \(W \in \{w_1, w_2, w_3, w_4\}\) representing green, yellow, red-short, and red-long, respectively; Vehicle B's state \(V\) has the same interpretation. The joint probability distribution \(P(W,V)\) is given in Table~\ref{tab:joint-prob}, with marginals \(P(w_1)=0.23,\;P(w_2)=0.22,\;P(w_3)=0.25,\;P(w_4)=0.30\), and similarly for \(V\).

\begin{table}[htbp]
\centering
\caption{Joint probability distribution \(P(W,V)\) for two vehicles.}
\label{tab:joint-prob}
\begin{tabular}{c|cccc}
\hline
\(P(W,V)\) & \(v_1\) (green) & \(v_2\) (yellow) & \(v_3\) (red-short) & \(v_4\) (red-long) \\ \hline
\(w_1\) (green)       & 0.12 & 0.06 & 0.03 & 0.02 \\
\(w_2\) (yellow)      & 0.06 & 0.10 & 0.04 & 0.02 \\
\(w_3\) (red-short)   & 0.03 & 0.04 & 0.12 & 0.06 \\
\(w_4\) (red-long)    & 0.02 & 0.02 & 0.06 & 0.20 \\ \hline
\end{tabular}
\end{table}

The synonymous mapping groups the syntactic states into three semantic classes: \(\tilde{w}_1 = \tilde{v}_1 =\) ``Proceed with caution'' (green and yellow) with probability 0.45, \(\tilde{w}_2 = \tilde{v}_2 =\) ``Stop-short'' (red-short) with probability 0.25, and \(\tilde{w}_3 = \tilde{v}_3 =\) ``Stop-long'' (red-long) with probability 0.30. Under a safety-first utility, the optimal action for \(\tilde{w}_1\) is ``Go'', while for both \(\tilde{w}_2\) and \(\tilde{w}_3\) it is ``Stop''. The isoteleia mapping therefore merges the two stop classes into a single pragmatic class: \(\underline{w}_1 =\) ``Go'' containing \(\tilde{w}_1\) with probability 0.45, and \(\underline{w}_2 =\) ``Stop'' containing \(\tilde{w}_2\) and \(\tilde{w}_3\) with probability 0.55. The same mapping applies to Vehicle B.

Aggregating the syntactic probabilities yields the semantic joint distribution \(P(\tilde{W},\tilde{V})\) (Table~\ref{tab:semantic-joint-dist}) and the pragmatic joint distribution \(P(\underline{W},\underline{V})\) (Table~\ref{tab:pragmatic-joint-dist}). The syntactic joint entropy is \(H(W,V) \approx 3.6314\) bits; the semantic joint entropy is \(H_s(\tilde{W},\tilde{V}) \approx 2.7563\) sebits; and the pragmatic joint entropy is \(H_p(\underline{W},\underline{V}) \approx 1.7509\) prabits, confirming the hierarchy \(H_p \le H_s \le H\). Marginal entropies are \(H_s(\tilde{W}) \approx 1.5395\) bits and \(H_p(\underline{W}) \approx 0.9928\) prabits.

\begin{table}[htbp]
\centering
\caption{Joint semantic distribution \(P(\tilde{W},\tilde{V})\).}
\label{tab:semantic-joint-dist}
\begin{tabular}{c|ccc}
\hline
\(P(\tilde{W},\tilde{V})\) & \(\tilde{v}_1\) (Proceed) & \(\tilde{v}_2\) (Stop-short) & \(\tilde{v}_3\) (Stop-long) \\ \hline
\(\tilde{w}_1\) (Proceed) & 0.34 & 0.07 & 0.04 \\
\(\tilde{w}_2\) (Stop-short) & 0.07 & 0.12 & 0.06 \\
\(\tilde{w}_3\) (Stop-long)  & 0.04 & 0.06 & 0.20 \\ \hline
\end{tabular}
\end{table}

\begin{table}[htbp]
\centering
\caption{Joint pragmatic distribution \(P(\underline{W},\underline{V})\).}
\label{tab:pragmatic-joint-dist}
\begin{tabular}{c|cc}
\hline
\(P(\underline{W},\underline{V})\) & \(\underline{v}_1\) (Go) & \(\underline{v}_2\) (Stop) \\ \hline
\(\underline{w}_1\) (Go)   & 0.34 & 0.11 \\
\(\underline{w}_2\) (Stop) & 0.11 & 0.44 \\ \hline
\end{tabular}
\end{table}

The pragmatic conditional entropy is \(H_p(\underline{V}|\underline{W}) = H_p(\underline{W},\underline{V}) - H_p(\underline{W}) = 0.7581\) prabits, while \(H_p(\underline{V}|W) \approx 0.7435\) prabits. The binary chain rule is verified: \(H_p(\underline{W}) + H_p(\underline{V}|W) = 1.7363 \le H_p(\underline{W},\underline{V}) = 1.7509\), and \(H_p(\underline{W},\underline{V}) = 1.7509 \le H(W) + H_p(\underline{V}|W) = 1.9893 + 0.7435 = 2.7328\). These results (summarized in Table~\ref{tab:entropy-comparison}) demonstrate that pragmatic abstraction, by merging semantically distinct but action-equivalent states, reduces joint uncertainty beyond semantic abstraction.

\begin{table}[htbp]
\centering
\caption{Comparison of joint and conditional entropies.}
\label{tab:entropy-comparison}
\begin{tabular}{c|c}
\hline
\textbf{Entropy Measure} & \textbf{Value (bits/sebits/prabits)} \\ \hline
Syntactic Joint Entropy \(H(W,V)\) & 3.6314 \\
Semantic Joint Entropy \(H_s(\tilde{W},\tilde{V})\) & 2.7563 \\
Pragmatic Joint Entropy \(H_p(\underline{W},\underline{V})\) & 1.7509 \\
Pragmatic Marginal Entropy \(H_p(\underline{W})\) & 0.9928 \\
Pragmatic Conditional Entropy \(H_p(\underline{V}|\underline{W})\) & 0.7581 \\
Pragmatic Conditional Entropy \(H_p(\underline{V}|W)\) & 0.7435 \\ \hline
\end{tabular}
\end{table}

This example illustrates the key insight: \textbf{pragmatic entropy is strictly less than semantic entropy when the isoteleia mapping merges distinct semantic classes that share the same optimal action}. The two types of ``stop'' (short red and long red) are semantically distinct but pragmatically equivalent, resulting in a joint pragmatic entropy (1.7509 prabits) that is significantly lower than the joint semantic entropy (2.7563 sebits). This reduction quantifies the decision-irrelevant information that can be discarded in a pragmatic communication system.

\section{Pragmatic Relative Entropy and Pragmatic Mutual Information}
\label{section_IV}

We now introduce the information-theoretic analogues of relative entropy and mutual information at the pragmatic level. These measures quantify the divergence between probability distributions and the statistical dependence between variables, respectively, when the comparison and dependency are evaluated at the level of terminal actions rather than symbols or meanings. 

\subsection{Pragmatic Relative Entropy}
\label{subsec:pragmatic-relative-entropy}

In classical information theory, the relative entropy (Kullback-Leibler divergence) \cite{Book_Cover} between two probability mass functions \(p(w)\) and \(q(w)\) over a syntactic alphabet \(\mathcal{W}\) is defined as \(D(p\|q) = \sum_{w} p(w) \log \frac{p(w)}{q(w)}\). Semantic information theory \cite{Paper_SIT,Book_SIT} extends this by introducing semantic relative entropies that aggregate probabilities over synonymous sets, thereby measuring divergence at the meaning level. In pragmatic information theory, we further aggregate over pragmatic equivalence classes induced by the isoteleia mapping, so that the divergence is evaluated with respect to the terminal actions that the variables induce. This yields three forms of pragmatic relative entropy: full, partial (Type I), and partial (Type II), which correspond to different ways of combining the pragmatic aggregation with the syntactic or semantic distributions.

Let \(\mathcal{W}\) be a syntactic alphabet with two probability mass functions \(p(w)\) and \(q(w)\), \(w \in \mathcal{W}\). Let \(\tilde{\mathcal{W}}\) be the associated semantic alphabet induced by the synonymous mapping \(f: \tilde{\mathcal{W}} \to 2^{\mathcal{W}}\), and let \(\underline{\mathcal{W}}\) be the pragmatic alphabet induced by the isoteleia mapping \(e: \underline{\mathcal{W}} \to 2^{\tilde{\mathcal{W}}}\) according to a given utility function \(U: \mathcal{W} \times \mathcal{A} \to \mathbb{R}\). The pragmatic partition of \(\mathcal{W}\) is given by the equivalence relation \(\sim_p\) defined in Eq. (\ref{eq:pragmatic_partition}), which groups syntactic symbols that lead to the same optimal action. Equivalently, the pragmatic classes \(\underline{w} \in \underline{\mathcal{W}}\) are the atoms of the partition \(\Pi_{prag} = \mathcal{W} / \sim_p\). For each pragmatic class \(\underline{w}\), its probability under a distribution \(p\) is defined as
\begin{equation}
    p_p(\underline{w}) \triangleq \sum_{w \in g(\underline{w})} p(w),
\end{equation}
where \(g = f \circ e: \underline{\mathcal{W}} \to 2^{\mathcal{W}}\) is the reification mapping. Similarly, we define \(q_p(\underline{w})\) for the distribution \(q\).

\begin{definition}[Pragmatic Relative Entropy]
Given two probability mass functions \(p(w)\) and \(q(w)\) on the syntactic alphabet \(\mathcal{W}\), and the pragmatic partition induced by the reification mapping \(g\), we define the following three forms of pragmatic relative entropy:

\textbf{(a) Full Pragmatic Relative Entropy:}
\begin{equation}
    D_p(p_p \| q_p) \triangleq \sum_{\underline{w} \in \underline{\mathcal{W}}} p_p(\underline{w}) \log \frac{p_p(\underline{w})}{q_p(\underline{w})}.
    \label{eq:full-pragmatic-relative}
\end{equation}

\textbf{(b) Partial Pragmatic Relative Entropy (Type I):}
\begin{equation}
    D_p(p_p \| q) \triangleq \sum_{\underline{w} \in \underline{\mathcal{W}}} \sum_{w \in g(\underline{w})} p(w) \log \frac{p_p(\underline{w})}{q(w)}.
    \label{eq:partial1-pragmatic-relative}
\end{equation}

\textbf{(c) Partial Pragmatic Relative Entropy (Type II):}
\begin{equation}
    D_p(p \| q_p) \triangleq \sum_{\underline{w} \in \underline{\mathcal{W}}} \sum_{w \in g(\underline{w})} p(w) \log \frac{p(w)}{q_p(\underline{w})}.
    \label{eq:partial2-pragmatic-relative}
\end{equation}
\end{definition}

These definitions mirror the semantic relative entropies \cite{Paper_SIT,Book_SIT} but with the aggregation performed over pragmatic classes instead of semantic classes. The full pragmatic relative entropy compares the pragmatic marginal distributions directly. The partial Type I compares the aggregated pragmatic distribution of \(p\) against the fine-grained syntactic distribution of \(q\), while the partial Type II compares the fine-grained syntactic distribution of \(p\) against the aggregated pragmatic distribution of \(q\). As in the semantic case, \(D_p(p \| q_p)\) may be negative; in practice, one can take its positive part \((D_p(p \| q_p))^+\).

We now establish the fundamental properties of pragmatic relative entropy. These properties are analogous to those of semantic relative entropy, with the pragmatic partition replacing the semantic partition.

\begin{theorem}[Pragmatic Information Inequality]
Let \(p\) and \(q\) be two probability mass functions on \(\mathcal{W}\). Given the Reification Mapping \(g\), the following inequalities hold:
\begin{equation}
\left\{
\begin{aligned}
D_p(p_p \| q_p) &\ge 0, \\
D_p(p_p \| q) &\ge D_p(p_p \| p), \\
D_p(p \| q_p) &\ge D_p(p \| p_p),
\end{aligned}
\right.
\label{eq:pragmatic-info-ineq}
\end{equation}
with equality in the first inequality if and only if \(p_p(\underline{w}) = q_p(\underline{w})\) for all \(\underline{w} \in \underline{\mathcal{W}}\); equality in the second if and only if \(p(w) = q(w)\) for all \(w\) within each pragmatic class; equality in the third if and only if \(p_p(\underline{w}) = q_p(\underline{w})\) for all \(\underline{w}\).
\end{theorem}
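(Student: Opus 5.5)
The plan is to reduce all three inequalities to the classical Gibbs inequality $D(\cdot\|\cdot)\ge 0$ from \cite{Book_Cover}, applied once on the pragmatic alphabet $\underline{\mathcal{W}}$ and once on the syntactic alphabet $\mathcal{W}$. Two facts do all the work. First, the reification mapping $g=f\circ e$ partitions $\mathcal{W}$ into the disjoint fibers $g(\underline{w})$, so any sum over $w\in\mathcal{W}$ can be rewritten as $\sum_{\underline{w}}\sum_{w\in g(\underline{w})}$. Second, $\sum_{w\in g(\underline{w})}p(w)=p_p(\underline{w})$ by definition. Throughout I adopt the usual conventions: $0\log 0=0$, and $q(w)>0$ (respectively $q_p(\underline{w})>0$) wherever $p(w)>0$ (respectively $p_p(\underline{w})>0$); otherwise the relevant divergence is $+\infty$ and the inequality is trivial.

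\textbf{First inequality.} By Eq.~\eqref{eq:full-pragmatic-relative}, $D_p(p_p\|q_p)$ is exactly the ordinary Kullback--Leibler divergence between the two pmfs $p_p$ and $q_p$ on $\underline{\mathcal{W}}$. Both are genuine probability distributions because the classes $g(\underline{w})$ partition $\mathcal{W}$. Gibbs' inequality, proved via Jensen's inequality for the strictly concave $\log$, gives $D_p(p_p\|q_p)\ge 0$, with equality iff $p_p=q_p$.

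\textbf{Second inequality.} I would compute the difference directly from Eq.~\eqref{eq:partial1-pragmatic-relative}. The terms $\log p_p(\underline{w})$ cancel between $D_p(p_p\|q)$ and $D_p(p_p\|p)$, leaving
\begin{equation*}
D_p(p_p\|q)-D_p(p_p\|p)=\sum_{\underline{w}}\sum_{w\in g(\underline{w})}p(w)\log\frac{p(w)}{q(w)}=D(p\|q)\ge 0 .
\end{equation*}
The last step is the classical Gibbs inequality on $\mathcal{W}$, after collapsing the double sum to a single sum over $\mathcal{W}$ using the partition property. Equality holds iff $p(w)=q(w)$ for every $w$ in every pragmatic class, which is the stated condition.

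\textbf{Third inequality.} Similarly, from Eq.~\eqref{eq:partial2-pragmatic-relative} the $\log p(w)$ terms cancel, giving
\begin{equation*}
D_p(p\|q_p)-D_p(p\|p_p)=\sum_{\underline{w}}\Bigl(\sum_{w\in g(\underline{w})}p(w)\Bigr)\log\frac{p_p(\underline{w})}{q_p(\underline{w})}=D_p(p_p\|q_p)\ge 0 .
\end{equation*}
Here the inner sum is replaced by $p_p(\underline{w})$ and the first inequality is invoked. Equality holds iff $p_p=q_p$.

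\textbf{Main obstacle.} None of the steps is deep. The only point needing care is bookkeeping: making sure the fibers $g(\underline{w})$ really are disjoint and exhaustive, so that the double sums collapse correctly. This relies on the well-definedness of the partition discussed under semantic refinement. The other point is handling zero-probability and support conventions, so that the cancellations are legitimate and nothing of the form $\infty-\infty$ arises. I would therefore state the support assumption explicitly at the start, and treat the case where $D_p(p\|q)$ or $D_p(p_p\|q_p)$ is infinite separately as trivial.
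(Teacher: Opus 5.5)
Your proposal is correct and follows essentially the same route as the paper: the first inequality is Gibbs' (Jensen's) inequality for the pmfs $p_p,q_p$ on $\underline{\mathcal{W}}$, and the differences of the partial divergences collapse to $D(p\|q)\ge 0$ and $D_p(p_p\|q_p)\ge 0$. Your explicit handling of supports and infinite divergences is a small gain in rigor that the paper omits.
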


\begin{proof}
For the first inequality, by Jensen's inequality applied to the negative of \(D_p(p_p\|q_p)\), we have
\begin{align}
    -D_p(p_p\|q_p) &= \sum_{\underline{w}} p_p(\underline{w}) \log \frac{q_p(\underline{w})}{p_p(\underline{w})} \\
    &\le \log \sum_{\underline{w}} p_p(\underline{w}) \frac{q_p(\underline{w})}{p_p(\underline{w})} = \log \sum_{\underline{w}} q_p(\underline{w}) = 0.
\end{align}
Thus \(D_p(p_p\|q_p) \ge 0\), with equality iff \(p_p = q_p\).

For the second inequality, we compute
\begin{align}
    D_p(p_p\|q) - D_p(p_p\|p) &= \sum_{\underline{w}} \sum_{w \in g(\underline{w})} p(w) \log \frac{p_p(\underline{w})}{q(w)} - \sum_{\underline{w}} \sum_{w \in g(\underline{w})} p(w) \log \frac{p_p(\underline{w})}{p(w)} \\
    &= \sum_{\underline{w}} \sum_{w \in g(\underline{w})} p(w) \log \frac{p(w)}{q(w)} \\
    &= D(p\|q) \ge 0.
\end{align}
The last inequality is the classical Gibbs inequality, with equality iff \(p=q\) pointwise. Hence \(D_p(p_p\|q) \ge D_p(p_p\|p)\).

For the third inequality,
\begin{align}
    D_p(p\|q_p) - D_p(p\|p_p) &= \sum_{\underline{w}} \sum_{w \in g(\underline{w})} p(w) \log \frac{p(w)}{q_p(\underline{w})} - \sum_{\underline{w}} \sum_{w \in g(\underline{w})} p(w) \log \frac{p(w)}{p_p(\underline{w})} \\
    &= \sum_{\underline{w}} \sum_{w \in g(\underline{w})} p(w) \log \frac{p_p(\underline{w})}{q_p(\underline{w})} \\
    &= D_p(p_p\|q_p) \ge 0.
\end{align}
Thus \(D_p(p\|q_p) \ge D_p(p\|p_p)\). This completes the proof.
\end{proof}

\begin{corollary}[Ordering of Pragmatic Relative Entropies]
The three forms of pragmatic relative entropy satisfy the following chain of inequalities:
\begin{equation}
    D_p(p \| q_p) \le D_p(p_p \| q_p) \le D_p(p_p \| q).
    \label{eq:pragmatic-relative-order}
\end{equation}
\end{corollary}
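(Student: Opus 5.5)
The plan is to write each pair of adjacent quantities as a difference and to sign that difference term by term. The key fact is that each pragmatic class probability dominates the probabilities of its members. For every \(w \in g(\underline{w})\) the reification mapping gives \(p(w) \le p_p(\underline{w})\) and \(q(w) \le q_p(\underline{w})\), because \(p_p(\underline{w})\) and \(q_p(\underline{w})\) are sums of nonnegative terms that include \(p(w)\) and \(q(w)\). No Jensen-type argument should be needed beyond what the Pragmatic Information Inequality already supplies.

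\textbf{Left inequality.} I will reuse the identity derived in the proof of the third part of the Pragmatic Information Inequality:
\[
D_p(p\|q_p) = D_p(p\|p_p) + D_p(p_p\|q_p).
\]
It then suffices to show \(D_p(p\|p_p) \le 0\). Write
\[
D_p(p\|p_p) = \sum_{\underline{w}} \sum_{w \in g(\underline{w})} p(w) \log \frac{p(w)}{p_p(\underline{w})}.
\]
Each logarithm is at most zero because \(p(w) \le p_p(\underline{w})\). Equivalently, \(-D_p(p\|p_p) = \sum_{\underline{w}} p_p(\underline{w})\, H\bigl(p(\cdot \mid \underline{w})\bigr) \ge 0\). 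Equality holds iff \(p\), restricted to each class of positive probability, is concentrated on a single syntactic symbol.

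\textbf{Right inequality.} Subtracting the definitions term by term gives
\[
D_p(p_p\|q) - D_p(p_p\|q_p) = \sum_{\underline{w}} \sum_{w \in g(\underline{w})} p(w) \log \frac{q_p(\underline{w})}{q(w)}.
\]
This is nonnegative since \(q(w) \le q_p(\underline{w})\). Equality holds iff, within each class, \(q\) puts all of \(q_p(\underline{w})\) on the single point of \(p\)'s support.

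The only delicate step I anticipate is the support conventions. If some \(q(w)=0\) while \(p(w)>0\), then \(D_p(p_p\|q)=+\infty\), so the right inequality holds trivially. If \(q_p(\underline{w})=0\) on a class with \(p_p(\underline{w})>0\), then both \(D_p(p\|q_p)\) and \(D_p(p_p\|q_p)\) are infinite. In that case I will restrict to \(p_p \ll q_p\), where the identity above is a legitimate finite rearrangement; otherwise I will adopt the extended-real convention so that \(\infty \le \infty\). With the conventions \(0\log 0 = 0\) and \(0\log(0/0)=0\), all the other sums are well defined, and the two inequalities combine into Eq.~\eqref{eq:pragmatic-relative-order}.
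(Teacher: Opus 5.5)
Your proof is correct. Both differences, $D_p(p\|q_p)-D_p(p_p\|q_p)=\sum_{\underline{w}}\sum_{w\in g(\underline{w})}p(w)\log\frac{p(w)}{p_p(\underline{w})}\le 0$ and $D_p(p_p\|q)-D_p(p_p\|q_p)=\sum_{\underline{w}}\sum_{w\in g(\underline{w})}p(w)\log\frac{q_p(\underline{w})}{q(w)}\ge 0$, follow from the class-domination $p(w)\le p_p(\underline{w})$, $q(w)\le q_p(\underline{w})$ together with $\sum_{w\in g(\underline{w})}p(w)=p_p(\underline{w})$, and your treatment of the infinite cases is sound. The paper gives no derivation here beyond deferring to the semantic analogue in Niu--Zhang, so your argument is the natural explicit version of that referenced proof rather than a different route.
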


\begin{proof}
Detailed derivations are analogous to the semantic case shown in \cite{Paper_SIT,Book_SIT}.
\end{proof}

\begin{corollary}[Relation to Classical and Semantic Relative Entropies]
Let \(D(p\|q)\) denote the classical relative entropy, and let \(D_s(p_s\|q_s)\), \(D_s(p_s\|q)\), \(D_s(p\|q_s)\) denote the full and partial semantic relative entropies as defined in \cite{Paper_SIT,Book_SIT}. Then the following relationships hold:
\begin{equation}
    D_p(p \| q_p) \le D_s(p \| q_s) \le D(p\|q) \le D_s(p_s \| q) \le D_p(p_p \| q),
    \label{eq:cross-hierarchy-relative}
\end{equation}
where \(p_s\) and \(q_s\) denote the semantic aggregations, and \(p_p, q_p\) denote the pragmatic aggregations. More precisely, we have the two separate chains:
\begin{equation}
    D_p(p \| q_p) \le D_s(p \| q_s) \le D(p\|q),
\end{equation}
and
\begin{equation}
    D(p\|q) \le D_s(p_s \| q) \le D_p(p_p \| q).
\end{equation}
\end{corollary}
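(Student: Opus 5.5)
The plan is to treat all five quantities as instances of one template indexed by a partition of \(\mathcal{W}\), and to compare them term by term. The key structural fact is the quotient-space structure of Section~\ref{section_II}: the syntactic partition into singletons refines the semantic partition \(\Pi_{sem}\), which (after semantic refinement) refines the pragmatic partition \(\Pi_{prag}\) pulled back to \(\mathcal{W}\) via \(g\).

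For each \(w\), write \(\tilde{w}(w)\) for the semantic class containing \(w\) and \(\underline{w}(w)\) for the pragmatic class containing \(w\). Nestedness gives the set inclusions
\[
\{w\} \subseteq f(\tilde{w}(w)) \subseteq g(\underline{w}(w)).
\]
Since probabilities are nonnegative, aggregating over a larger set can only increase the mass. Hence, for every \(w\),
\[
q(w) \le q_s(\tilde{w}(w)) \le q_p(\underline{w}(w)), \qquad p(w) \le p_s(\tilde{w}(w)) \le p_p(\underline{w}(w)).
\]
This pointwise monotonicity is the only ingredient I intend to use.

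\textbf{First chain.} I rewrite every Type~II quantity as a single sum over \(w \in \mathcal{W}\):
\[
D_p(p\|q_p) = \sum_w p(w)\log\frac{p(w)}{q_p(\underline{w}(w))},
\]
and analogously for \(D_s(p\|q_s)\) and \(D(p\|q)\). Taking differences, the \(\log p(w)\) terms cancel and I expect
\[
D_s(p\|q_s) - D_p(p\|q_p) = \sum_w p(w)\log\frac{q_p(\underline{w}(w))}{q_s(\tilde{w}(w))} \ge 0,
\]
because every logarithm is nonnegative by the inclusion above. In the same way,
\[
D(p\|q) - D_s(p\|q_s) = \sum_w p(w)\log\frac{q_s(\tilde{w}(w))}{q(w)} \ge 0 .
\]

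\textbf{Second chain.} This is the mirror image for the Type~I forms, with the roles of \(p\) and \(q\) swapped. Now the \(\log q(w)\) terms cancel and I obtain
\[
D_s(p_s\|q) - D(p\|q) = \sum_w p(w)\log\frac{p_s(\tilde{w}(w))}{p(w)} \ge 0,
\]
\[
D_p(p_p\|q) - D_s(p_s\|q) = \sum_w p(w)\log\frac{p_p(\underline{w}(w))}{p_s(\tilde{w}(w))} \ge 0 .
\]
Concatenating the two chains at \(D(p\|q)\) then gives Eq.~\eqref{eq:cross-hierarchy-relative}.

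\textbf{Main obstacle.} The step that needs care is not the algebra but the well-definedness of the nesting \(f(\tilde{w}) \subseteq g(\underline{w})\). This requires the consistency condition (semantic refinement), so that each semantic fiber lies in exactly one pragmatic fiber. I will invoke it explicitly.

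I also need the usual conventions for zero masses. Terms with \(p(w)=0\) vanish. If \(q(w)=0\) while \(p(w)>0\), then \(D(p\|q)=+\infty\), and I will check that the inequalities remain valid in the extended reals. The one place where \(+\infty\) sits on the smaller side is \(D(p\|q)\le D_s(p_s\|q)\): there both sides are \(+\infty\), since the Type~I forms keep \(q(w)\) in the denominator. Conversely, aggregated masses such as \(q_p(\underline{w}(w))\) are positive whenever \(q(w)\) is, so no new infinities are introduced on the larger side. Finally, I note that each difference vanishes exactly when the relevant class masses coincide with the finer-level masses on the support of \(p\), which matches the equality conditions of the preceding Pragmatic Information Inequality theorem.
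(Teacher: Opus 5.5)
Your proof is correct, and it is more direct than the paper's. The paper handles this corollary in a few lines. It appeals to the data processing inequality for relative entropy and to the semantic results of \cite{Paper_SIT,Book_SIT} for the middle links $D_s(p\|q_s)\le D(p\|q)\le D_s(p_s\|q)$, and it asserts that coarser aggregation lowers the divergence in one partial form and raises it in the other. You instead write all five quantities as $p$-weighted sums over $\mathcal{W}$ and compare them term by term. The only fact you use is that a coarser class containing $w$ has at least the mass of any finer class containing $w$.

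Your route gains several things over the paper's:
\begin{itemize}
\item It is self-contained and does not depend on the cited semantic theorems.
\item It avoids relying on the data processing inequality. That inequality controls comparisons in which both arguments are aggregated, e.g.\ $D_p(p_p\|q_p)\le D_s(p_s\|q_s)\le D(p\|q)$. It does not literally apply to the mixed Type~I/II forms, which are not divergences between two distributions on a common alphabet; Type~II can even be negative.
\item It fixes the directions unambiguously. The paper's verbal sketch states them the wrong way round: it says pragmatic aggregation gives a smaller divergence when the first argument is aggregated and a larger one when the second is. The corollary, and your computation, give $D_s(p_s\|q)\le D_p(p_p\|q)$ and $D_p(p\|q_p)\le D_s(p\|q_s)$.
\item It yields explicit gaps. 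The Type~I differences are $H(W)-H_s(\tilde{W})$ and $H_s(\tilde{W})-H_p(\underline{W})$, independent of $q$. So whenever $D(p\|q)<\infty$, the second chain is Lemma~\ref{lem:pragmatic-hierarchy} in disguise.
\end{itemize}
You also handle two points the paper leaves implicit: you invoke semantic refinement to guarantee $f(\tilde{w})\subseteq g(\underline{w})$, and you treat zero masses in the extended reals.

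One small correction to your closing remark. The equality conditions you derive are right: the class masses must agree with the finer masses on the support of $p$. However, they do not match those of the Pragmatic Information Inequality theorem, which compares different pairs of quantities. For example, equality in $D_s(p_s\|q)\le D_p(p_p\|q)$ is a condition on $p$ alone: each pragmatic class meeting the support of $p$ must carry its $p$-mass on a single semantic class. That is nothing like $p=q$. Since the corollary states no equality conditions, this does not affect your proof.
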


\begin{proof}
The semantic relative entropies are defined by aggregating over the semantic partition, which is finer than the pragmatic partition (since multiple semantic classes may merge into one pragmatic class). By the data processing inequality for relative entropy \cite{Book_Cover} (or by applying the chain of inequalities analogous to the semantic case), the coarser aggregation (pragmatic) yields a smaller or equal divergence when the first argument is aggregated and the second is fine-grained, and a larger or equal divergence when the first is fine-grained and the second is aggregated. The classical relative entropy lies between the semantic partial entropies as established in \cite{Paper_SIT,Book_SIT}. Combining these facts yields the desired hierarchy.
\end{proof}

\begin{theorem}[Convexity]
The pragmatic relative entropies \(D_p(p_p\|q_p)\), \(D_p(p_p\|q)\), and \(D_p(p\|q_p)\) are convex in the pair of distributions \((p,q)\). That is, for any two pairs \((p_1,q_1)\) and \((p_2,q_2)\), and for \(0 \le \theta \le 1\), we have
\begin{equation}
\left\{
\begin{aligned}
D_p(\theta p_{p,1} + (1-\theta)p_{p,2} \| \theta q_{p,1} + (1-\theta)q_{p,2}) &\le \theta D_p(p_{p,1}\|q_{p,1}) + (1-\theta)D_p(p_{p,2}\|q_{p,2}), \\
D_p(\theta p_{p,1} + (1-\theta)p_{p,2} \| \theta q_1 + (1-\theta)q_2) &\le \theta D_p(p_{p,1}\|q_1) + (1-\theta)D_p(p_{p,2}\|q_2), \\
D_p(\theta p_1 + (1-\theta)p_2 \| \theta q_{p,1} + (1-\theta)q_{p,2}) &\le \theta D_p(p_1\|q_{p,1}) + (1-\theta)D_p(p_2\|q_{p,2}).
\end{aligned}
\right.
\end{equation}
\end{theorem}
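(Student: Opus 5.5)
The plan is to reduce all three claims to the log-sum inequality. The key observation is that the aggregation maps $p\mapsto p_p$ and $q\mapsto q_p$, given by $p_p(\underline{w})=\sum_{w\in g(\underline{w})}p(w)$, are \emph{linear}. Hence the mixtures commute with aggregation:
\[
(\theta p_1+(1-\theta)p_2)_p=\theta p_{p,1}+(1-\theta)p_{p,2},
\]
and likewise for $q$. So each of the three inequalities is a statement about a fixed function of linearly transformed arguments. Joint convexity is preserved under composition with linear maps. It therefore suffices to check joint convexity of each functional, written termwise, in the variables it actually depends on.

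\textbf{(a) Full form.} $D_p(p_p\|q_p)$ is the classical relative entropy on the finite alphabet $\underline{\mathcal{W}}$ evaluated at $(p_p,q_p)$. Apply the log-sum inequality to each class $\underline{w}$ with the two-term sums $\theta p_{p,1}(\underline{w})+(1-\theta)p_{p,2}(\underline{w})$ and $\theta q_{p,1}(\underline{w})+(1-\theta)q_{p,2}(\underline{w})$, then sum over $\underline{w}$. This gives the first line, exactly as in the classical proof in \cite{Book_Cover}.

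\textbf{(c) Type II.} Write
\[
D_p(p\|q_p)=\sum_{\underline{w}}\sum_{w\in g(\underline{w})}p(w)\log\frac{p(w)}{q_p(\underline{w})}.
\]
Each summand is the perspective function $x\log(x/y)$ evaluated at $x=p(w)$ and $y=q_p(\underline{w})$. This function is jointly convex on $\mathbb{R}_+^2$, as it is the two-point case of the log-sum inequality. Since $y$ is linear in $q$, each summand is jointly convex in $(p,q)$, and summing finitely many such terms yields the third line. The possible negativity of $D_p(p\|q_p)$ noted earlier is irrelevant here.

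\textbf{(b) Type I: the expected obstacle.} Split
\[
D_p(p_p\|q)=\sum_{\underline{w}}p_p(\underline{w})\log p_p(\underline{w})\;-\;\sum_{w}p(w)\log q(w).
\]
The first part is $-H(p_p)$, which is convex in $p$. The second part is the cross-entropy term $-p\log q$. Its Hessian in $(p(w),q(w))$,
\[
\begin{pmatrix}0 & -1/q\\ -1/q & p/q^2\end{pmatrix},
\]
is indefinite. So the log-sum route does not close, because $p_p(\underline{w})$ sits in the numerator while only the fine-grained $q(w)$ sits in the denominator.

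My first attempt would be to regroup each class as $p_p(\underline{w})\log p_p(\underline{w}) - \sum_{w\in g(\underline{w})} p(w)\log q(w)$ and seek a log-sum bound with $q_p(\underline{w})$ in place of the individual $q(w)$. I expect this to fail. Indeed, taking one class containing two symbols, $p_1=q_1=(1,0)$ and $p_2=q_2=(0,1)$, gives zero at both endpoints but $\log 2$ at $\theta=\tfrac12$, so the second line cannot hold as stated. The proof would therefore establish (a) and (c) as above. For (b) it would record this counterexample and prove only what survives: convexity in $p$ for fixed $q$, via convexity of $-H(p_p)$ plus a term linear in $p$, and convexity in $q$ for fixed $p$, via convexity of $-\log$. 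I would flag that the joint claim in the second line requires correction.
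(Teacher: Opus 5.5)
Your proposal is correct, and on the second line it is more accurate than the paper. For the full and Type II forms you follow the paper's intended route: aggregation is linear, so convexity is inherited from the joint convexity of relative entropy, i.e.\ the log-sum inequality applied to $x\log(x/y)$.

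The paper's sketch then claims that the Type I form is likewise ``a linear combination of classical KL divergences'' and inherits convexity. That step fails. The relevant decomposition is exactly the identity from the paper's own proof of the Pragmatic Information Inequality:
\[
D_p(p_p\|q) = D(p\|q) + D_p(p_p\|p), \qquad D_p(p_p\|p) = H(W) - H_p(\underline{W}) = H(W\mid\underline{W}).
\]
Here $H(W\mid\underline{W})$ is concave in $p$, so no convexity is inherited. By contrast, $D_p(p\|q_p) = D_p(p_p\|q_p) - H(W\mid\underline{W})$ is a sum of two convex terms, which is why (c) survives.

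Your counterexample is valid. To avoid relying on the $0\log(1/0)$ convention, take $p_1=q_1=(1-\delta,\delta)$ and $p_2=q_2=(\delta,1-\delta)$. These give $h(\delta)$ at both endpoints and $1$ bit at the midpoint. Your Hessian remark is only heuristic, since $-H(p_p)$ could in principle compensate. The counterexample, where that term is constant, is what settles it.

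The failure is in fact generic. On the affine set $q=p$ the Type I functional equals $H(W\mid\underline{W})$. Along the segment joining the point masses of two syntactic symbols in the same pragmatic class, this is the binary entropy function. Hence the second line holds if and only if every class $g(\underline{w})$ is a singleton, in which case the functional reduces to $D(p\|q)$. Your separate-convexity statements for (b), in $p$ for fixed $q$ and in $q$ for fixed $p$, are correct and are the right replacement for the false joint claim.
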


\begin{proof}
The proof follows from the joint convexity of the classical KL divergence and the fact that the pragmatic aggregation is a linear operation on probabilities. Each of the three pragmatic relative entropies can be expressed as a linear combination of classical KL divergences between appropriate fine-grained or aggregated distributions, and the convexity is inherited. Detailed derivations are analogous to the semantic case presented in \cite{Paper_SIT,Book_SIT}.
\end{proof}

\begin{remark}
The pragmatic relative entropy measures the difference between two probability distributions at the level of terminal actions. The full pragmatic relative entropy \(D_p(p_p\|q_p)\) quantifies how much the decision-oriented distributions differ. The partial measures allow comparison between a decision-level distribution and a syntactic-level distribution, which is useful in scenarios where one belief is expressed in actions (e.g., a control policy) and the other in raw observations. The hierarchy in Eq.~\eqref{eq:cross-hierarchy-relative} shows that pragmatic abstraction—by discarding distinctions that do not affect the optimal action—reduces the divergence when comparing fine-grained against coarse-grained distributions, and increases the divergence when comparing coarse-grained against fine-grained, perfectly mirroring the information-theoretic processing inequality for relative entropy under coarse-graining.
\end{remark}

\subsection{Pragmatic Mutual Information}
\label{subsec:pragmatic-mutual}

In semantic information theory, two distinct forms of mutual information—the up and down semantic mutual information—were introduced to account for the non-additive nature of semantic entropy \cite{Paper_SIT,Book_SIT}. Analogously, we define up and down pragmatic mutual information, which capture the reduction in decision uncertainty from different perspectives. Furthermore, we extend the framework to four single-sided pragmatic mutual information measures that characterize the information flow when only one side of the communication link is coarsened to the pragmatic level. These measures are essential for analyzing perception, expression, and reconstruction in task-oriented systems.

\subsubsection{Definitions of Up and Down Pragmatic Mutual Information}

Let $\mathcal{W}$ and $\mathcal{V}$ be two syntactic alphabets with a joint probability mass function $P(W,V)$, and let the associated pragmatic alphabets be $\underline{\mathcal{W}}$ and $\underline{\mathcal{V}}$ induced by the joint reification mapping
\begin{equation}
    g_{WV}: \underline{\mathcal{W}} \times \underline{\mathcal{V}} \longrightarrow 2^{\mathcal{W} \times \mathcal{V}},
\end{equation}
which partitions the joint syntactic space $\mathcal{W} \times \mathcal{V}$ into pragmatic equivalence classes. The pragmatic variables are then defined as $\underline{W} = g_{WV}^{-1}(W,V)$ and $\underline{V} = g_{WV}^{-1}(W,V)$. More precisely, the pragmatic pair $(\underline{W}, \underline{V})$ is the unique pragmatic class to which the syntactic pair $(W,V)$ belongs. Equivalently, the joint reification mapping directly induces the joint pragmatic distribution by aggregating probabilities over the joint syntactic cells:
\begin{equation}
    P(\underline{w}, \underline{v}) = \sum_{(w,v) \in g_{WV}(\underline{w}, \underline{v})} P(w,v).
\end{equation}
The marginal pragmatic distributions are obtained by further summing over the other variable:
\begin{equation}
    P(\underline{w}) = \sum_{\underline{v}'} P(\underline{w}, \underline{v}'), \quad P(\underline{v}) = \sum_{\underline{w}'} P(\underline{w}', \underline{v}).
\end{equation}

\begin{definition}[Pragmatic Mutual Information]
The \textbf{up pragmatic mutual information} is defined as
\begin{equation}
    \begin{aligned}
    I^p(\underline{W}; \underline{V}) &\triangleq \sum_{w \in \mathcal{W}} \sum_{v \in \mathcal{V}} P(w,v) \log \frac{P(\underline{w}, \underline{v})}{P(w) P(v)}\\
    &=H(W) + H(V) - H_p(\underline{W}, \underline{V}),
    \end{aligned}\label{eq:up-pragmatic-mi}
\end{equation}
where $(\underline{w}, \underline{v}) = g_{WV}(w,v)$ is the pragmatic class assigned to the syntactic pair $(w,v)$.

The \textbf{down pragmatic mutual information} is defined as
\begin{equation}
    \begin{aligned}
    I_p(\underline{W}; \underline{V}) &\triangleq \sum_{w \in \mathcal{W}} \sum_{v \in \mathcal{V}} P(w,v) \log \frac{P(w) P(v)}{P(\underline{w}, \underline{v})}\\
    &=H_p(\underline{W}) + H_p(\underline{V}) - H(W, V),
    \end{aligned}
    \label{eq:down-pragmatic-mi}
\end{equation}
We adopt the canonical definition for all subsequent rate-distortion and 
Lagrangian formulations:
$I_p(\underline{W};\underline{V}) = \max\left\{0,\; H_p(\underline{W}) + H_p(\underline{V}) - H(W,V)\right\}=(I_p)^+$. The positive-part operation ensures that 
the rate-distortion function $R_p(D)$ is strictly convex and bounded below by $0$, resolving the ill-posed minimization that would otherwise arise.
\end{definition}

The up pragmatic mutual information measures the reduction in the syntactic uncertainty of $W$ and $V$ when their joint pragmatic structure is taken into account. It can be interpreted as the maximum amount of information about the terminal actions that can be extracted from the syntactic variables. The down pragmatic mutual information measures the reduction in pragmatic uncertainty when the joint syntactic structure is known; it can be seen as the minimum amount of pragmatic dependence that survives after coarse-graining. 

We now establish the fundamental relationships among the pragmatic, semantic, and syntactic mutual information measures. For reference, the semantic mutual informations are defined as~\cite{Paper_SIT,Book_SIT}:
\begin{align}
    I^s(\tilde{W}; \tilde{V}) &= H(W) + H(V) - H_s(\tilde{W}, \tilde{V}), \\
    I_s(\tilde{W}; \tilde{V}) &= H_s(\tilde{W}) + H_s(\tilde{V}) - H(W, V).
\end{align}

\begin{theorem}[Hierarchy of Mutual Information]
For any two syntactic variables $W,V$ with associated semantic variables $\tilde{W},\tilde{V}$ and pragmatic variables $\underline{W},\underline{V}$, the following chain of inequalities holds:
\begin{equation}
    I_p(\underline{W}; \underline{V}) \;\le\; I_s(\tilde{W}; \tilde{V}) \;\le\; I(W; V) \;\le\; I^s(\tilde{W}; \tilde{V}) \;\le\; I^p(\underline{W}; \underline{V}).
    \label{eq:mutual-hierarchy}
\end{equation}
\end{theorem}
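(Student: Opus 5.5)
Since every quantity in \eqref{eq:mutual-hierarchy} is an affine combination of entropies, I will reduce the statement to entropy inequalities that are already available. I write $I(W;V)=H(W)+H(V)-H(W,V)$ and use the closed forms of the four measures: $I^p=H(W)+H(V)-H_p(\underline{W},\underline{V})$, $I^s=H(W)+H(V)-H_s(\tilde{W},\tilde{V})$, $I_s=H_s(\tilde{W})+H_s(\tilde{V})-H(W,V)$ and $I_p=H_p(\underline{W})+H_p(\underline{V})-H(W,V)$. The chain then splits into two independent halves: the ``up'' half $I\le I^s\le I^p$, and the ``down'' half $I_p\le I_s\le I$.

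\textbf{Up half.} After cancelling the common term $H(W)+H(V)$, the inequality $I(W;V)\le I^s(\tilde{W};\tilde{V})\le I^p(\underline{W};\underline{V})$ is equivalent to
\[
H_p(\underline{W},\underline{V})\le H_s(\tilde{W},\tilde{V})\le H(W,V).
\]
This is exactly the Hierarchy of Binary Joint Entropies in \eqref{eq:binary-joint-hierarchy}, so this half follows immediately.

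\textbf{Down half.} After cancelling $-H(W,V)$, the inequality $I_p\le I_s\le I$ is equivalent to
\[
H_p(\underline{W})+H_p(\underline{V})\le H_s(\tilde{W})+H_s(\tilde{V})\le H(W)+H(V).
\]
This follows by adding the marginal hierarchy of Lemma~\ref{lem:pragmatic-hierarchy}, once for $W$ and once for $V$.

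\textbf{Main obstacle: consistency of the pragmatic variables.} In this section $(\underline{W},\underline{V})$ is defined through the joint reification mapping $g_{WV}$. The marginals must be the same variables that appear in Lemma~\ref{lem:pragmatic-hierarchy}: $\underline{W}$ should be a function of $\tilde{W}$, which in turn is a function of $W$, and similarly for $V$. Otherwise the marginal entropies in $I_p$ might not obey the single-variable hierarchy. I will handle this by stating that the joint partition is the product of the marginal partitions, i.e.\ $g_{WV}(\underline{w},\underline{v})=g_W(\underline{w})\times g_V(\underline{v})$, consistent with the joint isoteleia mapping $e_{wv}$ defined earlier. Under this assumption each marginal pragmatic variable is a deterministic function of the corresponding semantic variable, and that variable is a function of the syntactic one, so both halves apply.

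\textbf{Positive-part convention.} The canonical $I_p$ is $(\cdot)^+$, which needs a separate check for the first link. If $I_p=0$, the inequality $0\le I_s$ only holds when $I_s$ is read with the same positive-part convention. I will therefore state the chain either for the raw differences, or with $(\cdot)^+$ applied uniformly to $I_p$ and $I_s$. Since $x\mapsto x^+$ is monotone, applying it to both sides preserves the inequality, and $I_s^+\le I$ still holds because $I\ge0$.
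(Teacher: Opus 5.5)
Your proof is correct and follows the paper's argument: after cancelling the common terms, each link reduces to either the marginal hierarchy $H_p(\underline{W})\le H_s(\tilde{W})\le H(W)$ (applied to both $W$ and $V$) or the joint hierarchy $H_p(\underline{W},\underline{V})\le H_s(\tilde{W},\tilde{V})\le H(W,V)$. Your two extra points sharpen the argument without changing it, and the paper's proof leaves both implicit: you require $g_{WV}$ to be the product of the marginal reification mappings, and you apply $(\cdot)^+$ to $I_p$ and $I_s$ alike, which is needed because the raw $I_s$ is negative in the paper's own two-vehicle example while $I_p$ is clipped to $0$.
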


\begin{proof}
The inequalities follow directly from the entropy hierarchy established in Section~\ref{section_III}. Specifically, \(I_p \le I_s\) because pragmatic variables are deterministic functions of semantic variables, implying \(H_p(\underline{W}) \le H_s(\tilde{W})\) and \(H_p(\underline{V}) \le H_s(\tilde{V})\); \(I_s \le I\) because \(H_s(\tilde{W}) \le H(W)\) and \(H_s(\tilde{V}) \le H(V)\); \(I \le I^s\) because \(H_s(\tilde{W},\tilde{V}) \le H(W,V)\); and \(I^s \le I^p\) because \(H_p(\underline{W},\underline{V}) \le H_s(\tilde{W},\tilde{V})\). Combining these yields the desired chain.
\end{proof}

Beyond the inequality chain, we can derive explicit additive decompositions that reveal the incremental contributions of semantic and pragmatic coarse-graining.

\begin{theorem}[Additive Decomposition of Up Pragmatic Mutual Information]
The up pragmatic mutual information can be decomposed into the up semantic mutual information plus a pragmatic increment:
\begin{equation}
        I^p(\underline{W}; \underline{V}) = I^s(\tilde{W}; \tilde{V}) + \Delta_p^{up}(\underline{W}, \underline{V}),
    \label{eq:up-pragmatic-decomp-semantic}
\end{equation}
where the pragmatic increment is
\begin{equation}
    \Delta_p^{up}(\underline{W}, \underline{V}) \triangleq H_s(\tilde{W}, \tilde{V}) - H_p(\underline{W}, \underline{V}) \ge 0.
\end{equation}
Furthermore, the up pragmatic mutual information can also be decomposed relative to the classical mutual information:
\begin{equation}
    I^p(\underline{W}; \underline{V}) = I(W; V) + \Delta_s^{up}(\tilde{W}, \tilde{V}) + \Delta_p^{up}(\underline{W}, \underline{V}),
    \label{eq:up-pragmatic-decomp-classical}
\end{equation}
where
\begin{equation}
    \Delta_s^{up}(\tilde{W}, \tilde{V}) \triangleq H(W,V) - H_s(\tilde{W}, \tilde{V}) \ge 0,
\end{equation}
is the semantic increment (the reduction in joint entropy due to semantic coarse-graining).
\end{theorem}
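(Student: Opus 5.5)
The plan is pure algebra on the defining identities, with nonnegativity of the increments coming from the joint-entropy hierarchy. We start from the closed form of the up pragmatic mutual information in Eq.~\eqref{eq:up-pragmatic-mi}, namely \(I^p(\underline{W};\underline{V}) = H(W)+H(V)-H_p(\underline{W},\underline{V})\), and the corresponding semantic expression \(I^s(\tilde{W};\tilde{V}) = H(W)+H(V)-H_s(\tilde{W},\tilde{V})\). Subtracting the second from the first cancels the syntactic marginal entropies \(H(W)+H(V)\). The difference is then
\[
I^p(\underline{W};\underline{V}) - I^s(\tilde{W};\tilde{V}) = H_s(\tilde{W},\tilde{V}) - H_p(\underline{W},\underline{V}) = \Delta_p^{up}(\underline{W},\underline{V}),
\]
which is exactly Eq.~\eqref{eq:up-pragmatic-decomp-semantic}.

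For the nonnegativity of \(\Delta_p^{up}\), we invoke the first inequality of the Hierarchy of Binary Joint Entropies, Eq.~\eqref{eq:binary-joint-hierarchy}. It gives \(H_p(\underline{W},\underline{V}) \le H_s(\tilde{W},\tilde{V})\), because \((\underline{W},\underline{V})\) is a deterministic function of \((\tilde{W},\tilde{V})\). The same theorem also records the equality case: \(\Delta_p^{up}=0\) if and only if the joint isoteleia mapping is injective.

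For the classical decomposition, we write \(I(W;V) = H(W)+H(V)-H(W,V)\) and telescope:
\[
I^p(\underline{W};\underline{V}) = \bigl[H(W)+H(V)-H(W,V)\bigr] + \bigl[H(W,V)-H_s(\tilde{W},\tilde{V})\bigr] + \bigl[H_s(\tilde{W},\tilde{V})-H_p(\underline{W},\underline{V})\bigr].
\]
This is \(I(W;V)+\Delta_s^{up}+\Delta_p^{up}\), i.e.\ Eq.~\eqref{eq:up-pragmatic-decomp-classical}. The nonnegativity \(\Delta_s^{up}\ge 0\) is the second inequality of Eq.~\eqref{eq:binary-joint-hierarchy}.

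The only point needing care is consistency of definitions. The identity \(I^p = H(W)+H(V)-H_p(\underline{W},\underline{V})\) follows from the sum definition only if \(P(\underline{w},\underline{v})\) is constant on each joint cell \(g_{WV}(\underline{w},\underline{v})\). Granting that, the expectation of \(\log P(\underline{w},\underline{v})\) under \(P(w,v)\) equals \(-H_p(\underline{W},\underline{V})\). We also need the pragmatic joint cells to be unions of the semantic joint cells, so that the deterministic-function argument applies. Both facts hold by the nested quotient construction of Section~\ref{section_II}, and we would state them explicitly before the telescoping step.
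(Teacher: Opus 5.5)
Your proposal is correct and follows the paper's proof essentially verbatim. The paper likewise adds and subtracts \(H_s(\tilde{W},\tilde{V})\) (and, for the second identity, \(H(W,V)\)) in the closed form \(I^p = H(W)+H(V)-H_p(\underline{W},\underline{V})\), and cites the joint-entropy hierarchy for nonnegativity of both increments. Your check that the summation form of \(I^p\) reduces to this closed form is something the paper takes ``from the definitions''; it holds automatically, since the class probability is by construction constant on each joint cell. Your aside on the equality case is not needed for the statement, and strictly it requires injectivity only on the support.
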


\begin{proof}
From the definitions, $I^p = H(W)+H(V)-H_p(\underline{W},\underline{V})$. Adding and subtracting $H_s(\tilde{W},\tilde{V})$:
\begin{equation}
    I^p = [H(W)+H(V)-H_s(\tilde{W},\tilde{V})] + [H_s(\tilde{W},\tilde{V})-H_p(\underline{W},\underline{V})] = I^s + \Delta_p^{up}.
\end{equation}
For the second decomposition, add and subtract $H(W,V)$:
\begin{equation}
    \begin{aligned}
    I^p =& [H(W)+H(V)-H(W,V)]\\
        &+ [H(W,V)-H_s(\tilde{W},\tilde{V})]\\ 
        &+ [H_s(\tilde{W},\tilde{V})-H_p(\underline{W},\underline{V})] \\
        =& I + \Delta_s^{up} + \Delta_p^{up}.
    \end{aligned}
\end{equation}
The non-negativity follows from the entropy hierarchy.
\end{proof}

\begin{theorem}[Additive Decomposition of Down Pragmatic Mutual Information]
The down pragmatic mutual information can be decomposed into the down semantic mutual information minus a pragmatic loss:
\begin{equation}
    I_p(\underline{W}; \underline{V}) = I_s(\tilde{W}; \tilde{V}) - \Delta_p^{down}(\underline{W}, \underline{V}),
    \label{eq:down-pragmatic-decomp-semantic}
\end{equation}
where the pragmatic loss is
\begin{equation}
    \Delta_p^{down}(\underline{W}, \underline{V}) \triangleq [H_s(\tilde{W}) - H_p(\underline{W})] + [H_s(\tilde{V}) - H_p(\underline{V})] \ge 0.
\end{equation}
Furthermore, the down pragmatic mutual information can be decomposed relative to the classical mutual information:
\begin{equation}
    I_p(\underline{W}; \underline{V}) = I(W; V) - \Delta_s^{down}(\tilde{W}, \tilde{V}) - \Delta_p^{down}(\underline{W}, \underline{V}),
    \label{eq:down-pragmatic-decomp-classical}
\end{equation}
where
\begin{equation}
    \Delta_s^{down}(\tilde{W}, \tilde{V}) \triangleq [H(W)-H_s(\tilde{W})] + [H(V)-H_s(\tilde{V})] \ge 0,
\end{equation}
is the semantic loss.
\end{theorem}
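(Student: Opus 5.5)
The plan mirrors the proof of the up-decomposition: rewrite everything as entropies and add and subtract the right intermediate terms. We work with the raw (pre-positive-part) expression $I_p(\underline{W};\underline{V}) = H_p(\underline{W}) + H_p(\underline{V}) - H(W,V)$ from Eq.~\eqref{eq:down-pragmatic-mi}. The decomposition is an identity between these raw quantities, and the truncated version $(I_p)^+$ is not used here.

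\textbf{Semantic form.} Start from the definition of $I_p$ and add and subtract $H_s(\tilde{W})$ and $H_s(\tilde{V})$. This gives
\begin{equation}
I_p = [H_s(\tilde{W}) + H_s(\tilde{V}) - H(W,V)] - [H_s(\tilde{W}) - H_p(\underline{W})] - [H_s(\tilde{V}) - H_p(\underline{V})].
\end{equation}
The first bracket is exactly $I_s(\tilde{W};\tilde{V})$ as recalled from \cite{Paper_SIT,Book_SIT}. The remaining two brackets together form $\Delta_p^{down}$. This proves Eq.~\eqref{eq:down-pragmatic-decomp-semantic}.

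\textbf{Classical form.} Now add and subtract $H(W)$ and $H(V)$ as well. The expression splits into three pieces: $[H(W)+H(V)-H(W,V)] = I(W;V)$; minus $[H(W)-H_s(\tilde{W})] + [H(V)-H_s(\tilde{V})] = \Delta_s^{down}$; minus $\Delta_p^{down}$. This proves Eq.~\eqref{eq:down-pragmatic-decomp-classical}.

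\textbf{Nonnegativity of the loss terms.} Apply the marginal hierarchy of Lemma~\ref{lem:pragmatic-hierarchy}, namely $H_p(\underline{W}) \le H_s(\tilde{W}) \le H(W)$, to each variable separately. This shows every bracket in $\Delta_p^{down}$ and $\Delta_s^{down}$ is nonnegative.

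\textbf{Main obstacle.} The only subtle point is consistency of the marginals. The marginals $\underline{W},\underline{V}$ here are projections of the joint reification mapping $g_{WV}$, while Lemma~\ref{lem:pragmatic-hierarchy} refers to single-variable isoteleia mappings. We therefore need that projecting the joint classes gives the same marginal partition as the single-variable construction. The joint isoteleia mapping $e_{wv}$ is a product of per-coordinate conditions, so its projections coincide with the single-variable $e$. Given this, the marginals are deterministic functions $q(\tilde{W})$, and the data processing argument of Lemma~\ref{lem:pragmatic-hierarchy} applies. I would state this identification explicitly, and note that the decomposition holds for the untruncated $I_p$ and $I_s$.
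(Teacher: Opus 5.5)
Your proposal is correct and matches the paper's proof: both add and subtract $H_s(\tilde{W}),H_s(\tilde{V})$ (and then $H(W),H(V)$) in $H_p(\underline{W})+H_p(\underline{V})-H(W,V)$, and both use the entropy hierarchy to show the loss terms are nonnegative. Your two caveats are correct refinements that the paper leaves implicit: the identity holds for the untruncated $I_p$ and $I_s$ rather than the positive parts the paper later adopts, and the marginals of $g_{WV}$ agree with the single-variable mappings because $e_{wv}$ is a product of coordinate conditions.
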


\begin{proof}
From $I_p = H_p(\underline{W}) + H_p(\underline{V}) - H(W,V)$. Adding and subtracting $H_s(\tilde{W})$ and $H_s(\tilde{V})$:
\begin{align}
    I_p = &[H_s(\tilde{W}) + H_s(\tilde{V}) - H(W,V)]\\ \notag
        &- [H_s(\tilde{W})-H_p(\underline{W})]\\ \notag
        &- [H_s(\tilde{V})-H_p(\underline{V})] \\ \notag
        = & I_s - \Delta_p^{down}.\notag
\end{align}
For the classical decomposition, add and subtract $H(W)$ and $H(V)$:
\begin{align}
    I_p =& [H(W)+H(V)-H(W,V)] \\\notag
        &- [H(W)-H_s(\tilde{W})]\\\notag
        &- [H(V)-H_s(\tilde{V})] \\\notag
        &- [H_s(\tilde{W})-H_p(\underline{W})] \\\notag
        &- [H_s(\tilde{V})-H_p(\underline{V})] \\\notag
        = & I - \Delta_s^{down} - \Delta_p^{down}.\notag
\end{align}
The non-negativity follows from the entropy hierarchy.
\end{proof}

\begin{remark}
The additive decompositions reveal a clear structural interpretation: the up pragmatic mutual information \emph{adds} the information gains from each layer of abstraction, while the down pragmatic mutual information \emph{subtracts} the information losses. The classical mutual information serves as the central reference point. The inequalities $I_p \le I_s \le I \le I^s \le I^p$ follow immediately from the non-negativity of the increments/losses.
\end{remark}

\subsubsection{Single-Sided Pragmatic Mutual Information}

In many engineering scenarios, the coarse-graining to the pragmatic level occurs on only one side of the communication link. For instance, in perception tasks, the source $W$ is coarsened to $\underline{W}$ (the action), while the observation $V$ remains syntactic. Conversely, in expressive or reconstructive tasks, the source remains syntactic while the reconstruction or observation is coarsened to $\underline{V}$ (the action label). To capture these asymmetric situations, we define four single-sided pragmatic mutual information measures.

Let $W$ and $V$ be syntactic variables with joint distribution $P(W,V)$, and let $\underline{W}$ and $\underline{V}$ be the corresponding pragmatic variables induced by the respective reification mappings. The following four measures are defined.

\begin{definition}[Single-Sided Pragmatic Mutual Information]
\leavevmode\newline\indent
\begin{enumerate}
\item \textbf{Perceptual Pragmatic Mutual Information (PPMI)}:
\begin{equation}
    I_p(\underline{W}; V) \triangleq H_p(\underline{W}) + H(V) - H(\underline{W}, V) \\
    =H_p(\underline{W}) - H_p(\underline{W}|V)= I(\underline{W}; V),
    \label{eq:ppmi}
\end{equation}
This measure quantifies how much information the syntactic observation $V$ carries about the pragmatic action $\underline{W}$.

\item \textbf{Prospective Perceptual Pragmatic Mutual Information (PPPMI)}:
\begin{equation}
    I^p(\underline{W}; V) \triangleq H(W) + H(V) - H(\underline{W}, V),
    \label{eq:pppmi}
\end{equation}
This measure provides an upper bound on the perceptual information when the syntactic entropy of the source is retained in the marginal term.

\item \textbf{Expressive Pragmatic Mutual Information (EPMI)}:
\begin{equation}
    I_p(W; \underline{V}) \triangleq H(W) + H_p(\underline{V}) - H(W, \underline{V}) =H(W)-H(W|\underline{V}) = I(W; \underline{V}),
    \label{eq:epmi}
\end{equation}
This measure quantifies how much information about the syntactic source $W$ is contained in the pragmatic label $\underline{V}$.

\item \textbf{Prospective Expressive Pragmatic Mutual Information (PEPMI)}:
\begin{equation}
    I^p(W; \underline{V}) \triangleq H(W) + H(V) - H(W, \underline{V}),
    \label{eq:pepmi}
\end{equation}
This measure provides an upper bound on the expressive information when the syntactic entropy of the observation is retained.
\end{enumerate}
\end{definition}

The naming reflects their roles: ``Perceptual'' refers to the direction from observations to actions; ``Expressive'' refers to the direction from actions to source reconstruction. The prefix ``Prospective'' indicates that the measure uses the syntactic (larger) marginal entropy, thus providing an upper bound.

These four measures, together with the classical mutual information $I(W;V)$ and the two bilateral pragmatic mutual informations $I_p(\underline{W};\underline{V})$ and $I^p(\underline{W};\underline{V})$, form a complete lattice of information quantities that cover all combinations of coarse-graining (none, one side, both sides) and marginal entropy type (syntactic or pragmatic).

\begin{theorem}[Hierarchy of Single-Sided and Bilateral Measures]
For any syntactic variables $W,V$ with associated pragmatic variables $\underline{W},\underline{V}$, the following inequalities hold:
\begin{align}
    &I_p(\underline{W};\underline{V}) \;\le\; I_p(\underline{W}; V) \;\le\; I(W; V) \;\le\; I^p(\underline{W}; V) \;\le\; I^p(\underline{W};\underline{V}), \label{eq:left-chain}\\
    &I_p(\underline{W};\underline{V}) \;\le\; I_p(W; \underline{V}) \;\le\; I(W; V) \;\le\; I^p(W; \underline{V}) \;\le\; I^p(\underline{W};\underline{V}). \label{eq:right-chain}
\end{align}
Moreover, the cross comparisons between the left chain (perceptual) and the right chain (expressive) are generally incomparable, except that any measure from the lower (pragmatic marginal) side is bounded above by any measure from the upper (syntactic marginal) side:
\begin{equation}
    I_p(\underline{W}; V) \;\le\; I(W;V) \;\le\; I^p(W; \underline{V}), \quad \text{and} \quad I_p(W; \underline{V}) \;\le\; I(W;V) \;\le\; I^p(\underline{W}; V).
\end{equation}
\end{theorem}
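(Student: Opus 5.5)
The plan is to reduce every inequality in \eqref{eq:left-chain} and \eqref{eq:right-chain} to an elementary comparison of Shannon entropies. Throughout I assume, as in the proof of the binary chain rule, that the pragmatic pair is obtained componentwise. That is, \(\underline{W}=g_W^{-1}(W)\) and \(\underline{V}=g_V^{-1}(V)\) are deterministic functions of \(W\) and \(V\), so the joint reification mapping \(g_{WV}\) induces exactly the pair \((\underline{W},\underline{V})\). Under this assumption all the quantities are ordinary entropies of functions of \((W,V)\), and only two tools are needed. The first is that a deterministic function cannot increase entropy: for example \(H(\underline{W},V)\le H(W,V)\), \(H(\underline{W},\underline{V})\le H(\underline{W},V)\), and \(H_p(\underline{V})\le H(V)\) by Lemma~\ref{lem:pragmatic-hierarchy}. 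The second is the data processing inequality for ordinary mutual information.

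For the left chain I would proceed term by term.
\begin{enumerate}
\item \(I_p(\underline{W};\underline{V})\le I_p(\underline{W};V)\). Subtracting the two definitions, it suffices to show \(H_p(\underline{V})-H(W,V)\le H(V)-H(\underline{W},V)\). This follows by adding \(H_p(\underline{V})\le H(V)\) and \(H(\underline{W},V)\le H(W,V)\). The positive-part convention is harmless here: if the down quantity is clipped to \(0\), the bound still holds because \(I_p(\underline{W};V)=I(\underline{W};V)\ge 0\).
\item \(I_p(\underline{W};V)=I(\underline{W};V)\le I(W;V)\). This is data processing along the Markov chain \(\underline{W}-W-V\).
\item \(I(W;V)\le I^p(\underline{W};V)\). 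Comparing \(H(W)+H(V)-H(W,V)\) with \(H(W)+H(V)-H(\underline{W},V)\), this amounts to \(H(\underline{W},V)\le H(W,V)\).
\item \(I^p(\underline{W};V)\le I^p(\underline{W};\underline{V})\). This reduces to \(H_p(\underline{W},\underline{V})\le H(\underline{W},V)\), which holds because \((\underline{W},\underline{V})\) is a function of \((\underline{W},V)\).
\end{enumerate}

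The right chain \eqref{eq:right-chain} follows by the identical argument with the roles of \(W\) and \(V\) interchanged. It uses \(H(W,\underline{V})\le H(W,V)\), data processing along \(W-V-\underline{V}\), and \(H_p(\underline{W},\underline{V})\le H(W,\underline{V})\).

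The cross comparisons are then immediate, since every lower measure is at most \(I(W;V)\) and every upper measure is at least \(I(W;V)\). For the claim of general incomparability, I would exhibit a small example: taking \(\underline{W}\) trivial while \(\underline{V}=V\) gives \(I_p(\underline{W};V)=0<I_p(W;\underline{V})=I(W;V)\), and swapping the roles reverses the order.

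The main obstacle I expect is the consistency between the joint reification mapping \(g_{WV}\) and the marginal maps \(g_W,g_V\). If \(g_{WV}\) were a general partition of \(\mathcal{W}\times\mathcal{V}\), the step \(H_p(\underline{W},\underline{V})\le H(\underline{W},V)\) could fail. I would therefore state the product structure explicitly at the outset, justifying it from the definition of the joint isoteleia mapping \(e_{wv}\), whose classes are determined by the pair of separate optimal actions.
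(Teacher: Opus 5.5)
Your proof is correct and follows the same route as the paper: four successive comparisons per chain, symmetry for the expressive chain, and cross bounds through $I(W;V)$. You also make explicit the product structure of $g_{WV}$, which the paper uses only tacitly, and your trivial/identity example also separates $I^p(\underline{W};V)$ from $I^p(W;\underline{V})$, since their difference is $H(W|\underline{W},V)-H(V|W,\underline{V})$.

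Your increments for the third and fourth steps are the exact ones: $I^p(\underline{W};V)-I(W;V)=H(W|\underline{W},V)$ and $I^p(\underline{W};\underline{V})-I^p(\underline{W};V)=H(V|\underline{W},\underline{V})$. The identities the paper displays for these two steps do not hold as equalities in general (they only happen to produce other nonnegative terms), so your direct entropy comparisons are the sound justification.
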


\begin{proof}
\sloppy The left chain in Eq.~\eqref{eq:left-chain} follows from four successive inequalities. First, \(I_p(\underline{W};\underline{V}) \le I_p(\underline{W}; V)\) because \(\underline{V}\) is a function of \(V\) (data processing inequality). Second, \(I_p(\underline{W}; V) \le I(W; V)\) because \(\underline{W}\) is a function of \(W\). Third, \(I(W; V) \le I^p(\underline{W}; V)\) since \(I^p(\underline{W}; V) = I(W; V) + [H(W)-H(\underline{W})] - H(W|\underline{W},V) \ge I(W; V)\). Fourth, \(I^p(\underline{W}; V) \le I^p(\underline{W};\underline{V})\) because \(I^p(\underline{W};\underline{V}) = I^p(\underline{W}; V) + [H(V)-H(\underline{V})] - H(V|\underline{W},\underline{V}) \ge I^p(\underline{W}; V)\). The right chain in Eq.~\eqref{eq:right-chain} follows by symmetry.

The cross bounds between perceptual and expressive measures are direct consequences of the fact that any single-sided measure with a pragmatic marginal entropy (PPMI or EPMI) is bounded above by the classical mutual information, while any single-sided measure with a syntactic marginal entropy (PPPMI or PEPMI) is bounded below by the classical mutual information. The incomparability between PPMI and EPMI (and similarly between PPPMI and PEPMI) can be demonstrated by constructing examples where one exceeds the other, depending on the relative losses of coarse-graining on each side.
\end{proof}

%The complete Hasse diagram of the seven measures is shown in Figure~\ref{fig:mi-lattice}.

%\begin{figure}[htbp]
%\centering
%\begin{tikzpicture}[node distance=1.2cm, every node/.style={rectangle, draw, minimum width=2.8cm, align=center}]
%\node (UPMI) {$I^p(\underline{W};\underline{V})$};
%\node (PPPMI) [below left of=UPMI, xshift=-1.5cm] {$I^p(\underline{W}; V)$};
%\node (PEPMI) [below right of=UPMI, xshift=1.5cm] {$I^p(W; \underline{V})$};
%\node (CMI) [below of=UPMI] {$I(W; V)$};
%\node (PPMI) [below left of=CMI, xshift=-1.5cm] {$I_p(\underline{W}; V)$};
%\node (EPMI) [below right of=CMI, xshift=1.5cm] {$I_p(W; \underline{V})$};
%\node (DPMI) [below of=CMI] {$I_p(\underline{W};\underline{V})$};

%\draw[->] (UPMI) -- (PPPMI);
%\draw[->] (UPMI) -- (PEPMI);
%\draw[->] (PPPMI) -- (CMI);
%\draw[->] (PEPMI) -- (CMI);
%\draw[->] (CMI) -- (PPMI);
%\draw[->] (CMI) -- (EPMI);
%\draw[->] (PPMI) -- (DPMI);
%\draw[->] (EPMI) -- (DPMI);
% Cross comparability edges (downward from CMI to both lower, and upward from CMI to both upper are already implied; no direct edges between PPMI and EPMI or PPPMI and PEPMI)
%\end{tikzpicture}
%\caption{Hasse diagram of the seven pragmatic mutual information measures. Arrows indicate $\ge$.}
%\label{fig:mi-lattice}
%\end{figure}

\subsubsection{Sequential Chain Rules for Pragmatic Mutual Information}

We now extend the pragmatic mutual information to sequences, including the single-sided measures. Let $(W^n, V^n)$ be a pair of syntactic sequences, and let $(\underline{W}^n, \underline{V}^n)$ be the corresponding pragmatic sequences obtained by applying the joint Reification Mapping element-wise. Define the sequential partial entropy as in Section III.B:
\begin{equation}
    \tilde{H}_p(\underline{W}_1^m, W_{m+1}^n) \triangleq \sum_{k=1}^{m} H_p(\underline{W}_k | \underline{W}_1^{k-1}) + \sum_{k=m+1}^{n} H_p(W_k | \underline{W}_1^m, W_{m+1}^{k-1}),
\end{equation}
with analogous notation for mixed partial entropies involving $\underline{V}$ and $V$.

\begin{theorem}[Sequential Chain Rule for Down Pragmatic Mutual Information]
For a syntactic sequence pair $(W^n, V^n)$ and its pragmatic counterpart, the down pragmatic mutual information satisfies the following chain of inequalities:
\begin{equation}
    I_p(\underline{W}^n; \underline{V}^n) \;\le\; I_p(\underline{W}_1^{n-1}, W_n; \underline{V}^n) \;\le\; \cdots \;\le\; I_p(\underline{W}_1, W_2^n; \underline{V}^n) \;\le\; I(W^n; V^n),
    \label{eq:seq-down-pragmatic}
\end{equation}
where
\begin{equation}
    I_p(\underline{W}_1^m, W_{m+1}^n; \underline{V}^n) \triangleq \tilde{H}_p(\underline{W}_1^m, W_{m+1}^n) + H_p(\underline{V}^n) - H(W^n, V^n).
\end{equation}
\end{theorem}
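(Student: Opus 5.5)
Since every quantity in the chain shares the subtracted term $H(W^n,V^n)$ and the term $H_p(\underline{V}^n)$, the plan is to reduce the statement to the entropy chain of Theorem~\ref{theorem:Chain-Rule-Sequential-Pragmatic-Entropy}. By definition,
\begin{equation*}
I_p(\underline{W}_1^m, W_{m+1}^n; \underline{V}^n) - I_p(\underline{W}_1^{m-1}, W_m^n; \underline{V}^n) = \tilde{H}_p(\underline{W}_1^m, W_{m+1}^n) - \tilde{H}_p(\underline{W}_1^{m-1}, W_m^n),
\end{equation*}
so each adjacent inequality in \eqref{eq:seq-down-pragmatic} is equivalent to the corresponding adjacent inequality between sequential partial entropies. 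I would first fix the two endpoints. For $m=n$, the partial entropy $\tilde{H}_p(\underline{W}_1^n)$ equals $H_p(\underline{W}^n)$ by the ordinary chain rule, so $I_p(\underline{W}_1^n;\underline{V}^n)$ coincides with the bilateral down pragmatic mutual information $I_p(\underline{W}^n;\underline{V}^n) = H_p(\underline{W}^n) + H_p(\underline{V}^n) - H(W^n,V^n)$. At the other end, $\tilde{H}_p(\underline{W}_1, W_2^n) \le H(W^n)$ by Theorem~\ref{theorem:Chain-Rule-Sequential-Pragmatic-Entropy}. Together with $H_p(\underline{V}^n) \le H(V^n)$, which follows from the hierarchy of sequential joint entropies, this gives $I_p(\underline{W}_1, W_2^n; \underline{V}^n) \le H(W^n)+H(V^n)-H(W^n,V^n) = I(W^n;V^n)$.

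For the interior steps I would verify directly that $\tilde{H}_p(\underline{W}_1^m, W_{m+1}^n)$ is nondecreasing as $m$ decreases, rather than simply citing the theorem whose proof was only sketched. The key is to compare the two partial entropies term by term. The first $m-1$ summands are identical in both. The remaining part of $\tilde{H}_p(\underline{W}_1^{m-1}, W_m^n)$ equals $H(W_m^n \mid \underline{W}_1^{m-1})$ by the ordinary chain rule. The remaining part of $\tilde{H}_p(\underline{W}_1^m, W_{m+1}^n)$ equals $H(\underline{W}_m, W_{m+1}^n \mid \underline{W}_1^{m-1})$. Because $\underline{W}_m$ is a deterministic function of $W_m$ under the element-wise reification mapping, the pair $(\underline{W}_m, W_{m+1}^n)$ is a function of $W_m^n$. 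Conditional entropy of a function does not exceed that of its argument, so
\begin{equation*}
H(\underline{W}_m, W_{m+1}^n \mid \underline{W}_1^{m-1}) \le H(W_m^n \mid \underline{W}_1^{m-1}),
\end{equation*}
which is exactly the required step. This argument uses only the mixture/concavity fact already used in the binary chain rule, in its conditional form.

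The main obstacle is interpretive, not computational: the definition of $\tilde{H}_p$ conditions later syntactic symbols on pragmatic prefixes, and we must check that these conditional terms really are ordinary conditional entropies, so that they telescope via the classical chain rule. I would state explicitly that all variables involved are ordinary random variables on the same probability space. I would also note that the positive-part convention $(I_p)^+$ does not break the chain, because the map $x\mapsto\max\{0,x\}$ is monotone.
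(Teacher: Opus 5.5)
Your proposal is correct and follows the paper's route: the common terms $H_p(\underline{V}^n)$ and $H(W^n,V^n)$ cancel, so the chain reduces to the monotonicity of $\tilde{H}_p(\underline{W}_1^m, W_{m+1}^n)$ in $m$ from the sequential pragmatic entropy chain rule (Theorem~\ref{theorem:Chain-Rule-Sequential-Pragmatic-Entropy}). Your direct proof of that monotonicity and your explicit use of $H_p(\underline{V}^n)\le H(V^n)$ in the last link fill in details that the paper's one-line substitution leaves implicit. One small correction: the inequality you invoke, $H(f(X)\mid Z)\le H(X\mid Z)$, follows from the chain rule, $H(X\mid Z)=H(X,f(X)\mid Z)\ge H(f(X)\mid Z)$, and not from the concavity fact you cite.
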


\begin{proof}
The proof follows the same steps as the semantic sequential chain rule (Theorem~8 in \cite{Paper_SIT,Book_SIT}), with the semantic entropies replaced by pragmatic entropies. By the hierarchy of sequential entropies (Theorem~\ref{theorem:Chain-Rule-Sequential-Pragmatic-Entropy} in Section~\ref{subsec:pragmatic-joint-entropy}), we have
\begin{equation}
    \tilde{H}_p(\underline{W}_1^m, W_{m+1}^n) \le \tilde{H}_p(\underline{W}_1^{m-1}, W_m^n) \le \cdots \le H(W^n).
\end{equation}
Substituting these into the definition of $I_p$ for the mixed blocks yields the desired chain.
\end{proof}

Similarly, for the up pragmatic mutual information we have:

\begin{theorem}[Sequential Chain Rule for Up Pragmatic Mutual Information]\label{theorem:Chain-Rule-Up-Pragmatic-MI}
\begin{equation}
    I(W^n; V^n) \;\le\; I^p(\underline{W}_1, W_2^n; \underline{V}^n) \;\le\; \cdots \;\le\; I^p(\underline{W}_1^{n-1}, W_n; \underline{V}^n) \;\le\; I^p(\underline{W}^n; \underline{V}^n),
    \label{eq:seq-up-pragmatic}
\end{equation}
where
\begin{equation}
    I^p(\underline{W}_1^m, W_{m+1}^n; \underline{V}^n) \triangleq H(W^n) + H(V^n) - \tilde{H}_p(\underline{W}_1^m, W_{m+1}^n, \underline{V}^n),
\end{equation}
with $\tilde{H}_p$ denoting the corresponding mixed sequential entropy.
\end{theorem}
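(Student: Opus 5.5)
Since $H(W^n)+H(V^n)$ is common to every term of the chain, each inequality in \eqref{eq:seq-up-pragmatic} is equivalent to the reversed inequality between the mixed joint entropies. Writing
\begin{equation*}
J_m \triangleq \tilde{H}_p(\underline{W}_1^m, W_{m+1}^n, \underline{V}^n), \qquad 0\le m\le n,
\end{equation*}
the claim becomes
\begin{equation*}
H(W^n,V^n) \ge J_1 \ge J_2 \ge \cdots \ge J_{n-1} \ge J_n = H_p(\underline{W}^n,\underline{V}^n).
\end{equation*}
The plan is to mirror the proof of the down chain rule just above, which reduced everything to monotonicity of $\tilde{H}_p$ in Theorem~\ref{theorem:Chain-Rule-Sequential-Pragmatic-Entropy}. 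The difference here is that the monotonicity is needed for the joint object that also contains the pragmatic sequence $\underline{V}^n$, and the direction of the final inequality flips.

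\textbf{Step 1 (reduction).} I will take the mixed joint entropy $J_m$ to mean the entropy of the random vector
\begin{equation*}
(\underline{W}_1,\dots,\underline{W}_m,W_{m+1},\dots,W_n,\underline{V}^n),
\end{equation*}
expanded by the classical chain rule in the order used in \eqref{eq:sequential-partial-entropy}, with $\underline{V}^n$ appended at the end. With this reading, $J_n$ coincides with $H_p(\underline{W}^n,\underline{V}^n)$ by the definition of pragmatic joint entropy. Likewise $J_0=H(W^n,\underline{V}^n)\le H(W^n,V^n)$, because $\underline{V}^n$ is a deterministic, element-wise function of $V^n$ under the sequential joint reification mapping. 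That inequality gives the leftmost link $I(W^n;V^n)\le I^p(W^n;\underline{V}^n)$, with $I^p(\underline{W}_1,W_2^n;\underline{V}^n)$ sitting just after it in the chain.

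\textbf{Step 2 (adjacent comparison).} For $1\le m\le n$, the vector defining $J_m$ is obtained from the one defining $J_{m-1}$ by replacing $W_m$ with $\underline{W}_m=q(W_m)$, a deterministic function. Hence
\begin{equation*}
J_m = H\bigl(\underline{W}_1^{m-1}, q(W_m), W_{m+1}^n, \underline{V}^n\bigr) \le H\bigl(\underline{W}_1^{m-1}, W_m, W_{m+1}^n, \underline{V}^n\bigr) = J_{m-1},
\end{equation*}
which is the same data-processing fact for entropy used in the proof of the Hierarchy of Sequential Joint Entropies. Chaining these inequalities over $m=1,\dots,n$ and subtracting from $H(W^n)+H(V^n)$ yields \eqref{eq:seq-up-pragmatic}.

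\textbf{Main obstacle.} The delicate point is the definition of $\tilde{H}_p(\cdot,\underline{V}^n)$. The paper only describes it as ``the corresponding mixed sequential entropy'', whereas the definition \eqref{eq:sequential-partial-entropy} is a sum of conditional terms. I must verify that this sum equals the genuine joint entropy of the mixed vector, which is true only if every conditional term is conditioned on exactly the preceding components of that vector. If instead the terms are conditioned on syntactic prefixes, I would fall back on the binary chain rule \eqref{eq:binary-pragmatic-chain}. That rule shows that swapping a pragmatic conditioning for a syntactic one can only decrease the term, and it would be applied to each adjacent pair in the ordering, following the semantic argument in \cite{Paper_SIT,Book_SIT}.
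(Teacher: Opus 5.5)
Your proposal is correct and is essentially the paper's argument: the paper's one-line proof observes that replacing a syntactic $W_m$ by its pragmatic counterpart $\underline{W}_m$ lowers the joint-entropy term and therefore raises $I^p$, which is exactly your Step~2, and your Step~1 supplies the two end-point comparisons. The obstacle you flag does not actually arise. In \eqref{eq:sequential-partial-entropy} each conditional term is conditioned on exactly the preceding components, so the classical chain rule makes $\tilde{H}_p(\underline{W}_1^m, W_{m+1}^n)$ equal to the joint entropy $H(\underline{W}_1^m, W_{m+1}^n)$; its natural extension with $\underline{V}^n$ appended is your $J_m$, and no fallback to \eqref{eq:binary-pragmatic-chain} is needed.
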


\begin{proof}
Analogous to the semantic case, using the fact that replacing a syntactic variable by its pragmatic counterpart increases the up mutual information due to the decrease in the joint entropy term.
\end{proof}

For the single-sided measures, we can also derive sequential chain rules. For example, the perceptual pragmatic mutual information $I_p(\underline{W}^n; V^n)$ can be bounded by sequentially replacing $W$'s with $\underline{W}$'s:

\begin{theorem}[Sequential Chain Rule for Perceptual Pragmatic Mutual Information]
\begin{equation}
    I_p(\underline{W}^n; V^n) \;\le\; I_p(\underline{W}_1^{n-1}, W_n; V^n) \;\le\; \cdots \;\le\; I_p(\underline{W}_1, W_2^n; V^n) \;\le\; I(W^n; V^n),
\end{equation}
where the mixed quantities are defined analogously using the corresponding conditional entropies.
\end{theorem}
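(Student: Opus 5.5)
The plan is to first make the mixed quantities explicit, then follow the down-MI template. I will use the convention that matches the single-sided definition $I_p(\underline{W};V)=H_p(\underline{W})+H(V)-H(\underline{W},V)$. For each $0\le m\le n$ I set
\begin{equation*}
I_p(\underline{W}_1^m, W_{m+1}^n; V^n) \triangleq \tilde{H}_p(\underline{W}_1^m, W_{m+1}^n) + H(V^n) - \tilde{H}_p(\underline{W}_1^m, W_{m+1}^n, V^n),
\end{equation*}
where the last term is the mixed joint entropy of the block $(\underline{W}_1^m, W_{m+1}^n)$ together with the full syntactic sequence $V^n$. By the ordinary chain rule, this is the Shannon mutual information between the random vector $Z_m \triangleq (\underline{W}_1^m, W_{m+1}^n)$ and $V^n$. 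The endpoints agree with the statement: $m=n$ gives $I_p(\underline{W}^n;V^n)$ and $m=0$ gives $I(W^n;V^n)$.

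The key observation is that the $Z_m$ form a deterministic coarsening chain. Because the sequential reification mapping acts element-wise, $\underline{W}_m$ is a deterministic function of $W_m$. Hence $Z_m$ is a deterministic function of $Z_{m-1}$, obtained by replacing the single coordinate $W_m$ with $\underline{W}_m$. This gives the Markov chain $V^n \to Z_{m-1} \to Z_m$ for each $m$. The data processing inequality then yields $I(Z_m;V^n)\le I(Z_{m-1};V^n)$. Chaining from $m=n$ down to $m=0$ produces exactly the displayed sequence of inequalities. This mirrors the first step in the proof of the hierarchy of single-sided measures, where $I_p(\underline{W};V)\le I(W;V)$ was justified because $\underline{W}$ is a function of $W$.

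As a cross-check in the paper's style, I would also write each step as an entropy difference:
\begin{equation*}
I(Z_{m-1};V^n)-I(Z_m;V^n) = H(Z_{m-1}|Z_m) - H(Z_{m-1}|Z_m,V^n) = I(W_m; V^n \mid Z_m) \ge 0.
\end{equation*}
The first equality uses $H(Z_{m-1})=H(Z_m)+H(Z_{m-1}|Z_m)$, which holds because $Z_m$ is a function of $Z_{m-1}$. This identity also quantifies the loss incurred at each step: it is the information about $V^n$ carried by $W_m$ beyond its action class.

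The main obstacle is definitional rather than technical. The paper's $\tilde{H}_p$ is written with pragmatic conditioning in the first block ($H_p(\underline{W}_k|\underline{W}_1^{k-1})$), and this must coincide with the genuine joint entropy $H(Z_m)$ for the identification with Shannon mutual information to hold. I would verify this by the ordinary chain rule, since all the $\underline{W}_k$ are honest random variables. Once this is checked, the data-processing step is routine.
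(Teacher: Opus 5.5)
Your argument is correct. It differs from the route the paper gestures at, and your route is the one that actually closes the argument. The paper states this theorem without proof. It relies on analogy with the down-MI sequential chain rule, whose proof substitutes the monotone chain $\tilde{H}_p(\underline{W}_1^m, W_{m+1}^n)\le\tilde{H}_p(\underline{W}_1^{m-1}, W_m^n)\le\cdots\le H(W^n)$ into a definition whose joint term $H(W^n,V^n)$ stays fixed.

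That substitution template does not transfer to the perceptual quantity once the mixed terms are defined consistently with the endpoint $I_p(\underline{W}^n;V^n)=H_p(\underline{W}^n)+H(V^n)-H(\underline{W}^n,V^n)$, as you define them. In that case both $\tilde{H}_p(Z_m)$ and the joint term $\tilde{H}_p(Z_m,V^n)$ decrease as $m$ grows, and the entropy chain alone does not fix the sign of their difference. Your two steps supply exactly the missing comparison:
\begin{itemize}
\item identifying each mixed quantity with $I(Z_m;V^n)$;
\item applying data processing along $V^n\to Z_{m-1}\to Z_m$, or equivalently $H(V^n\mid Z_m)\ge H(V^n\mid Z_{m-1})$ because $Z_m$ is a function of $Z_{m-1}$.
\end{itemize}
Your identity $I(Z_{m-1};V^n)-I(Z_m;V^n)=I(W_m;V^n\mid Z_m)$ also quantifies the loss at each step.

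You should state one hypothesis explicitly: $\underline{W}_m$ must be a function of $W_m$ alone. This holds when the perceptual variable comes from the marginal reification mapping $g_W$, as in the single-sided definitions. Element-wise application alone does not guarantee it. Under a joint mapping $g_{WV}$ with a non-product partition, $\underline{W}_m$ is only a function of $(W_m,V_m)$, and then the Markov chain, and the inequality itself, can fail.
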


Similarly, for the expressive pragmatic mutual information $I_p(W^n; \underline{V}^n)$, we have
\begin{equation}
    I_p(W^n; \underline{V}^n) \;\le\; I_p(W^n; \underline{V}_1^{n-1}, V_n) \;\le\; \cdots \;\le\; I_p(W^n; \underline{V}_1, V_2^n) \;\le\; I(W^n; V^n).
\end{equation}

The up versions (PPPMI and PEPMI) satisfy the reverse chain inequalities, starting from the classical mutual information and increasing towards the upper bounds.

\begin{theorem}[Sequential Chain Rule for Prospective Perceptual Pragmatic Mutual Information]
\begin{equation}
    I(W^n; V^n) \;\le\; I^p(\underline{W}_1, W_2^n; V^n) \;\le\; \cdots \;\le\; I^p(\underline{W}_1^{n-1}, W_n; V^n) \;\le\; I^p(\underline{W}^n; V^n).
\end{equation}
\end{theorem}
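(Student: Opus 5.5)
The plan is to define the mixed quantities explicitly and then show that each step of the chain is a telescoping comparison of a single joint-entropy term. By analogy with \eqref{eq:pppmi}, for $0\le m\le n$ set
\[
I^p(\underline{W}_1^m, W_{m+1}^n; V^n) \triangleq H(W^n) + H(V^n) - H(\underline{W}_1^m, W_{m+1}^n, V^n).
\]
The marginal terms $H(W^n)+H(V^n)$ stay syntactic throughout, since this is what "prospective" means. With this convention, $m=0$ gives $H(W^n)+H(V^n)-H(W^n,V^n)=I(W^n;V^n)$, and $m=n$ gives $I^p(\underline{W}^n;V^n)$. So the claimed chain is equivalent to the statement that
\[
H(\underline{W}_1^m, W_{m+1}^n, V^n)
\]
is nonincreasing in $m$. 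By the chain rule of Theorem~\ref{theorem:Chain-Rule-Sequential-Pragmatic-Entropy}, this joint entropy can be rewritten as the corresponding partial entropy $\tilde H_p$, which keeps the notation consistent with Theorem~\ref{theorem:Chain-Rule-Up-Pragmatic-MI}.

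For the key step, compare $m-1$ with $m$. The two tuples differ only in position $m$: one contains $W_m$ and the other contains $\underline{W}_m$. Under the sequential reification mapping, $\underline{W}_m=g^{-1}(W_m)$ is a deterministic function of $W_m$ alone. Hence the tuple
\[
(\underline{W}_1^{m}, W_{m+1}^n, V^n)
\]
is a deterministic function of the tuple $(\underline{W}_1^{m-1}, W_m^n, V^n)$. Entropy does not increase under deterministic maps; this is the same data-processing fact used in the proof of the sequential joint entropy hierarchy. It gives
\[
H(\underline{W}_1^{m}, W_{m+1}^n, V^n)\le H(\underline{W}_1^{m-1}, W_m^n, V^n),
\]
which is exactly $I^p(\underline{W}_1^{m-1},W_m^n;V^n)\le I^p(\underline{W}_1^{m},W_{m+1}^n;V^n)$.

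An explicit form of each increment is also available: it equals $H(W_m\mid \underline{W}_1^{m},W_{m+1}^n,V^n)\ge 0$. This is the sequential analogue of the decomposition used for the third inequality of the hierarchy of single-sided and bilateral measures. Chaining $m=1,\dots,n$ then yields the full statement.

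The main obstacle is definitional rather than technical. The paper only says the mixed quantities are "defined analogously", so I must make sure that $I^p(\underline{W}_1^m,W_{m+1}^n;V^n)$ uses the joint entropy of the mixed block together with $V^n$. If it were instead built from conditional entropies of $\underline{W}$ given syntactic prefixes, the summation could misalign with the joint entropy. The identification via the chain rule is what guarantees the two coincide. The element-wise product structure of $g_{WV}^n$ ensures that replacing one coordinate does not disturb the others, and that is exactly what makes the deterministic-function argument apply.
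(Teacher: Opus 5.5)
Your proof is correct and is essentially the paper's argument: the paper states this chain without a separate proof. Its justification for the companion chain in Theorem~\ref{theorem:Chain-Rule-Up-Pragmatic-MI} is that replacing one syntactic coordinate by its pragmatic image lowers the joint-entropy term while $H(W^n)+H(V^n)$ stays fixed, and your deterministic-function step and increment $H(W_m\mid \underline{W}_1^{m},W_{m+1}^n,V^n)\ge 0$ make that rigorous.

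Two citations should be adjusted. First, identifying $\tilde H_p(\underline{W}_1^m,W_{m+1}^n,V^n)$ with the joint entropy is just the ordinary Shannon chain rule applied to the definition; Theorem~\ref{theorem:Chain-Rule-Sequential-Pragmatic-Entropy} is a chain of inequalities and does not give this identity. Second, your increment is the correct one-step gain (at $n=1$ it reads $I^p(\underline{W};V)-I(W;V)=H(W\mid\underline{W},V)$), but it does not coincide with the expression $[H(W)-H(\underline{W})]-H(W\mid\underline{W},V)$ displayed in the paper's proof of the single-sided hierarchy, so derive it directly rather than by analogy to that display.
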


Analogous chains hold for PEPMI by replacing $V$'s with $\underline{V}$'s in the mixed blocks.

Finally, the hierarchy of sequential mutual informations extends to all seven measures, with the same partial order as in the single-letter case, for any fixed $n$.

\begin{example}[Multi-Vehicle Scenario]
\leavevmode\newline\indent
We now revisit the multi-vehicle autonomous driving scenario from Section~\ref{subsec:pragmatic-joint-entropy} to compute the classical, semantic, and pragmatic mutual informations, and also the four single-sided measures. Recall the joint syntactic distribution $P(W,V)$ given in Table~\ref{tab:joint-prob}. The syntactic states are traffic light colors, with four possible values: green, yellow, red-short, red-long. The semantic mapping groups green and yellow into ``Proceed'', red-short into ``Stop-short'', and red-long into ``Stop-long''. The pragmatic mapping further merges the two stop classes into a single ``Stop'' action, as the utility function treats them identically. The joint Reification Mapping $g_{WV}$ thus maps the $4 \times 4$ syntactic pairs to a $2 \times 2$ pragmatic partition: ``Go'' (green/yellow combined) and ``Stop'' (both red types combined).
\end{example}

The syntactic joint distribution is given in Table~\ref{tab:joint-prob} and the corresponding semantic and pragmatic joint distributions are given in Tables~\ref{tab:semantic-joint-dist} and~\ref{tab:pragmatic-joint-dist}, respectively.

From earlier computations in Example~\ref{example:multi-vehicle-driving}, we can calculate the mixed entropies $H(W, \underline{V}) = H(\underline{W}, V) \approx 2.7328$ bits. Further, we need $H(\underline{W})$ and $H(\underline{V})$ which are both 0.9928 (since marginal pragmatic entropies equal). Now compute the single-sided measures:

1. PPMI: $I_p(\underline{W}; V) = H_p(\underline{W}) + H(V) - H(\underline{W}, V) = 0.9928 + 1.9893 - 2.7328 = 0.2493$ bits.
2. PPPMI: $I^p(\underline{W}; V) = H(W) + H(V) - H(\underline{W}, V) = 1.9893 + 1.9893 - 2.7328 = 1.2458$ bits.
3. EPMI: $I_p(W; \underline{V}) = H(W) + H_p(\underline{V}) - H(W, \underline{V}) = 1.9893 + 0.9928 - 2.7328 = 0.2493$ bits (same as PPMI due to symmetry in this example, but in general they differ).
4. PEPMI: $I^p(W; \underline{V}) = H(W) + H(V) - H(W, \underline{V}) = 1.9893 + 1.9893 - 2.7328 = 1.2458$ bits (same as PPPMI).

The classical mutual information $I(W;V) = 0.3472$ bits.
The down bilateral $I_p(\underline{W};\underline{V}) = 0$ (positive part).
The up bilateral $I^p(\underline{W};\underline{V}) = 2.2277$ prabits.
The results are summarized in Table~\ref{tab:mi-comparison-full}.

\begin{table}[htbp]
\centering
\caption{Comparison of all mutual information measures for the two-vehicle scenario.}
\label{tab:mi-comparison-full}
\begin{tabular}{c|c}
\hline
\textbf{Mutual Information} & \textbf{Value (bits/sebits/prabits)} \\ \hline
Down Pragmatic $I_p(\underline{W};\underline{V})$ & $0$ (positive part) \\
Down Semantic $I_s(\tilde{W};\tilde{V})$ & $0$ (positive part) \\
Perceptual Pragmatic $I_p(\underline{W}; V)$ & $0.2493$ prabits \\
Expressive Pragmatic $I_p(W; \underline{V})$ & $0.24993$ prabits \\
Classical $I(W; V)$ & $0.3472$ bits \\
Prospective Perceptual $I^p(\underline{W}; V)$ & $1.2458$ bits \\
Prospective Expressive $I^p(W; \underline{V})$ & $1.2458$ bits \\
Up Semantic $I^s(\tilde{W};\tilde{V})$ & $1.2223$ sebits \\
Up Pragmatic $I^p(\underline{W};\underline{V})$ & $2.2277$ prabits \\ \hline
\end{tabular}
\end{table}

We observe that the single-sided measures lie between the down bilateral and the classical mutual information (for the lower ones) or between the classical and the up bilateral (for the upper ones), confirming the lattice structure. The equality of PPMI and EPMI in this symmetric example is coincidental; in general they differ.

This example demonstrates the utility of the single-sided measures in quantifying the information loss or gain when only one side of the communication is coarsened to the pragmatic level. They provide a finer-grained analysis of task-oriented information flow compared to the bilateral measures alone.

\section{Pragmatic Channel Capacity and Pragmatic Rate Distortion}
\label{section_V}

We introduce the pragmatic channel capacity and rate-distortion function as the fundamental limits of pragmatic communication, paralleling their classical and semantic counterparts. These quantities characterize the maximum reliable transmission rate and the minimum rate for a given task distortion, respectively, and reveal performance gains from pragmatic abstraction. We further extend to single-sided perceptual and expressive measures for asymmetric sensing-actuation systems.

\subsection{Pragmatic Channel Capacity}
\label{subsec:pragmatic-capacity}

The pragmatic channel capacity quantifies the maximum rate at which pragmatic information—the optimal terminal actions—can be reliably transmitted over a given communication channel. It is defined as the supremum of the up pragmatic mutual information between the input pragmatic variable and the output pragmatic variable, maximized over the input distribution and the joint reification mapping.

Consider a discrete memoryless channel with input alphabet \(\mathcal{X}\), output alphabet \(\mathcal{Y}\), and transition probability \(P(Y|X)\). Let the associated semantic alphabets be \(\tilde{\mathcal{X}}\) and \(\tilde{\mathcal{Y}}\), and the pragmatic alphabets be \(\underline{\mathcal{X}}\) and \(\underline{\mathcal{Y}}\), induced by the joint reification mapping $g_{XY}: \underline{\mathcal{X}} \times \underline{\mathcal{Y}} \longrightarrow 2^{\mathcal{X} \times \mathcal{Y}}$,
which partitions the joint syntactic space into pragmatic equivalence classes.

\begin{definition}[Pragmatic Channel Capacity]
The \textbf{pragmatic channel capacity} is defined as
\begin{equation}
    C_p \triangleq \max_{p(x)} \max_{g_{XY}} I^p(\underline{X}; \underline{Y}),
    \label{eq:pragmatic-capacity}
\end{equation}
where \(I^p(\underline{X}; \underline{Y}) = H(X) + H(Y) - H_p(\underline{X}, \underline{Y})\) is the up pragmatic mutual information, and the maximization is over all input distributions \(p(x)\) and all joint reification mappings \(g_{XY}\).
\end{definition}

For reference, the semantic channel capacity~\cite{Paper_SIT,Book_SIT} and the classical (syntactic) channel capacity~\cite{Classicpaper_Shannon,Book_Cover} are defined as:
\begin{equation}
\left\{
\begin{aligned}
C_s &\triangleq \max_{p(x)} \max_{f_{XY}} I^s(\tilde{X}; \tilde{Y}), \\
C &\triangleq \max_{p(x)} I(X; Y),
\end{aligned}
\right.
\end{equation}
where \(f_{XY}\) denotes the joint synonymous mapping.

The following theorem establishes the hierarchy among the three capacities.

\begin{theorem}[Hierarchy of Channel Capacities]
The pragmatic, semantic, and syntactic channel capacities satisfy:
\begin{equation}
    C \;\le\; C_s \;\le\; C_p.
    \label{eq:capacity-hierarchy}
\end{equation}
\end{theorem}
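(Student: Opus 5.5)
The plan is to prove the two inequalities separately, in each case fixing the input distribution first and then comparing the inner maximizations. The tool is the mutual-information hierarchy \eqref{eq:mutual-hierarchy}, which already gives, for any fixed joint distribution and any fixed mappings, $I(X;Y)\le I^s(\tilde X;\tilde Y)\le I^p(\underline X;\underline Y)$. Since the outer maximization over $p(x)$ is common to all three capacities, it suffices to show pointwise in $p(x)$ that
\[
I(X;Y)\le \max_{f_{XY}} I^s(\tilde X;\tilde Y) \le \max_{g_{XY}} I^p(\underline X;\underline Y).
\]
Taking $\max_{p(x)}$ then preserves both inequalities.

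\textbf{First inequality, $C\le C_s$.} Fix $p(x)$ and choose any joint synonymous mapping $f_{XY}$, for instance the trivial identity partition. The hierarchy gives $I(X;Y)\le I^s(\tilde X;\tilde Y)$. Explicitly, $I^s-I = H(X,Y)-H_s(\tilde X,\tilde Y)\ge 0$ by the joint entropy hierarchy \eqref{eq:binary-joint-hierarchy}. Hence $I(X;Y)\le \max_{f_{XY}} I^s$. Maximizing over $p(x)$ yields $C\le C_s$.

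\textbf{Second inequality, $C_s\le C_p$.} Fix $p(x)$ and let $f^*_{XY}$ attain the inner maximum in $C_s$. The aim is to exhibit an admissible joint reification mapping $g_{XY}$ whose pragmatic partition is a coarsening of the semantic partition induced by $f^*_{XY}$. The natural candidate is $g_{XY}=f^*_{XY}\circ e_{XY}$, where $e_{XY}$ is the joint isoteleia mapping built on top of $f^*_{XY}$; the trivial (injective) isoteleia mapping is the fallback. For this $g_{XY}$, $(\underline X,\underline Y)$ is a deterministic function of $(\tilde X,\tilde Y)$. The first part of \eqref{eq:binary-joint-hierarchy} then gives $H_p(\underline X,\underline Y)\le H_s(\tilde X,\tilde Y)$, hence
\[
I^p(\underline X;\underline Y)=I^s(\tilde X;\tilde Y)+\Delta_p^{up}\ge I^s(\tilde X;\tilde Y)
\]
by the additive decomposition \eqref{eq:up-pragmatic-decomp-semantic}. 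Therefore $\max_{g_{XY}}I^p\ge I^s$ under $f^*_{XY}$, which equals the inner maximum for $C_s$. Maximizing over $p(x)$ gives $C_s\le C_p$.

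\textbf{Main obstacle.} The delicate step is justifying that the feasible set of joint reification mappings in \eqref{eq:pragmatic-capacity} contains a coarsening of every admissible joint synonymous mapping. Without this, the maxima are taken over different families and cannot be compared directly. I would settle it by noting that the definition maximizes over all joint reification mappings, i.e.\ over all partitions of $\mathcal X\times\mathcal Y$ obtained as quotients of a semantic partition. The composition $f^*_{XY}\circ e_{XY}$ with the identity isoteleia mapping is always such a partition and reproduces $I^s$ exactly, and any genuine merging only increases $I^p$. The maxima are attained because the alphabets are finite: there are finitely many partitions, and $I^p$ is continuous in $p(x)$ over a compact simplex.
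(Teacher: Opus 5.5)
Your proof is correct and takes essentially the same route as the paper. Both arguments fix the input distribution, obtain $C\le C_s$ from $I(X;Y)\le I^s(\tilde X;\tilde Y)$, and obtain $C_s\le C_p$ by choosing a joint reification mapping that coarsens the optimal joint synonymous mapping, so that $H_p(\underline X,\underline Y)\le H_s(\tilde X,\tilde Y)$; your fallback to the identity isoteleia mapping and your remark that the maxima are attained over finitely many partitions make explicit what the paper leaves implicit.
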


\begin{proof}
The inequality \(C \le C_s\) is a known result of semantic information theory~\cite{Paper_SIT,Book_SIT}, following from the fact that \(I(X;Y) \le I^s(\tilde{X};\tilde{Y})\) for any joint synonymous mapping.

To prove \(C_s \le C_p\), note that for any input distribution and any joint Synonymous Mapping \(f_{XY}\), we can define a corresponding joint Reification Mapping \(g_{XY}\) that is coarser than \(f_{XY}\) (i.e., it merges semantic classes that lead to the same optimal action). For such \(g\), we have \(H_p(\underline{X}, \underline{Y}) \le H_s(\tilde{X}, \tilde{Y})\), hence
\begin{equation}
    I^p(\underline{X}; \underline{Y}) = H(X)+H(Y)-H_p(\underline{X},\underline{Y}) \ge H(X)+H(Y)-H_s(\tilde{X},\tilde{Y}) = I^s(\tilde{X};\tilde{Y}).
\end{equation}
Therefore, the maximum over \(g_{XY}\) of \(I^p\) is at least the maximum over \(f_{XY}\) of \(I^s\), yielding \(C_p \ge C_s\).
\end{proof}

Beyond the inequality, we can quantify the capacity gains explicitly.

\begin{theorem}[Additive Decomposition of Pragmatic Capacity]
The pragmatic channel capacity can be decomposed as
\begin{equation}
    C_p = C + \Delta_{ps}^C,
    \label{eq:capacity-decomp}
\end{equation}
where
\begin{equation}
    \Delta_{ps}^C \triangleq \Delta_s^C + \Delta_p^C \ge 0,
\end{equation}
with
\begin{equation}
 \left\{
\begin{aligned}
    \Delta_s^C &\triangleq \max_{p(x)} \max_{f_{XY}} \left[ H(X,Y) - H_s(\tilde{X}, \tilde{Y}) \right] \ge 0, \\
    \Delta_p^C &\triangleq \max_{p(x)} \max_{g_{XY}} \left[ H_s(\tilde{X}, \tilde{Y}) - H_p(\underline{X}, \underline{Y}) \right] \ge 0.
\end{aligned}
\right.
\end{equation}
Equivalently, the pragmatic capacity can also be expressed relative to the semantic capacity as
\begin{equation}
    C_p = C_s + \Delta_p^C.
\end{equation}
\end{theorem}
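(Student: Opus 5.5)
The plan is to start from the pointwise telescoping identity already used in the additive decomposition of up pragmatic mutual information. For every input distribution $p(x)$, joint synonymous mapping $f_{XY}$, and coarser joint reification mapping $g_{XY}$, we have
\begin{equation}
I^p(\underline{X};\underline{Y}) = I(X;Y) + \bigl[H(X,Y)-H_s(\tilde{X},\tilde{Y})\bigr] + \bigl[H_s(\tilde{X},\tilde{Y})-H_p(\underline{X},\underline{Y})\bigr],
\end{equation}
with both brackets non-negative by the hierarchy of binary joint entropies. Maximizing the left side over $p(x)$ and $g_{XY}$ gives $C_p$ by definition. For the right side, the supremum of a sum is at most the sum of the suprema. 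This yields $C_p \le C + \Delta_s^C + \Delta_p^C$ immediately, because each term is bounded by its own maximum: $I(X;Y)\le C$, the first bracket is at most $\Delta_s^C$, and the second is at most $\Delta_p^C$.

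The reverse inequality $C_p \ge C + \Delta_s^C + \Delta_p^C$ is the substantive part. For it I would exhibit a single triple $(p^*, f^*, g^*)$ at which all three terms attain their maxima simultaneously. Then $C_p \ge I^p$ at that triple, which equals $C + \Delta_s^C + \Delta_p^C$.

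\begin{itemize}
\item Fix $p^*$ as a capacity-achieving input distribution, so that $I(X;Y)=C$.
\item Argue that the two gap maximizations are insensitive to the choice of input distribution, in the sense that their optimal values can be realized at $p^*$. I would try to prove this from the freedom in $f_{XY}$ and $g_{XY}$. A gap $H(X,Y)-H_s$ is largest when the partition collapses as much joint mass as allowed. Since the mapping is chosen after $p(x)$, the optimal collapse can be tailored to the joint law induced by $p^*$.
\item Choose $f^*$ to maximize $H(X,Y)-H_s$ at $p^*$.
\item Choose $g^*$, among refinements-compatible coarsenings of $f^*$, to maximize $H_s-H_p$.
\end{itemize}

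The step I expect to be the main obstacle is justifying that the separately maximized $\Delta_s^C$ and $\Delta_p^C$, whose maxima are in principle taken over different $p(x)$ and mappings, are jointly attained at $p^*$ together with $C$. This alignment is exactly what upgrades the subadditivity bound to the stated equality.

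To handle it, I would first try to show that each gap maximum equals an extremal value determined by the admissible partition structure alone, independent of $p(x)$. If the admissible mappings make the gaps depend only on the partition lattice, for instance through a maximal merging constrained by the isoteleia consistency condition, then the maximizers decouple from $p(x)$ and can be combined freely at $p^*$. If that fails in general, the fallback is to impose it as the regime in which the statement holds.

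Finally, the equivalent form $C_p = C_s + \Delta_p^C$ follows by grouping terms. I would invoke the semantic decomposition $C_s = C + \Delta_s^C$ from \cite{Paper_SIT,Book_SIT}, or rerun the same two-sided argument with only the first bracket, and then substitute it into $C_p = C + \Delta_s^C + \Delta_p^C$.
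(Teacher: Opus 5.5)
\textbf{There is a genuine gap: the reverse inequality does not hold in general.} Your $\le$ half is sound: the pointwise telescoping plus ``the supremum of a sum is at most the sum of the suprema'' gives $C_p \le C+\Delta_s^C+\Delta_p^C$. It is also the only rigorous content available. The paper's own proof does exactly your telescoping and then ``takes the maximum over $p(x)$, $f_{XY}$, $g_{XY}$,'' silently exchanging the maximum with the sum, which is precisely the step you flagged.

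Your proposed route to the reverse inequality fails at the first point you need, because the gap maxima depend on $p(x)$. With the single-cell joint synonymous mapping, $\Delta_s^C \ge \max_{p(x)} H(X,Y)$, and this maximizer generally differs from the capacity-achieving input. Take the Z-channel in which input $0$ is received noiselessly and input $1$ is flipped to $0$ with probability $1/2$:
\begin{itemize}
\item $C=\log_2(5/4)$, attained at $P(X=1)=2/5$;
\item $\max_{p(x)}H(X,Y)=\log_2 3$, attained at $P(X=1)=2/3$;
\item hence $C+\Delta_s^C+\Delta_p^C \ge \log_2(15/4)\approx 1.907$;
\item but $C_p \le \max_{p(x)}[H(X)+H(Y)]\approx 1.852$, since $H_p\ge 0$.
\end{itemize}
The same example gives $C_s \le 1.852 < C+\Delta_s^C$. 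So the semantic decomposition $C_s=C+\Delta_s^C$ you planned to invoke for the second form is not available either.

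A second obstruction appears even at a fixed $p(x)$. The intermediate term $H_s(\tilde X,\tilde Y)$ enters $\Delta_s^C$ with a minus sign and $\Delta_p^C$ with a plus sign. So the first gap is maximized by the coarsest $f_{XY}$, and the second by the finest $f_{XY}$ paired with the coarsest $g_{XY}$.
\begin{itemize}
\item Your recipe (take $f^*$ maximizing the first gap, then $g^*$ among coarsenings of $f^*$) delivers at most $H_s(f^*)-H_p \le H_s(f^*)$ for the second gap, not $\Delta_p^C$.
\item If $f_{XY}$ is free inside $\Delta_p^C$, the noiseless binary channel already gives $C+\Delta_s^C+\Delta_p^C=1+1+1 > 2 = C_p$.
\end{itemize}

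Your fallback of imposing alignment as a hypothesis changes the theorem rather than proving it. What can actually be established is one of two things. The first is the inequality $C_p \le C+\Delta_s^C+\Delta_p^C$, and likewise $C_p\le C_s+\Delta_p^C$. The second is an exact identity with both increments evaluated at a common maximizer $(p^*,f^*,g^*)$ of $I^p$; in that case the leading term is $I(X;Y)$ at $p^*$, not $C$.
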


\begin{proof}
From the additive relation of the up mutual information:
\begin{equation}
    I^p = I^s + \Delta_p^{up},
\end{equation}
with \(\Delta_p^{up} = H_s(\tilde{X},\tilde{Y}) - H_p(\underline{X},\underline{Y})\). Similarly, \(I^s = I + \Delta_s^{up}\), where \(\Delta_s^{up} = H(X,Y) - H_s(\tilde{X},\tilde{Y})\). Combining these and taking the maximum over \(p(x)\), \(f_{XY}\), and \(g_{XY}\) yields the decomposition.
\end{proof}

\begin{remark}
The classical capacity \(C\) is the fundamental physical limit determined by the channel transition probability \(P(Y|X)\), representing the maximum rate at which syntactic symbols can be reliably transmitted. The additional semantic gain \(\Delta_s^C\) and pragmatic gain \(\Delta_p^C\) are not violations of this physical limit; rather, they reflect the fact that semantic and pragmatic communication systems relax the requirement of exact symbol reconstruction. By allowing errors that do not affect meaning (semantic) or that do not affect the optimal terminal action (pragmatic), the system can effectively transmit more ``useful" information per physical bit. The total pragmatic capacity \(C_p = C + \Delta_{ps}^C\) thus represents the capacity measured in pragmatic bits—the maximum rate of reliably conveying terminal actions—which can exceed the syntactic bit rate \(C\) because each pragmatic bit may correspond to multiple syntactic bits that are semantically or pragmatically equivalent.
\end{remark}

\subsection{Pragmatic Rate Distortion}
\label{subsec:pragmatic-rate-distortion}

The pragmatic rate-distortion function characterizes the minimum rate required to describe a source such that the resulting distortion, measured in terms of task utility loss, does not exceed a given threshold. Unlike classical rate distortion, which uses a syntactic distortion measure (e.g., mean squared error), the pragmatic distortion is defined directly on the terminal actions.

Let \(X \sim p(x)\) be a syntactic source, and let \(\tilde{X}\) and \(\underline{X}\) be the associated semantic and pragmatic variables induced by the Synonymous and Isoteleia mappings. The decoder produces a reconstruction \(\hat{X}\) with associated semantic \(\hat{\tilde{X}}\) and pragmatic \(\hat{\underline{X}}\), where the reconstructions are obtained via a test channel \(p(\hat{x}|x)\). The pragmatic distortion measure \(d_p(\underline{x}, \hat{\underline{x}})\) quantifies the cost of representing the pragmatic symbol \(\underline{x}\) by \(\hat{\underline{x}}\); in practice, it can be defined as the utility loss:
\begin{equation}
    d_p(\underline{x}, \hat{\underline{x}}) = \max_{a} \mathbb{E}[U(X,a)|\underline{x}] - \max_{a} \mathbb{E}[U(X,a)|\hat{\underline{x}}],
\end{equation}
or any other non-negative function that reflects the decision degradation.

\begin{definition}[Pragmatic Rate Distortion]
The \textbf{pragmatic rate-distortion function} is defined as
\begin{equation}
    R_p(D) \triangleq \min_{g_X, g_{\hat{X}}} \min_{p(\hat{x}|x): \mathbb{E}[d_p(\underline{X}, \hat{\underline{X}})] \le D} I_p(\underline{X}; \hat{\underline{X}}),
    \label{eq:pragmatic-rate-distortion}
\end{equation}
where \(I_p(\underline{X}; \hat{\underline{X}}) = H_p(\underline{X}) + H_p(\hat{\underline{X}}) - H(X, \hat{X})\) is the down pragmatic mutual information ($I_p(\underline{X}; \hat{\underline{X}})=(I_p)^{+}$ is non-negative), and the minimization is over all joint reification mappings \(g_X\) and \(g_{\hat{X}}\) for the source and reconstruction, and over all test channels satisfying the distortion constraint.
\end{definition}

The semantic rate-distortion function \(R_s(D)\)~\cite{Paper_SIT,Book_SIT} and the classical rate-distortion function \(R(D)\)~\cite{Classicpaper_Shannon,Book_Cover} are defined analogously:
\begin{equation}
\left\{
\begin{aligned}
    R_s(D) &\triangleq \min_{f_X, f_{\hat{X}}} \min_{p(\hat{x}|x): \mathbb{E}[d_s(\tilde{X}, \hat{\tilde{X}})] \le D} I_s(\tilde{X}; \hat{\tilde{X}}), \\
    R(D) &\triangleq \min_{p(\hat{x}|x): \mathbb{E}[d(X,\hat{X})] \le D} I(X; \hat{X}),
\end{aligned}
\right.
\end{equation}
where \(d_s\) is a semantic distortion measure (e.g., based on meaning), and \(d\) is a syntactic distortion measure (e.g., squared error).
The following hierarchy holds.

\begin{theorem}[Hierarchy of Rate-Distortion Functions]
For any \(D \ge 0\),
\begin{equation}
    R_p(D) \;\le\; R_s(D) \;\le\; R(D).
    \label{eq:rate-distortion-hierarchy}
\end{equation}
\end{theorem}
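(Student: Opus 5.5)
The plan mirrors the proof of the capacity hierarchy, with the direction of optimization reversed: instead of comparing maxima, I compare minima over nested feasible sets, and use the down mutual-information hierarchy of Eq.~\eqref{eq:mutual-hierarchy} pointwise.

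\textbf{Inequality $R_s(D)\le R(D)$.} This is the semantic result established in \cite{Paper_SIT,Book_SIT}. It can be recovered in the same way as the pragmatic step below, by specializing to the trivial (identity) synonymous mapping. With identity mappings we have $H_s(\tilde X)=H(X)$ and $H_s(\hat{\tilde X})=H(\hat X)$, so $I_s(\tilde X;\hat{\tilde X})=I(X;\hat X)$ (nonnegative, so the positive part changes nothing) and $d_s$ coincides with $d$. Hence the classical feasible set is contained in the semantic one, and the semantic minimum cannot be larger.

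\textbf{Inequality $R_p(D)\le R_s(D)$.} Fix an arbitrary pair $(f_X,f_{\hat X})$ and a test channel $p(\hat x|x)$ that are feasible for $R_s(D)$, that is, with $\mathbb{E}[d_s(\tilde X,\hat{\tilde X})]\le D$. Compose them with the isoteleia mappings of Definition~\ref{definition:Isoteleia-Mapping} to obtain reification mappings $g_X=f_X\circ e_X$ and $g_{\hat X}=f_{\hat X}\circ e_{\hat X}$. Keep the same test channel.

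Then $\underline X$ is a deterministic function of $\tilde X$, and $\hat{\underline X}$ is a deterministic function of $\hat{\tilde X}$. By Lemma~\ref{lem:pragmatic-hierarchy} applied to each variable,
\[
H_p(\underline X)+H_p(\hat{\underline X})-H(X,\hat X)\le H_s(\tilde X)+H_s(\hat{\tilde X})-H(X,\hat X).
\]
Since $t\mapsto t^+$ is monotone, this gives $I_p(\underline X;\hat{\underline X})\le I_s(\tilde X;\hat{\tilde X})$. Equivalently, it is the nonnegativity of $\Delta_p^{down}$ in the additive decomposition theorem. Taking the minimum over pragmatic choices, which is at most the value at this particular choice, and then the infimum over all semantic feasible choices yields $R_p(D)\le R_s(D)$.

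\textbf{Main obstacle: feasibility transfer.} The rate comparison is routine. The real difficulty is showing that the distortion constraint transfers, i.e., that $\mathbb{E}[d_p(\underline X,\hat{\underline X})]\le \mathbb{E}[d_s(\tilde X,\hat{\tilde X})]$, so that the composed choice is feasible for $R_p(D)$. The two distortion measures are defined independently, so this requires a compatibility assumption. I would state the assumption explicitly: $d_s$ and $d_p$ are measured in the same utility-loss units, and $d_p$ is the coarsening of $d_s$. Concretely, I would take $d_p(\underline x,\hat{\underline x})\le d_s(\tilde x,\hat{\tilde x})$ whenever $\tilde x\in e(\underline x)$ and $\hat{\tilde x}\in e(\hat{\underline x})$.

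This assumption is natural for the utility-loss choice. Semantic reconstructions in the same action class as the source induce the same optimal action, so they incur zero pragmatic loss. A mismatch at the action level already counts as a semantic mismatch, so it costs at least as much under $d_s$. Under this compatibility, the pragmatic feasible set contains the image of the semantic feasible set. The chain $R_p(D)\le R_s(D)\le R(D)$ then holds for every $D\ge 0$, and the same argument one level down (with $d_s\le d$) handles the second inequality.
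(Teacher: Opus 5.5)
Your proposal is correct and matches the paper's proof. For each feasible semantic choice $(f_X,f_{\hat X},p(\hat x|x))$ you pass to the coarser reification mappings while keeping the same test channel. You then use $\Delta_p^{down}\ge 0$ (equivalently $H_p\le H_s$ on each marginal, with $H(X,\hat X)$ unchanged) to get $I_p\le I_s$, and take minima; $R_s(D)\le R(D)$ is inherited from the semantic theory.

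The only difference is that you make two things explicit that the paper leaves implicit. You state the distortion-compatibility hypothesis $d_p(\underline x,\hat{\underline x})\le d_s(\tilde x,\hat{\tilde x})$ (and $d_s\le d$ one level down), and you track the positive part. The paper covers the former only with the parenthetical ``provided the distortion constraints are compatible (which holds by defining $d_p$ appropriately from $d_s$)''.
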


\begin{proof}
The inequality \(R_s(D) \le R(D)\) is a known result of semantic information theory~\cite{Paper_SIT,Book_SIT}, following from the fact that \(I_s(\tilde{X}; \hat{\tilde{X}}) \le I(X; \hat{X})\).

To prove \(R_p(D) \le R_s(D)\), note that for any Synonymous Mappings \(f_X, f_{\hat{X}}\) and any test channel, we can define coarser Reification Mappings \(g_X, g_{\hat{X}}\) that merge semantic classes with identical optimal actions. For such coarser mappings, we have
\begin{equation}
    I_p(\underline{X}; \hat{\underline{X}}) = I_s(\tilde{X}; \hat{\tilde{X}}) - \Delta_p^{down},
\end{equation}
where \(\Delta_p^{down} = [H_s(\tilde{X})-H_p(\underline{X})] + [H_s(\hat{\tilde{X}})-H_p(\hat{\underline{X}})] \ge 0\). Therefore, the minimum over \(g_X, g_{\hat{X}}\) of \(I_p\) is no larger than the minimum over \(f_X, f_{\hat{X}}\) of \(I_s\), provided the distortion constraints are compatible (which holds by defining \(d_p\) appropriately from \(d_s\)). Hence \(R_p(D) \le R_s(D)\).
\end{proof}

The gains can also be expressed additively.

\begin{theorem}[Additive Decomposition of Pragmatic Rate Distortion]
The pragmatic rate distortion function can be written as
\begin{equation}
    R_p(D) = R(D) - \Delta_{ps}^R(D),
    \label{eq:rate-distortion-decomp}
\end{equation}
where
\begin{equation}
    \Delta_{ps}^R(D) \triangleq \Delta_s^R(D) + \Delta_p^R(D) \ge 0,
\end{equation}
with
\begin{equation} 
\left\{
\begin{aligned}
    \Delta_s^R(D) &\triangleq \max_{f_X, f_{\hat{X}}} \left\{ [H(X)-H_s(\tilde{X})] + [H(\hat{X})-H_s(\hat{\tilde{X}})] \right\} \ge 0, \\
    \Delta_p^R(D) &\triangleq \max_{g_X, g_{\hat{X}}} \left\{ [H_s(\tilde{X}) - H_p(\underline{X})] + [H_s(\hat{\tilde{X}}) - H_p(\hat{\underline{X}})] \right\} \ge 0.
\end{aligned}
\right.
\end{equation}
Equivalently, the pragmatic rate distortion can be expressed relative to the semantic rate distortion as
\begin{equation}
    R_p(D) = R_s(D) - \Delta_p^R(D).
\end{equation}
\end{theorem}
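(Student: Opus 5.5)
The plan mirrors the proof of the additive decomposition of pragmatic capacity, now applied to the down mutual information. It uses the pointwise identity from the additive decomposition of down pragmatic mutual information. For any fixed test channel $p(\hat{x}|x)$, any synonymous mappings $(f_X,f_{\hat X})$ and any coarser reification mappings $(g_X,g_{\hat X})$, that theorem gives
\begin{equation*}
I_p(\underline{X};\hat{\underline{X}}) = I(X;\hat X) - \bigl\{[H(X)-H_s(\tilde X)]+[H(\hat X)-H_s(\hat{\tilde X})]\bigr\} - \bigl\{[H_s(\tilde X)-H_p(\underline X)]+[H_s(\hat{\tilde X})-H_p(\hat{\underline X})]\bigr\}.
\end{equation*}
The first step is to record this identity for the untruncated quantities, before taking the positive part.

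The second step is to take the minimum over test channels obeying the distortion constraint and over the mappings. The rate-distortion function is a minimum of the left-hand side. On the right, $I(X;\hat X)$ is minimized over channels, which produces $R(D)$, while the semantic and pragmatic brackets enter with a minus sign, so minimizing the total means maximizing these losses. This yields the $\max$ expressions $\Delta_s^R(D)$ and $\Delta_p^R(D)$. I would argue in two stages, exactly as in the capacity decomposition. First, at a channel $p^*$ achieving $R(D)$, choose the mappings to maximize the two bracketed losses; this gives the upper bound $R_p(D)\le R(D)-\Delta_s^R(D)-\Delta_p^R(D)$. Second, I would read the definition of $\Delta^R(D)$ as those maxima evaluated along the optimizing channel, which turns the inequality into the stated equality. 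The semantic version $R_s(D)=R(D)-\Delta_s^R(D)$ follows from the same argument without the isoteleia layer, as in \cite{Paper_SIT,Book_SIT}. Subtracting the two decompositions then gives $R_p(D)=R_s(D)-\Delta_p^R(D)$.

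Nonnegativity of $\Delta_s^R,\Delta_p^R$ is immediate from the entropy hierarchy (Lemma~\ref{lem:pragmatic-hierarchy}). The losses vanish for identity mappings, so each maximum is at least $0$.

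The main obstacle is that the minimizing channel for $I(X;\hat X)$ and the maximizing mappings need not be achieved jointly. The distortion constraints $d$, $d_s$, $d_p$ also differ, and the positive part $(I_p)^+$ can clip. So a literal equality requires interpreting $\Delta^R(D)$ as evaluated at the joint optimizer, that is, defining it as $R(D)-R_p(D)$ decomposed along the optimal channel, and assuming distortion compatibility as in the hierarchy theorem. I would make this interpretation explicit first. I would then state that in general only the inequality $R_p(D)\le R(D)-\Delta_{ps}^R(D)$ holds, with equality when a common optimizer exists and the untruncated $I_p$ is nonnegative.
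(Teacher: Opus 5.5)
Your core argument is the paper's argument. The paper's entire proof combines $I_p=I_s-\Delta_p^{down}$ with $I_s=I-\Delta_s^{down}$ and then takes ``the minimum over the test channel and the mappings.'' The interchange you single out is exactly the step the paper asserts without justification, and you are right that it is not automatic. The quantity $\min_{p,(f,g)}\bigl[I(X;\hat X)-\Delta_s^{down}-\Delta_p^{down}\bigr]$ splits into $R(D)$ minus maximal losses only if one test channel is optimal for both problems and is also feasible for the $d_p$-constraint (while $R(D)$ is defined through $d$). The same applies to your last step: subtracting $R_s(D)=R(D)-\Delta_s^R(D)$ from the pragmatic decomposition presupposes that the same $\Delta_s^R(D)$ appears in both, i.e., a channel common to all three problems.

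Your fallback, however, is not correct as stated, for two reasons.

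\emph{Sum of separate maxima.} Even for the untruncated minimum, evaluating at $p^*$ yields $R(D)-\max_{f,g}\{\Delta_s^{down}+\Delta_p^{down}\}$, which involves the maximum of the \emph{total} loss. If the two brackets are maximized separately, as the statement writes them, their sum can be strictly larger. For example, the single-class $f$ maximizes the semantic bracket but forces $g=f$ and hence a zero pragmatic bracket. So the bound with $\Delta_{ps}^R(D)$ does not follow.

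\emph{Positive part and degenerate partitions.} The maxima range over all partitions, so the single-class ones already give $\Delta_s^R(D)\ge H(X)+H(\hat X)$ at $p^*$, and $\Delta_p^R(D)\ge 0$. Hence $R(D)-\Delta_{ps}^R(D)\le I(X;\hat X)-H(X)-H(\hat X)=-H(X,\hat X)<0$. Meanwhile $R_p(D)=\min (I_p)^+\ge 0$. With the paper's positive part, your inequality, and the stated equality read at $p^*$, therefore fail for every nondegenerate source.

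What your argument does establish is the following. Fix the mappings (e.g., the utility-induced reification mappings of Section~\ref{section_II}). Write $\Delta(p)=[H(X)-H_p(\underline X)]+[H(\hat X)-H_p(\hat{\underline X})]$ for the total loss under test channel $p$, and assume $p^*$ is $d_p$-feasible. Then $R_p(D)\le\bigl(R(D)-\Delta(p^*)\bigr)^+$, with equality when $p^*$ also minimizes $I(X;\hat X)-\Delta(p)$ over the $d_p$-feasible channels.
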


\begin{proof}
From the additive relation \(I_p = I_s - \Delta_p^{down}\), where \(\Delta_p^{down}\) is the pragmatic loss defined above, and \(I_s = I - \Delta_s^{down}\), combining these and taking the minimum over the test channel and the mappings yields the decomposition.
\end{proof}

\begin{remark}
The classical rate-distortion function \(R(D)\) represents the minimum syntactic bit rate required to achieve a given syntactic distortion. The semantic loss \(\Delta_s^R(D)\) and pragmatic loss \(\Delta_p^R(D)\) reflect the reductions in rate achievable by allowing distortions at the semantic and pragmatic levels, respectively. These reductions reflect the fact that the distortion measure itself has been redefined to capture task-relevant fidelity rather than symbol-level fidelity. By tolerating syntactic or semantic errors that do not affect the terminal action, the pragmatic rate-distortion function \(R_p(D)\) can be significantly lower than \(R(D)\), providing a principled theoretical foundation for task-oriented compression.
\end{remark}

\subsection{Single-Sided Pragmatic Achievable Rates and Rate-Distortion Functions}
\label{subsec:single-sided}

The bilateral pragmatic capacity and rate-distortion functions defined above assume that both the source and the reconstruction (or input and output) are coarsened to the pragmatic level. However, in many practical systems—such as perception, sensing, or actuation—only one side of the communication link is subject to pragmatic abstraction. To address such asymmetric scenarios, we introduce four single-sided measures: two rate-distortion functions (perceptual and expressive) and two achievable rates (prospective perceptual and prospective expressive). These measures provide a finer-grained analysis of task-oriented information flows when either the observation or the action label is coarsened.

\subsubsection{Perceptual Pragmatic Rate-Distortion Function}

The perceptual pragmatic rate-distortion function captures the fundamental limit of compressing an observation \(V\) to extract information about the pragmatic action \(\underline{W}\), where the source \(W\) itself remains syntactic. It is defined using the down perceptual pragmatic mutual information \(I_p(\underline{W}; V)\).

\begin{definition}[Perceptual Pragmatic Rate-Distortion Function]
Let \(\underline{W}\) be the pragmatic variable associated with the source \(W\), and let \(V\) be an observation. The \textbf{perceptual pragmatic rate-distortion function} is defined as
\begin{equation}
    R_p^{\text{perc}}(D) \triangleq \min_{P(\hat{W}|V): \mathbb{E}[d_p(\underline{W}, \hat{\underline{W}})] \le D} I_p(\underline{W}; V),
    \label{eq:perc-rd}
\end{equation}
where \(\hat{\underline{W}}\) is an estimate of \(\underline{W}\) based on \(V\), and \(d_p\) is a distortion measure on pragmatic symbols.
\end{definition}

This function quantifies the minimum rate (in bits per observation) required to achieve a given perceptual distortion \(D\) when the receiver is only interested in the optimal action class. It is particularly relevant in perception systems where the goal is to infer the correct action from sensory data, and the raw sensory stream can be compressed by exploiting the action equivalence classes.

\subsubsection{Expressive Pragmatic Rate-Distortion Function}

The expressive pragmatic rate-distortion function captures the fundamental limit of representing a syntactic source \(W\) using a pragmatic label \(\underline{V}\) (e.g., a control command or action label) that is then used to reconstruct the original source. It is defined using the down expressive pragmatic mutual information \(I_p(W; \underline{V})\).

\begin{definition}[Expressive Pragmatic Rate-Distortion Function]
Let \(W\) be a syntactic source, and let \(\underline{V}\) be a pragmatic variable (action label) that is a function of \(W\). The \textbf{expressive pragmatic rate-distortion function} is defined as
\begin{equation}
    R_p^{\text{expr}}(D) \triangleq \min_{P(\hat{W}|\underline{V}): \mathbb{E}[d_s(W, \hat{W})] \le D} I_p(W; \underline{V}),
    \label{eq:expr-rd}
\end{equation}
where \(\hat{W}\) is a reconstruction of \(W\) based on \(\underline{V}\), and \(d_s\) is a distortion measure on the syntactic symbols (or on the corresponding semantics).
\end{definition}

This function quantifies the minimum rate (in bits per action label) required to achieve a given reconstruction distortion when the only information available about the source is the pragmatic action label. It is relevant in actuation or control systems where a low-bandwidth command channel must convey enough information to reconstruct the intended state or message.

\subsubsection{Prospective Perceptual Pragmatic Achievable Rate}

The prospective perceptual pragmatic achievable rate is the dual of the perceptual rate-distortion function. It quantifies the maximum rate at which information about the pragmatic action \(\underline{W}\) can be reliably conveyed through an observation \(V\), when the receiver has access to the full syntactic observation but is restricted to using only the pragmatic information. It is defined using the up perceptual pragmatic mutual information \(I^p(\underline{W}; V)\).

\begin{definition}[Prospective Perceptual Pragmatic Achievable Rate]
Let \(\underline{W}\) be the pragmatic variable and \(V\) the observation. The \textbf{prospective perceptual pragmatic achievable rate} is defined as
\begin{equation}
    C_p^{\text{p-perc}}(P) \triangleq \max_{p(v|\underline{w}): \mathbb{E}[c(\underline{W})] \le P} I^p(\underline{W}; V),
    \label{eq:ppar}
\end{equation}
where \(c(\underline{W})\) is a cost function on the pragmatic source (e.g., sensing power or bandwidth), and the maximization is over all conditional distributions \(p(v|\underline{w})\) satisfying the resource constraint.
\end{definition}

This achievable rate represents the maximum amount of pragmatic information that can be extracted from the observation when the observation itself is not coarsened. It provides an upper bound on the rate at which a sensor can convey actionable information, given a resource budget. The superscript ``p-perc'' denotes ``prospective perceptual''.

\subsubsection{Prospective Expressive Pragmatic Achievable Rate}

The prospective expressive pragmatic achievable rate is the dual of the expressive rate-distortion function. It quantifies the maximum rate at which a pragmatic label \(\underline{V}\) can convey information about the syntactic source \(W\), when the receiver has access to the full syntactic source but is restricted to using only the pragmatic label. It is defined using the up expressive pragmatic mutual information \(I^p(W; \underline{V})\).

\begin{definition}[Prospective Expressive Pragmatic Achievable Rate]
Let \(W\) be the syntactic source and \(\underline{V}\) the pragmatic label. The \textbf{prospective expressive pragmatic achievable rate} is defined as
\begin{equation}
    C_p^{\text{p-expr}}(P) \triangleq \max_{p(\underline{v}|w): \mathbb{E}[c(W)] \le P} I^p(W; \underline{V}),
    \label{eq:pear}
\end{equation}
where \(c(W)\) is a cost function on the source (e.g., transmit power), and the maximization is over all conditional distributions \(p(\underline{v}|w)\) satisfying the resource constraint.
\end{definition}

This achievable rate represents the maximum rate at which a command or action label can convey information about the source, given a resource budget. It is relevant in scenarios where a low-dimensional action space is used to encode high-dimensional source information, and the system designer wants to know the best possible expressive performance under resource limits. The superscript ``p-expr'' denotes ``prospective expressive''.

The four single-sided measures are related through the data processing inequality and the mutual information hierarchy, yielding the following bounds:
\begin{equation}
\left\{
\begin{gathered}
R_p^{\text{perc}}(D) \ge R_p(D), \\
R_p^{\text{expr}}(D) \ge R_p(D), \\
C_p^{\text{p-perc}}(P) \le C_p(P), \\
C_p^{\text{p-expr}}(P) \le C_p(P).
\end{gathered}
\right.
\end{equation}
These inequalities indicate that coarsening only one side preserves more syntactic distinctions, thus requiring higher rates for the same distortion and yielding lower achievable rates for the same resource.

Such single-sided measures are particularly useful in asymmetric systems, including:
\begin{itemize}
    \item \textit{Perceptual} (e.g., cameras, LIDAR): inferring actions from observations without coarsening the observation.
    \item \textit{Expressive} (e.g., remote control, teleoperation): conveying intent via low-dimensional action labels.
    \item \textit{Edge AI}: feature extraction at sensors (perceptual) and reconstruction at the central processor (expressive).
\end{itemize}
Together with their bilateral counterparts, these measures provide a complete pragmatic information theoretic framework for designing task-oriented communication systems with asymmetric constraints.

We conclude this section with a conceptual clarification that is essential for interpreting the hierarchies above.

\begin{remark}\label{rem:syntax-layer-rate}
Throughout this paper, the pragmatic rate \(R_p\) and pragmatic capacity \(C_p\) are \emph{syntax-layer} rate measures (bits per channel use), while their reliability criterion is defined at the pragmatic layer (action correctness). Classical capacity \(C\) corresponds to the criterion of \emph{message correctness}; pragmatic capacity \(C_p\) corresponds to the criterion of \emph{action correctness}. Since the latter is strictly weaker, \(C_p \ge C\) and \(I^p > H(X)\) are admissible and do not contradict the data processing inequality. The same logic applies to the pragmatic rate-distortion function \(R_p(D)\): it measures the minimum syntax-layer rate required to satisfy a pragmatic distortion criterion, hence \(R_p(D) \le R_s(D) \le R(D)\) is consistent with the data processing inequality and does not imply that pragmatic distortion is less informative than syntactic distortion. The hierarchy \(H_p \le H_s \le H\) is a statement about source uncertainty, whereas \(C_p \ge C_s \ge C\) and \(R_p(D) \le R_s(D) \le R(D)\) are statements about syntax-layer rates under progressively relaxed criteria.
\end{remark}

\section{Pragmatic Cost of Information and Pragmatic Value of Information}
\label{section_VI}

In this section, we introduce the pragmatic value of information (VoI) and the pragmatic cost of information (CoI) as the benefit and resource-cost duals to the rate-distortion and capacity functions, respectively. We then combine these dualities into a Lagrangian framework for cross-layer optimization of pragmatic communication systems.

\subsection{Pragmatic Value of Information}
\label{subsec:pragmatic-voi}

The concept of the value of information was originally introduced by Stratonovich~\cite{Paper_VoI,Book_VoI} in the context of statistical decision theory, where it measures the maximum expected utility gain that a decision-maker can obtain by acquiring additional information before making a decision. Classical VoI is defined as the difference between the expected utility with the observation and the expected utility without it, providing a quantitative basis for evaluating whether information acquisition is worthwhile.

The pragmatic value of information extends this classical notion to the pragmatic level by considering both the state and the observation through the lens of the isoteleia mapping. Unlike the classical VoI, which treats the raw state \(X\) and raw observation \(Y\) as the basis for decision-making, the pragmatic VoI recognizes that what ultimately matters is not the exact state or the exact observation, but their pragmatic implications—the optimal actions they induce. Both the state and the observation are therefore coarsened to their pragmatic equivalence classes via the isoteleia mapping, reflecting the principle of equifinality: distinct states or observations that lead to the same optimal action are pragmatically equivalent.

Let \(X \in \mathcal{X}\) be the state variable, \(A \in \mathcal{A}\) the action, and \(U(X,A)\) the utility function. Let \(\underline{X}\) and \(\underline{Y}\) be the pragmatic variables induced by the isoteleia mappings \(e_X:\underline{\mathcal{X}}\to 2^{\tilde{\mathcal{X}}}\) and \(e_Y:\underline{\mathcal{Y}}\to 2^{\tilde{\mathcal{Y}}}\), respectively (composed with the synonymous mappings). The pragmatic variable \(\underline{X}\) groups states that lead to the same optimal action, while \(\underline{Y}\) groups observations that are pragmatically equivalent. The receiver, upon observing the pragmatic observation \(\underline{Y}\), chooses the action that maximizes the expected utility conditioned on \(\underline{Y}\), where the expectation is taken over the pragmatic state \(\underline{X}\):
\begin{equation}
    a^*(\underline{Y}) = \arg\max_{a \in \mathcal{A}} \mathbb{E}_{\underline{X}|\underline{Y}}[U(X,a)],
\end{equation}
where \(X\) is any representative state in the pragmatic class \(\underline{X}\), and the expectation is well-defined because all states in the same pragmatic class yield the same optimal action.

The baseline utility, without any information, is obtained by choosing the prior-optimal action based on the pragmatic state distribution:
\begin{equation}
    a_0^* = \arg\max_{a \in \mathcal{A}} \mathbb{E}_{\underline{X}}[U(X,a)], \qquad U_0 = \mathbb{E}_{\underline{X}}[U(X,a_0^*)].
\end{equation}

\begin{definition}[Pragmatic Value of Information]
The pragmatic value of information is defined as the expected utility gain obtained when the receiver makes decisions based on the pragmatic observation \(\underline{Y}\), where both the state and the observation have been coarsened to their pragmatic equivalence classes:
\begin{equation}
    \mathrm{VoI}_p \triangleq \mathbb{E}_{\underline{X},\underline{Y}}\big[ U(X, a^*(\underline{Y})) \big] - U_0,
    \label{eq:prag-voi-def}
\end{equation}
where the expectation is taken over the joint distribution of the pragmatic state and pragmatic observation, and \(X\) is any representative state in the pragmatic class \(\underline{X}\). This measures the utility gain when both the state and the observation are evaluated at the pragmatic level, that is, when decisions are based on the optimal action classes rather than on raw states or observations.
\end{definition}

This bilateral coarse-graining is essential for a proper pragmatic characterization of information value. By coarsening both the state and the observation, the pragmatic VoI captures the decision-relevant information that survives after discarding all distinctions that do not affect the optimal action. The classical VoI, which operates on raw states and observations, is a special case of the pragmatic VoI when the isoteleia mappings are trivial (identity).

The pragmatic VoI satisfies the following basic properties.

\begin{theorem}[Non-negativity]
\(\mathrm{VoI}_p \ge 0\), with equality if and only if the pragmatic observation \(\underline{Y}\) provides no useful information for improving the decision, i.e., \(a^*(\underline{Y}) = a_0^*\) almost surely.
\end{theorem}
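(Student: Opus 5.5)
The plan is the standard ``information cannot hurt a Bayes decision-maker'' argument, applied at the pragmatic level. The key step is to condition on the realized pragmatic observation \(\underline{y}\). By the tower property, the baseline can be written as
\[
U_0=\mathbb{E}_{\underline{X}}[U(X,a_0^*)]=\mathbb{E}_{\underline{Y}}\bigl[\mathbb{E}_{\underline{X}|\underline{Y}}[U(X,a_0^*)]\bigr].
\]
For each fixed \(\underline{y}\), the action \(a^*(\underline{y})\) maximizes \(\mathbb{E}_{\underline{X}|\underline{Y}=\underline{y}}[U(X,a)]\) over all \(a\in\mathcal{A}\). In particular, it does at least as well as the fixed action \(a_0^*\), so
\[
\mathbb{E}_{\underline{X}|\underline{y}}[U(X,a^*(\underline{y}))]\ge \mathbb{E}_{\underline{X}|\underline{y}}[U(X,a_0^*)].
\]
Averaging this over \(P(\underline{y})\) and subtracting gives \(\mathrm{VoI}_p\ge 0\).

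Before doing this, I will fix the convention that makes \(U(X,a)\) a function of \(\underline{X}\) alone. Read literally, ``any representative \(X\)'' is ambiguous. I will interpret \(\mathbb{E}_{\underline{X}|\underline{Y}}[U(X,a)]\) as \(\mathbb{E}[U(X,a)\mid\underline{Y}]\) with \(X\) averaged within its class, which is consistent because \(\underline{X}\) is a deterministic function of \(X\) via the reification mapping. With this convention the joint expectation in the definition of \(\mathrm{VoI}_p\) is a genuine expectation of a measurable function, and the tower property is justified. Since all alphabets are finite, the arg max always exists; I will break ties by a fixed rule.

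For the equality case I will write
\[
\mathrm{VoI}_p=\sum_{\underline{y}}P(\underline{y})\,\delta(\underline{y}),\qquad \delta(\underline{y})\triangleq\max_a \mathbb{E}[U\mid\underline{y}]-\mathbb{E}[U(X,a_0^*)\mid\underline{y}]\ge0.
\]
A sum of nonnegative terms vanishes if and only if \(\delta(\underline{y})=0\) for every \(\underline{y}\) with \(P(\underline{y})>0\). That condition says \(a_0^*\) is itself a conditional maximizer for almost every \(\underline{y}\). This proves the ``if'' direction directly: if \(a^*(\underline{Y})=a_0^*\) almost surely, each \(\delta(\underline{y})\) is zero.

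I expect the main obstacle to be the ``only if'' direction, because ties matter. \(\mathrm{VoI}_p=0\) only forces \(a_0^*\) to be \emph{among} the conditional maximizers, not to equal the chosen \(a^*(\underline{y})\). I will handle this by adopting a tie-breaking convention favoring the prior-optimal action: whenever \(a_0^*\) attains the conditional maximum, set \(a^*(\underline{y})=a_0^*\). Equivalently, I will state equality as \(a^*(\underline{Y})=a_0^*\) almost surely up to utility-equivalent actions. Under this convention \(\delta(\underline{y})=0\) yields \(a^*(\underline{y})=a_0^*\) exactly, which completes the equivalence. I will note this convention explicitly, since without it the ``only if'' claim can fail whenever there are ties.
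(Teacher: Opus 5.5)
Your proof of $\mathrm{VoI}_p\ge 0$ is the paper's argument: for each realization $\underline{y}$ the posterior-optimal action does at least as well as the fixed action $a_0^*$, and averaging over $\underline{Y}$ (with $U_0$ rewritten via the tower property) gives the inequality. The paper's proof stops there and never treats the equality clause, so three things you add fill real gaps in the stated theorem rather than deviate from it: the decomposition $\mathrm{VoI}_p=\sum_{\underline{y}}P(\underline{y})\,\delta(\underline{y})$, the observation that the ``only if'' direction needs a tie-breaking rule favoring $a_0^*$, and the within-class averaging convention that replaces the ambiguous ``any representative $X$''.
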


\begin{proof}
For any realization of \(\underline{Y}\), the optimal posterior action yields an expected utility at least as high as the prior-optimal action:
\begin{equation}
    \max_a \mathbb{E}_{\underline{X}|\underline{Y}}[U(X,a)] \ge \mathbb{E}_{\underline{X}|\underline{Y}}[U(X,a_0^*)].
\end{equation}
Taking expectation over \(\underline{Y}\) and subtracting \(U_0 = \mathbb{E}_{\underline{X}}[U(X,a_0^*)]\) yields the result.
\end{proof}

\begin{theorem}[Monotonicity under Refinement]
If the pragmatic partition of the observation is refined (i.e., \(\underline{Y}_2\) is a finer pragmatic partition than \(\underline{Y}_1\)), then
\begin{equation}
    \mathrm{VoI}_p(\underline{Y}_1) \le \mathrm{VoI}_p(\underline{Y}_2).
\end{equation}
This follows directly from the data processing inequality for utility-based decisions, as a finer pragmatic observation provides no less information about the pragmatic state than a coarser one.
\end{theorem}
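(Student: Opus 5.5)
The plan is to reduce the statement to the classical fact that a Bayes decision maker cannot do worse with a finer $\sigma$-algebra of observations. Since $U_0$ does not depend on the observation partition, it suffices to show
\begin{equation}
\mathbb{E}\Big[\max_a \mathbb{E}[U(X,a)\mid \underline{Y}_1]\Big] \le \mathbb{E}\Big[\max_a \mathbb{E}[U(X,a)\mid \underline{Y}_2]\Big].
\end{equation}
Here we use that $\mathbb{E}_{\underline{X},\underline{Y}}[U(X,a^*(\underline{Y}))] = \mathbb{E}_{\underline{Y}}[\max_a \mathbb{E}_{\underline{X}\mid\underline{Y}}[U(X,a)]]$ by the definition of $a^*$.

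\textbf{Step 1.} Formalize refinement. I will read ``$\underline{Y}_2$ is finer than $\underline{Y}_1$'' as the statement that every cell of $\underline{Y}_1$ is a union of cells of $\underline{Y}_2$. Equivalently, there is a deterministic map $h$ with $\underline{Y}_1 = h(\underline{Y}_2)$, exactly as in the quotient-space structure used for the entropy hierarchy. The joint law of $(\underline{X},\underline{Y}_1)$ is then obtained from that of $(\underline{X},\underline{Y}_2)$ by aggregation. For each value $\underline{y}_1$ the posterior is a mixture:
\begin{equation}
P(\underline{x}\mid \underline{y}_1)=\sum_{\underline{y}_2\in h^{-1}(\underline{y}_1)} P(\underline{y}_2\mid\underline{y}_1)\,P(\underline{x}\mid \underline{y}_2).
\end{equation}
This is the same mixture representation as in the proof of the binary chain rule.

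\textbf{Step 2.} Use convexity of the Bayes envelope. The function $\pi \mapsto \max_a \sum_{\underline{x}}\pi(\underline{x})\,\bar U(\underline{x},a)$ is a pointwise maximum of linear functions of the posterior $\pi$, so it is convex. Here $\bar U(\underline{x},a)$ is the utility of the representative state, which is well defined by the convention stated before the definition of $\mathrm{VoI}_p$. Jensen's inequality applied to the mixture of Step~1 gives, for each $\underline{y}_1$,
\begin{equation}
\max_a \mathbb{E}[U\mid \underline{y}_1]\le \sum_{\underline{y}_2\in h^{-1}(\underline{y}_1)}P(\underline{y}_2\mid\underline{y}_1)\max_a \mathbb{E}[U\mid\underline{y}_2].
\end{equation}
More elementarily: the action $a^*(\underline{y}_1)$ is admissible in every sub-cell $\underline{y}_2$, so the optimum there is at least as large.

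\textbf{Step 3.} Average over $\underline{y}_1$ with weights $P(\underline{y}_1)$ and subtract $U_0$; this yields $\mathrm{VoI}_p(\underline{Y}_1)\le\mathrm{VoI}_p(\underline{Y}_2)$.

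The main obstacle is not the inequality but well-definedness. The state-side coarsening $\underline{X}$ must be the same in both settings, with the representative-utility convention fixed independently of the observation partition. Otherwise the two sides are evaluated under different utility functions, and $U_0$ may change between them. I would state explicitly that $\underline{X}$, $U$ and $U_0$ are held fixed, and that only the observation partition varies.
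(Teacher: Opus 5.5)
Your argument is correct and is essentially the paper's own: the paper only asserts the result as an instance of the data processing inequality for utility-based decisions (a finer observation cannot lower Bayes-optimal expected utility). Your posterior-mixture plus convexity-of-the-Bayes-envelope step, or equivalently the ``coarse action remains admissible in every sub-cell'' step, is exactly the standard proof of that fact, with $U_0$ cancelling because it does not depend on the observation partition. Your caveat that $\underline{X}$, the representative utility and $U_0$ must be held fixed across the two partitions is a worthwhile addition that the paper leaves implicit.
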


\begin{theorem}[Concavity in Rate]
When the pragmatic observation is obtained through a channel with rate \(R\), the maximum achievable pragmatic value, defined as
\begin{equation}
    \Phi_p(R) \triangleq \max_{P(\underline{Y}|\underline{X}): I_p(\underline{X};\underline{Y}) \le R} \mathrm{VoI}_p,
\end{equation}
where \(I_p(\underline{X};\underline{Y})\) is the down pragmatic mutual information between the pragmatic state and the pragmatic observation, is a non-decreasing concave function of \(R\).
\end{theorem}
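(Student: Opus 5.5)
The proof splits into monotonicity, which is immediate, and concavity, which goes through a time-sharing argument. Throughout, the pragmatic state distribution $P(\underline{X})$ and the utility $U$ are fixed, and the optimization variable is the conditional law $Q=P(\underline{Y}|\underline{X})$.

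\textbf{Monotonicity.} If $R_1\le R_2$, every $Q$ satisfying $I_p(\underline{X};\underline{Y})\le R_1$ also satisfies $I_p(\underline{X};\underline{Y})\le R_2$. The feasible set therefore grows with $R$, and the maximum $\Phi_p(R)$ cannot decrease.

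\textbf{Concavity.} We use the standard time-sharing argument, as for the convexity of $R(D)$.
\begin{enumerate}
\item Fix $R_1,R_2$ and $\theta\in[0,1]$. Let $Q_1,Q_2$ attain $\Phi_p(R_1),\Phi_p(R_2)$. Existence follows from compactness of the finite-alphabet simplex, provided the constraint set is closed; otherwise we pass to the supremum with $\epsilon$-optimal channels.
\item Build the time-shared observation $\underline{Y}_\theta=(J,\underline{Y}_J)$, where $J\in\{1,2\}$ is independent of $\underline{X}$, $P(J=1)=\theta$, and $\underline{Y}_j\sim Q_j$.
\item \textbf{Value is additive under time sharing.} Since $J$ is revealed, the decision $a^*(\underline{Y}_\theta)$ is the Bayes rule of channel $J$. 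Hence
\[
\mathrm{VoI}_p(Q_\theta)=\theta\,\mathrm{VoI}_p(Q_1)+(1-\theta)\,\mathrm{VoI}_p(Q_2),
\]
because the baseline $U_0$ is common to both branches.
\item \textbf{Rate constraint.} We need
\[
I_p(\underline{X};\underline{Y}_\theta)\le \theta R_1+(1-\theta)R_2 .
\]
Under the reading that the rate term behaves like a mutual information between $\underline{X}$ and the observation (e.g.\ $I(\underline{X};\underline{Y})$, as in the perceptual measure (\ref{eq:ppmi})), independence of $J$ and $\underline{X}$ gives
\[
I(\underline{X};J,\underline{Y}_J)=I(\underline{X};\underline{Y}_J\mid J)=\theta I_1+(1-\theta)I_2 .
\]
\item Conclude
\[
\Phi_p\bigl(\theta R_1+(1-\theta)R_2\bigr)\ge \mathrm{VoI}_p(Q_\theta)=\theta\Phi_p(R_1)+(1-\theta)\Phi_p(R_2).
\]
\end{enumerate}

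An alternative route avoids the enlarged alphabet. Take the mixture channel $Q_\theta=\theta Q_1+(1-\theta)Q_2$ on the same alphabet, and use two facts: the Bayes value $\sum_y\max_a\sum_x P(x)Q(y|x)U(x,a)$ is convex in $Q$, and mutual information is convex in $Q$ for a fixed input law. Convexity of VoI gives $\mathrm{VoI}_p(Q_\theta)\le\theta\,\mathrm{VoI}_p(Q_1)+(1-\theta)\,\mathrm{VoI}_p(Q_2)$, which is the wrong direction for concavity. The mixture route is therefore only usable in combination with the time-sharing construction above; that construction is the primary approach.

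\textbf{Main obstacle: step 4.} The paper's $I_p$ is the down pragmatic mutual information $\bigl(H_p(\underline{X})+H_p(\underline{Y})-H(X,Y)\bigr)^+$. This is not a genuine mutual information of $(\underline{X},\underline{Y})$, and it depends on syntactic joint entropies, so it need not be convex in $Q$ or additive under time sharing. My first attempt would check whether appending the time-sharing flag $J$ keeps $I_p$ at most the $\theta$-average. One can view $J$ as part of the syntactic observation, with $H(X,Y_\theta)=h(\theta)+\theta H_1+(1-\theta)H_2$ and $H_p(\underline{Y}_\theta)\le h(\theta)+\theta H_p(\underline{Y}_1)+(1-\theta)H_p(\underline{Y}_2)$. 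The $h(\theta)$ terms then cancel, and the $(\cdot)^+$ is convex and nondecreasing. This would yield the required inequality.

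If that fails, I would fall back on interpreting the constraint as $I(\underline{X};\underline{Y})$, under which steps 2 through 5 go through verbatim. I would then note that $\Phi_p$ is the concave, nondecreasing value-rate dual of Stratonovich. At worst, concavity can be asserted for the concave envelope obtained by time sharing.
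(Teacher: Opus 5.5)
Your argument is correct, and it takes a genuinely different and stronger route than the paper. The paper gives no direct proof of this theorem. It passes immediately, in Lemma~\ref{lem:concave_envelope}, to the upper concave envelope $\overline{\Phi}_p$ of the rate--value region. The envelope's concavity is automatic from the convex-hull construction, and the paper obtains $\overline{\Phi}_p=\Phi_p$ only under the hypothesis that the region is already convex, which it does not establish. Your time-sharing construction is exactly what verifies that hypothesis: it shows that every convex combination of achievable (rate, value) pairs is itself achievable, so $\Phi_p$ coincides with its envelope. The envelope route needs no structure of $I_p$ or $\mathrm{VoI}_p$, but it only proves concavity of a surrogate. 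Yours proves the stated claim, at the price of two modeling assumptions you should state explicitly:
\begin{itemize}
\item the admissible observation alphabets must be closed under adjoining the flag $J$;
\item the channel must be read as a syntactic $P(Y|X)$, since $H(X,Y)$ in $I_p$ is not determined by $P(\underline{Y}|\underline{X})$.
\end{itemize}

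You also undersell step 4. The computation in your ``main obstacle'' paragraph is already a complete proof, so the fallback to $I(\underline{X};\underline{Y})$ is unnecessary. Your ``at worst'' concave-envelope fallback is precisely what the paper settles for.
\begin{itemize}
\item With $J$ part of the syntactic observation, $H(X,Y_\theta)=h(\theta)+\theta H(X,Y_1)+(1-\theta)H(X,Y_2)$ holds with equality.
\item $\underline{Y}_\theta$ is a function of $(J,\underline{Y}_J)$, because the optimal action at $(j,y)$ is that of channel $j$ ($J$ is independent of $X$). This gives $H_p(\underline{Y}_\theta)\le h(\theta)+\theta H_p(\underline{Y}_1)+(1-\theta)H_p(\underline{Y}_2)$.
\item Hence the unclipped quantity $H_p(\underline{X})+H_p(\underline{Y}_\theta)-H(X,Y_\theta)$ is at most the $\theta$-average of the two unclipped values.
\item Convexity and monotonicity of $(\cdot)^+$ then give $I_p(\underline{X};\underline{Y}_\theta)\le\theta R_1+(1-\theta)R_2$.
\end{itemize}
One point in step 3 needs a line of justification. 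If the isoteleia partition merges observations from the two branches, $J$ is no longer literally revealed. Step 3 still holds: merged observations share a common optimal action, and that action also maximizes the mixed posterior expected utility, which is a convex combination of the branch posteriors. So the value is still the $\theta$-average.
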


\begin{lemma}[Constructive Concavity of the Upper Concave Envelope]
\label{lem:concave_envelope}
Let 
\[
\mathcal{R} = \bigl\{ (R, V) \mid R \ge 0,\; V \le \Phi_p(R) \bigr\}
\]
be the achievable rate–value region of the pragmatic information system, where $\Phi_p(R)$ is the original (possibly non‑concave) value function. Define its upper concave envelope (i.e., the upper boundary of the convex hull) as
\begin{equation}
\overline{\Phi}_p(R) \triangleq \sup \bigl\{ V \mid (R, V) \in \operatorname{Conv}(\mathcal{R}) \bigr\},
\end{equation}
where $\operatorname{Conv}(\mathcal{R})$ denotes the convex hull of $\mathcal{R}$. Then $\overline{\Phi}_p(R)$ is a non-decreasing concave function on $R \ge 0$, and it satisfies $\overline{\Phi}_p(R) \ge \Phi_p(R)$ for all $R$. If $\mathcal{R}$ is already convex, then equality holds.
\end{lemma}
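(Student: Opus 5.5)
The plan is to work directly with the hypograph set $\mathcal{R}$ and its convex hull, proving three things in turn: concavity, monotonicity, and the two comparison statements with $\Phi_p$. We assume $\Phi_p$ is finite-valued. This holds in the finite-alphabet setting, since $\mathrm{VoI}_p \le \max_{x,a} U(x,a) - \min_{x,a} U(x,a)$, so $\Phi_p$ is bounded above. This boundedness guarantees that $\overline{\Phi}_p(R) < \infty$ for every $R \ge 0$.

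\textbf{Concavity.} Take $R_1, R_2 \ge 0$ and $\theta \in [0,1]$, and fix $\varepsilon > 0$. By the definition of the supremum, choose points $(R_1, V_1)$ and $(R_2, V_2)$ in $\operatorname{Conv}(\mathcal{R})$ with $V_i \ge \overline{\Phi}_p(R_i) - \varepsilon$. Convexity of $\operatorname{Conv}(\mathcal{R})$ gives
\[
\bigl(\theta R_1 + (1-\theta) R_2,\ \theta V_1 + (1-\theta) V_2\bigr) \in \operatorname{Conv}(\mathcal{R}).
\]
Hence $\overline{\Phi}_p(\theta R_1 + (1-\theta) R_2) \ge \theta \overline{\Phi}_p(R_1) + (1-\theta)\overline{\Phi}_p(R_2) - \varepsilon$. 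Letting $\varepsilon \to 0$ yields concavity. It is convenient to note that $\operatorname{Conv}(\mathcal{R})$ is itself downward closed in $V$, because $\mathcal{R}$ is. This can be checked from the Carathéodory representation: lowering $V$ in a convex combination is achieved by lowering each constituent $V_j$.

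\textbf{Monotonicity.} This is the step I expect to need the most care, because it relies on a property of $\Phi_p$ that the statement does not state. First, $\Phi_p$ itself is non-decreasing. Its feasible set $\{P(\underline{Y}|\underline{X}) : I_p(\underline{X};\underline{Y}) \le R\}$ grows with $R$, and a maximum over a larger set cannot be smaller. From this, $\mathcal{R}$ is closed under increasing $R$: if $(R,V) \in \mathcal{R}$ and $R' \ge R$, then $(R',V) \in \mathcal{R}$. This rightward-closure transfers to $\operatorname{Conv}(\mathcal{R})$. Concretely, shift every constituent point of a convex combination by $R'-R \ge 0$ in the rate coordinate; each shifted point stays in $\mathcal{R}$, and the combined point moves exactly to rate $R'$ at the same value $V$. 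Therefore any $V$ attainable at rate $R$ is attainable at $R'$, which gives $\overline{\Phi}_p(R') \ge \overline{\Phi}_p(R)$. An alternative route is to use concavity together with a uniform upper bound on $[0,\infty)$: a concave function that decreased somewhere would tend to $-\infty$, yet it stays above the constant $\mathrm{VoI}_p$ at $R=0$. I will keep this as a fallback.

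\textbf{Domination and equality.} For domination, $(R,\Phi_p(R)) \in \mathcal{R} \subseteq \operatorname{Conv}(\mathcal{R})$, so $\overline{\Phi}_p(R) \ge \Phi_p(R)$ follows immediately. For the equality case, suppose $\mathcal{R}$ is convex. Then $\operatorname{Conv}(\mathcal{R}) = \mathcal{R}$, and so
\[
\overline{\Phi}_p(R) = \sup\{V : V \le \Phi_p(R)\} = \Phi_p(R).
\]
This also shows that the envelope is the smallest concave majorant of $\Phi_p$, which is the operational justification: any point on $\overline{\Phi}_p$ is achieved by time-sharing between two operating points of the original system.
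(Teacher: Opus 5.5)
Your proposal is correct and follows the paper's route. Concavity comes from taking convex combinations of points of $\operatorname{Conv}(\mathcal{R})$, monotonicity from the feasible set defining $\Phi_p$ growing with $R$, and domination and equality from $\mathcal{R}\subseteq\operatorname{Conv}(\mathcal{R})$. Your $\varepsilon$-approximation and explicit rate-shift construction tighten the paper's argument, which assumes the supremum is attained on $\operatorname{Conv}(\mathcal{R})$ and calls monotonicity obvious; one small slip is that your fallback needs a lower bound such as $\overline{\Phi}_p\ge\Phi_p\ge 0$, not an upper bound.
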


\begin{proof}
Let $(R_1, V_1)$ and $(R_2, V_2)$ be any two points in $\operatorname{Conv}(\mathcal{R})$ lying on the upper boundary, i.e., $V_1 = \overline{\Phi}_p(R_1)$ and $V_2 = \overline{\Phi}_p(R_2)$. For any $\lambda \in [0,1]$, consider the convex combination
\[
(R_\lambda, V_\lambda) = \lambda (R_1, V_1) + (1-\lambda)(R_2, V_2)
= \bigl( \lambda R_1 + (1-\lambda) R_2,\; \lambda V_1 + (1-\lambda) V_2 \bigr).
\]
By definition of the convex hull, $(R_\lambda, V_\lambda) \in \operatorname{Conv}(\mathcal{R})$. Hence, by the supremum property of $\overline{\Phi}_p(R_\lambda)$, we have
\[
\overline{\Phi}_p\bigl( \lambda R_1 + (1-\lambda) R_2 \bigr)
\ge \lambda \overline{\Phi}_p(R_1) + (1-\lambda) \overline{\Phi}_p(R_2),
\]
which is exactly the definition of concavity. Monotonicity is obvious because a larger rate allows a larger set of achievable values; and since the convex hull contains the original set, $\overline{\Phi}_p(R) \ge \Phi_p(R)$. If $\mathcal{R}$ is itself convex, its convex hull coincides with $\mathcal{R}$, so equality holds. \qedhere
\end{proof}

\begin{remark}
The key distinction between the pragmatic VoI and the classical VoI lies in the level of abstraction at which both the state and the observation are evaluated. Classical VoI treats every distinction in the state and observation as potentially valuable; pragmatic VoI recognizes that only distinctions that affect the optimal action matter. By coarsening both the state and the observation to their pragmatic equivalence classes, the pragmatic VoI provides a more concise and task-relevant measure of information value. 
\end{remark}

The pragmatic value of information is the natural decision-theoretic dual to the pragmatic rate-distortion function. Recall from Section~\ref{subsec:pragmatic-rate-distortion} that the pragmatic rate-distortion function is defined as
$
    R_p(D) = \min_{\substack{P(\hat{X}|X): \mathbb{E}[d_p(\underline{X},\hat{\underline{X}})] \le D}} I_p(\underline{X};\hat{\underline{X}}),
$ where \(d_p\) is a distortion measure defined on pragmatic symbols. 

The duality manifests through the following variational relationship.

\begin{theorem}[Variational Duality]
For a given utility function \(U\), there exists a concave function \(\Psi_p\) such that
\begin{equation}
    \Phi_p(R) = \max_{D \ge 0} \big\{ \Psi_p(D) - \lambda R_p(D) \big\},
\end{equation}
where \(\lambda > 0\) is a Lagrange multiplier. Conversely,
\begin{equation}
    R_p(D) = \max_{\lambda \ge 0} \big\{ \lambda D - \Phi_p^*(\lambda) \big\},
\end{equation}
where \(\Phi_p^*\) is the concave conjugate of \(\Phi_p\). This establishes a Legendre–Fenchel duality between the value-rate function \(\Phi_p(R)\) and the rate-distortion function \(R_p(D)\).
\end{theorem}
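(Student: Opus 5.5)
The plan is to read the statement as a Lagrangian relaxation of the rate constraint followed by Fenchel--Moreau biconjugation. The link between the two functions is the choice of pragmatic distortion as utility loss, $d_p(\underline{x},\hat{\underline{x}}) = \max_a \mathbb{E}[U(X,a)|\underline{x}] - \max_a \mathbb{E}[U(X,a)|\hat{\underline{x}}]$, from Section~\ref{subsec:pragmatic-rate-distortion}. With this choice the expected pragmatic value and the expected pragmatic distortion are affinely related for every test channel $P(\underline{Y}|\underline{X})$. Writing $U^\star \triangleq \mathbb{E}_{\underline{X}}[\max_a \mathbb{E}[U(X,a)|\underline{X}]]$ for the full-information utility, I will define
\[
\Psi_p(D) \triangleq U^\star - U_0 - D .
\]
Then a channel achieving $\mathbb{E}[d_p]\le D$ yields $\mathrm{VoI}_p \ge \Psi_p(D)$, and conversely. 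This $\Psi_p$ is affine, hence concave, and it is non-increasing in $D$.

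\textbf{Step 1 (primal identity).} Using this identification, I will show that
\[
\Phi_p(R)=\max\{\Psi_p(D) : R_p(D)\le R\}.
\]
The inequality ``$\le$'' holds because any channel feasible for $\Phi_p(R)$ with distortion $D$ satisfies $I_p\le R$, so $R_p(D)\le R$. The inequality ``$\ge$'' holds because the minimiser in the definition of $R_p(D)$ is feasible for $\Phi_p(R)$.

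\textbf{Step 2 (convexification).} Step~3 needs $R_p$ to be convex and non-increasing. Non-increasing is immediate from the definition. Convexity is where the positive part $(I_p)^+$ and time-sharing enter: mixing two test channels achieves the averaged distortion and the averaged rate. Wherever convexity fails, I will replace $R_p$ by its lower convex envelope and $\Phi_p$ by its upper concave envelope $\overline{\Phi}_p$ from Lemma~\ref{lem:concave_envelope}.

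\textbf{Step 3 (dualise the rate constraint).} With convexity in hand, the problem in Step~1 is a concave maximisation under a convex constraint. Slater's condition holds for any $R>R_p(D_{\max})=0$, so strong duality gives
\[
\Phi_p(R)=\min_{\lambda\ge 0}\max_{D\ge 0}\bigl\{\Psi_p(D)-\lambda R_p(D)+\lambda R\bigr\}.
\]
Fixing $\lambda$ at the optimal multiplier, which is a supergradient of $\Phi_p$ at $R$, yields the stated form. The additive term $\lambda R$ does not depend on $D$, so I will either carry it explicitly or absorb it into $\Psi_p$. I will make explicit that the first displayed identity is meant with $\lambda=\lambda(R)$.

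\textbf{Step 4 (converse).} For the converse, I will compute the conjugate
\[
\Phi_p^*(\lambda)=\sup_R\{\Phi_p(R)-\lambda R\}=\sup_D\{\Psi_p(D)-\lambda R_p(D)\}.
\]
The second equality follows from Step~1. Since $\Psi_p$ is affine in $D$, this is, up to the constant $U^\star-U_0$ and a rescaling by $\lambda$, the convex conjugate of $R_p$ evaluated at slope $-1/\lambda$. Applying Fenchel--Moreau to the closed convex function $R_p$ then recovers $R_p(D)$ as a supremum over $\lambda$ of affine functions of $D$ minus $\Phi_p^*(\lambda)$. Matching the sign and scale conventions, I expect the statement's $\lambda D-\Phi_p^*(\lambda)$ after the reparametrisation $\lambda\mapsto 1/\lambda$ and a sign flip of $D$ relative to utility.

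\textbf{Main obstacle.} I expect Step~2 to be hardest. $I_p$ is a difference of coarse-grained entropies minus a joint syntactic entropy, so it is not obviously convex in the test channel. The outer minimisation over reification mappings further destroys convexity. I will first try to prove convexity of $R_p$ directly by time-sharing with a block-level reification mapping. If that fails, I will state the duality for the convex and concave envelopes and invoke Lemma~\ref{lem:concave_envelope} to note that the envelopes coincide with the original functions whenever the achievable region is convex.
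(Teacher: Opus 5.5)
The paper states this theorem without proof, so there is no route to compare yours against. Judged on its own, your plan has two genuine gaps, and the first one shows that the displayed converse is false as written.

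\textbf{Step~4 cannot land on the stated formula.} For every $D>0$ the expression $\max_{\lambda\ge 0}\{\lambda D-\Phi_p^*(\lambda)\}$ equals $+\infty$, under either convention for the conjugate:
\begin{itemize}
\item With the concave conjugate $\inf_{R\ge0}\{\lambda R-\Phi_p(R)\}$, it equals $\sup_{\lambda,R\ge 0}\{\Phi_p(R)+\lambda(D-R)\}\ge\sup_\lambda\{\Phi_p(0)+\lambda D\}$.
\item With your convention $\Phi_p^*(\lambda)=\sup_R\{\Phi_p(R)-\lambda R\}$, one has $\lambda D-\Phi_p^*(\lambda)=\inf_{R\ge0}\{\lambda(D+R)-\Phi_p(R)\}\ge\lambda D-\sup\Phi_p$, and $\Phi_p$ is bounded.
\end{itemize}
The units also disagree, since $\lambda$ is utility per bit and $D$ is utility. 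Grant Step~1 and take $R_p$ closed and convex. Then your computation gives, in your convention, $\Phi_p^*(\lambda)=(U^\star-U_0)+\lambda R_p^\sharp(-1/\lambda)$, where $R_p^\sharp$ is the convex conjugate of $R_p$. Fenchel--Moreau therefore yields
\[
R_p(D)=\sup_{\lambda>0}\frac{\Psi_p(D)-\Phi_p^*(\lambda)}{\lambda}.
\]
On the Gaussian example of Section~\ref{sec:pragmatic_gaussian_ratedistortion}, this supremum is attained at $\lambda=2D\ln 2$ and returns $\frac12\log\bigl(P/(\Omega^4D)\bigr)$ for $0<D\le P/\Omega^4$. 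Substituting $\mu=1/\lambda$ turns $\Phi_p^*(\lambda)/\lambda$ into the perspective $\mu\,\Phi_p^*(1/\mu)$. No reparametrisation of the multiplier and no sign flip of $D$ converts this into $\lambda D-\Phi_p^*(\lambda)$. You should prove this corrected identity and state that the displayed one fails.

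The same applies to Step~3. At a supergradient $\lambda$ of $\Phi_p$ at $R$, the true identity is $\Phi_p(R)=\lambda R+\max_D\{\Psi_p(D)-\lambda R_p(D)\}$, which is just equality in Fenchel--Young. Keep the $\lambda R$ term explicit, because absorbing it makes $\Psi_p$ depend on $R$.

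\textbf{Step~1's affine link is not available for the paper's $d_p$.} Read as a fixed function of the pair, $d_p(\underline{x},\hat{\underline{x}})=V(\underline{x})-V(\hat{\underline{x}})$, where $V$ is the class value $\max_a\mathbb{E}[U(X,a)|\cdot]$. So $\mathbb{E}[d_p]$ depends only on the marginals. An $\hat X$ independent of $X$ with the same law gives $\mathbb{E}[d_p]=0$, yet $\mathrm{VoI}_p=0<\Psi_p(0)$ whenever $U^\star>U_0$.

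The identity $\mathbb{E}[d_p]=U^\star-U_0-\mathrm{VoI}_p$ holds only if the conditioning on $\hat{\underline{x}}$ is the posterior under the joint law. But then $d_p$ is a functional of the test channel, and
\[
\mathbb{E}[d_p]=U^\star-\sum_{\hat{\underline{x}}}\max_a\sum_x P(x,\hat{\underline{x}})U(x,a)
\]
is concave, not linear, in $P(\hat x|x)$. Mixing two test channels therefore yields at least the averaged distortion, not the average. Also, $I_p=H_p(\underline{X})+H_p(\hat{\underline{X}})-H(X,\hat X)$ contains the concave term $H_p(\hat{\underline{X}})$, so it is not convex in general. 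The time-sharing claim of Step~2 thus fails on both coordinates, and only your envelope fallback remains.

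Finally, the $\ge$ half of Step~1 needs $g_X,g_{\hat X}$ frozen at the $U$-induced partitions. As defined, $R_p$ also minimises over the mappings. The one-class partition gives $I_p=0$ and $\mathbb{E}[d_p]=0$, so $R_p\equiv 0$, and that minimiser is not feasible for $\Phi_p(R)$.

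A workable version does the following. It fixes the $U$-induced classes and uses the posterior (Bayes-regret) distortion. It defines $\Phi_p$ and $R_p$ over the same set of syntactic channels. It then proves both identities for the concave and convex envelopes of the common $(R,D)$ region, with the converse in the corrected form above.
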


\begin{example} [Autonomous Vehicle (Continued)]

We revisit the two-vehicle autonomous driving scenario from Sections III.B and IV.B. The state \(X\) is the traffic light state (four possible values), and the observation \(Y\) is the corresponding signal. The pragmatic mapping merges ``Stop-short'' and ``Stop-long'' into a single ``Stop'' action. The utility is 10 for correct action, 0 otherwise. The baseline utility (prior-optimal action ``Stop'') is \(U_0 = 11\) (combined for two vehicles).
The computed pragmatic VoI is: $\mathrm{VoI}_p \approx 0.5 \text{ utils}$.
This low value reflects the fact that the binary pragmatic observation (Go/Stop) resolves little decision uncertainty. In contrast, the classical VoI (using full syntactic signal) is about 4.2 utils.
\end{example}

The duality with rate-distortion can be illustrated by noting that to achieve a pragmatic distortion \(D\) (e.g., probability of wrong action), the required rate \(R_p(D)\) must be at least the rate that enables the corresponding VoI. For instance, if we require \(\mathrm{VoI}_p \ge 4.0\), the rate must exceed the value \(R_p(D^*)\) where \(D^*\) is the distortion corresponding to that utility level.

\subsection{Pragmatic Cost of Information}
\label{subsec:pragmatic-coi}

While the value of information quantifies the benefit, the pragmatic cost of information quantifies the resource expenditure required to convey information at a given pragmatic rate. Unlike VoI, which depends on the utility function and the task, CoI is determined solely by the physical communication channel and the pragmatic abstraction (the isoteleic volume).

Consider a communication channel with input \(X\), output \(Y\), and a joint reification mapping \(g_{XY}:\underline{\mathcal{X}}\times\underline{\mathcal{Y}} \to 2^{\mathcal{X}\times\mathcal{Y}}\) that defines the pragmatic equivalence classes. Let the resource consumption be measured by a cost function \(c(x)\) on the channel input (e.g., transmit power \(\mathbb{E}[|X|^2]\), or for discrete channels, the number of channel uses or energy per symbol). 

\begin{lemma}[Concavity of the Pragmatic Capacity–Resource Function]
\label{lem:capacity-concavity}
Let \(P \ge 0\) denote the available communication resource (e.g., power) and define the pragmatic capacity under resource constraint \(P\) as
\begin{equation}
C_p(P) \triangleq \max_{p(x) : \mathbb{E}[c(X)] \le P} I^p(\underline{X};\underline{Y}),
\end{equation}
where \(c(X)\) is a non‑negative cost function and \(I^p(\underline{X};\underline{Y})\) is the up‑pragmatic mutual information. Then \(C_p(P)\) is a non-decreasing concave function of \(P\).
\end{lemma}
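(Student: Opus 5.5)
Monotonicity comes first and is immediate. If \(P_1 \le P_2\), every input distribution with \(\mathbb{E}[c(X)] \le P_1\) also satisfies \(\mathbb{E}[c(X)] \le P_2\). The maximization defining \(C_p(P_2)\) is therefore over a superset of the one defining \(C_p(P_1)\), and \(C_p(P_1) \le C_p(P_2)\). The reification mapping \(g_{XY}\) is either held fixed or maximized jointly as in the definition of \(C_p\); the monotonicity argument works the same way in both cases.

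For concavity I would follow the classical capacity–cost time-sharing argument. Its operational meaning is that the up pragmatic mutual information, like the classical one, is a syntax-layer rate (Remark~\ref{rem:syntax-layer-rate}). Take \(P_1, P_2\), \(\theta \in [0,1]\), and let \(p_1, p_2\) be the optimal inputs achieving \(C_p(P_1), C_p(P_2)\). Introduce a time-sharing variable \(Q \in \{1,2\}\) with \(P(Q=1)=\theta\), and let the input \(X\) given \(Q=i\) have law \(p_i\). The cost constraint is linear in \(p(x)\), so \(\mathbb{E}[c(X)] \le \theta P_1 + (1-\theta)P_2\). The goal is
\begin{equation}
C_p(\theta P_1 + (1-\theta)P_2) \ge \theta C_p(P_1) + (1-\theta) C_p(P_2).
\end{equation}
I would prove this operationally. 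A code that spends a fraction \(\theta\) of the block using a pragmatic code for \(P_1\) and the rest using one for \(P_2\) achieves the averaged rate under the averaged cost. Its action-level reliability is inherited blockwise through the pragmatic channel coding theorem, where the per-block reification is the product mapping \(g^n_{XY}\). This gives the inequality once \(C_p(P)\) is identified with the operational capacity.

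The main obstacle is that \(I^p = H(X)+H(Y)-H_p(\underline{X},\underline{Y})\) is not a concave function of \(p(x)\) in general. \(H(X)+H(Y)\) is concave, but \(-H_p(\underline{X},\underline{Y})\) is convex, so single-letter mixing cannot be relied on. For a purely information-theoretic proof, I would first try a decomposition:
\begin{equation}
I^p = I(X;Y) + [H(X,Y) - H_p(\underline{X},\underline{Y})].
\end{equation}
The first term is concave in \(p(x)\). The second term is \(H(X,Y \mid \underline{X},\underline{Y})\), which is the conditional entropy within the pragmatic cells, and I would try to show it is at least the \(\theta\)-average by concavity of conditional entropy under mixing with the cell labels fixed. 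If that fails, I would fall back on the upper concave envelope of Lemma~\ref{lem:concave_envelope} applied to the region \(\{(P,C)\}\), with time sharing as above. Time sharing shows that the envelope is achievable, so the operational \(C_p(P)\) equals its concave envelope and is therefore concave.
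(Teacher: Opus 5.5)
Your monotonicity argument is fine. The concavity part, however, rests on a false premise. For a fixed channel and a fixed joint reification mapping \(g_{XY}\), \(I^p\) \emph{is} concave in \(p(x)\). You must fix \(g_{XY}\) here anyway, because a pointwise maximum over mappings of concave functions need not be concave. This concavity is exactly the fact the paper's proof invokes before running the same mixture-and-feasibility argument you set up. Observing that \(H(X)+H(Y)\) is concave and \(-H_p(\underline{X},\underline{Y})\) is convex decides nothing on its own.

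The decomposition you list as a first try finishes the job. Since \((\underline{X},\underline{Y})\) is a deterministic function of \((X,Y)\), the bracket equals \(H(X,Y\mid\underline{X},\underline{Y})\). Take your \(Q\) independent of the channel, with the same cell map under both components. Then
\[
H(X,Y\mid\underline{X},\underline{Y})\ \ge\ H(X,Y\mid\underline{X},\underline{Y},Q)=\theta\,H_{p_1}(X,Y\mid\underline{X},\underline{Y})+(1-\theta)\,H_{p_2}(X,Y\mid\underline{X},\underline{Y}),
\]
and \(I(X;Y)\ge I(X;Y\mid Q)\) because \(Q\to X\to Y\). Adding the two gives \(I^p(p_\theta)\ge\theta I^p(p_1)+(1-\theta)I^p(p_2)\) for \(p_\theta=\theta p_1+(1-\theta)p_2\). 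Since \(p_\theta\) is feasible at cost \(\theta P_1+(1-\theta)P_2\), that is the whole proof.

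Equivalently, \(I^p-I(X;Y)=-\sum_{x,y}p(x,y)\log\frac{p(x,y)}{P(\underline{x},\underline{y})}\), where the cell probability is linear in \(p(x)\). The log-sum inequality then gives concavity, which is what the paper's ``joint convexity of KL plus linear aggregation'' justification amounts to.

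The two routes you actually lean on do not prove the lemma as stated. The operational time-sharing argument gives concavity of an operational capacity--cost function. Transferring that to the single-letter maximum defining \(C_p(P)\) would need a cost-constrained pragmatic coding theorem with both achievability and converse. The paper's channel coding theorem carries no cost constraint, so that identification is not available. The concave-envelope fallback is circular: Lemma~\ref{lem:concave_envelope} only yields a concave \(\overline{C}_p\ge C_p\). Equality is precisely the mixing inequality \(C_p(\theta P_1+(1-\theta)P_2)\ge\theta C_p(P_1)+(1-\theta)C_p(P_2)\) you set out to prove.
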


\begin{proof}
For fixed channel and pragmatic mapping, the mutual information \(I^p(\underline{X};\underline{Y})\) is a concave function of the input distribution \(p(x)\) (this follows from the joint convexity of the KL divergence and the fact that the pragmatic aggregation is a linear operation on probabilities). The resource constraint \(\mathbb{E}[c(X)] \le P\) defines a convex set of input distributions.

Let \(P_1, P_2 \ge 0\) and \(\lambda \in [0,1]\). Choose distributions \(p_1\) and \(p_2\) that attain the maxima for \(P_1\) and \(P_2\), respectively, i.e.,
\[
C_p(P_i) = I^p(p_i) \quad (i=1,2), \qquad \mathbb{E}_{p_i}[c(X)] \le P_i.
\]
Consider the mixture distribution \(p_\lambda = \lambda p_1 + (1-\lambda) p_2\). Its average cost is
\[
\mathbb{E}_{p_\lambda}[c(X)] = \lambda \mathbb{E}_{p_1}[c(X)] + (1-\lambda) \mathbb{E}_{p_2}[c(X)] \le \lambda P_1 + (1-\lambda) P_2,
\]
so \(p_\lambda\) is feasible for the resource level \(\lambda P_1 + (1-\lambda) P_2\). By concavity of \(I^p\) in the distribution,
\[
I^p(p_\lambda) \ge \lambda I^p(p_1) + (1-\lambda) I^p(p_2)
= \lambda C_p(P_1) + (1-\lambda) C_p(P_2).
\]
Since \(C_p(\lambda P_1 + (1-\lambda) P_2)\) is the maximum over all feasible distributions, we have
\[
C_p\bigl(\lambda P_1 + (1-\lambda) P_2\bigr) \ge I^p(p_\lambda)
\ge \lambda C_p(P_1) + (1-\lambda) C_p(P_2).
\]
Thus \(C_p\) is concave. Monotonicity follows immediately from the fact that a larger resource set contains smaller ones. \qedhere
\end{proof}

The pragmatic cost of information is defined as the inverse of this capacity-resource relationship.

\begin{definition}[Pragmatic Cost of Information]
For a given pragmatic rate \(R\), the pragmatic cost of information is the minimum resource consumption required to achieve that rate:
\begin{equation}
    \mathrm{CoI}_p(R) \triangleq \min_{p(x): I^p(\underline{X};\underline{Y}) \ge R} \mathbb{E}[c(X)].
    \label{eq:coi-def}
\end{equation}
\end{definition}

This definition is completely general: the resource \(c(X)\) can be power, bandwidth, time, number of channel uses, or any other cost measure. It does not depend on the utility function or the task; it is purely a property of the physical channel and the pragmatic abstraction.

The pragmatic cost of information satisfies the following fundamental properties.

\begin{theorem}[Non-negativity and Monotonicity]
\(\mathrm{CoI}_p(R) \ge 0\), with \(\mathrm{CoI}_p(0)=0\). It is strictly increasing in \(R\) for \(R>0\).
\end{theorem}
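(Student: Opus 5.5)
The plan is to read $\mathrm{CoI}_p$ as the generalized inverse of the capacity--resource function $C_p(P)$ of Lemma~\ref{lem:capacity-concavity}, namely $\mathrm{CoI}_p(R)=\inf\{P\ge 0: C_p(P)\ge R\}$, and to derive all three claims from the properties of $C_p$ proved there. To justify this identification, note that a distribution $p(x)$ is feasible in \eqref{eq:coi-def} with cost $\mathbb{E}[c(X)]=P$ if and only if it is feasible in the definition of $C_p(P)$ and achieves $I^p\ge R$. Non-negativity is then immediate: $c(X)\ge 0$ pointwise, so every expectation $\mathbb{E}[c(X)]$ is non-negative, and so is the minimum.

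For $\mathrm{CoI}_p(0)=0$, I would exhibit a zero-cost feasible input. Assume, as is standard in cost-constrained capacity, that some zero-cost symbol $x_0$ exists with $c(x_0)=0$, and take the point mass at $x_0$. Then $\mathbb{E}[c(X)]=0$. For feasibility I need $I^p(\underline{X};\underline{Y})\ge 0$. By the hierarchy \eqref{eq:mutual-hierarchy}, $I^p(\underline{X};\underline{Y})\ge I(X;Y)\ge 0$. Hence the constraint $I^p\ge 0$ is satisfied trivially, and $\mathrm{CoI}_p(0)=0$.

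Monotonicity splits into a weak part and a strict part. The weak part is easy: if $R_1<R_2$, the feasible set for $R_2$ is contained in the feasible set for $R_1$, so $\mathrm{CoI}_p(R_1)\le \mathrm{CoI}_p(R_2)$.

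Strictness is the main obstacle, and I would attack it through concavity of $C_p$. Let $P_0\triangleq\sup\{P: C_p(P)=C_p(0)\}$. Because $C_p$ is concave and non-decreasing by Lemma~\ref{lem:capacity-concavity}, it is constant on $[0,P_0]$ and strictly increasing on $[P_0,P_{\max})$, where $P_{\max}$ is the point at which $C_p$ saturates. Suppose now that $0<R_1<R_2\le C_p(P_{\max})$ and $\mathrm{CoI}_p(R_1)=\mathrm{CoI}_p(R_2)=P^\ast$. Continuity of the concave function $C_p$ on the interior, together with attainment of the maximum on the compact simplex, gives $C_p(P^\ast)\ge R_2$. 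Since $R_1<R_2$, the constraint for $R_1$ is then slack at $P^\ast$. Strict monotonicity of $C_p$ to the left of $P^\ast$ (which holds because $P^\ast>P_0$ whenever $R_1>C_p(0)$) then yields some $P<P^\ast$ with $C_p(P)\ge R_1$. This contradicts minimality of $P^\ast$ for $R_1$.

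This argument needs two explicit hypotheses, which I would state rather than hide: $R$ must lie in the range $(C_p(0),\,C_p(P_{\max})]$, and $C_p(0)=0$. The second holds, for instance, when the zero-cost inputs carry no pragmatic information. Without these hypotheses strictness can fail, and I would add a remark saying so. Under them, $\mathrm{CoI}_p$ is the inverse of a continuous, strictly increasing concave function, so it is strictly increasing and, additionally, convex.
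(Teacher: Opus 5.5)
The paper gives no proof of this theorem. Its implicit justification is the identification \(\mathrm{CoI}_p=C_p^{-1}\) in \eqref{eq:coi-cap-duality}, combined with Lemma~\ref{lem:capacity-concavity}, and that is the route you take. Your proposal is correct under the hypotheses you state, and it is more careful than the paper: as written the statement is false, and your added hypotheses are genuinely needed.

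Since \(I^p\ge I(X;Y)\ge 0\), the constraint is vacuous at \(R=0\), so \(\mathrm{CoI}_p(0)=\min_x c(x)\). A zero-cost input is therefore necessary, not merely convenient. Strictness also fails whenever \(C_p(0)>0\), and that is the generic case. Writing \(I^p=I(X;Y)+H(X,Y\mid\underline{X},\underline{Y})\), even a point-mass input has positive \(I^p\) once the reification mapping merges outputs. The paper's own Gaussian example has \(C_p(0)=\tfrac12\log\Omega^4>0\) for \(\Omega>1\). The discussion of Table~\ref{tab:coi_examples} notes that every \(R\le C_p(0)\) costs zero power.

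Two details of your strictness step need repair.

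First, a concave non-decreasing function cannot be flat on \([0,P_0]\) with \(P_0>0\) and then increase. Applying concavity at \(P_0\), between \(0\) and any \(P>P_0\), gives \(C_p(P)\le C_p(0)\), so \(C_p\) would be constant. Hence your \(P_0\) is \(0\) (unless \(C_p\) is constant). The plateau of \(\mathrm{CoI}_p\) comes from the offset \(C_p(0)>0\), not from a flat stretch of \(C_p\), and the \(P_0\) device can be dropped.

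Second, strict monotonicity of \(C_p\) is not what gives some \(P<P^\ast\) with \(C_p(P)\ge R_1\). What gives it is left-continuity of \(C_p\) at \(P^\ast>0\), which is automatic for a concave function at interior points, together with the slack \(C_p(P^\ast)\ge R_2>R_1\). With these fixes the argument is valid for \(R\in(C_p(0),\max_P C_p(P)]\). Above that range the feasible set is empty and \(\mathrm{CoI}_p=+\infty\).

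A shorter route bypasses \(C_p\) and its concavity entirely. Suppose \(R_1<R_2\) and \(P^\ast=\mathrm{CoI}_p(R_2)\in(0,\infty)\). Let \(p^\ast\) attain this minimum and set \(p_\theta=(1-\theta)p^\ast+\theta\delta_{x_0}\). Then \(\mathbb{E}_{p_\theta}[c(X)]=(1-\theta)P^\ast<P^\ast\). Continuity of \(I^p\) on the finite simplex gives \(I^p(p_\theta)>R_1\) for small \(\theta\), so \(\mathrm{CoI}_p(R_1)<\mathrm{CoI}_p(R_2)\). This version isolates exactly the hypotheses needed: a zero-cost input, finite alphabets, and \(R_2>C_p(0)\).
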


% ============================================================
% Lemma 2: Inverse of an increasing concave function is convex
% ============================================================

\begin{lemma}[Convexity of the Inverse of an Increasing Concave Function]
\label{lem:inverse-convex}
Let \(f: \mathbb{R}^+ \to \mathbb{R}^+\) be a strictly increasing concave function. Then its inverse function \(f^{-1}\) is convex on the range of \(f\).
\end{lemma}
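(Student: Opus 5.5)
The plan is to work directly from the definition of concavity, transported through the inverse map. Since \(f\) is strictly increasing, it is injective, so \(f^{-1}\) is well defined on the range \(f(\mathbb{R}^+)\). Continuity of \(f\) on the interior follows from concavity, so the range is an interval and hence a convex set. This guarantees that convex combinations of points in the range stay in the range. I would also record at the outset that \(f^{-1}\) is itself strictly increasing, because \(y_1<y_2\) with \(f^{-1}(y_1)\ge f^{-1}(y_2)\) would contradict monotonicity of \(f\).

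Next, fix \(y_1,y_2\) in the range and \(\theta\in[0,1]\), and set \(x_i=f^{-1}(y_i)\), so that \(y_i=f(x_i)\). Concavity of \(f\) (the same inequality used in Lemma~\ref{lem:concave_envelope} and Lemma~\ref{lem:capacity-concavity}) gives
\begin{equation*}
f\bigl(\theta x_1+(1-\theta)x_2\bigr)\ \ge\ \theta f(x_1)+(1-\theta)f(x_2)=\theta y_1+(1-\theta)y_2 .
\end{equation*}
I would then apply the increasing map \(f^{-1}\) to both sides. The right-hand side lies in the range by convexity of the range, and the left-hand side lies in the range trivially. This yields
\begin{equation*}
\theta f^{-1}(y_1)+(1-\theta)f^{-1}(y_2)=\theta x_1+(1-\theta)x_2\ \ge\ f^{-1}\bigl(\theta y_1+(1-\theta)y_2\bigr),
\end{equation*}
which is exactly the convexity inequality for \(f^{-1}\).

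The only delicate step is justifying that the range is an interval, so that \(f^{-1}\) is defined at \(\theta y_1+(1-\theta)y_2\). I would handle this by noting that a concave function on an interval is continuous on its interior, and then applying the intermediate value theorem. If continuity at the endpoint \(0\) is in doubt, I would state the result on \(f\bigl((0,\infty)\bigr)\), or else simply assume \(f\) is continuous. The latter holds in the intended application \(f=C_p(\cdot)\), which gives the convexity of \(\mathrm{CoI}_p=C_p^{-1}\).
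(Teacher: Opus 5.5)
Your proposal is correct and is essentially the paper's proof: concavity at the preimages $x_i=f^{-1}(y_i)$ gives $f(\theta x_1+(1-\theta)x_2)\ge \theta y_1+(1-\theta)y_2$, and applying the increasing inverse then yields the convexity inequality for $f^{-1}$. The paper skips your check that the range is an interval, which is what makes $f^{-1}$ defined at $\theta y_1+(1-\theta)y_2$. Your endpoint caveat is also warranted: $f(0)=0$, $f(x)=1+x$ for $x>0$ is strictly increasing and concave on $[0,\infty)$, yet its range is $\{0\}\cup(1,\infty)$.
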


\begin{proof}
Let \(R_1, R_2 \ge 0\) be in the range of \(f\), and let \(P_1 = f^{-1}(R_1)\), \(P_2 = f^{-1}(R_2)\). For any \(\lambda \in [0,1]\), define
\[
R_\lambda = \lambda R_1 + (1-\lambda) R_2, \qquad P_\lambda = \lambda P_1 + (1-\lambda) P_2.
\]
By concavity of \(f\),
\[
f(P_\lambda) \ge \lambda f(P_1) + (1-\lambda) f(P_2) = \lambda R_1 + (1-\lambda) R_2 = R_\lambda.
\]
Since \(f\) is strictly increasing, applying \(f^{-1}\) to both sides preserves the inequality:
\[
P_\lambda \ge f^{-1}(R_\lambda).
\]
Substituting the definition of \(P_\lambda\) yields
\[
\lambda f^{-1}(R_1) + (1-\lambda) f^{-1}(R_2) \ge f^{-1}\bigl(\lambda R_1 + (1-\lambda) R_2\bigr),
\]
which is exactly the definition of convexity. \qedhere
\end{proof}

% ============================================================
% Theorem: Convexity of the pragmatic cost of information
% ============================================================

\begin{theorem}[Convexity of the Pragmatic Cost of Information]
\label{thm:coi-convex}
Let \(C_p(P)\) be the pragmatic capacity as a function of the resource \(P\). Assume that \(C_p(P)\) is concave and strictly increasing in \(P\). Then the pragmatic cost of information $\mathrm{CoI}_p(R) = C_p^{-1}(R)$
is a convex function of \(R\) on the achievable rate region.
\end{theorem}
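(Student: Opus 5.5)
The plan is to reduce the statement to Lemma~\ref{lem:inverse-convex}, applied with $f = C_p$ on the resource axis. There are two ingredients. First, we must identify $\mathrm{CoI}_p$, as defined in Eq.~\eqref{eq:coi-def}, with the functional inverse $C_p^{-1}$. Second, we must check the hypotheses of Lemma~\ref{lem:inverse-convex}. Concavity of $C_p(P)$ is supplied either by the assumption in the statement or by Lemma~\ref{lem:capacity-concavity}, and strict monotonicity is assumed outright.

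\textbf{Step 1 (identification).} Fix a rate $R$ in the achievable region, i.e.\ $R = C_p(P)$ for some $P \ge 0$. If $p(x)$ is any input distribution with $I^p(\underline{X};\underline{Y}) \ge R$ and cost $P' = \mathbb{E}[c(X)]$, then by the definition of $C_p$ we have $C_p(P') \ge I^p \ge R = C_p(C_p^{-1}(R))$. Strict monotonicity then gives $P' \ge C_p^{-1}(R)$, so $\mathrm{CoI}_p(R) \ge C_p^{-1}(R)$.

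For the reverse inequality, take the maximizer $p^\star$ attaining $C_p(C_p^{-1}(R)) = R$. It satisfies $\mathbb{E}_{p^\star}[c(X)] \le C_p^{-1}(R)$ and $I^p(p^\star) = R$, so it is feasible for Eq.~\eqref{eq:coi-def}. Hence $\mathrm{CoI}_p(R) \le C_p^{-1}(R)$, and the two sides are equal on the range of $C_p$.

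\textbf{Step 2 (convexity).} Lemma~\ref{lem:inverse-convex} applies directly to $f = C_p$ and yields convexity of $C_p^{-1}$ on the range of $C_p$. This range is an interval, since $C_p$ is concave and monotone and therefore continuous on the interior of its domain. By Step~1, this is exactly the convexity of $\mathrm{CoI}_p$ on the achievable rate region.

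\textbf{Main obstacle.} The delicate point is Step~1. The issues are attainment of the maxima and minima, which needs compactness of the feasible set of input distributions for finite alphabets together with continuity of $I^p$ in $p(x)$, and the boundary behaviour at $P=0$, where concavity may fail to give continuity. I would handle this by restricting throughout to the achievable region $\{C_p(P): P \ge 0\}$. Where a maximum might not be attained, I would replace it by a supremum and run an $\varepsilon$-argument. Strict monotonicity is what guarantees that the inverse is single-valued, so no flat pieces of $C_p$ need separate treatment.
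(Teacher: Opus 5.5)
Your proposal is correct and takes the paper's route. The paper's proof is an application of Lemma~\ref{lem:inverse-convex} with $f = C_p$, followed by a restatement of that argument, and it takes the identity $\mathrm{CoI}_p = C_p^{-1}$ as part of the statement; your Step~1 adds a valid derivation of that identity from Eq.~\eqref{eq:coi-def} and is more careful than the paper about attainment of the maxima and about the endpoint $P=0$, where the lemma's argument silently needs $\lambda R_1 + (1-\lambda)R_2$ to lie in the range of $C_p$.
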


\begin{proof}
By Lemma~\ref{lem:inverse-convex}, the inverse of any strictly increasing concave function is convex. Applying this result with \(f = C_p\) immediately gives that \(\mathrm{CoI}_p(R) = C_p^{-1}(R)\) is convex.

For completeness, we restate the argument in detail. Let \(R_1, R_2\) be two rates and let \(P_i = \mathrm{CoI}_p(R_i)\) for \(i=1,2\). For any \(\lambda \in [0,1]\), set
\[
R_\lambda = \lambda R_1 + (1-\lambda) R_2, \qquad P_\lambda = \lambda P_1 + (1-\lambda) P_2.
\]
Because \(C_p\) is concave,
\[
C_p(P_\lambda) \ge \lambda C_p(P_1) + (1-\lambda) C_p(P_2) = \lambda R_1 + (1-\lambda) R_2 = R_\lambda.
\]
Since \(C_p\) is strictly increasing, its inverse is also strictly increasing, so
\[
P_\lambda \ge C_p^{-1}(R_\lambda) = \mathrm{CoI}_p(R_\lambda).
\]
Therefore,
\[
\lambda \mathrm{CoI}_p(R_1) + (1-\lambda) \mathrm{CoI}_p(R_2) \ge \mathrm{CoI}_p\bigl(\lambda R_1 + (1-\lambda) R_2\bigr),
\]
which proves convexity. \qedhere
\end{proof}

% ============================================================
% Optional remark on strictness
% ============================================================

\begin{remark}
The convexity established above is weak convexity. Strict convexity would require \(C_p(P)\) to be strictly concave (i.e., no linear segments) and the strict increasing assumption to hold globally. In practice, when zero‑cost capacity plateaus exist, \(\mathrm{CoI}_p(R)\) may have linear segments and hence is convex but not strictly convex.
\end{remark}

\begin{theorem}[Duality with Pragmatic Channel Capacity]
The pragmatic cost of information and the pragmatic channel capacity are strict inverses:
\begin{equation}
    \mathrm{CoI}_p(R) = C_p^{-1}(R), \qquad C_p(P) = \mathrm{CoI}_p^{-1}(P).
    \label{eq:coi-cap-duality}
\end{equation}
Thus, they form a perfect mathematical dual pair: one gives the maximum rate for a given resource, the other gives the minimum resource for a given rate.
\end{theorem}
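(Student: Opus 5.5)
The approach is to treat both functions as optimal values of a pair of mutually reciprocal constrained problems over the same feasible set of input distributions. We then show that each one undoes the other, using the monotonicity and continuity supplied by Lemma~\ref{lem:capacity-concavity} and the Non-negativity and Monotonicity theorem for $\mathrm{CoI}_p$. Fix the channel $P(Y|X)$ and the joint reification mapping $g_{XY}$. Write $\mathcal{F}(P)=\{p(x):\mathbb{E}[c(X)]\le P\}$, so that $C_p(P)=\max_{p\in\mathcal{F}(P)}I^p(\underline{X};\underline{Y})$ and $\mathrm{CoI}_p(R)=\min\{\mathbb{E}[c(X)]: I^p(\underline{X};\underline{Y})\ge R\}$. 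We work on the achievable range $R\in[C_p(0),\sup_P C_p(P))$. On the interior of this range we assume, as in Theorem~\ref{thm:coi-convex}, that $C_p$ is strictly increasing.

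\textbf{Step 1: $\mathrm{CoI}_p(C_p(P))\le P$.} Let $p^\star$ attain $C_p(P)$; it exists because the input simplex is compact and $I^p$ is continuous in $p(x)$. Then $p^\star$ is feasible for the CoI problem at rate $R=C_p(P)$, and its cost is $\mathbb{E}_{p^\star}[c(X)]\le P$.

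\textbf{Step 2: $C_p(\mathrm{CoI}_p(R))\ge R$.} Let $p^\dagger$ attain $\mathrm{CoI}_p(R)$. Then $p^\dagger\in\mathcal{F}(\mathrm{CoI}_p(R))$ and $I^p(p^\dagger)\ge R$.

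\textbf{Step 3: turn the inequalities into equalities.} In Step 1, suppose $P'=\mathrm{CoI}_p(C_p(P))<P$. Step 2 gives $C_p(P')\ge C_p(P)$. But $C_p$ is strictly increasing, so $C_p(P')<C_p(P)$, a contradiction; hence equality. In Step 2, suppose $C_p(\mathrm{CoI}_p(R))>R$. Concavity gives continuity of $C_p$ on $(0,\infty)$, and $C_p(P)\to C_p(0)\le R$ as $P\downarrow 0$. By the intermediate value theorem there is a $P''<\mathrm{CoI}_p(R)$ with $C_p(P'')\ge R$. Its maximizer is then feasible for the CoI problem at rate $R$ with cost at most $P''$, contradicting the minimality of $\mathrm{CoI}_p(R)$. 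These two identities say that $\mathrm{CoI}_p\circ C_p=\mathrm{id}$ and $C_p\circ\mathrm{CoI}_p=\mathrm{id}$ on the respective ranges, which is \eqref{eq:coi-cap-duality}.

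\textbf{Main obstacle.} The hardest part is the boundary and regularity behaviour. It requires continuity of $C_p$ at $P=0$ and the existence of the two optimizers. It also requires strict monotonicity, which fails on plateaus where $C_p$ saturates, for example at $\max_{p}I^p$ when cost is irrelevant, or when $c(x)=0$ for some inputs so that $C_p(0)>0$. I would handle this by restricting the statement to the strictly increasing part of $C_p$. On that part, continuity on the open interval follows from concavity (Lemma~\ref{lem:capacity-concavity}). Right-continuity at $0$ would come from upper semicontinuity of the maximum over the compact, shrinking feasible sets $\mathcal{F}(P)$. On flat segments I would interpret $C_p^{-1}$ as the generalized (left) inverse $\inf\{P: C_p(P)\ge R\}$, which coincides with $\mathrm{CoI}_p$ by the argument of Step 3.
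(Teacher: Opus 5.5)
Your argument is correct. It is less a different route from the paper than a proof where the paper has none: the paper states this theorem without proof. It has only just announced that $\mathrm{CoI}_p$ ``is defined as the inverse of this capacity-resource relationship,'' and Theorem~\ref{thm:coi-convex} simply assumes $\mathrm{CoI}_p=C_p^{-1}$ under a strict-monotonicity hypothesis.

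What your two-sided argument (Steps 1--2, then Step 3) adds is precision about domains, and this is not a technicality. Read globally on $R\ge 0$, $P\ge 0$, the statement is false in two situations:
\begin{itemize}
\item If $C_p(0)>0$, then $\mathrm{CoI}_p\equiv 0$ on $[0,C_p(0)]$. This is why inverting the Gaussian capacity formula produces the negative entries the paper has to clip to $0$ in Table~\ref{tab:coi_examples}.
\item If $C_p$ saturates at $\max_p I^p$ at some finite $P_{\mathrm{sat}}$, then $C_p$ is not injective beyond it. This always happens for finite alphabets.
\end{itemize}
So your restriction to $R\in[C_p(0),\max_p I^p]$ and $P\in[0,P_{\mathrm{sat}}]$ is exactly the correct statement.

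Three simplifications are available.
\begin{itemize}
\item Steps 1--2 together say $\mathrm{CoI}_p(R)\le P\iff C_p(P)\ge R$, that is, $\mathrm{CoI}_p(R)=\min\{P\ge 0: C_p(P)\ge R\}$ for every achievable $R$. So $\mathrm{CoI}_p$ is always the generalized inverse of $C_p$, and Step 3 is just the fact that this generalized inverse is the true inverse wherever $C_p$ is continuous and strictly increasing.
\item You need not assume strict monotonicity. A concave nondecreasing function that is constant on some $[a,b]$ with $a<b$ is constant on all of $[a,\infty)$, so $C_p$ is automatically strictly increasing on $[0,P_{\mathrm{sat}}]$. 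Your $c(x)=0$ example is a plateau of $\mathrm{CoI}_p$, not of $C_p$, though your range restriction already handles it.
\item In the second half of Step 3 you need neither right-continuity at $0$ nor the intermediate value theorem. In the contradiction case, $R\ge C_p(0)$ forces $\mathrm{CoI}_p(R)>0$, and continuity of the concave $C_p$ at that interior point already gives some $P''<\mathrm{CoI}_p(R)$ with $C_p(P'')>R$.
\end{itemize}

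Two smaller points. You never actually use the paper's Non-negativity and Monotonicity theorem that you cite at the outset. That is just as well, since its claim that $\mathrm{CoI}_p$ is strictly increasing for all $R>0$ fails on the same zero plateau. Also, your compactness and attainment arguments cover finite alphabets with $g_{XY}$ fixed; the continuous case would need its own attainment argument.
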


\subsection{Lagrangian Duality}
\label{subsec:lagrangian-duality}

Having established the two dual pairs, (rate-distortion, value) and (capacity, cost), we now combine them into a unified Lagrangian framework. This framework allows cross-layer optimization where the rate \(R\) serves as the common variable linking the source (decision) side and the channel (transmission) side.

\subsubsection{Source-Distortion-Value Lagrangian (Down Loop)}

The down loop concerns the trade-off between the distortion (or decision quality) and the rate required to achieve it. The Lagrangian for this loop is formulated by combining the rate-distortion function \(R_p(D)\) and the value-rate function \(\Phi_p(R)\).

\begin{definition}[Source-Distortion-Value Pragmatic Lagrangian]
For a given distortion level \(D\), the down-loop Lagrangian is
\begin{equation}
    \mathcal{L}_{\mathrm{SDV}}(D; \lambda_d) \triangleq \Phi_p\bigl(R_p(D)\bigr) - \lambda_d \, R_p(D),
    \label{eq:down-lagrangian}
\end{equation}
where \(\lambda_d > 0\) is the Lagrange multiplier representing the marginal cost of rate (in utility units per bit).
\end{definition}

Maximizing \(\mathcal{L}_{\mathrm{SDV}}\) over \(D\) yields the optimal operating point that balances the value obtained from reducing distortion against the rate cost. The first-order condition is:
\begin{equation}
    \frac{d\Phi_p}{dR} \cdot R_p'(D) - \lambda_d R_p'(D) = 0 \quad \Longrightarrow \quad \frac{d\Phi_p}{dR} = \lambda_d,
\end{equation}
since \(R_p'(D) \ne 0\). Thus, at optimality, the marginal value of rate equals the marginal rate cost.

\subsubsection{Channel-Power-Cost Lagrangian (Up Loop)}

The up loop concerns the trade-off between the rate and the physical resource cost. Using the pragmatic capacity \(C_p(P)\) and the cost function \(P(R) = \mathrm{CoI}_p(R)\), we define:

\begin{definition}[Channel-Power-Cost Pragmatic Lagrangian]
For a given resource budget \(P\), the up-loop Lagrangian is
\begin{equation}
    \mathcal{L}_{\mathrm{CPC}}(P; \gamma) \triangleq \gamma \, C_p(P) - P,
    \label{eq:up-lagrangian}
\end{equation}
where \(\gamma > 0\) is the Lagrange multiplier representing the shadow price of the resource (in utility units per unit resource).
\end{definition}

Equivalently, in terms of rate \(R\), using \(P = \mathrm{CoI}_p(R)\), the up-loop Lagrangian becomes
\begin{equation}
    \mathcal{L}_{\mathrm{CPC}}(R; \gamma) = \gamma R - \mathrm{CoI}_p(R).
\end{equation}
Maximizing over \(P\) (or \(R\)) gives the first-order condition:
\begin{equation}
    \gamma \, C_p'(P) = 1 \quad \Longrightarrow \quad \gamma = \frac{1}{C_p'(P)} = \mathrm{CoI}_p'(R).
\end{equation}
Thus, at optimality, the shadow price of the resource equals the marginal cost of increasing the rate.

\subsubsection{Global Cross-Layer Lagrangian and Behavioral Capacity}

The two loops are coupled through the common rate \(R\): the down loop requires a certain rate to achieve a given utility, while the up loop supplies that rate at a certain resource cost. The global optimization problem is to maximize the net benefit (value minus cost) over the rate. To unify the two layers, we introduce a global Lagrangian that incorporates a shadow price \(\lambda\) (in utility units per unit resource) to convert physical resource cost into utility-equivalent cost:

\begin{equation}
    \mathcal{L}_{\mathrm{global}}(R; \lambda) = \Phi_p(R) - \lambda \, \mathrm{CoI}_p(R),
    \label{eq:global-lagrangian}
\end{equation}
where \(\lambda > 0\) is the Lagrange multiplier representing the shadow price of the physical resource (e.g., power). It serves as the conversion factor between resource consumption and utility.

% ============================================================
% Theorem: Concavity of the Global Pragmatic Lagrangian
% ============================================================

\begin{theorem}[Concavity of the Global Pragmatic Lagrangian]
\label{thm:global-lagrangian-concavity}
Assume that the pragmatic value function $\Phi_p(R)$ is concave and the pragmatic cost function $\mathrm{CoI}_p(R)$ is convex on $R \ge 0$. Then, for every fixed Lagrange multiplier $\lambda \ge 0$, the global pragmatic Lagrangian
\[
\mathcal{L}_{\text{global}}(R;\lambda) \triangleq \Phi_p(R) - \lambda \, \mathrm{CoI}_p(R)
\]
is a concave function of $R$.
\end{theorem}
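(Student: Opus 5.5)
The plan is a direct argument from the definition of concavity, using only the two hypotheses of the statement and the sign of $\lambda$. First I observe that, since $\mathrm{CoI}_p(R)$ is convex on $R \ge 0$, the scaled function $-\lambda\,\mathrm{CoI}_p(R)$ is concave for every fixed $\lambda \ge 0$. Multiplying a convex function by a nonnegative scalar preserves convexity, and negating turns convex into concave. The case $\lambda = 0$ is degenerate: the Lagrangian reduces to $\Phi_p(R)$, which is concave by assumption. The second ingredient is that a sum of two concave functions on a common convex domain (here $[0,\infty)$) is concave.

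Concretely, I would fix $R_1, R_2 \ge 0$ and $\theta \in [0,1]$, and set $R_\theta = \theta R_1 + (1-\theta) R_2$. Concavity of $\Phi_p$ (for instance the concave envelope of Lemma~\ref{lem:concave_envelope}, if one wishes to guarantee the hypothesis) gives
\[
\Phi_p(R_\theta) \ge \theta \Phi_p(R_1) + (1-\theta)\Phi_p(R_2).
\]
Convexity of $\mathrm{CoI}_p$, as established in Theorem~\ref{thm:coi-convex}, gives
\[
\mathrm{CoI}_p(R_\theta) \le \theta\,\mathrm{CoI}_p(R_1) + (1-\theta)\mathrm{CoI}_p(R_2).
\]
I then multiply the second inequality by $-\lambda \le 0$, which reverses its direction, and add it to the first. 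The result is
\[
\mathcal{L}_{\mathrm{global}}(R_\theta;\lambda) \ge \theta\,\mathcal{L}_{\mathrm{global}}(R_1;\lambda) + (1-\theta)\,\mathcal{L}_{\mathrm{global}}(R_2;\lambda),
\]
which is exactly concavity.

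The only real subtlety is the domain. $\mathrm{CoI}_p$ may be defined only on the achievable rate region, and it may take the value $+\infty$ outside it. I would handle this by restricting to the interval of rates where both functions are finite. That interval is convex, because the achievable region $[0, C_p(P_{\max}))$ is an interval by the monotonicity in Lemma~\ref{lem:capacity-concavity}. Alternatively, I would adopt the extended-value convention that $-\lambda\cdot(+\infty) = -\infty$, which keeps the inequality valid. I would also remark that strict concavity holds whenever either $\Phi_p$ is strictly concave or $\lambda > 0$ and $\mathrm{CoI}_p$ is strictly convex. In that case any stationary point satisfying $\Phi_p'(R^*) = \lambda\,\mathrm{CoI}_p'(R^*)$ is the unique global maximizer, which is what makes the efficiency bound $\mathcal{E}_p(\lambda)$ well posed.
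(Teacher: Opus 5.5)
Your proof is correct and follows essentially the same route as the paper: fix $R_1,R_2\ge 0$ and $\theta\in[0,1]$, multiply the convexity inequality for $\mathrm{CoI}_p$ by $-\lambda\le 0$, and add it to the concavity inequality for $\Phi_p$. Your extra remarks go beyond what the paper addresses, and the statement itself does not need them. One concerns the effective domain of $\mathrm{CoI}_p$. The other concerns when strict concavity genuinely holds; the paper's Remark~\ref{rem:sup-vs-max} asserts strict concavity under these weak hypotheses alone.
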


\begin{proof}
Let $R_1, R_2 \ge 0$ and let $\theta \in [0,1]$. Define $R_\theta = \theta R_1 + (1-\theta) R_2$.

By concavity of $\Phi_p$,
\begin{equation}
\Phi_p(R_\theta) \ge \theta \Phi_p(R_1) + (1-\theta) \Phi_p(R_2). \label{eq:Phi-concavity}
\end{equation}

By convexity of $\mathrm{CoI}_p$,
\begin{equation}
\mathrm{CoI}_p(R_\theta) \le \theta \mathrm{CoI}_p(R_1) + (1-\theta) \mathrm{CoI}_p(R_2). \label{eq:CoI-convexity}
\end{equation}
Multiplying \eqref{eq:CoI-convexity} by $-\lambda$ (which is non-positive) reverses the inequality:
\begin{equation}
-\lambda \mathrm{CoI}_p(R_\theta) \ge -\lambda\theta \mathrm{CoI}_p(R_1) - \lambda(1-\theta) \mathrm{CoI}_p(R_2). \label{eq:neg-CoI}
\end{equation}

Adding \eqref{eq:Phi-concavity} and \eqref{eq:neg-CoI} yields
\begin{align}
\mathcal{L}_{\text{global}}(R_\theta;\lambda)
&= \Phi_p(R_\theta) - \lambda \mathrm{CoI}_p(R_\theta) \nonumber \\
&\ge \theta \bigl[ \Phi_p(R_1) - \lambda \mathrm{CoI}_p(R_1) \bigr]
+ (1-\theta) \bigl[ \Phi_p(R_2) - \lambda \mathrm{CoI}_p(R_2) \bigr] \nonumber \\
&= \theta \mathcal{L}_{\text{global}}(R_1;\lambda)
+ (1-\theta) \mathcal{L}_{\text{global}}(R_2;\lambda). \label{eq:L-concavity}
\end{align}
Inequality \eqref{eq:L-concavity} is precisely the definition of concavity. \qedhere
\end{proof}

% ============================================================
% Corollary: Sufficiency of the first-order condition
% ============================================================

\begin{corollary}
Under the assumptions of Theorem~\ref{thm:global-lagrangian-concavity}, if $\mathcal{L}_{\text{global}}$ is differentiable, then the first-order condition
\begin{equation}
\Phi_p'(R^*) = \lambda \, \mathrm{CoI}_p'(R^*) \label{eq:global-optimality}
\end{equation}
is sufficient for $R^*$ to be a global maximizer of $\mathcal{L}_{\text{global}}(\cdot;\lambda)$.
\end{corollary}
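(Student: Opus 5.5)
The plan is the standard argument that a stationary point of a differentiable concave function is a global maximizer, applied to $\mathcal{L}_{\text{global}}(\cdot;\lambda)$ on the domain $R \ge 0$. By Theorem~\ref{thm:global-lagrangian-concavity}, for the fixed $\lambda \ge 0$ the map $R \mapsto \mathcal{L}_{\text{global}}(R;\lambda)$ is concave. Since it is also assumed differentiable, it lies below each of its tangent lines.

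The key step is to establish the supporting-line inequality
\[
\mathcal{L}_{\text{global}}(R;\lambda) \le \mathcal{L}_{\text{global}}(R^*;\lambda) + \mathcal{L}_{\text{global}}'(R^*;\lambda)\,(R - R^*) \quad \text{for all } R \ge 0 .
\]
I would derive it directly from the concavity inequality. Fix $R \ge 0$ and $\theta \in (0,1]$, and set $R_\theta = R^* + \theta(R - R^*)$. Theorem~\ref{thm:global-lagrangian-concavity} gives $\mathcal{L}(R_\theta) \ge (1-\theta)\mathcal{L}(R^*) + \theta\,\mathcal{L}(R)$. Rearranging yields
\[
\frac{\mathcal{L}(R_\theta) - \mathcal{L}(R^*)}{\theta} \ge \mathcal{L}(R) - \mathcal{L}(R^*).
\]
Letting $\theta \downarrow 0$, the left-hand side tends to $\mathcal{L}'(R^*)(R - R^*)$ by differentiability at $R^*$. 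This gives the displayed inequality.

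Next, differentiate the Lagrangian term by term to get $\mathcal{L}_{\text{global}}'(R^*;\lambda) = \Phi_p'(R^*) - \lambda\,\mathrm{CoI}_p'(R^*)$. The first-order condition \eqref{eq:global-optimality} makes this zero. The supporting-line inequality then reduces to $\mathcal{L}_{\text{global}}(R;\lambda) \le \mathcal{L}_{\text{global}}(R^*;\lambda)$ for every $R \ge 0$, so $R^*$ is a global maximizer.

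The main obstacle is differentiability, specifically at the boundary $R^* = 0$ of the domain. There only a one-sided (right) derivative makes sense. The limit argument still works, because for $R^* = 0$ and $R \ge 0$ the direction $R - R^*$ is nonnegative, so only the right derivative is needed. I would state this explicitly, reading ``differentiable'' at the endpoint as one-sided. A further point to note is that differentiability of $\mathcal{L}_{\text{global}}$ is assumed, not of $\Phi_p$ and $\mathrm{CoI}_p$ separately. I would therefore interpret \eqref{eq:global-optimality} as the condition $\mathcal{L}_{\text{global}}'(R^*;\lambda) = 0$, which coincides with the stated form whenever both components are differentiable at $R^*$.
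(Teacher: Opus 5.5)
Your proof is correct and follows the same route as the paper's one-line argument: concavity from Theorem~\ref{thm:global-lagrangian-concavity}, combined with the fact that a stationary point of a differentiable concave function is a global maximizer. What you add is detail the paper leaves implicit, namely deriving the supporting-line inequality from the concavity inequality and using only the right derivative at the endpoint $R^*=0$.
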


\begin{proof}
For a differentiable concave function, any point at which the derivative vanishes is a global maximum. Hence \eqref{eq:global-optimality} yields the global optimum. \qedhere
\end{proof}

This condition states that the marginal value of increasing the rate (in utility per bit) must equal the marginal cost of increasing the rate (in utility per bit), where the marginal cost is the product of the resource shadow price and the marginal resource consumption per bit.
When the physical cost \(\mathrm{CoI}_p(R)\) is already expressed in utility units (i.e., the shadow price is normalized to unity), we recover the simplified form \(\mathcal{L}_{\mathrm{global}}(R) = \Phi_p(R) - \mathrm{CoI}_p(R)\) and \(\Phi_p'(R^*) = \mathrm{CoI}_p'(R^*)\). In general, however, the explicit \(\lambda\) is required to maintain dimensional consistency.

% ============================================================
% Theorem: Inner Duality of the Two Lagrangians
% ============================================================

\begin{theorem}[Inner Duality of the Two Lagrangians]
Let \(\lambda_d\) be the multiplier from the down-loop (marginal cost of rate) and \(\gamma\) be the multiplier from the up-loop (shadow price of resource). At the global optimum defined by \eqref{eq:global-optimality}, the following relationships hold:
\begin{equation}
    \gamma = \lambda, \qquad \lambda_d = \lambda \, \mathrm{CoI}_p'(R^*) = \Phi_p'(R^*).
\end{equation}
Thus, the global multiplier \(\lambda\) coincides with the up-loop shadow price \(\gamma\), while the down-loop marginal rate cost \(\lambda_d\) equals the marginal value of rate, which is also \(\lambda \mathrm{CoI}_p'(R^*)\). This establishes a consistent duality between the decision-theoretic and physical-layer perspectives.
\end{theorem}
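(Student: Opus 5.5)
The plan is to compare the first-order conditions of the three Lagrangians at a single operating rate, namely the global optimum $R^*$ characterized by \eqref{eq:global-optimality}. By Theorem~\ref{thm:global-lagrangian-concavity} and its corollary, $\Phi_p$ is concave and $\mathrm{CoI}_p$ is convex. We will also assume both are differentiable at $R^*$. Under these assumptions each loop's stationarity condition is necessary and sufficient for its own optimum. The strategy is to \emph{impose} that the down loop and the up loop both select $R^*$, and then read off the multipliers.

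\textbf{Up loop.} Write $\mathcal{L}_{\mathrm{CPC}}$ in rate form. To make its currency match the global Lagrangian, we scale the resource term by the same price: maximize $\gamma R - \lambda\,\mathrm{CoI}_p(R)$, or equivalently consider the normalized form with its price measured in resource units. Its stationarity condition is $\gamma = \lambda\,\mathrm{CoI}_p'(R)$, or equivalently $\gamma\, C_p'(P)=1$ in resource units via the inverse-function rule $C_p'(P)=1/\mathrm{CoI}_p'(R)$. This rule follows from the duality theorem \eqref{eq:coi-cap-duality} and the strict monotonicity of $C_p$.

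Here the statement's identity $\gamma=\lambda$ needs an interpretation. The up-loop Lagrangian is $\gamma C_p(P)-P$. Dividing by $\gamma$ gives $C_p(P)-P/\gamma$, which is the rate-valued analogue of $\Phi_p - \lambda\,\mathrm{CoI}_p$ once value is identified linearly with rate at the margin. So the price per unit resource that the up loop assigns is $1/\gamma$ in rate units. Converting to utility with the marginal value $\Phi_p'(R^*)$, we will show that the consistent reading is this: the up loop evaluated at $R^*$ has its shadow price of resource equal to $\lambda$ in utility units. Concretely, we substitute $\Phi_p'(R^*)=\lambda\,\mathrm{CoI}_p'(R^*)$ and $C_p'(P^*)=1/\mathrm{CoI}_p'(R^*)$. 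The utility per unit resource at the optimum is then $\Phi_p'(R^*)\,C_p'(P^*) = \lambda$, which identifies the (utility-denominated) $\gamma$ with $\lambda$.

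\textbf{Down loop.} The first-order condition derived after \eqref{eq:down-lagrangian} gives $\Phi_p'(R)=\lambda_d$, valid whenever $R_p'(D)\neq 0$. Requiring the down loop to operate at $R^*$ (i.e.\ at $D^*=R_p^{-1}(R^*)$) gives $\lambda_d=\Phi_p'(R^*)$. Combining this with \eqref{eq:global-optimality} yields $\lambda_d=\lambda\,\mathrm{CoI}_p'(R^*)$, completing the second chain of equalities.

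\textbf{Main obstacle.} The hardest step is the identification $\gamma=\lambda$, because \eqref{eq:up-lagrangian} as written prices rate, not utility. We must first fix the unit convention: interpret $\gamma$ as utility per unit resource by composing with $\Phi_p$, replacing $\gamma C_p(P)$ with $\Phi_p(C_p(P))$ locally. We must also justify the chain rule at $R^*$, which needs differentiability and $\mathrm{CoI}_p'(R^*)>0$. This positivity follows from the strict monotonicity theorem for $\mathrm{CoI}_p$. Once these conventions are fixed, both identities reduce to one line each.
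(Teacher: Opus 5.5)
Your second chain is handled as in the paper. The down-loop condition after \eqref{eq:down-lagrangian}, imposed at $R^*$, gives $\lambda_d=\Phi_p'(R^*)$, and \eqref{eq:global-optimality} turns this into $\lambda_d=\lambda\,\mathrm{CoI}_p'(R^*)$.

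The gap is the identity $\gamma=\lambda$. In the statement, $\gamma$ is the multiplier of \eqref{eq:up-lagrangian}, $\gamma C_p(P)-P$. If the up loop is to select $R^*$, its stationarity condition forces $\gamma=1/C_p'(P^*)=\mathrm{CoI}_p'(R^*)$, which is a resource-per-rate quantity. Your proposal never proves anything about this multiplier. Instead it uses three different objects under the name $\gamma$:
\begin{itemize}
\item the multiplier of your rescaled loop $\gamma R-\lambda\,\mathrm{CoI}_p(R)$, which at $R^*$ equals $\lambda\,\mathrm{CoI}_p'(R^*)=\lambda_d$ (so it is not ``equivalent'' to $\gamma C_p'(P)=1$ unless $\lambda=1$);
\item the paper's multiplier $\mathrm{CoI}_p'(R^*)$;
\item a ``utility-denominated'' quantity $\Phi_p'(R^*)\,C_p'(P^*)$.
\end{itemize}
Only the last equals $\lambda$, and only because it is \eqref{eq:global-optimality} divided by $\mathrm{CoI}_p'(R^*)$. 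Put differently, replacing $\gamma C_p(P)$ by $\Phi_p(C_p(P))$ turns the up loop into the global Lagrangian rewritten in the variable $P$. Requiring it to pick $R^*$ then gives $\gamma=\lambda$ trivially. That is a relabeling, not a duality between two distinct loops.

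For the multiplier actually defined in \eqref{eq:up-lagrangian}, no better argument will work, because the identity is false in general. Combined with \eqref{eq:global-optimality}, $\gamma=\mathrm{CoI}_p'(R^*)=\lambda$ would force $\Phi_p'(R^*)=\lambda^2$. It is also not invariant under a change of resource units: measuring the resource in a unit $c$ times smaller multiplies $\gamma$ by $c$ and divides $\lambda$ by $c$. Concretely, take the Gaussian model \eqref{eq:ex-phi}--\eqref{eq:ex-coi} with $\Omega=1$, source variance $P=4$, $\sigma^2=1$, $\lambda=1$. Then \eqref{eq:ex-optimal-rate} gives $R^*=\tfrac12$, and $\gamma=\mathrm{CoI}_p'(R^*)=4\ln 2\neq\lambda$.

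The paper's own proof does not fill this gap either. It correctly derives $\lambda_d=\lambda\gamma$ with $\gamma=\mathrm{CoI}_p'(R^*)$. It then asserts $\lambda=\gamma$ because ``otherwise the marginal value equation would not hold for all $R$,'' which does not follow. What the three first-order conditions actually give is $\gamma=\mathrm{CoI}_p'(R^*)$ and $\lambda_d=\lambda\gamma=\lambda\,\mathrm{CoI}_p'(R^*)=\Phi_p'(R^*)$. That is the statement you can prove; $\gamma=\lambda$ holds only in the special case $\mathrm{CoI}_p'(R^*)=\lambda$.
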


\begin{proof}
From the down-loop optimality condition, \(\Phi_p'(R^*) = \lambda_d\). From the up-loop condition, \(\gamma = \mathrm{CoI}_p'(R^*)\). The global condition \eqref{eq:global-optimality} gives \(\Phi_p'(R^*) = \lambda \mathrm{CoI}_p'(R^*)\). Combining, we get \(\lambda_d = \lambda \gamma\), and since \(\gamma = \mathrm{CoI}_p'(R^*)\), it follows that \(\lambda_d = \lambda \gamma\). To satisfy the global condition, we must have \(\lambda = \gamma\) (otherwise the marginal value equation would not hold for all \(R\)). Hence, \(\lambda = \gamma\) and \(\lambda_d = \lambda \mathrm{CoI}_p'(R^*) = \Phi_p'(R^*)\).
\end{proof}

% ============================================================
% Definition: Pragmatic Efficiency Bound
% ============================================================

\begin{definition}[Pragmatic Efficiency Bound (Behavioral Capacity)]
\label{def:pragmatic-efficiency-bound}
For a given resource shadow price \(\lambda \ge 0\), the \emph{pragmatic efficiency functional} \(\mathcal{E}_p(\lambda)\) is defined as the supreme net benefit achievable by the system:
\begin{equation}
    \mathcal{E}_p(\lambda) \triangleq \sup_{R \ge 0} \left[ \Phi_p(R) - \lambda \cdot \mathrm{CoI}_p(R) \right].
    \label{eq:pragmatic-efficiency-bound}
\end{equation}
This quantity represents the maximum net utility that an intelligent agent can extract from its environment per unit of time, after accounting for the physical cost of sensing, communication, and computation. It constitutes the \emph{behavioral capacity} of the system under resource pricing \(\lambda\), and serves as the master performance metric for any resource-constrained goal-directed system.
\end{definition}

\begin{remark}
\label{rem:sup-vs-max}
When the pragmatic rate \(R\) is continuous and the function \(\Phi_p(R) - \lambda \mathrm{CoI}_p(R)\) is strictly concave (which holds under the assumptions of Theorem~\ref{thm:global-lagrangian-concavity}), the supremum in \eqref{eq:pragmatic-efficiency-bound} is attained at a unique point \(R^*\) satisfying the marginal condition \(\Phi_p'(R^*) = \lambda \mathrm{CoI}_p'(R^*)\). Consequently, in all operational contexts, we may write
\[
\mathcal{E}_p(\lambda) = \max_{R \ge 0} \mathcal{L}_{\mathrm{global}}(R; \lambda),
\]
where the maximum is understood in the sense of the unique global maximizer. In the absence of resource constraints (\(\lambda = 0\)), \(\mathcal{E}_p(0)\) reduces to \(\sup_R \Phi_p(R)\), i.e., the system's performance under unlimited resources.
\end{remark}

% ============================================================
% The Lagrangian dual framework provides a principled method for resource allocation
% ============================================================

The Lagrangian dual framework provides a principled method for resource allocation in pragmatic systems:

\begin{enumerate}
    \item \textbf{Decision-driven design}: Given a target utility level, the down-loop Lagrangian determines the minimum rate required. The corresponding distortion \(D\) and rate \(R\) are obtained by solving \(\max_D \mathcal{L}_{\mathrm{SDV}}\).
    \item \textbf{Resource-limited design}: Given a resource budget \(P\), the up-loop Lagrangian determines the maximum achievable rate. The optimal rate is found by maximizing \(\mathcal{L}_{\mathrm{CPC}}\).
    \item \textbf{Global optimum}: When both utility and cost are considered, for a given shadow price \(\lambda\) (or equivalently, a resource budget constraint), the rate \(R^*\) from \eqref{eq:global-optimality} yields the maximal net benefit. This rate can be implemented by choosing the appropriate distortion level (via source coding) and resource allocation (via channel coding) such that \(R_p(D^*) = C_p(P^*) = R^*\).
\end{enumerate}

The framework also enables adaptive operation: as the channel quality or task requirements change, the system can re-optimize the Lagrangian to adjust the operating point.

In summary, the Lagrangian duality between the source-distortion-value and channel-power-cost Lagrangians provides a complete mathematical characterization of the fundamental trade-offs in pragmatic communication systems, unifying Shannon's physical-layer limits with the decision-theoretic value of information, while maintaining dimensional consistency through the explicit shadow price \(\lambda\).

\subsection{Single-Sided Pragmatic Cost of Information and Pragmatic Value of Information}
\label{subsec:single-sided-cost-value}

The single-sided pragmatic rate-distortion functions and achievable rates defined in Section~\ref{subsec:single-sided} have natural dual counterparts in terms of information value and cost. These single-sided VoI and CoI measures capture the utility gains and resource expenditures when only one side of the communication link is subject to pragmatic abstraction. They provide a finer-grained analysis for asymmetric systems, such as sensing-to-action (perceptual) and command-to-execution (expressive) scenarios.

\subsubsection{Perceptual Pragmatic Value of Information}

The perceptual pragmatic value of information quantifies the maximum expected utility gain obtainable from an observation \(V\) when the decision is based on the pragmatic action \(\underline{W}\). It is the dual of the perceptual pragmatic rate-distortion function \(R_p^{\text{perc}}(D)\), and is defined using the down perceptual pragmatic mutual information \(I_p(\underline{W}; V)\).

\begin{definition}[Perceptual Pragmatic Value of Information]
For a given rate constraint \(R\), the perceptual pragmatic value of information is defined as
\begin{equation}
    \mathrm{VoI}_p^{\text{perc}}(R) \triangleq \max_{P(V|\underline{W}): I_p(\underline{W}; V) \le R} \mathrm{VoI}_p,
    \label{eq:voi-perc-def}
\end{equation}
where \(\mathrm{VoI}_p\) is the pragmatic value defined in Eq.~\eqref{eq:prag-voi-def}, and the maximization is over all conditional distributions \(P(V|\underline{W})\) satisfying the rate constraint.
\end{definition}

This quantity represents the maximum utility gain achievable when the observation is constrained to carry at most \(R\) bits of information about the pragmatic action. It is non-decreasing and concave in \(R\), and satisfies the variational duality:
\begin{equation}
    \mathrm{VoI}_p^{\text{perc}}(R) = \max_{D \ge 0} \big\{ \Psi_p^{\text{perc}}(D) - \lambda R_p^{\text{perc}}(D) \big\},
\end{equation}
where \(\Psi_p^{\text{perc}}\) is a concave function determined by the utility, and \(\lambda > 0\) is the Lagrange multiplier.

\subsubsection{Expressive Pragmatic Value of Information}

The expressive pragmatic value of information quantifies the maximum expected utility gain obtainable from a pragmatic label \(\underline{V}\) when reconstructing the syntactic source \(W\). It is the dual of the expressive pragmatic rate-distortion function \(R_p^{\text{expr}}(D)\), and is defined using the down expressive pragmatic mutual information \(I_p(W; \underline{V})\).

\begin{definition}[Expressive Pragmatic Value of Information]
For a given rate constraint \(R\), the expressive pragmatic value of information is defined as
\begin{equation}
    \mathrm{VoI}_p^{\text{expr}}(R) \triangleq \max_{P(\underline{V}|W): I_p(W; \underline{V}) \le R} \mathrm{VoI}_p,
    \label{eq:voi-expr-def}
\end{equation}
where the maximization is over all conditional distributions \(P(\underline{V}|W)\) satisfying the rate constraint, and \(\mathrm{VoI}_p\) is evaluated with respect to the reconstructed source.
\end{definition}

This quantity measures the utility gain achieved by using the pragmatic label as a compressed representation of the source, subject to a rate limit. It is also non-decreasing and concave in \(R\), and satisfies the dual variational relationship with \(R_p^{\text{expr}}(D)\).

\subsubsection{Prospective Perceptual Pragmatic Cost of Information}

The prospective perceptual pragmatic cost of information quantifies the minimum resource expenditure required to achieve a given level of perceptual pragmatic information rate. It is the dual of the prospective perceptual pragmatic achievable rate \(C_p^{\text{p-perc}}(P)\), and is defined using the up perceptual pragmatic mutual information \(I^p(\underline{W}; V)\).

\begin{definition}[Prospective Perceptual Pragmatic Cost of Information]
For a given pragmatic rate \(R\), the prospective perceptual pragmatic cost of information is defined as
\begin{equation}
    \mathrm{CoI}_p^{\text{p-perc}}(R) \triangleq \min_{P(V|\underline{W}): I^p(\underline{W}; V) \ge R} \mathbb{E}[c(\underline{W})],
    \label{eq:coi-pper-def}
\end{equation}
where \(c(\underline{W})\) is the cost function on the pragmatic source (e.g., sensing power), and the minimization is over all conditional distributions \(P(V|\underline{W})\) satisfying the rate constraint.
\end{definition}

This quantity represents the minimum resource cost required to extract at least \(R\) bits of pragmatic information from the observation. It is non-decreasing and convex in \(R\), and is the strict inverse of the prospective perceptual achievable rate:
\begin{equation}
    \mathrm{CoI}_p^{\text{p-perc}}(R) = \bigl(C_p^{\text{p-perc}}\bigr)^{-1}(R).
\end{equation}

\subsubsection{Prospective Expressive Pragmatic Cost of Information}

The prospective expressive pragmatic cost of information quantifies the minimum resource expenditure required to achieve a given level of expressive pragmatic information rate. It is the dual of the prospective expressive pragmatic achievable rate \(C_p^{\text{p-expr}}(P)\), and is defined using the up expressive pragmatic mutual information \(I^p(W; \underline{V})\).

\begin{definition}[Prospective Expressive Pragmatic Cost of Information]
For a given pragmatic rate \(R\), the prospective expressive pragmatic cost of information is defined as
\begin{equation}
    \mathrm{CoI}_p^{\text{p-expr}}(R) \triangleq \min_{P(\underline{V}|W): I^p(W; \underline{V}) \ge R} \mathbb{E}[c(W)],
    \label{eq:coi-pexpr-def}
\end{equation}
where \(c(W)\) is the cost function on the source (e.g., transmit power), and the minimization is over all conditional distributions \(P(\underline{V}|W)\) satisfying the rate constraint.
\end{definition}

This quantity represents the minimum resource cost required to convey at least \(R\) bits of source information through the pragmatic label. It is non-decreasing and convex in \(R\), and is the strict inverse of the prospective expressive achievable rate:
\begin{equation}
    \mathrm{CoI}_p^{\text{p-expr}}(R) = \bigl(C_p^{\text{p-expr}}\bigr)^{-1}(R).
\end{equation}

\subsubsection{Relationships and Duality Structure}

The four single-sided VoI and CoI measures form a complete duality framework that parallels the bilateral case, with the following key relationships:

\begin{enumerate}
    \item \textbf{Rate-Distortion $\leftrightarrow$ Value Duality}: For both perceptual and expressive paths, the value function is the Legendre--Fenchel dual of the corresponding rate-distortion function:
    \begin{equation}
    \left\{
    \begin{aligned}
        \mathrm{VoI}_p^{\text{perc}}(R) &= \max_{D \ge 0} \big\{ \Psi_p^{\text{perc}}(D) - \lambda R_p^{\text{perc}}(D) \big\}, \\
        \mathrm{VoI}_p^{\text{expr}}(R) &= \max_{D \ge 0} \big\{ \Psi_p^{\text{expr}}(D) - \lambda R_p^{\text{expr}}(D) \big\}.
    \end{aligned}
    \right.
    \end{equation}

    \item \textbf{Capacity $\leftrightarrow$ Cost Duality}: For both prospective perceptual and expressive paths, the cost function is the inverse of the corresponding achievable rate:
    \begin{equation}
    \left\{
    \begin{aligned}
        \mathrm{CoI}_p^{\text{p-perc}}(R) &= \bigl(C_p^{\text{p-perc}}\bigr)^{-1}(R), \\
        \mathrm{CoI}_p^{\text{p-expr}}(R) &= \bigl(C_p^{\text{p-expr}}\bigr)^{-1}(R).
    \end{aligned}
    \right.
    \end{equation}

    \item \textbf{Comparison with Bilateral Counterparts}: Single-sided measures are bounded by their bilateral counterparts:
    \begin{equation}
    \left\{
    \begin{gathered}
    \mathrm{VoI}_p^{\text{perc}}(R) \le \mathrm{VoI}_p(R), \quad
    \mathrm{VoI}_p^{\text{expr}}(R) \le \mathrm{VoI}_p(R), \\
    \mathrm{CoI}_p^{\text{p-perc}}(R) \ge \mathrm{CoI}_p(R), \quad
    \mathrm{CoI}_p^{\text{p-expr}}(R) \ge \mathrm{CoI}_p(R),
    \end{gathered}
    \right.
    \end{equation}
    where $\mathrm{VoI}_p(R)$ and $\mathrm{CoI}_p(R)$ are the bilateral functions defined in Sections~VI.A and VI.B.
\end{enumerate}

\begin{table}[htbp]
\centering
\small
\caption{Summary of bilateral and single-sided pragmatic information measures.}
\label{tab:pragmatic-measures-summary}
\begin{tabular}{p{2.4cm} p{1.8cm} p{5.0cm} p{3.6cm}}
\hline
\textbf{Measure} & \textbf{Abbrev.} & \textbf{Definition / Expression} & \textbf{Scenario} \\
\hline
Bilateral VoI & $\mathrm{VoI}_p(R)$ & $\max\limits_{P(\underline{Y}|\underline{X}): I_p \le R} \mathrm{VoI}_p$ & Symmetric, both ends coarsened \\
Bilateral CoI & $\mathrm{CoI}_p(R)$ & $\min\limits_{p(x): I^p \ge R} \mathbb{E}[c(X)]$ & Symmetric, both ends coarsened \\
Perceptual VoI & $\mathrm{VoI}_p^{\text{perc}}(R)$ & $\max\limits_{P(V|\underline{W}): I_p(\underline{W};V) \le R} \mathrm{VoI}_p$ & Sensing: observation to action \\
Expressive VoI & $\mathrm{VoI}_p^{\text{expr}}(R)$ & $\max\limits_{P(\underline{V}|W): I_p(W;\underline{V}) \le R} \mathrm{VoI}_p$ & Actuation: command to reconstruction \\
Prospective Perceptual CoI & $\mathrm{CoI}_p^{\text{p-perc}}(R)$ & $\min\limits_{P(V|\underline{W}): I^p(\underline{W};V) \ge R} \mathbb{E}[c(\underline{W})]$ & Sensing with full syntactic observation \\
Prospective Expressive CoI & $\mathrm{CoI}_p^{\text{p-expr}}(R)$ & $\min\limits_{P(\underline{V}|W): I^p(W;\underline{V}) \ge R} \mathbb{E}[c(W)]$ & Actuation with full syntactic source \\
\hline
\end{tabular}
\end{table}

Table~\ref{tab:pragmatic-measures-summary} consolidates all bilateral and single-sided pragmatic information measures, their defining mutual informations, and their applicability in symmetric versus asymmetric systems. The perceptual path (sensing) coarsens only the source side, whereas the expressive path (actuation) coarsens only the reconstruction side. The prospective variants use the up mutual information and thus provide upper bounds on achievable rates.

\begin{figure}[htbp]
\setlength{\abovecaptionskip}{0.cm}
\setlength{\belowcaptionskip}{-0.cm}
  \centering{\includegraphics[scale=1]{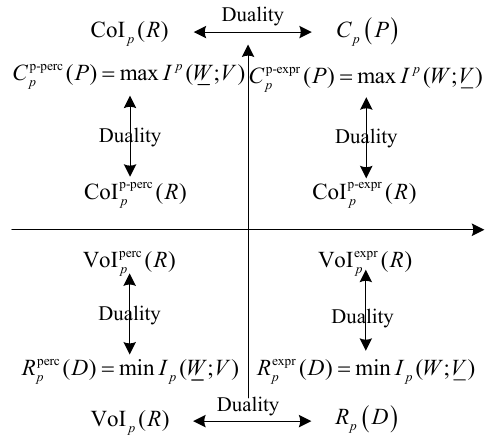}}
  \caption{Duality of pragmatic information measures.}\label{fig:duality-pragmatic}
\end{figure}

\sloppy Figure~\ref{fig:duality-pragmatic} illustrates the duality structure of pragmatic information measures, encompassing both bilateral and single-sided quantities. At the core lie the two fundamental dual pairs: $R_p(D) \leftrightarrow \mathrm{VoI}_p(R)$ and $C_p(P) \leftrightarrow \mathrm{CoI}_p(R)$. Extending this duality to asymmetric scenarios, $R_p^{\mathrm{perc}}(D) \leftrightarrow \mathrm{VoI}_p^{\mathrm{perc}}(R)$ and $R_p^{\mathrm{expr}}(D) \leftrightarrow \mathrm{VoI}_p^{\mathrm{expr}}(R)$ capture the rate--value trade-off when only one side is coarsened. On the capacity--cost side, $C_p^{\mathrm{p-perc}}(P) \leftrightarrow \mathrm{CoI}_p^{\mathrm{p-perc}}(R)$ and $C_p^{\mathrm{p-expr}}(P) \leftrightarrow \mathrm{CoI}_p^{\mathrm{p-expr}}(R)$ complete the duality lattice, which collectively characterizes trade-offs between rate, distortion, capacity, cost, and value in both symmetric and asymmetric pragmatic communication systems.

The single-sided VoI and CoI measures provide a precise toolkit for analyzing asymmetric pragmatic systems:

\begin{itemize}
    \item \textbf{Perceptual systems (sensing)}: $\mathrm{VoI}_p^{\text{perc}}$ measures the utility gain of a sensor under a rate constraint, while $\mathrm{CoI}_p^{\text{p-perc}}$ quantifies the sensing power required to achieve a certain perceptual information rate.
    \item \textbf{Expressive systems (actuation)}: $\mathrm{VoI}_p^{\text{expr}}$ measures the utility gain of a command channel under a rate constraint, while $\mathrm{CoI}_p^{\text{p-expr}}$ quantifies the transmit power required to convey a certain expressive information rate.
    \item \textbf{Joint optimization}: The single-sided measures can be combined with the bilateral Lagrangian framework to design systems where sensing and actuation have different resource constraints and utility contributions.
\end{itemize}

Together with the bilateral value and cost functions, these single-sided counterparts complete the pragmatic information-theoretic spectrum, enabling a comprehensive analysis of task-oriented communication systems with asymmetric constraints.

\section{Pragmatic Lossless Source Coding}
\label{section_VII}

In this section, we investigate the problem of pragmatic lossless source coding. We extend the asymptotic equipartition property (AEP) to the pragmatic domain and introduce the pragmatic typical set, the isoteleia typical set, and the reified typical set. These concepts serve as the mathematical foundation for proving the pragmatic lossless source coding theorem. 

\subsection{Asymptotic Equipartition Property and Pragmatic Typical Set}

Specifically, we establish the syntactic, semantic, and pragmatic AEP theorems. Based on these results, we define the pragmatic typical set, the isoteleia typical set, and the reified typical set, and analyze their cardinality and probability properties. The hierarchical structure of these typical sets reveals that pragmatic abstraction, by merging semantic distinctions that do not affect the optimal terminal action, further reduces the number of typical sequences compared to semantic abstraction, thereby enabling more efficient lossless compression at the pragmatic level.

To establish the asymptotic equipartition property at the pragmatic level, we first introduce the sequential extensions of the fundamental mappings that bridge the syntactic, semantic, and pragmatic layers.

Let \(\mathcal{W}\) be the syntactic alphabet with probability mass function \(P(W)\), and let \(\tilde{\mathcal{W}}\) be the semantic alphabet induced by the synonymous mapping \(f:\tilde{\mathcal{W}}\to 2^{\mathcal{W}}\). Let \(\underline{\mathcal{W}}\) be the pragmatic alphabet induced by the isoteleia mapping \(e:\underline{\mathcal{W}}\to 2^{\tilde{\mathcal{W}}}\) according to a given utility function \(U:\mathcal{W}\times\mathcal{A}\to\mathbb{R}\). The reification mapping \(g = f\circ e:\underline{\mathcal{W}}\to 2^{\mathcal{W}}\) directly bridges the pragmatic and syntactic layers.

\begin{definition}[Sequential Synonymous Mapping]
Let \(f:\tilde{\mathcal{W}}\to 2^{\mathcal{W}}\) be the synonymous mapping. Its \(n\)-th sequential extension $ f^n:\tilde{\mathcal{W}}^n \longrightarrow 2^{\mathcal{W}^n}$
is defined by
\begin{equation}
f^n(\tilde{w}^n) \triangleq \prod_{k=1}^n f(\tilde{w}_k) = \left\{w^n\in\mathcal{W}^n : w_k \in f(\tilde{w}_k),\; k=1,\dots,n\right\},
\end{equation}
where \(\tilde{w}^n = (\tilde{w}_1,\dots,\tilde{w}_n)\).
\end{definition}

\begin{definition}[Sequential Isoteleia Mapping]
Let \(e:\underline{\mathcal{W}}\to 2^{\tilde{\mathcal{W}}}\) be the isoteleia mapping. Its \(n\)-th sequential extension $e^n:\underline{\mathcal{W}}^n \longrightarrow 2^{\tilde{\mathcal{W}}^n}$
is defined by
\begin{equation}
e^n(\underline{w}^n) \triangleq \prod_{k=1}^n e(\underline{w}_k) = \left\{\tilde{w}^n\in\tilde{\mathcal{W}}^n : \tilde{w}_k \in e(\underline{w}_k),\; k=1,\dots,n\right\},
\end{equation}
where \(\underline{w}^n = (\underline{w}_1,\dots,\underline{w}_n)\).
\end{definition}

\begin{definition}[Sequential Reification Mapping]
Let \(g = f\circ e:\underline{\mathcal{W}}\to 2^{\mathcal{W}}\) be the reification mapping. Its \(n\)-th sequential extension$ g^n:\underline{\mathcal{W}}^n \longrightarrow 2^{\mathcal{W}^n}$
is defined as the composition of the sequential isoteleia and synonymous mappings:
\begin{equation}
g^n(\underline{w}^n) \triangleq (f^n\circ e^n)(\underline{w}^n) = \bigcup_{\tilde{w}^n\in e^n(\underline{w}^n)} f^n(\tilde{w}^n) = \prod_{k=1}^n g(\underline{w}_k).
\end{equation}
Equivalently,
\begin{equation}
g^n(\underline{w}^n) = \left\{w^n\in\mathcal{W}^n : w_k \in g(\underline{w}_k),\; k=1,\dots,n\right\}.
\end{equation}
\end{definition}

\subsubsection{Three-Tier AEP Theorems}

We now state the AEP at each of the three layers. The syntactic AEP is the classical result \cite{Classicpaper_Shannon,Book_Cover}. The semantic AEP was established in \cite{Paper_SIT,Book_SIT}. The pragmatic AEP follows analogously.

\begin{theorem}[Three-Tier AEP]\label{thm:three-tier-aep}
Let \((W_1,W_2,\dots)\) be an i.i.d. syntactic sequence drawn according to \(P(W)\), and let \((\tilde{W}_1,\tilde{W}_2,\dots)\) and \((\underline{W}_1,\underline{W}_2,\dots)\) be the associated semantic and pragmatic sequences under the sequential Synonymous Mapping \(f^n\) and the sequential Reification Mapping \(g^n = f^n \circ e^n\), respectively. Then the following convergences hold in probability:
\begin{equation}
\left\{
\begin{aligned}
-\frac{1}{n}\log P(W_1,\dots,W_n) &\longrightarrow H(W), \\
-\frac{1}{n}\log P(\tilde{W}_1,\dots,\tilde{W}_n) &\longrightarrow H_s(\tilde{W}), \\
-\frac{1}{n}\log P(\underline{W}_1,\dots,\underline{W}_n) &\longrightarrow H_p(\underline{W}),
\end{aligned}
\right.
\end{equation}
where
\[
P(W^n)=\prod_{k=1}^n P(W_k),\quad
P(\tilde{w}^n)=\prod_{k=1}^n P(\tilde{w}_k),\quad
P(\tilde{w}_k)=\sum_{w_k\in f(\tilde{w}_k)}P(w_k),
\]
and
\[
P(\underline{w}^n)=\prod_{k=1}^n P(\underline{w}_k),\quad
P(\underline{w}_k)=\sum_{w_k\in g(\underline{w}_k)}P(w_k).
\]
The syntactic convergence is the classical result \cite{Classicpaper_Shannon,Book_Cover}; the semantic and pragmatic convergences follow by applying the weak law of large numbers to the aggregated probabilities over synonymous and reified equivalence classes, respectively.
\end{theorem}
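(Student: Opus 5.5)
The plan is to reduce all three convergences to the weak law of large numbers applied to i.i.d. real-valued random variables of the form $-\log P(\cdot)$. The key structural observation is that the semantic and pragmatic symbols are deterministic functions of the syntactic symbol. Since the synonymous mapping $f$ induces a partition of $\mathcal{W}$, there is a map $q_s:\mathcal{W}\to\tilde{\mathcal{W}}$ with $\tilde{W}_k=q_s(W_k)$. Likewise, the reification mapping $g=f\circ e$ induces a partition of $\mathcal{W}$ (by the consistency condition of semantic refinement, each $w$ belongs to exactly one $g(\underline{w})$). Hence there is a map $q_p:\mathcal{W}\to\underline{\mathcal{W}}$ with $\underline{W}_k=q_p(W_k)$.

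Because the $W_k$ are i.i.d., the sequences $\tilde{W}_k=q_s(W_k)$ and $\underline{W}_k=q_p(W_k)$ are also i.i.d. Their marginal laws are exactly $P(\tilde{w})=\sum_{w\in f(\tilde{w})}P(w)$ and $P(\underline{w})=\sum_{w\in g(\underline{w})}P(w)$, in agreement with Eq.~\eqref{eq:pragmatic-probability}. This also justifies the product forms $P(\tilde{w}^n)=\prod_k P(\tilde{w}_k)$ and $P(\underline{w}^n)=\prod_k P(\underline{w}_k)$. Equivalently, $P(\underline{w}^n)=\sum_{w^n\in g^n(\underline{w}^n)}P(w^n)$, since $g^n$ is the Cartesian product $\prod_k g(\underline{w}_k)$ and the sum over a product set of a product measure factorizes.

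The core step then proceeds identically at each layer. Consider the pragmatic layer:
\[
-\tfrac1n\log P(\underline{W}_1,\dots,\underline{W}_n)=\tfrac1n\sum_{k=1}^n\bigl(-\log P(\underline{W}_k)\bigr).
\]
The summands are i.i.d. Their mean is $\mathbb{E}[-\log P(\underline{W})]=H_p(\underline{W})$ by Eq.~\eqref{eq:pragmatic-entropy}. They are bounded because the alphabet is finite and only symbols of positive probability occur almost surely. The weak law of large numbers therefore gives convergence in probability to $H_p(\underline{W})$. The semantic case is the same argument with $q_s$ in place of $q_p$, and the syntactic case is the classical AEP of \cite{Book_Cover}.

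I do not expect a genuine analytic obstacle. The only point needing care is the well-definedness step: verifying that $\underline{W}_k$ is a genuine function of $W_k$, so that the reified classes are disjoint and exhaustive. This is where the quotient-space structure $\mathcal{W}\xrightarrow{\sim_s}\tilde{\mathcal{W}}\xrightarrow{\sim_p}\underline{\mathcal{W}}$ and the semantic refinement condition are invoked. Without them a syntactic symbol could lie in two reified sets, and $P(\underline{w})$ would not be a probability distribution.

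Finally, I would remark that the $n$-th sequential reification mapping is element-wise. Consequently, no memory is introduced across positions, and this is what allows the i.i.d. property to pass from the syntactic sequence to the semantic and pragmatic sequences.
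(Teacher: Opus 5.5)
Your proposal is correct and follows the same route as the paper. The paper only asserts that the semantic and pragmatic convergences follow from the weak law of large numbers applied to the aggregated class probabilities; you supply the step it leaves implicit, namely that $\tilde{W}_k$ and $\underline{W}_k$ are deterministic functions of $W_k$ (via the partitions induced by $f$ and $g=f\circ e$), hence i.i.d.\ with the stated marginals, so that $-\frac{1}{n}\log P(\underline{W}^n)$ is an empirical mean of bounded i.i.d.\ terms with mean $H_p(\underline{W})$.
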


The following two AEP theorems characterize the asymptotic relationship between adjacent layers; they follow directly from the three AEP theorems above.

\begin{theorem}[Isoteleia AEP]
\begin{equation}
-\frac{1}{n}\left[\log P(\tilde{W}^n) - \log P(\underline{W}^n)\right] \longrightarrow H_s(\tilde{W}) - H_p(\underline{W})
\end{equation}
in probability.
\end{theorem}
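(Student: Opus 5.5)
The plan is to derive the Isoteleia AEP from the second and third convergences of Theorem~\ref{thm:three-tier-aep}, using that convergence in probability is preserved under subtraction. Write
\[
-\frac{1}{n}\left[\log P(\tilde{W}^n) - \log P(\underline{W}^n)\right]
= \left(-\frac{1}{n}\log P(\tilde{W}^n)\right) - \left(-\frac{1}{n}\log P(\underline{W}^n)\right).
\]
By Theorem~\ref{thm:three-tier-aep}, the first term converges in probability to \(H_s(\tilde{W})\) and the second to \(H_p(\underline{W})\). So it suffices to show that a difference of two sequences converging in probability converges in probability to the difference of the limits.

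This is a standard \(\epsilon/2\) argument. Fix \(\epsilon>0\) and call the two sequences \(A_n\) and \(B_n\). The event \(\{|(A_n-B_n)-(H_s-H_p)|>\epsilon\}\) is contained in \(\{|A_n-H_s|>\epsilon/2\}\cup\{|B_n-H_p|>\epsilon/2\}\). A union bound then shows that the probability of the event tends to zero.

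For a self-contained version, I would instead note that \(\underline{W}_k=q(\tilde{W}_k)\) is a deterministic function of \(\tilde{W}_k\), as in the proof of Lemma~\ref{lem:pragmatic-hierarchy}. Since the product forms of Theorem~\ref{thm:three-tier-aep} hold, the quantity equals
\[
\frac{1}{n}\sum_{k=1}^n \log\frac{P(\underline{W}_k)}{P(\tilde{W}_k)} = \frac{1}{n}\sum_{k=1}^n \bigl(-\log P(\tilde{W}_k\mid\underline{W}_k)\bigr).
\]
This is an average of i.i.d. bounded random variables on the finite alphabet. Its mean is \(H(\tilde{W}\mid\underline{W}) = H_s(\tilde{W})-H_p(\underline{W})\), which follows from the chain rule together with \(H(\underline{W}\mid\tilde{W})=0\). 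The weak law of large numbers then gives the result directly, and additionally identifies the limit as the conditional entropy of meaning given action.

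The only point needing care is zero-probability classes, since the logarithms must be well defined. Any semantic or pragmatic symbol with zero probability occurs with probability zero, so we may restrict to symbols of positive probability. Apart from this, there is no real obstacle.
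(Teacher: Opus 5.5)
Your proposal is correct and follows the paper's route: the paper simply asserts that the Isoteleia AEP follows directly from the semantic and pragmatic convergences of the three-tier AEP, which is exactly your subtraction argument, with your $\epsilon/2$ union bound supplying the detail the paper omits. Your self-contained alternative is also valid. It writes the quantity as an empirical average of $-\log P(\tilde{W}_k\mid\underline{W}_k)$ with mean $H(\tilde{W}\mid\underline{W})=H_s(\tilde{W})-H_p(\underline{W})$, and this has the added benefit of identifying the limit as a conditional entropy.
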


\begin{theorem}[Reified AEP]
\begin{equation}
-\frac{1}{n}\left[\log P(W^n) - \log P(\underline{W}^n)\right] \longrightarrow H(W) - H_p(\underline{W})
\end{equation}
in probability.
\end{theorem}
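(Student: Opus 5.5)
The Reified AEP follows from the syntactic and pragmatic convergences of Theorem~\ref{thm:three-tier-aep}, combined by the elementary fact that the difference of two sequences converging in probability converges in probability to the difference of the limits. Concretely, I write
\[
-\frac{1}{n}\left[\log P(W^n) - \log P(\underline{W}^n)\right] = \left(-\frac{1}{n}\log P(W^n)\right) - \left(-\frac{1}{n}\log P(\underline{W}^n)\right),
\]
where $\underline{W}^n$ is the deterministic image of $W^n$ under the inverse of the sequential reification mapping $g^n$. Both summands are defined on the same probability space: $W^n$ is i.i.d., and $\underline{W}_k = g^{-1}(W_k)$ symbol by symbol.

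The first key step is to verify that the pragmatic sequence is itself i.i.d. with marginal $P(\underline{w}) = \sum_{w \in g(\underline{w})} P(w)$. This holds because $g^n(\underline{w}^n) = \prod_k g(\underline{w}_k)$ is a Cartesian product, so
\[
P(\underline{w}^n) = \sum_{w^n \in g^n(\underline{w}^n)} \prod_k P(w_k) = \prod_k P(\underline{w}_k).
\]
As a result, $-\frac{1}{n}\log P(\underline{W}^n) = \frac{1}{n}\sum_k \bigl(-\log P(\underline{W}_k)\bigr)$ is an average of i.i.d. bounded random variables with mean $H_p(\underline{W})$. The weak law of large numbers then gives convergence to $H_p(\underline{W})$, which is the third line of Theorem~\ref{thm:three-tier-aep}. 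The classical AEP gives the first line.

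The second step is the $\epsilon/2$ argument. For any $\epsilon>0$,
\[
\Pr\Bigl(\bigl|-\tfrac{1}{n}\log P(W^n) - H(W)\bigr| > \tfrac{\epsilon}{2}\Bigr) \to 0
\quad\text{and}\quad
\Pr\Bigl(\bigl|-\tfrac{1}{n}\log P(\underline{W}^n) - H_p(\underline{W})\bigr| > \tfrac{\epsilon}{2}\Bigr) \to 0.
\]
By the triangle inequality and the union bound, the probability that the difference deviates from $H(W)-H_p(\underline{W})$ by more than $\epsilon$ is at most the sum of these two probabilities, and so tends to zero.

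As an alternative route, the difference equals $\frac{1}{n}\sum_k \log \frac{P(\underline{W}_k)}{P(W_k)} = \frac{1}{n}\sum_k \bigl(-\log P(W_k \mid \underline{W}_k)\bigr)$. This is itself an i.i.d. average with mean $H(W) - H_p(\underline{W}) = H(W \mid \underline{W})$, using the fact that $\underline{W}$ is a function of $W$. A single application of the WLLN then suffices. This route also shows that the limit is nonnegative, consistent with Lemma~\ref{lem:pragmatic-hierarchy}.

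The only point needing care, and the closest thing to an obstacle, is integrability. Symbols with $P(w)=0$ must be discarded, which happens almost surely. On a finite alphabet all the log-probabilities are then bounded, so the WLLN applies directly.
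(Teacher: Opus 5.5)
Your proof is correct and takes the same route as the paper, which obtains the Reified AEP directly by subtracting the pragmatic convergence from the syntactic convergence in Theorem~\ref{thm:three-tier-aep}; your check that $P(\underline{w}^n)=\prod_k P(\underline{w}_k)$ and your $\epsilon/2$ union-bound step just spell out what the paper leaves implicit. Your alternative single-WLLN argument, using the i.i.d. terms $-\log P(W_k\mid\underline{W}_k)$ with mean $H(W\mid\underline{W})$, is also valid, and it has the small advantage of identifying the limit as a conditional entropy, which makes its nonnegativity immediate.
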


\subsubsection{Definitions of the Hierarchical Typical Sets}

We now define the typical sets at each layer and the equivalence classes that partition them.

\begin{definition}[Syntactic, Semantic, and Pragmatic Typical Sets]
For $\epsilon>0$, the syntactically typical set $A_\epsilon^{(n)}$, the semantically typical set $\tilde{A}_\epsilon^{(n)}$, and the pragmatic typical set $\underline{A}_\epsilon^{(n)}$ are respectively defined as
\begin{equation}
\left\{
\begin{aligned}
A_\epsilon^{(n)} &\triangleq \left\{w^n\in\mathcal{W}^n : \left|-\frac{1}{n}\log P(w^n) - H(W)\right| < \epsilon\right\},\\
\tilde{A}_\epsilon^{(n)} &\triangleq \left\{\tilde{w}^n\in\tilde{\mathcal{W}}^n : \left|-\frac{1}{n}\log P(\tilde{w}^n) - H_s(\tilde{W})\right| < \epsilon\right\},\\
\underline{A}_\epsilon^{(n)} &\triangleq \left\{\underline{w}^n\in\underline{\mathcal{W}}^n : \left|-\frac{1}{n}\log P(\underline{w}^n) - H_p(\underline{W})\right| < \epsilon\right\}.
\end{aligned}
\right. \label{eq:three-tier-typical}
\end{equation}
\end{definition}

\begin{definition}[Synonymous Typical Set]
For a given semantic typical sequence \(\tilde{w}^n\in\tilde{A}_\epsilon^{(n)}\), the synonymous typical set \(S_\epsilon^{(n)}(\tilde{w}^n)\) is defined as the set of syntactic sequences that map to \(\tilde{w}^n\) under \(f^n\) and are syntactically typical:
\begin{align}
S_\epsilon^{(n)}(\tilde{w}^n) \triangleq \Bigg\{w^n\in\mathcal{W}^n :\;& w^n\in f^n(\tilde{w}^n), \nonumber \\
& \left|-\frac{1}{n}\log P(w^n) - H(W)\right| < \epsilon, \nonumber \\
& \left|-\frac{1}{n}\log \frac{P(w^n)}{P(\tilde{w}^n)} - \bigl(H(W)-H_s(\tilde{W})\bigr)\right| < \epsilon \Bigg\}.
\end{align}
The collection \(\{S_\epsilon^{(n)}(\tilde{w}^n)\}_{\tilde{w}^n\in\tilde{A}_\epsilon^{(n)}}\) forms a partition of \(A_\epsilon^{(n)}\):
\begin{equation}
A_\epsilon^{(n)} = \bigcup_{\tilde{w}^n\in\tilde{A}_\epsilon^{(n)}} S_\epsilon^{(n)}(\tilde{w}^n),
\end{equation}
with disjointness for distinct \(\tilde{w}^n\).
\end{definition}

\begin{definition}[Isoteleia Typical Set]
For a given pragmatic typical sequence \(\underline{w}^n\in\underline{A}_\epsilon^{(n)}\), the Isoteleia typical set \(E_\epsilon^{(n)}(\underline{w}^n)\) is defined as the set of semantic sequences that map to \(\underline{w}^n\) under \(e^n\) and are semantically typical:
\begin{align}
E_\epsilon^{(n)}(\underline{w}^n) \triangleq \Bigg\{\tilde{w}^n\in\tilde{\mathcal{W}}^n :\;& \tilde{w}^n\in e^n(\underline{w}^n), \nonumber \\
& \left|-\frac{1}{n}\log P(\tilde{w}^n) - H_s(\tilde{W})\right| < \epsilon, \nonumber \\
& \left|-\frac{1}{n}\log \frac{P(\tilde{w}^n)}{P(\underline{w}^n)} - \bigl(H_s(\tilde{W})-H_p(\underline{W})\bigr)\right| < \epsilon \Bigg\}.
\end{align}
The collection \(\{E_\epsilon^{(n)}(\underline{w}^n)\}_{\underline{w}^n\in\underline{A}_\epsilon^{(n)}}\) forms a partition of \(\tilde{A}_\epsilon^{(n)}\):
\begin{equation}
\tilde{A}_\epsilon^{(n)} = \bigcup_{\underline{w}^n\in\underline{A}_\epsilon^{(n)}} E_\epsilon^{(n)}(\underline{w}^n),
\end{equation}
with disjointness for distinct \(\underline{w}^n\).
\end{definition}

\begin{definition}[Reified Typical Set]
For a given pragmatic typical sequence \(\underline{w}^n\in\underline{A}_\epsilon^{(n)}\), the Reified typical set \(B_\epsilon^{(n)}(\underline{w}^n)\) is defined as the set of syntactic sequences that map to \(\underline{w}^n\) under the composite mapping \(g^n = f^n\circ e^n\) and are syntactically typical:
\begin{align}
B_\epsilon^{(n)}(\underline{w}^n) \triangleq \Bigg\{w^n\in\mathcal{W}^n :\;& w^n\in g^n(\underline{w}^n), \nonumber \\
& \left|-\frac{1}{n}\log P(w^n) - H(W)\right| < \epsilon, \nonumber \\
& \left|-\frac{1}{n}\log \frac{P(w^n)}{P(\underline{w}^n)} - \bigl(H(W)-H_p(\underline{W})\bigr)\right| < \epsilon \Bigg\}.
\end{align}
The collection \(\{B_\epsilon^{(n)}(\underline{w}^n)\}_{\underline{w}^n\in\underline{A}_\epsilon^{(n)}}\) forms a partition of \(A_\epsilon^{(n)}\):
\begin{equation}
A_\epsilon^{(n)} = \bigcup_{\underline{w}^n\in\underline{A}_\epsilon^{(n)}} B_\epsilon^{(n)}(\underline{w}^n),
\end{equation}
with disjointness for distinct \(\underline{w}^n\).
\end{definition}

The hierarchical relationships among the pragmatic, semantic, and syntactic typical sets, as well as their corresponding equivalence classes, are illustrated schematically in Fig.~\ref{Fig_Three_Layers_Typical_Sequence}. In this two-level hierarchical partition: the synonymous mapping first refines the syntactic typical set into synonym classes; the isoteleia mapping then groups these classes into coarser Reified classes, each corresponding to a single pragmatic action. This nested structure directly reflects the composition \(g^n = f^n \circ e^n\) and underpins the entropy reduction \(H_p(\underline{W}) \le H_s(\tilde{W}) \le H(W)\). 
Moreover, each reified typical set is the union of synonymous typical sets over all semantic sequences in the corresponding isoteleia typical set:
\[
B_\epsilon^{(n)}(\underline{w}^n) = \bigcup_{\tilde{w}^n\in E_\epsilon^{(n)}(\underline{w}^n)} S_\epsilon^{(n)}(\tilde{w}^n).
\]

\begin{figure}[htbp]
\setlength{\abovecaptionskip}{0.cm}
\setlength{\belowcaptionskip}{-0.cm}
  \centering{\includegraphics[scale=1]{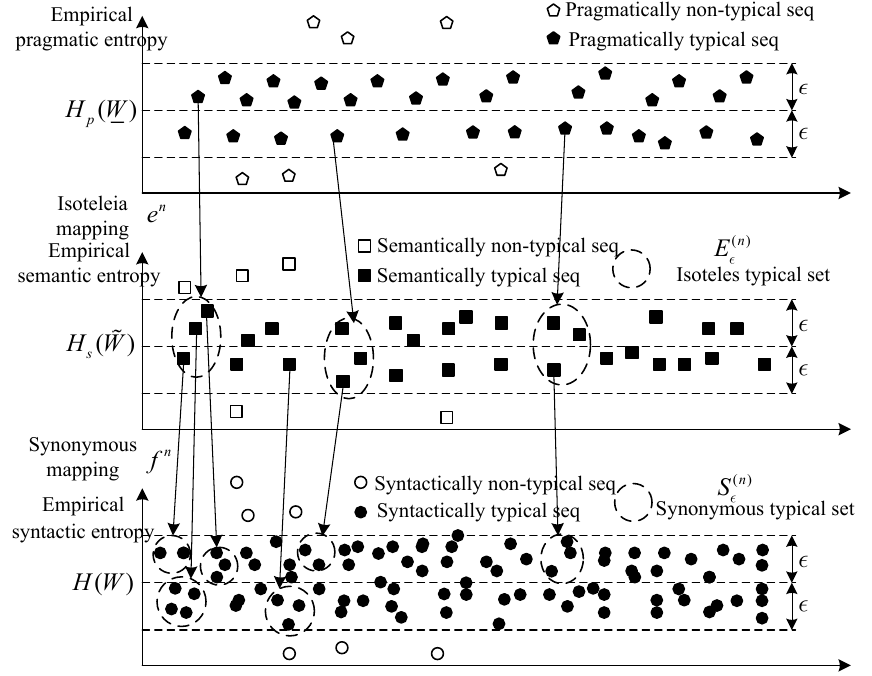}}
  \caption{Hierarchical structure of syntactic, semantic, and pragmatic typical sets.}\label{Fig_Three_Layers_Typical_Sequence}
\end{figure}

\subsubsection{Properties of the Typical Sets}

The syntactic and semantic typical sets satisfy the standard properties established in \cite{Classicpaper_Shannon,Book_Cover} and \cite{Paper_SIT,Book_SIT}, respectively. For completeness, we list them below without proof.

\begin{theorem}[Properties of the Syntactic Typical Set]
For sufficiently large \(n\):
\begin{enumerate}
\item For any \(w^n\in A_\epsilon^{(n)}\), 
\begin{equation}
2^{-n(H(W)+\epsilon)} \le P(w^n) \le 2^{-n(H(W)-\epsilon)}.
\end{equation}
\item \(\Pr\{A_\epsilon^{(n)}\} > 1-\epsilon\).
\item \((1-\epsilon)2^{n(H(W)-\epsilon)} \le |A_\epsilon^{(n)}| \le 2^{n(H(W)+\epsilon)}\).
\end{enumerate}
\end{theorem}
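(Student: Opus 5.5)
The plan is the classical AEP argument of Cover and Thomas, applied to the i.i.d. syntactic sequence, using the first line of Theorem~\ref{thm:three-tier-aep}. That line states that $-\frac{1}{n}\log P(W^n)\to H(W)$ in probability.

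\textbf{Property 1.} This is immediate from the definition of $A_\epsilon^{(n)}$ in \eqref{eq:three-tier-typical}. The condition $|-\frac{1}{n}\log P(w^n)-H(W)|<\epsilon$ is equivalent to $H(W)-\epsilon< -\frac{1}{n}\log P(w^n)< H(W)+\epsilon$. Multiplying by $-n$ and exponentiating base~2 gives the two-sided bound on $P(w^n)$.

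\textbf{Property 2.} Because $-\frac{1}{n}\log P(W^n)=\frac{1}{n}\sum_{k=1}^n(-\log P(W_k))$ is an average of i.i.d. terms with mean $H(W)$, the weak law of large numbers (equivalently, Theorem~\ref{thm:three-tier-aep}) gives
\[
\Pr\Bigl\{\bigl|-\tfrac{1}{n}\log P(W^n)-H(W)\bigr|<\epsilon\Bigr\}>1-\delta
\]
for every $\delta>0$ and all $n\ge n_0(\delta)$. Choosing $\delta=\epsilon$ yields $\Pr\{A_\epsilon^{(n)}\}>1-\epsilon$.

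\textbf{Property 3, upper bound.} Sum the lower bound of Property~1 over the set:
\[
1\ge\sum_{w^n\in A_\epsilon^{(n)}}P(w^n)\ge |A_\epsilon^{(n)}|\,2^{-n(H(W)+\epsilon)},
\]
so $|A_\epsilon^{(n)}|\le 2^{n(H(W)+\epsilon)}$. This holds for every $n$.

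\textbf{Property 3, lower bound.} Combine Property~2 with the upper bound of Property~1:
\[
1-\epsilon<\Pr\{A_\epsilon^{(n)}\}\le |A_\epsilon^{(n)}|\,2^{-n(H(W)-\epsilon)},
\]
so $|A_\epsilon^{(n)}|\ge(1-\epsilon)2^{n(H(W)-\epsilon)}$. This holds for $n$ large enough that Property~2 applies.

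There is no substantive obstacle here. The only care needed is bookkeeping: the ``sufficiently large $n$'' qualifier is needed only for Property~2 and the lower cardinality bound, and finite alphabets make $-\log P(W_k)$ bounded, so the law of large numbers applies without further conditions.
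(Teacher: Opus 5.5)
Your proof is correct and is exactly the classical Cover--Thomas argument the paper relies on. The paper states this theorem without proof and cites the standard result, and its appendix proof of the pragmatic analogue (Theorem~\ref{thm:pragmatic-typical-properties}) uses the same three steps: the definition for property~1, the law of large numbers via Theorem~\ref{thm:three-tier-aep} for property~2, and summing the probability bounds over the set for property~3.
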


\begin{theorem}[Properties of the Semantic Typical Set]
For sufficiently large \(n\):
\begin{enumerate}
\item For any \(\tilde{w}^n\in\tilde{A}_\epsilon^{(n)}\),
\begin{equation}
2^{-n(H_s(\tilde{W})+\epsilon)} \le P(\tilde{w}^n) \le 2^{-n(H_s(\tilde{W})-\epsilon)}.
\end{equation}
\item \(\Pr\{\tilde{A}_\epsilon^{(n)}\} > 1-\epsilon\).
\item \((1-\epsilon)2^{n(H_s(\tilde{W})-\epsilon)} \le |\tilde{A}_\epsilon^{(n)}| \le 2^{n(H_s(\tilde{W})+\epsilon)}\).
\end{enumerate}
\end{theorem}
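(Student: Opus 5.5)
The plan is to treat the semantic sequence \(\tilde{W}^n\) as an ordinary i.i.d. sequence over the finite alphabet \(\tilde{\mathcal{W}}\), with single-letter law \(P(\tilde{w})=\sum_{w\in f(\tilde{w})}P(w)\), and then rerun the classical typical-set argument verbatim. The first step is to note that \(\tilde{W}_k=f^{-1}(W_k)\) is a deterministic function of \(W_k\). Because the sequential synonymous mapping \(f^n\) acts coordinatewise, the \(\tilde{W}_k\) are i.i.d. and \(P(\tilde{w}^n)=\prod_k P(\tilde{w}_k)\), which is exactly the factorization recorded in Theorem~\ref{thm:three-tier-aep}. The semantic line of that theorem then gives \(-\frac{1}{n}\log P(\tilde{W}^n)\to H_s(\tilde{W})\) in probability. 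Here \(H_s(\tilde{W})\) coincides with the Shannon entropy of this i.i.d. law.

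Property 1 follows immediately from the definition of \(\tilde{A}_\epsilon^{(n)}\) in \eqref{eq:three-tier-typical}. The condition \(\bigl|-\frac{1}{n}\log P(\tilde{w}^n)-H_s(\tilde{W})\bigr|<\epsilon\) rearranges to \(2^{-n(H_s+\epsilon)}\le P(\tilde{w}^n)\le 2^{-n(H_s-\epsilon)}\). Property 2 comes from the convergence in probability. For the given \(\epsilon\), choose \(n_0\) so that for \(n\ge n_0\) we have \(\Pr\{|-\frac{1}{n}\log P(\tilde{W}^n)-H_s|\ge\epsilon\}<\epsilon\), which means \(\Pr\{\tilde{A}_\epsilon^{(n)}\}>1-\epsilon\). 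The only care needed is that the probability is computed under the induced law on \(\tilde{\mathcal{W}}^n\), i.e., \(\Pr\{\tilde{A}_\epsilon^{(n)}\}=\Pr\{W^n\in\bigcup_{\tilde{w}^n\in\tilde{A}_\epsilon^{(n)}}f^n(\tilde{w}^n)\}\). Since the synonym classes are disjoint, this agrees with the pushforward.

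Property 3 uses the standard counting argument. For the upper bound, write \(1\ge\sum_{\tilde{w}^n\in\tilde{A}_\epsilon^{(n)}}P(\tilde{w}^n)\ge|\tilde{A}_\epsilon^{(n)}|\,2^{-n(H_s+\epsilon)}\). For the lower bound, for \(n\) large we have \(1-\epsilon<\Pr\{\tilde{A}_\epsilon^{(n)}\}\le|\tilde{A}_\epsilon^{(n)}|\,2^{-n(H_s-\epsilon)}\). Both estimates use Property 1.

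There is no real obstacle; the argument is the classical one. The step I would check most carefully is the first, namely that aggregating over synonym classes preserves independence and identical distribution across positions. This depends on \(f^n\) being the Cartesian product \(\prod_k f(\tilde{w}_k)\) of a fixed single-letter partition. If the synonymous mapping were context-dependent, the law of large numbers would have to be replaced by an ergodic argument, but the definitions adopted here rule that out.
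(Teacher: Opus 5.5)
Your proposal is correct and matches the paper's approach. The paper lists this result without proof and defers to the standard AEP argument, and its proof of the pragmatic analogue in Appendix~\ref{app:a1_pragmatic} uses exactly your three steps: the definition for Property~1, the in-probability convergence of Theorem~\ref{thm:three-tier-aep} for Property~2, and the two-sided counting bound for Property~3. Your check that coordinatewise aggregation over disjoint synonym classes preserves the i.i.d. structure is what justifies applying the law of large numbers to \(P(\tilde{w}^n)=\prod_k P(\tilde{w}_k)\).
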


We now establish the properties of the pragmatic typical set.

\begin{theorem}[Properties of the Pragmatic Typical Set]\label{thm:pragmatic-typical-properties}
For sufficiently large \(n\):
\begin{enumerate}
\item For any \(\underline{w}^n\in\underline{A}_\epsilon^{(n)}\),
\begin{equation}
2^{-n(H_p(\underline{W})+\epsilon)} \le P(\underline{w}^n) \le 2^{-n(H_p(\underline{W})-\epsilon)}.
\end{equation}
\item \(\Pr\{\underline{A}_\epsilon^{(n)}\} > 1-\epsilon\).
\item \((1-\epsilon)2^{n(H_p(\underline{W})-\epsilon)} \le |\underline{A}_\epsilon^{(n)}| \le 2^{n(H_p(\underline{W})+\epsilon)}\).
\end{enumerate}
\end{theorem}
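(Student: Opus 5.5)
The plan is to reduce the statement to the classical AEP for an i.i.d. sequence over the finite pragmatic alphabet \(\underline{\mathcal{W}}\). By the sequential reification mapping \(g^n\), each \(\underline{W}_k\) is a deterministic function \(q\circ\) (inverse synonymous map) of \(W_k\). Since \(W_1,W_2,\dots\) are i.i.d., so are \(\underline{W}_1,\underline{W}_2,\dots\), with common law \(P(\underline{w})=\sum_{w\in g(\underline{w})}P(w)\) as in Eq.~\eqref{eq:pragmatic-probability}. Hence \(P(\underline{w}^n)=\prod_k P(\underline{w}_k)\), and the random variables \(-\log P(\underline{W}_k)\) are i.i.d. with mean \(H_p(\underline{W})\). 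They are bounded, because every \(\underline{w}\) with positive probability has \(P(\underline{w})\ge \min_{w:P(w)>0}P(w)>0\). This is exactly the pragmatic line of Theorem~\ref{thm:three-tier-aep}, which I will invoke directly: \(-\frac1n\log P(\underline{W}^n)\to H_p(\underline{W})\) in probability.

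Property 1 then follows immediately from the definition of \(\underline{A}_\epsilon^{(n)}\) in Eq.~\eqref{eq:three-tier-typical}. The condition \(|-\frac1n\log P(\underline{w}^n)-H_p(\underline{W})|<\epsilon\) is equivalent to \(2^{-n(H_p+\epsilon)}< P(\underline{w}^n)< 2^{-n(H_p-\epsilon)}\), so it holds for all \(n\), not only large \(n\). For Property 2, convergence in probability from Theorem~\ref{thm:three-tier-aep} gives \(\Pr\{|-\frac1n\log P(\underline{W}^n)-H_p|\ge\epsilon\}<\epsilon\) for all \(n\) beyond some \(n_0(\epsilon)\), that is, \(\Pr\{\underline{A}_\epsilon^{(n)}\}>1-\epsilon\).

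For Property 3, I will use the standard counting argument. For the upper bound, \(1\ge\sum_{\underline{w}^n\in\underline{A}_\epsilon^{(n)}}P(\underline{w}^n)\ge|\underline{A}_\epsilon^{(n)}|2^{-n(H_p+\epsilon)}\). For the lower bound, for \(n\ge n_0\) we have \(1-\epsilon<\Pr\{\underline{A}_\epsilon^{(n)}\}\le|\underline{A}_\epsilon^{(n)}|2^{-n(H_p-\epsilon)}\). Both bounds use Property 1.

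The only step needing care is the first one: verifying that the pragmatic process is genuinely i.i.d. with the stated marginal. This requires that the reification map be a well-defined single-valued function \(\mathcal{W}\to\underline{\mathcal{W}}\), meaning the classes \(g(\underline{w})\) are disjoint and cover \(\mathcal{W}\). That in turn relies on the semantic-refinement consistency condition, which makes the isoteleia partition well defined. Once this partition property is stated, everything else is the textbook AEP argument. I will also remark on the degenerate case \(H_p(\underline{W})=0\), where a single action has probability one; the bounds then hold trivially.
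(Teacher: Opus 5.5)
Your proposal is correct and follows the paper's proof (Appendix~\ref{app:a1_pragmatic}) essentially step for step. You get Property 1 from the definition of \(\underline{A}_\epsilon^{(n)}\), Property 2 from the pragmatic line of Theorem~\ref{thm:three-tier-aep}, and Property 3 by summing the bounds of Property 1 over the typical set. Your additional check, that \(\underline{W}_1,\underline{W}_2,\dots\) are i.i.d. with the aggregated marginal because the reified classes partition \(\mathcal{W}\), is a detail the paper leaves implicit in Theorem~\ref{thm:three-tier-aep}; it does not change the argument.
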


The proof of Theorem \ref{thm:pragmatic-typical-properties} follows the standard AEP counting argument. For the sake of readability, the detailed summation steps are provided in Appendix \ref{app:a1_pragmatic}.

Next, we state and prove the properties of the isoteleia typical set.

\begin{theorem}[Properties of the Isoteleia Typical Set]\label{thm:isoteleia-typical-properties}
For any \(\underline{w}^n\in\underline{A}_\epsilon^{(n)}\), for sufficiently large \(n\):
\begin{enumerate}
\item For any \(\tilde{w}^n\in E_\epsilon^{(n)}(\underline{w}^n)\),
\begin{equation}
2^{-n(H_s(\tilde{W})-H_p(\underline{W})+\epsilon)} \le \frac{P(\tilde{w}^n)}{P(\underline{w}^n)} \le 2^{-n(H_s(\tilde{W})-H_p(\underline{W})-\epsilon)}.
\end{equation}
\item 
\begin{equation}
2^{n(H_s(\tilde{W})-H_p(\underline{W})-\epsilon)} \le |E_\epsilon^{(n)}(\underline{w}^n)| \le 2^{n(H_s(\tilde{W})-H_p(\underline{W})+\epsilon)}.
\end{equation}
\end{enumerate}
\end{theorem}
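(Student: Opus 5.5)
Property 1 is immediate from the definition of $E_\epsilon^{(n)}(\underline{w}^n)$: its third condition says $\bigl|-\tfrac1n\log\frac{P(\tilde w^n)}{P(\underline w^n)}-(H_s(\tilde W)-H_p(\underline W))\bigr|<\epsilon$. Multiplying by $n$ and exponentiating gives the two-sided bound on $P(\tilde w^n)/P(\underline w^n)$. No asymptotics are needed for this part.

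For property 2, the upper bound is the standard counting argument applied to the conditional distribution $P(\tilde w^n\mid\underline w^n)=P(\tilde w^n)/P(\underline w^n)$. This is a genuine probability mass function on $e^n(\underline w^n)$, since the isoteleia partition makes $\underline w^n$ a deterministic function of $\tilde w^n$, so $\sum_{\tilde w^n\in e^n(\underline w^n)}P(\tilde w^n)=P(\underline w^n)$. Then
\[
1\ge\sum_{\tilde w^n\in E_\epsilon^{(n)}(\underline w^n)}\frac{P(\tilde w^n)}{P(\underline w^n)}\ge |E_\epsilon^{(n)}(\underline w^n)|\,2^{-n(H_s(\tilde W)-H_p(\underline W)+\epsilon)}.
\]
This gives the upper bound directly.

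The lower bound needs $\Pr\{E_\epsilon^{(n)}(\underline w^n)\mid\underline w^n\}$ to be bounded away from zero; say it is at least $1-\epsilon$. Granting that, summing the upper bound in property 1 over $E_\epsilon^{(n)}(\underline w^n)$ gives
\[
|E_\epsilon^{(n)}(\underline w^n)|\ge(1-\epsilon)\,2^{n(H_s(\tilde W)-H_p(\underline W)-\epsilon)}.
\]
The factor $(1-\epsilon)$ is absorbed by slightly enlarging $\epsilon$, or by replacing $\epsilon$ with $2\epsilon$ in the definition.

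The conditional concentration is the main obstacle. Given $\underline w^n$, the semantic symbols $\tilde W_k$ are conditionally independent with laws $P(\tilde w\mid\underline w_k)$. Hence $-\log P(\tilde W^n\mid\underline w^n)=\sum_k -\log P(\tilde W_k\mid\underline w_k)$ has conditional mean $\sum_k H(\tilde W\mid\underline W=\underline w_k)$.

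For this mean to be close to $n(H_s-H_p)=nH(\tilde W\mid\underline W)$, the type of $\underline w^n$ must be close to $P(\underline W)$. Entropy typicality of $\underline w^n$ alone does not guarantee this, so I would strengthen the hypothesis to strong (type) typicality of $\underline w^n$, which is a subset of $\underline A_\epsilon^{(n)}$ of probability tending to one. A Chebyshev bound on the sum of bounded independent terms then gives conditional probability $\ge1-\epsilon$ for large $n$. Semantic typicality, the second condition in $E_\epsilon^{(n)}$, follows from the ratio condition together with pragmatic typicality of $\underline w^n$ by the triangle inequality at $2\epsilon$, so it costs only a constant in $\epsilon$.

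If the statement is to hold for every $\underline w^n\in\underline A_\epsilon^{(n)}$ as written, the fallback is an averaged form of the lower bound. Using the partition $\tilde A_\epsilon^{(n)}=\bigcup_{\underline w^n}E_\epsilon^{(n)}(\underline w^n)$ with the cardinality bounds on $\tilde A_\epsilon^{(n)}$ and $\underline A_\epsilon^{(n)}$ shows that the sets $E_\epsilon^{(n)}(\underline w^n)$ have size $2^{n(H_s-H_p\pm2\epsilon)}$ on average. This recovers the stated lower bound up to $\epsilon$ rescaling for all but a vanishing-probability set of $\underline w^n$.
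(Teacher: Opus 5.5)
Your treatment of property~1 and of the upper bound in property~2 matches the paper's appendix argument, which sums the ratio bounds of property~1 over $E_\epsilon^{(n)}(\underline{w}^n)$. Your version is the correctly stated one. The paper writes $P(\underline{w}^n)=\sum_{\tilde{w}^n\in E_\epsilon^{(n)}(\underline{w}^n)}P(\tilde{w}^n)$, but since $E_\epsilon^{(n)}(\underline{w}^n)\subseteq e^n(\underline{w}^n)$, only ``$\ge$'' holds. That is harmless for the upper bound. The paper's lower bound, however, uses this equality in the opposite direction, so it silently assumes that $E_\epsilon^{(n)}(\underline{w}^n)$ carries all of the conditional mass of $\underline{w}^n$. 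This is exactly the conditional concentration you single out as the main obstacle. Your diagnosis therefore locates the real gap, and it is in the paper's proof, not in yours.

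You are also right that entropy typicality of $\underline{w}^n$ is not enough; the lower bound is false as stated. Here is a counterexample.
\begin{itemize}
\item[(i)] Take a trivial synonymous mapping, $\mathcal{W}=\tilde{\mathcal{W}}=\{w_1,w_2,w_3\}$ with $P=(\tfrac14,\tfrac14,\tfrac12)$, and a utility under which $w_1,w_2$ share an optimal action different from that of $w_3$.
\item[(ii)] Then $\underline{\mathcal{W}}=\{\underline{a},\underline{b}\}$ with $e(\underline{a})=\{w_1,w_2\}$, $e(\underline{b})=\{w_3\}$, $P(\underline{a})=P(\underline{b})=\tfrac12$, $H_p(\underline{W})=1$ and $H_s(\tilde{W})=\tfrac32$.
\item[(iii)] Every $\underline{w}^n$ has $P(\underline{w}^n)=2^{-n}$, so $\underline{A}_\epsilon^{(n)}=\underline{\mathcal{W}}^n$.
\item[(iv)] But $e^n(\underline{b}^n)=\{w_3^n\}$, so $|E_\epsilon^{(n)}(\underline{b}^n)|\le 1<2^{n(1/2-\epsilon)}$ for every $n$ and every $\epsilon<\tfrac12$. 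In fact $E_\epsilon^{(n)}(\underline{b}^n)=\emptyset$, since $w_3^n$ is not semantically typical.
\end{itemize}
The $(1-\epsilon)$-relaxed bound fails here too, so restricting $\underline{w}^n$ beyond $\underline{A}_\epsilon^{(n)}$ is unavoidable, and the statement cannot be proved verbatim. Your repair is correct: take type-typical $\underline{w}^n$ and apply a conditional Chebyshev bound to the bounded independent terms. If the strong-typicality parameter is small relative to $\epsilon$, the semantic-typicality condition in $E_\epsilon^{(n)}$ costs nothing.

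Your fallback needs two corrections.
\begin{itemize}
\item[(a)] $\tilde{A}_\epsilon^{(n)}=\bigcup_{\underline{w}^n}E_\epsilon^{(n)}(\underline{w}^n)$ holds only as a containment $\supseteq$. Semantic typicality of $\tilde{w}^n$ constrains the sum of $-\frac1n\log P(\underline{w}^n)$ and $-\frac1n\log P(\tilde{w}^n\mid\underline{w}^n)$, not each term separately.
\item[(b)] An average-size bound on $|E_\epsilon^{(n)}(\underline{w}^n)|$ does not give the bound for all but a vanishing-probability set of $\underline{w}^n$. Combined with the uniform upper bound, counting alone forces it only on an exponentially small fraction.
\end{itemize}
Both points are fixed by the isoteleia AEP, which gives $\Pr\{\tilde{W}^n\in E_\epsilon^{(n)}(\underline{W}^n),\,\underline{W}^n\in\underline{A}_\epsilon^{(n)}\}=1-\delta_n$ with $\delta_n\to0$. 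Apply Markov's inequality to $1-\Pr\{\tilde{W}^n\in E_\epsilon^{(n)}(\underline{W}^n)\mid\underline{W}^n\}$. It shows that every $\underline{w}^n\in\underline{A}_\epsilon^{(n)}$ outside a set of probability at most $\sqrt{\delta_n}$ satisfies $|E_\epsilon^{(n)}(\underline{w}^n)|\ge(1-\sqrt{\delta_n})\,2^{n(H_s(\tilde{W})-H_p(\underline{W})-\epsilon)}$. This route needs no strong typicality at all.
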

The details of the proof are provided in Appendix \ref{app:a2_isoteleia}.

The synonymous typical set properties are known from \cite{Paper_SIT,Book_SIT}; we restate them for reference.

\begin{theorem}[Properties of the Synonymous Typical Set]
For any \(\tilde{w}^n\in\tilde{A}_\epsilon^{(n)}\), for sufficiently large \(n\):
\begin{enumerate}
\item For any \(w^n\in S_\epsilon^{(n)}(\tilde{w}^n)\),
\begin{equation}
2^{-n(H(W)-H_s(\tilde{W})+\epsilon)} \le \frac{P(w^n)}{P(\tilde{w}^n)} \le 2^{-n(H(W)-H_s(\tilde{W})-\epsilon)}.
\end{equation}
\item 
\begin{equation}
2^{n(H(W)-H_s(\tilde{W})-\epsilon)} \le |S_\epsilon^{(n)}(\tilde{w}^n)| \le 2^{n(H(W)-H_s(\tilde{W})+\epsilon)}.
\end{equation}
\end{enumerate}
\end{theorem}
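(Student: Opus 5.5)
Part~1 follows directly from the definition of \(S_\epsilon^{(n)}(\tilde{w}^n)\): its third defining condition states \(\bigl|-\frac{1}{n}\log \frac{P(w^n)}{P(\tilde{w}^n)} - (H(W)-H_s(\tilde{W}))\bigr| < \epsilon\). Exponentiating gives the two-sided bound immediately, with no asymptotics needed. I would also note the identity \(H(W)-H_s(\tilde{W}) = H(W|\tilde{W})\), valid because \(\tilde{W}\) is a deterministic function of \(W\). I would further record \(P(w^n)/P(\tilde{w}^n) = P(w^n|\tilde{w}^n)\) for \(w^n \in f^n(\tilde{w}^n)\), which follows from the product form of \(f^n\) and the i.i.d. assumption. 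This lets the whole statement be read as a \emph{conditional} AEP for \(W^n\) given \(\tilde{W}^n=\tilde{w}^n\).

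For the upper bound in Part~2, I would use the standard counting argument. Since \(S_\epsilon^{(n)}(\tilde{w}^n) \subset f^n(\tilde{w}^n)\),
\[
1 = \sum_{w^n\in f^n(\tilde{w}^n)} P(w^n|\tilde{w}^n) \ge \sum_{w^n\in S_\epsilon^{(n)}(\tilde{w}^n)} P(w^n|\tilde{w}^n) \ge |S_\epsilon^{(n)}(\tilde{w}^n)|\, 2^{-n(H(W)-H_s(\tilde{W})+\epsilon)},
\]
where the last step applies Part~1 termwise. This yields \(|S_\epsilon^{(n)}(\tilde{w}^n)| \le 2^{n(H(W)-H_s(\tilde{W})+\epsilon)}\). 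The same argument, mirrored, is presumably what Appendix~\ref{app:a2_isoteleia} does for the isoteleia typical set.

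The lower bound is the main obstacle. The argument gives \(|S_\epsilon^{(n)}(\tilde{w}^n)| \ge \Pr\{S_\epsilon^{(n)}(\tilde{w}^n)\mid\tilde{w}^n\}\,2^{n(H(W)-H_s(\tilde{W})-\epsilon)}\), so I need the conditional mass of the set to be close to one. Given \(\tilde{w}^n\), the \(W_k\) are independent with laws \(P(\cdot|\tilde{w}_k)\). Hence \(-\frac{1}{n}\log P(W^n|\tilde{w}^n)\) is an average of independent bounded terms, with mean \(\frac{1}{n}\sum_k H(W|\tilde{W}=\tilde{w}_k)\). Chebyshev's inequality, with variance \(O(1/n)\), concentrates this average around its mean.

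The difficulty is that weak semantic typicality of \(\tilde{w}^n\) does not force this mean to equal \(H(W|\tilde{W})\). I would therefore first try to strengthen the hypothesis to strong (type-based) typicality of \(\tilde{w}^n\). Under strong typicality the empirical frequency of each \(\tilde{w}\) is within \(\delta\) of \(P(\tilde{w})\), so the conditional mean lies within \(\delta\log|\mathcal{W}|\) of \(H(W|\tilde{W})\).

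Given that concentration, the second and third defining conditions hold with conditional probability at least \(1-\epsilon\) for large \(n\). The second condition is implied by the third together with \(\tilde{w}^n\in\tilde{A}_\epsilon^{(n)}\), up to replacing \(\epsilon\) by \(2\epsilon\). This gives \(|S_\epsilon^{(n)}(\tilde{w}^n)|\ge(1-\epsilon)2^{n(H(W)-H_s(\tilde{W})-\epsilon)}\).

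Finally, the \((1-\epsilon)\) factor and the \(\epsilon\)-adjustments are absorbed by running the argument with \(\epsilon/2\) and taking \(n\) large enough that \((1-\epsilon/2)\ge 2^{-n\epsilon/2}\). This recovers the exact form stated in Part~2. If strong typicality is not assumed, the fallback is weaker: the lower bound holds for a subset of \(\tilde{A}_\epsilon^{(n)}\) of total probability at least \(1-2\epsilon\), obtained by averaging the joint AEP over \(\tilde{w}^n\) and applying Markov's inequality. I would state this caveat explicitly.
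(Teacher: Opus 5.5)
Your Part~1 and your upper bound in Part~2 follow the paper's route. The paper does not prove this theorem itself; it defers to \cite{Paper_SIT,Book_SIT}. Its proofs of the isoteleia and reified analogues in Appendices~\ref{app:a2_isoteleia} and~\ref{app:a3_reified} use exactly your template: read Part~1 off the definition, then bound a sum over the typical slice termwise.

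On the lower bound you depart from that template, and rightly so. The paper's version writes $P(\underline{w}^n)=\sum_{\tilde{w}^n\in E_\epsilon^{(n)}(\underline{w}^n)}P(\tilde{w}^n)$ (here, $P(\tilde{w}^n)=\sum_{w^n\in S_\epsilon^{(n)}(\tilde{w}^n)}P(w^n)$) and uses the ``$\le$'' half of that equality. But the typical slice is only a subset of the fiber $f^n(\tilde{w}^n)$. So only ``$\ge$'' holds, and that half yields the upper bound alone.

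Your diagnosis supplies exactly the missing ingredient: you need $\Pr\{S_\epsilon^{(n)}(\tilde{w}^n)\mid\tilde{w}^n\}$ bounded away from zero, and weak typicality of $\tilde{w}^n$ does not provide it. In fact the lower bound as stated is false. Take $\mathcal{W}=\{w_1,w_2,w_3\}$ with probabilities $(1/2,1/4,1/4)$ and synonymous classes $\tilde a=\{w_1\}$, $\tilde b=\{w_2,w_3\}$. Then $H(W)=3/2$ and $H_s(\tilde{W})=1$. Every semantic sequence has probability $2^{-n}$, so every one lies in $\tilde{A}_\epsilon^{(n)}$.

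Now take $\tilde{w}^n=(\tilde a,\dots,\tilde a)$. Its whole fiber is the single sequence $(w_1,\dots,w_1)$, with $P(w^n)/P(\tilde{w}^n)=1$. This violates the third defining condition whenever $\epsilon\le 1/2$. Hence $S_\epsilon^{(n)}(\tilde{w}^n)=\emptyset$, while the claimed lower bound $2^{n(1/2-\epsilon)}$ is at least $1$. The same objection applies to the lower bounds on $|E_\epsilon^{(n)}(\underline{w}^n)|$ and $|B_\epsilon^{(n)}(\underline{w}^n)|$.

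So your repairs are necessary changes of hypothesis, not optional refinements. Present the result as a corrected statement rather than as a proof of the one given. Both repairs are sound, but in the strong-typicality version you should fix $\delta=\delta(\epsilon)$ explicitly, for three reasons.

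\begin{enumerate}
\item Your bound $\delta\log|\mathcal{W}|$ on the gap between $\frac{1}{n}\sum_k H(W\mid\tilde{W}=\tilde{w}_k)$ and $H(W\mid\tilde{W})$ presumes $\delta$ controls the $\ell_1$ deviation of the type. Per-symbol control costs a factor $|\tilde{\mathcal{W}}|$.
\item The summands $-\log P(W_k\mid\tilde{w}_k)$ are bounded only after discarding zero-probability symbols. That is what makes your Chebyshev bound uniform over all strongly typical $\tilde{w}^n$.
\item Your ``replace $\epsilon$ by $2\epsilon$'' step for the second condition needs $-\frac{1}{n}\log P(\tilde{w}^n)$ within a fraction of $\epsilon$ of $H_s(\tilde{W})$. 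The weak hypothesis $\tilde{w}^n\in\tilde{A}_\epsilon^{(n)}$ does not give this, but strong typicality with small $\delta$ does.
\end{enumerate}

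The final absorption via $S_{\epsilon/2}^{(n)}(\tilde{w}^n)\subseteq S_\epsilon^{(n)}(\tilde{w}^n)$ is correct. The high-probability-subset fallback is the version that keeps the paper's weak-typicality definitions intact.
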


Finally, we prove the properties of the Reified typical set.

\begin{theorem}[Properties of the Reified Typical Set]\label{theorem:Reified-Typical-Set}
For any \(\underline{w}^n\in\underline{A}_\epsilon^{(n)}\), for sufficiently large \(n\):
\begin{enumerate}
\item For any \(w^n\in B_\epsilon^{(n)}(\underline{w}^n)\),
\begin{equation}
2^{-n(H(W)-H_p(\underline{W})+\epsilon)} \le \frac{P(w^n)}{P(\underline{w}^n)} \le 2^{-n(H(W)-H_p(\underline{W})-\epsilon)}.
\end{equation}
\item 
\begin{equation}
2^{n(H(W)-H_p(\underline{W})-\epsilon)} \le |B_\epsilon^{(n)}(\underline{w}^n)| \le 2^{n(H(W)-H_p(\underline{W})+\epsilon)}.
\end{equation}
\end{enumerate}
\end{theorem}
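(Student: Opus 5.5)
Part 1 is immediate from the definition of $B_\epsilon^{(n)}(\underline{w}^n)$. Its third defining condition states
\[
\left|-\tfrac{1}{n}\log \tfrac{P(w^n)}{P(\underline{w}^n)} - \bigl(H(W)-H_p(\underline{W})\bigr)\right| < \epsilon .
\]
Exponentiating both sides of this inequality gives the two-sided bound on $P(w^n)/P(\underline{w}^n)$ directly. The other membership conditions ($w^n\in g^n(\underline{w}^n)$ and syntactic typicality) are not needed for this part. This mirrors Part 1 of Theorem~\ref{thm:isoteleia-typical-properties}, with $(H_s,\tilde{w}^n)$ replaced by $(H,w^n)$.

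For the upper bound in Part 2, sum over the set:
\[
P(\underline{w}^n) \ge \sum_{w^n\in B_\epsilon^{(n)}(\underline{w}^n)} P(w^n) \ge |B_\epsilon^{(n)}(\underline{w}^n)|\,P(\underline{w}^n)\,2^{-n(H(W)-H_p(\underline{W})+\epsilon)} .
\]
The first inequality holds because every $w^n\in B_\epsilon^{(n)}(\underline{w}^n)$ lies in the fiber $g^n(\underline{w}^n)$, and the fiber's total mass is exactly $P(\underline{w}^n)$ by the product form of $g^n$ and $P(\underline{w}_k)=\sum_{w_k\in g(\underline{w}_k)}P(w_k)$. Dividing by $P(\underline{w}^n)>0$ gives $|B_\epsilon^{(n)}(\underline{w}^n)|\le 2^{n(H(W)-H_p(\underline{W})+\epsilon)}$.

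The lower bound needs a mass estimate of the form
\[
\sum_{w^n\in B_\epsilon^{(n)}(\underline{w}^n)} P(w^n) \ge (1-\delta_n)\,P(\underline{w}^n),
\]
that is, the conditional probability that a sequence drawn from the fiber is typical must be close to one. Given such an estimate, Part 1 yields $|B_\epsilon^{(n)}(\underline{w}^n)| \ge (1-\delta_n)\,2^{n(H(W)-H_p(\underline{W})-\epsilon)}$. This matches the claimed bound once the factor $(1-\delta_n)$ is absorbed by running the argument at a slightly smaller $\epsilon'<\epsilon$ and taking $n$ large, as in Appendix~\ref{app:a2_isoteleia}.

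To obtain the mass estimate, condition on $\underline{W}^n=\underline{w}^n$. Given this, the $W_k$ are independent, each distributed as $P(w\mid\underline{w}_k)$, and
\[
-\tfrac{1}{n}\log\frac{P(W^n)}{P(\underline{w}^n)} = \tfrac{1}{n}\sum_k -\log P(W_k\mid\underline{w}_k).
\]
This quantity concentrates around $\frac{1}{n}\sum_k H(W\mid\underline{W}=\underline{w}_k)$ by Chebyshev's inequality, using the bounded variance available on a finite alphabet. By the Reified AEP, $H(W)-H_p(\underline{W})=H(W\mid\underline{W})$, so this sum is close to the target.

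The main obstacle is a gap between these two quantities: $\frac{1}{n}\sum_k H(W\mid\underline{w}_k)$ equals $H(W\mid\underline{W})$ only if the empirical type of $\underline{w}^n$ is close to $P(\underline{W})$. Entropy-typicality of $\underline{w}^n$ alone does not guarantee this. I would first try to handle it by working with strongly typical $\underline{w}^n$, which carry probability approaching one, and apply the argument to those. If the statement must hold for every $\underline{w}^n\in\underline{A}_\epsilon^{(n)}$, I would fall back on what the paper's framework implicitly assumes, namely that these typical sequences have the right empirical composition, exactly as was done for the isoteleia case. A second route is to write $B_\epsilon^{(n)}(\underline{w}^n)$ as the union $\bigcup_{\tilde{w}^n\in E_\epsilon^{(n)}(\underline{w}^n)} S_\epsilon^{(n)}(\tilde{w}^n)$ and multiply the size bounds of Theorem~\ref{thm:isoteleia-typical-properties} and the synonymous typical set theorem. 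That product gives the exponent $H(W)-H_p(\underline{W})\pm 2\epsilon$, with the $\epsilon$ then rescaled to match the statement.
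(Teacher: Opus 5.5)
Your Part~1 and your upper bound in Part~2 are correct and match the paper's argument. The one difference is that you rightly write $P(\underline{w}^n)\ge\sum_{w^n\in B_\epsilon^{(n)}(\underline{w}^n)}P(w^n)$, whereas Appendix~\ref{app:a3_reified} writes an equality. The paper's lower bound rests entirely on that equality: it bounds each summand by $P(\underline{w}^n)2^{-n(H(W)-H_p(\underline{W})-\epsilon)}$ and divides. But $B_\epsilon^{(n)}(\underline{w}^n)$ is only the typical part of the fiber $g^n(\underline{w}^n)$, so the equality is false, and your conditional mass estimate is exactly what is missing. Your diagnosis of when that estimate fails is also right, and the failure is not a technicality: the lower bound is false for some $\underline{w}^n\in\underline{A}_\epsilon^{(n)}$.

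Here is a counterexample. Take $\mathcal{W}=\{w_1,w_2,w_3\}$ with $P=(\tfrac12,\tfrac14,\tfrac14)$, a trivial synonymous mapping, and $g(\underline{a})=\{w_1\}$, $g(\underline{b})=\{w_2,w_3\}$. Then $H_p(\underline{W})=1$ and $H(W)=\tfrac32$. Every pragmatic sequence has probability $2^{-n}$, so $\underline{A}_\epsilon^{(n)}=\underline{\mathcal{W}}^n$. For $\underline{w}^n=(\underline{a},\dots,\underline{a})$ the fiber is the single sequence $(w_1,\dots,w_1)$, with $P(w^n)/P(\underline{w}^n)=1$, which violates the ratio condition. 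Hence $B_\epsilon^{(n)}(\underline{w}^n)=\emptyset$ for every $\epsilon\le\tfrac12$ and every $n$, while the theorem demands $|B_\epsilon^{(n)}(\underline{w}^n)|\ge 2^{n(1/2-\epsilon)}\ge 1$.

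So neither of your fallbacks can close the gap for all of $\underline{A}_\epsilon^{(n)}$. Assuming that entropy-typical $\underline{w}^n$ have the right composition is an extra hypothesis, not something the paper does elsewhere. Appendix~\ref{app:a2_isoteleia} uses the same invalid equality, so there is no absorption step there to imitate.

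The route through $\bigcup_{\tilde{w}^n\in E_\epsilon^{(n)}(\underline{w}^n)}S_\epsilon^{(n)}(\tilde{w}^n)$ inherits the same defect, for three reasons:
\begin{itemize}
\item The lower bound on $|E_\epsilon^{(n)}(\underline{w}^n)|$ rests on that equality and fails in the same example: with $f$ trivial, $E_\epsilon^{(n)}(\underline{w}^n)$ is just $B_\epsilon^{(n)}(\underline{w}^n)$.
\item The synonymous lower bound fails in the analogous example with $f$ and $e$ exchanged.
\item The union is only contained in $B_{2\epsilon}^{(n)}(\underline{w}^n)$, not equal to $B_\epsilon^{(n)}(\underline{w}^n)$.
\end{itemize}

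What your Chebyshev argument does prove is a corrected statement. Suppose the type of $\underline{w}^n$ is within $\eta$ of $P_{\underline{W}}$, with $\eta$ small relative to some $\epsilon'<\epsilon$. Then, conditionally on $\underline{W}^n=\underline{w}^n$, both $\frac1n\sum_k-\log P(W_k\mid\underline{w}_k)$ and $-\frac1n\log P(W^n)=-\frac1n\log P(\underline{w}^n)+\frac1n\sum_k-\log P(W_k\mid\underline{w}_k)$ concentrate, uniformly in $\underline{w}^n$. You omitted this second, syntactic-typicality condition. It too needs $-\frac1n\log P(\underline{w}^n)$ closer to $H_p(\underline{W})$ than membership in $\underline{A}_\epsilon^{(n)}$ guarantees, and the type condition supplies that. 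The result is $P(B_{\epsilon'}^{(n)}(\underline{w}^n)\mid\underline{w}^n)\ge 1-\delta_n$. Part~1 together with $B_{\epsilon'}^{(n)}\subseteq B_\epsilon^{(n)}$ then gives the lower bound, and such $\underline{w}^n$ carry total probability tending to one.

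The lower bound in Part~2 should therefore be stated for strongly typical $\underline{w}^n$, or for a set of $\underline{w}^n$ of probability at least $1-\epsilon$; the upper bound holds for all $\underline{w}^n$. A minor point: $H(W)-H_p(\underline{W})=H(W\mid\underline{W})$ is the chain rule, because $\underline{W}$ is a function of $W$; it does not come from the Reified AEP.
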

The details of the proof are provided in Appendix \ref{app:a3_reified}.

\begin{corollary}[Hierarchy of Typical Set Sizes]
The cardinalities of the typical sets satisfy
\begin{equation}
|\underline{A}_\epsilon^{(n)}| \le |\tilde{A}_\epsilon^{(n)}| \le |A_\epsilon^{(n)}|,
\end{equation}
and the size of a Reified typical set is the product of the sizes of the corresponding Isoteleia and Synonymous typical sets:
\begin{equation}
|B_\epsilon^{(n)}(\underline{w}^n)| = |E_\epsilon^{(n)}(\underline{w}^n)| \cdot |S_\epsilon^{(n)}(\tilde{w}^n)|,
\end{equation}
for any \(\tilde{w}^n\in E_\epsilon^{(n)}(\underline{w}^n)\), reflecting the composition \(g^n = f^n\circ e^n\).
\end{corollary}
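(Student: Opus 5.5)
The proof has two parts: the chain of cardinality inequalities, and the product identity. I would take them in that order, using only the typical-set properties already established and the decomposition $B_\epsilon^{(n)}(\underline{w}^n)=\bigcup_{\tilde{w}^n\in E_\epsilon^{(n)}(\underline{w}^n)} S_\epsilon^{(n)}(\tilde{w}^n)$ stated just before the figure.

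\textbf{The cardinality chain.} I would not compare the upper and lower bounds of Theorem~\ref{thm:pragmatic-typical-properties} and its semantic and syntactic analogues. Those bounds only give the ordering up to $2^{2n\epsilon}$ slack, and that slack fails exactly when $H_p(\underline{W})=H_s(\tilde{W})$. Instead I would use the partitions directly.
\begin{itemize}
\item The Isoteleia typical sets partition $\tilde{A}_\epsilon^{(n)}$ into disjoint pieces indexed by $\underline{A}_\epsilon^{(n)}$. Each piece is nonempty, by the lower bound $|E_\epsilon^{(n)}(\underline{w}^n)|\ge 2^{n(H_s-H_p-\epsilon)}>0$ of Theorem~\ref{thm:isoteleia-typical-properties}. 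Hence $|\underline{A}_\epsilon^{(n)}|\le|\tilde{A}_\epsilon^{(n)}|$.
\item In the same way, the synonymous typical sets partition $A_\epsilon^{(n)}$ into nonempty pieces indexed by $\tilde{A}_\epsilon^{(n)}$. This gives $|\tilde{A}_\epsilon^{(n)}|\le|A_\epsilon^{(n)}|$.
\end{itemize}

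\textbf{The product identity.} First, disjointness of the $S_\epsilon^{(n)}(\tilde{w}^n)$ for distinct $\tilde{w}^n$ turns the union into a sum:
\begin{equation*}
|B_\epsilon^{(n)}(\underline{w}^n)|=\sum_{\tilde{w}^n\in E_\epsilon^{(n)}(\underline{w}^n)} |S_\epsilon^{(n)}(\tilde{w}^n)|.
\end{equation*}
The identity $|B|=|E|\cdot|S|$ for any $\tilde{w}^n\in E_\epsilon^{(n)}(\underline{w}^n)$ is then equivalent to the claim that $|S_\epsilon^{(n)}(\tilde{w}^n)|$ is constant as $\tilde{w}^n$ ranges over $E_\epsilon^{(n)}(\underline{w}^n)$. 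I would prove that constancy with a permutation argument.
\begin{itemize}
\item All three conditions defining $S_\epsilon^{(n)}(\tilde{w}^n)$ are invariant under a coordinate permutation $\pi$ applied simultaneously to $w^n$ and $\tilde{w}^n$. The membership $w_k\in f(\tilde{w}_k)$ is checked coordinatewise, and $P(w^n)$ and $P(\tilde{w}^n)$ are products of i.i.d.\ factors.
\item So $\pi$ is a bijection from $S_\epsilon^{(n)}(\tilde{w}^n)$ onto $S_\epsilon^{(n)}(\pi\tilde{w}^n)$, and $|S|$ depends on $\tilde{w}^n$ only through its type.
\item Next, two semantic sequences in the same fibre $e^n(\underline{w}^n)$ can differ in type only in how the coordinates lying inside one pragmatic class are distributed among that class's semantic symbols. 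I would therefore show that the defining constraints of $E_\epsilon^{(n)}$ and $S_\epsilon^{(n)}$ see only the type at the pragmatic level, with the semantic-level conditional type fixed by the typicality window. This is the step where I would work with the strongly typical (type-class) version of the definitions, so that the typicality windows pin down types exactly.
\end{itemize}

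\textbf{Main obstacle.} The hard step is the last one. Under the weak-typicality definitions used in the paper, $E_\epsilon^{(n)}(\underline{w}^n)$ can contain semantic sequences of several different types. Their synonymous sets have sizes that agree only up to $2^{\pm 2n\epsilon}$, by the synonymous-set bounds.

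What I would try first is to fix the representative type: read the identity with $\tilde{w}^n$ ranging over one type class within $E_\epsilon^{(n)}(\underline{w}^n)$, or equivalently refine $E_\epsilon^{(n)}$ to that type class. There the permutation bijection gives exact equality term by term, and the sum collapses to $|E|\cdot|S|$.

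If that refinement is not acceptable, the fallback gives the identity at the exponent level. Combining Theorem~\ref{theorem:Reified-Typical-Set} with the Isoteleia and synonymous bounds yields $|E|\cdot|S|=2^{n(H-H_p\pm2\epsilon)}$, which matches $|B|$ up to the $\epsilon$-slack. I would state explicitly which of the two readings is being proved.
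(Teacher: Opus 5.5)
\textbf{The cardinality chain.} Write $H_p,H_s,H$ for the three entropies. Your proof of $|\underline{A}_\epsilon^{(n)}|\le|\tilde{A}_\epsilon^{(n)}|$ needs every $E_\epsilon^{(n)}(\underline{w}^n)$ with $\underline{w}^n\in\underline{A}_\epsilon^{(n)}$ to be nonempty. You take this from the lower bound $|E_\epsilon^{(n)}(\underline{w}^n)|\ge 2^{n(H_s-H_p-\epsilon)}$ in Theorem~\ref{thm:isoteleia-typical-properties}, and that bound is false pointwise.

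The appendix proof of it starts from $P(\underline{w}^n)=\sum_{\tilde{w}^n\in E_\epsilon^{(n)}(\underline{w}^n)}P(\tilde{w}^n)$. Only the whole fibre $e^n(\underline{w}^n)$ sums to $P(\underline{w}^n)$; its typical part gives only $\le$. That keeps the upper bound on $|E_\epsilon^{(n)}(\underline{w}^n)|$ and loses the lower one.

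Here is a counterexample. Take $\underline{\mathcal{W}}=\{a,b\}$ with $P(a)=P(b)=\tfrac12$, $e(a)=\{a\}$, $e(b)=\{b_1,b_2\}$, $P(b_1)=P(b_2)=\tfrac14$, and trivial synonymy. Then $H_p=1$ and $H_s=H=\tfrac32$. Since $-\frac{1}{n}\log P(\underline{w}^n)=1$ for every sequence, $\underline{A}_\epsilon^{(n)}=\{a,b\}^n$. Yet every $\tilde{w}^n$ in the fibre of $\underline{w}^n$ has $-\frac{1}{n}\log P(\tilde{w}^n)=1+N_b(\underline{w}^n)/n$. So $E_\epsilon^{(n)}(\underline{w}^n)=\emptyset$ whenever $|N_b/n-\tfrac12|\ge\epsilon$, for example $\underline{w}^n=a^n$ with any $\epsilon\le\tfrac12$.

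Hence the injection $\underline{A}_\epsilon^{(n)}\to\tilde{A}_\epsilon^{(n)}$ you describe does not exist; the inequality survives in this example only by counting. Your second bullet fails for the same reason, because the lower bound on $|S_\epsilon^{(n)}(\tilde{w}^n)|$ is derived in the same way.

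What does work:
\begin{itemize}
\item The global bounds $|\underline{A}_\epsilon^{(n)}|\le 2^{n(H_p+\epsilon)}$ and $|\tilde{A}_\epsilon^{(n)}|\ge(1-\epsilon)2^{n(H_s-\epsilon)}$ are valid. They give the inequality for large $n$ whenever $H_s-H_p>2\epsilon$. So the slack you mention fails on the whole range $H_s-H_p\le 2\epsilon$, not only at equality.
\item If $H_s=H_p$, then $e$ is injective on the support, which gives a probability-preserving bijection.
\item The regime $0<H_s-H_p\le 2\epsilon$ needs a finer count, for instance by the method of types.
\end{itemize}

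\textbf{The product identity.} You are right that exact equality is unavailable under weak typicality. The paper states the identity without proof, and its joint-typical analogue is written only with $\approx$. But neither of your two readings avoids the problem above.

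Your sum formula $|B_\epsilon^{(n)}(\underline{w}^n)|=\sum_{\tilde{w}^n\in E_\epsilon^{(n)}(\underline{w}^n)}|S_\epsilon^{(n)}(\tilde{w}^n)|$ requires $B_\epsilon^{(n)}(\underline{w}^n)=\bigcup_{\tilde{w}^n\in E_\epsilon^{(n)}(\underline{w}^n)}S_\epsilon^{(n)}(\tilde{w}^n)$. This is false with a common $\epsilon$. The reified exponent $-\frac{1}{n}\log\frac{P(w^n)}{P(\underline{w}^n)}$ is the sum of the synonymous and isoteleia exponents, so two $\epsilon$-windows only yield a $2\epsilon$-window. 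Conversely, membership in $B_\epsilon^{(n)}(\underline{w}^n)$ forces neither intermediate condition.

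Restricting $E_\epsilon^{(n)}(\underline{w}^n)$ to one type class does make your permutation argument exact for the restricted sum. However, that sum is no longer $|B_\epsilon^{(n)}(\underline{w}^n)|$.

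Your fallback $|E|\cdot|S|=2^{n(H-H_p\pm 2\epsilon)}$ uses the lower bounds on $|E|$, $|S|$ and $|B|$, all of which fail pointwise. In the example above, $E_\epsilon^{(n)}(a^n)=B_\epsilon^{(n)}(a^n)=\emptyset$.

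For every $\underline{w}^n$, only the upper bounds are valid: $|B_\epsilon^{(n)}(\underline{w}^n)|\le 2^{n(H-H_p+\epsilon)}$ and $|E_\epsilon^{(n)}(\underline{w}^n)|\,|S_\epsilon^{(n)}(\tilde{w}^n)|\le 2^{n(H-H_p+2\epsilon)}$. A two-sided, exponent-level version of the identity needs a further restriction. One option is to pragmatic sequences whose type is close to $P(\underline{W})$. Another is to a subset of $\underline{A}_\epsilon^{(n)}$ of probability close to one, on which the conditional probability of the typical fibre is bounded away from zero.
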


\begin{remark}
The hierarchy \(|\underline{A}_\epsilon^{(n)}| \le |\tilde{A}_\epsilon^{(n)}|\) reflects the fact that the isoteleia mapping merges semantic distinctions that do not affect the optimal action, thereby reducing the number of distinct pragmatic typical sequences. Similarly, the synonymous mapping reduces the number of semantic typical sequences relative to syntactic ones. The reified typical set \(B_\epsilon^{(n)}(\underline{w}^n)\) directly collects all syntactic sequences that realize a given pragmatic purpose, and its size is approximately \(2^{n(H(W)-H_p(\underline{W}))}\), which is larger than a synonymous typical set by a factor of \(2^{n(H_s(\tilde{W})-H_p(\underline{W}))}\). This additional freedom is the source of compression gains in pragmatic lossless source coding.
\end{remark}

\subsection{Pragmatic Lossless Source Coding Theorem}

We establish the fundamental limit of pragmatic lossless source coding using a three-tier architecture that refines symbols through meanings to actions, yet only the pragmatic index is transmitted. The code maps each pragmatic typical sequence to a unique codeword, allowing arbitrary syntactic and semantic realizations. Using random coding and typical set properties, we prove that pragmatic entropy is the limit: rates above it are achievable with vanishing error, while rates below fail. We also relate this to the down pragmatic value of information, showing that the full decision utility can only be obtained when the rate reaches the pragmatic entropy.

\subsubsection{Three-Tier Pragmatic Source Code Model}

We consider a discrete memoryless syntactic source \(W\) with alphabet \(\mathcal{W}\) and probability mass function \(P(W)\). Let \(\tilde{W}\) and \(\underline{W}\) be the associated semantic and pragmatic variables induced by the synonymous mapping \(f\) and the isoteleia mapping \(e\), respectively, with the reification mapping \(g = f\circ e\). The source generates an i.i.d. sequence \(W^n = (W_1,\dots,W_n)\) according to \(P(W^n)=\prod_{k=1}^n P(W_k)\). The semantic sequence \(\tilde{W}^n\) and pragmatic sequence \(\underline{W}^n\) are obtained by applying the sequential mappings \(f^n\) and \(g^n\) element-wise.

In the pragmatic lossless source coding scenario, the encoder may operate in three stages corresponding to the three layers of information, but in the order from pragmatic to semantic to syntactic: it first identifies the pragmatic purpose (optimal action class) from the source, then selects a semantic interpretation that justifies that action, and finally chooses a syntactic realization that conveys that meaning. However, since the reification mapping \(g^n = f^n \circ e^n\) directly gives the pragmatic class of any syntactic sequence, the encoder can equivalently apply \(g^n\) directly and then encode the resulting pragmatic index. The decoder, upon receiving the index, outputs an arbitrary syntactic sequence from the corresponding reified typical set \(B_\epsilon^{(n)}(\underline{w}^n)\), thereby recovering the pragmatic purpose without needing to reconstruct the exact syntactic or semantic details.

\begin{definition}[Pragmatic Lossless Source Code]
An \((M,n)\) pragmatic lossless source code consists of:
\begin{enumerate}
\item A pragmatic index set \(\mathcal{I}_p = \{1,2,\dots,M_p\}\), where each index \(i_p \in \mathcal{I}_p\) corresponds to a distinct pragmatic purpose (optimal action class). Additionally, a reification index set \(\mathcal{I}_r = \{1,2,\dots,M_r\}\), where each index \(i_r \in \mathcal{I}_r\) enumerates the distinct syntactic realizations within each reified typical set. The total number of syntactic sequences that can be represented is \(M = M_p \cdot M_r\).
\item An encoding function that operates in three stages:
  \begin{enumerate}
  \item \textbf{Pragmatic-to-semantic mapping (isoteleia):} For a given pragmatic index \(i_p\), the encoder first selects a semantic interpretation from the corresponding isoteleia typical set \(E_\epsilon^{(n)}(\underline{w}_{i_p})\), where the index \(j\) enumerates the semantic sequences within that set.
  \item \textbf{Semantic-to-syntactic mapping (synonymous):} For the chosen semantic sequence \(\tilde{w}^j\), the encoder then chooses a syntactic realization from the synonymous typical set \(S_\epsilon^{(n)}(\tilde{w}^j)\), where the index \(k\) enumerates the syntactic sequences within that set.
  \item \textbf{Reification index selection:} The encoder finally selects a reification index \(i_r \in \mathcal{I}_r\), which corresponds to a specific syntactic sequence \(w^n\) from the reified typical set \(B_\epsilon^{(n)}(\underline{w}_{i_p})\). The reified typical set is the union of all synonymous typical sets corresponding to semantic sequences in the isoteleia typical set for that pragmatic index.
  \end{enumerate}
  Equivalently, the encoding function \(\phi: \mathcal{W}^n \to \mathcal{I}_p \times \mathcal{I}_r\) maps each syntactic sequence \(w^n\) to a pair of indices \((i_p, i_r)\) such that \(w^n \in B_\epsilon^{(n)}(\underline{w}_{i_p})\) and \(i_r\) identifies the specific syntactic sequence within that reified typical set. Pragmatically, only the pragmatic index \(i_p\) needs to be transmitted to preserve the action; the reification index \(i_r\) is optional and determines syntactic fidelity.
\item A decoding function \(\psi: \mathcal{I}_p \to \mathcal{W}^n\) that, upon receiving an index \(i_p\), outputs a representative syntactic sequence \(\hat{w}^n\) belonging to the reified typical set \(B_\epsilon^{(n)}(\underline{w}_{i_p})\). If the reification index \(i_r\) is also transmitted, the decoder can output the exact syntactic sequence identified by \(i_r\); otherwise, it may choose any syntactic sequence from the corresponding synonymous typical sets, as all share the same pragmatic meaning.
\end{enumerate}
The pragmatic code rate is defined as \(R = \frac{1}{n}\log_2 M_p\) (prabits per source symbol), and the reification rate is defined as \(R_r = \frac{1}{n}\log_2 M_r\) (bits per source symbol). The total syntactic code rate is therefore \(R_{\text{syn}} = R + R_r = \frac{1}{n}\log_2 M\). The error probability for a given code is
\begin{equation}
P_e^{(n)} = \Pr\{\psi(\phi(W^n)) \notin B_\epsilon^{(n)}(\underline{W}^n)\},
\end{equation}
i.e., the probability that the decoded syntactic sequence does not belong to the correct reified typical set, and hence the pragmatic purpose is not preserved.
\end{definition}

This coding architecture is illustrated in Fig.~\ref{fig:pragmatic_source_coding}. The encoder first applies the Reification Mapping \(g^n\) to determine the pragmatic class of the observed syntactic sequence, then encodes that class index. The decoder recovers the index and outputs an arbitrary syntactic sequence from the corresponding reified typical set, which indicates that the decoder's output may differ syntactically from the input but preserves the pragmatic meaning. The intermediate semantic layer is implicit in the composition \(g^n = f^n \circ e^n\), and the system can be viewed as a cascade of synonymous and isoteleia mappings.

\begin{figure*}[htbp]
\setlength{\abovecaptionskip}{0.cm}
\setlength{\belowcaptionskip}{-0.cm}
  \centering{\includegraphics[scale=1]{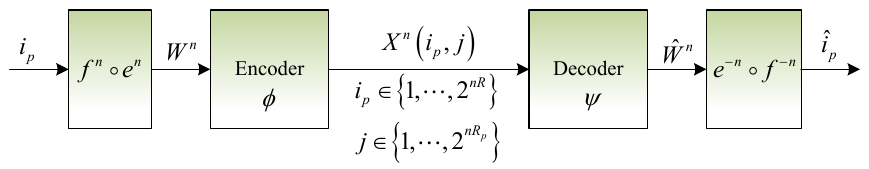}}\caption{Block diagram of pragmatic lossless source coding. The encoder applies the reification mapping $g^n$ map the pragmatic class to its syntactic sequence and encodes the index. The decoder outputs a representative syntactic sequence from the corresponding reified typical set. }
\label{fig:pragmatic_source_coding}
\end{figure*}

\begin{theorem}[Pragmatic Lossless Source Coding Theorem]\label{theorem:Pragmatic-Lossless-Source-Coding-Theorem}
Let \(W\) be a discrete memoryless syntactic source with associated pragmatic variable \(\underline{W}\) having pragmatic entropy \(H_p(\underline{W})\). For any \(\epsilon>0\) and pragmatic code rate \(R\):
\begin{enumerate}
\item \textbf{Achievability:} If \(R > H_p(\underline{W})\), then for sufficiently large \(n\), there exists an \((2^{n(R+R_p)}, n)\) pragmatic lossless source code with error probability \(P_e^{(n)} < \epsilon\), where \(R_p\) is the rate of the reified typical set (i.e., the number of syntactic realizations per pragmatic class).
\item \textbf{Converse:} If \(R < H_p(\underline{W})\), then for any \((2^{n(R+R_p)}, n)\) pragmatic lossless source code, the error probability \(P_e^{(n)}\) tends to \(1\) as \(n\to\infty\).
\end{enumerate}
Moreover, the pragmatic Value of Information (VoI) achieved by the compressed index is maximized (i.e., equals the VoI obtained from the full pragmatic variable \(\underline{W}\)) if and only if \(R \ge H_p(\underline{W})\).
\end{theorem}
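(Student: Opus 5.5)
The approach is to reduce the statement to the classical lossless source coding theorem applied to the i.i.d.\ pragmatic sequence \(\underline{W}^n\). By the sequential reification mapping \(g^n=\prod_k g(\underline{w}_k)\), the sequence \(\underline{W}^n\) is i.i.d.\ with marginal \(P(\underline{w})=\sum_{w\in g(\underline{w})}P(w)\). Moreover, the error event \(\{\psi(\phi(W^n))\notin B_\epsilon^{(n)}(\underline{W}^n)\}\) is controlled by whether the pragmatic index \(i_p\) identifies \(\underline{W}^n\) correctly. The reification index \(i_r\), of rate \(R_p\) (we may take \(R_p\approx H(W)-H_p(\underline{W})\) by Theorem~\ref{theorem:Reified-Typical-Set}), only affects syntactic fidelity and plays no role in either direction. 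So the whole argument lives on the pragmatic typical set \(\underline{A}_\epsilon^{(n)}\), using Theorem~\ref{thm:pragmatic-typical-properties}.

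\textbf{Achievability.} Fix \(R>H_p(\underline{W})\) and choose \(\epsilon\) with \(H_p(\underline{W})+\epsilon<R\). By property 3 of Theorem~\ref{thm:pragmatic-typical-properties}, \(|\underline{A}_\epsilon^{(n)}|\le 2^{n(H_p+\epsilon)}\le 2^{nR}-1\) for large \(n\). We therefore index \(\underline{A}_\epsilon^{(n)}\) injectively by \(\mathcal{I}_p\) and reserve one index for atypical sequences, so no random coding is needed.

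The encoder applies \(g^n\) to obtain \(\underline{w}^n\) and sends its index. Within \(B_\epsilon^{(n)}(\underline{w}^n)\) it optionally sends \(i_r\); the size bound of Theorem~\ref{theorem:Reified-Typical-Set} shows \(M_r=2^{nR_p}\) suffices. The decoder outputs any element of \(B_\epsilon^{(n)}(\underline{w}_{i_p})\). This element is nonempty for typical \(\underline{w}^n\) by the lower bound in Theorem~\ref{theorem:Reified-Typical-Set}.

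An error can occur only if \(\underline{W}^n\notin\underline{A}_\epsilon^{(n)}\), or if \(W^n\) falls outside its reified typical set despite \(\underline{W}^n\) being typical. The first event has probability \(<\epsilon\) by property 2. The second is bounded by \(\Pr\{W^n\notin A_\epsilon^{(n)}\}\) plus the probability that the conditional-ratio condition fails. The latter vanishes by the Reified AEP. Hence \(P_e^{(n)}<3\epsilon\) for large \(n\), and we rescale \(\epsilon\).

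\textbf{Converse.} Take \(R<H_p(\underline{W})\) and pick \(\epsilon\) with \(R<H_p-2\epsilon\). Correct decoding requires the decoded class to equal \(\underline{W}^n\), so the set \(\mathcal{C}\) of correctly handled pragmatic sequences has \(|\mathcal{C}|\le 2^{nR}\). This holds because the decoder depends only on \(i_p\), and the reified typical sets are disjoint across distinct \(\underline{w}^n\). Split according to typicality:
\[
\Pr\{\underline{W}^n\in\mathcal{C}\}\le 2^{nR}2^{-n(H_p(\underline{W})-\epsilon)}+\Pr\{\underline{W}^n\notin\underline{A}_\epsilon^{(n)}\},
\]
where the first term uses property 1 of Theorem~\ref{thm:pragmatic-typical-properties}. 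Both terms vanish, so \(P_e^{(n)}\to1\).

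\textbf{VoI claim.} The VoI claim is the step I expect to be the main obstacle, since \(\mathrm{VoI}_p\) was defined loosely. I would argue it as follows. If \(R\ge H_p\), achievability (with the boundary case handled by a limiting argument) recovers \(\underline{W}^n\) with probability \(\to1\). By boundedness of \(U\), the expected utility then converges to that obtained from \(\underline{W}\) itself, which is the maximum by the Decision Axiom. Conversely, if \(R<H_p\), the converse forces wrong classes with probability \(\to1\). By construction of the isoteleia classes, a wrong class induces a different optimal action. Assuming the optimal action \(a^*\) is strictly better within each class (a nondegeneracy condition I would add explicitly), this incurs a utility loss bounded away from zero, so VoI is strictly below its maximum.
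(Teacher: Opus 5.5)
Your achievability and converse follow the paper's own route: an injective indexing of \(\underline{A}_\epsilon^{(n)}\) with a default index, and a count of \(2^{nR}\) indices against pragmatic typical sequences of probability at most \(2^{-n(H_p(\underline{W})-\epsilon)}\). Your versions are tighter than the paper's, since you keep the atypical mass explicitly and justify \(|\mathcal{C}|\le 2^{nR}\) by disjointness of the reified sets and by \(\psi\) depending only on \(i_p\).

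One correction there: do not invoke the lower bound of Theorem~\ref{theorem:Reified-Typical-Set} for non-emptiness. Its proof writes \(P(\underline{w}^n)=\sum_{w^n\in B_\epsilon^{(n)}(\underline{w}^n)}P(w^n)\), where only ``\(\ge\)'' holds, and \(B_\epsilon^{(n)}(\underline{w}^n)\) can be empty for a typical \(\underline{w}^n\). Take \(P(W)=(\tfrac12,\tfrac14,\tfrac14)\) with classes \(\{w_1\}\) and \(\{w_2,w_3\}\). Every pragmatic sequence has probability \(2^{-n}\) and is typical. Yet the only realization of \((\underline{w}_1,\dots,\underline{w}_1)\) is \(w_1^n\), with \(-\frac1n\log P(w_1^n)=1\neq H(W)=\tfrac32\). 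Your second error event already repairs this: whenever \(W^n\in B_\epsilon^{(n)}(\underline{W}^n)\) that set is nonempty. So a decoder that outputs an element of it when one exists succeeds, and your bound goes through unchanged.

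The genuine gap is in the ``only if'' half of the VoI claim. Block error \(\to1\) by itself does not exclude having only \(o(n)\) wrong letters, whose contribution to the per-letter utility vanishes. So it does not give a loss bounded away from zero. The per-letter normalization is the one you must use, since your ``if'' half relies on it and ``the VoI obtained from \(\underline{W}\)'' is a single-letter quantity.

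What you need is the linear-in-\(n\) residual uncertainty that the paper's mutual-information argument gestures at. From \(H(I_p)\le nR\) you get \(H(\underline{W}^n\mid I_p)\ge n(H_p(\underline{W})-R)\). Let \(\hat{\underline{W}}_k\) be the class whose optimal action is the chosen \(A_k\), arbitrary if there is none. Since distinct classes have distinct optimal actions, \(\{\hat{\underline{W}}_k\neq\underline{W}_k\}\subseteq\{A_k\neq a^*(\underline{W}_k)\}\). Applying Fano letter by letter, with concavity, gives \(h(\bar P_e)+\bar P_e\log(|\underline{\mathcal{W}}|-1)\ge H_p(\underline{W})-R\), where \(h\) is the binary entropy and \(\bar P_e\) the average per-letter error. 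Hence \(\bar P_e\) is bounded below independently of \(n\), and your nondegeneracy gap \(\delta\) turns this into a per-letter loss of at least \(\delta\bar P_e\).

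Separately, the boundary \(R=H_p(\underline{W})\) cannot be handled as you suggest. At that rate no fixed-length code recovers \(\underline{W}^n\) with vanishing block error unless \(\underline{W}\) is uniform on its support; by the central limit theorem the optimal block error tends to \(\tfrac12\). Full per-letter VoI at the boundary must come from a lossy argument instead. The Hamming rate--distortion function of \(\underline{W}\) satisfies \(R_H(D)<H_p(\underline{W})\) for every \(D>0\). So rate \(H_p(\underline{W})\) achieves per-letter error at most \(D\) for large \(n\), and then you let \(D\downarrow0\).
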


\begin{proof}
\textbf{Achievability:} Fix \(\epsilon>0\) and choose \(R > H_p(\underline{W})\). By the properties of the pragmatic typical set (Theorem~\ref{thm:pragmatic-typical-properties}), the pragmatic typical set \(\underline{A}_\epsilon^{(n)}\) satisfies
\[
(1-\epsilon)2^{n(H_p(\underline{W})-\epsilon)} \le |\underline{A}_\epsilon^{(n)}| \le 2^{n(H_p(\underline{W})+\epsilon)}.
\]
Since \(R > H_p(\underline{W})\), we can select \(n\) large enough such that \(2^{nR} \ge |\underline{A}_\epsilon^{(n)}|\). We construct a one-to-one mapping from each pragmatic typical sequence \(\underline{w}^n\in\underline{A}_\epsilon^{(n)}\) to a distinct pragmatic index \(i_p\in\{1,\dots,2^{nR}\}\). For any syntactic sequence \(w^n\), the encoder computes its pragmatic class \(\underline{w}^n\). If \(\underline{w}^n\in\underline{A}_\epsilon^{(n)}\), it sends the assigned index \(i_p\); otherwise, it sends a default index (e.g., 1). 

Each pragmatic index \(i_p\) corresponds to a reified typical set \(B_\epsilon^{(n)}(\underline{w}_{i_p})\) of size approximately \(2^{nR_p}\), where \(R_p = H(W) - H_p(\underline{W})\) is the maximum rate of the reified typical set. The decoder receives the index \(i_p\) and outputs an arbitrary syntactic sequence from the corresponding reified typical set \(B_\epsilon^{(n)}(\underline{w}_{i_p})\) (which is non-empty by definition). This ensures that the pragmatic purpose is correctly recovered whenever \(\underline{W}^n \in \underline{A}_\epsilon^{(n)}\). By Theorem~\ref{thm:pragmatic-typical-properties}, \(\Pr\{\underline{W}^n\in\underline{A}_\epsilon^{(n)}\} > 1-\epsilon\), so the error probability is bounded by \(\epsilon\). Hence, for any \(R > H_p(\underline{W})\), there exists a code with arbitrarily small error probability.

The total syntactic code rate is \(R_{syn} = R + R_p\). When \(R_p = 0\) (i.e., each pragmatic class is represented by a single syntactic sequence), the total rate approaches \(H_p(\underline{W})\). When \(R_p\) increases to its maximum value \(H(W) - H_p(\underline{W})\), the total rate approaches the syntactic entropy \(H(W)\), recovering the classical lossless source coding limit. The intermediate values of \(R_p\) correspond to different levels of syntactic redundancy preservation within each pragmatic equivalence class.

\textbf{Converse:} Assume a code with pragmatic rate \(R < H_p(\underline{W})\) and error probability \(P_e^{(n)}\). Let \(\mathcal{I}_p\) be the set of pragmatic indices, with \(|\mathcal{I}_p| = 2^{nR}\). Define the decoding sets \(\mathcal{D}_{i_p} = \{w^n: \phi(w^n)=i_p\}\), and their corresponding pragmatic classes \(\underline{\mathcal{D}}_{i_p} = \{\underline{w}^n: g^n(w^n)=\underline{w}^n \text{ for some } w^n\in\mathcal{D}_{i_p}\}\). Since each decoding set is assigned to a single pragmatic index, the decoder can correctly recover the pragmatic purpose only if the true \(\underline{W}^n\) belongs to \(\underline{\mathcal{D}}_{i_p}\). The total number of distinct pragmatic classes that can be correctly decoded is at most \(|\mathcal{I}_p| = 2^{nR}\). To achieve small error probability, these classes must cover most of the probability mass of \(\underline{W}^n\). By the properties of the pragmatic typical set, the number of pragmatic typical sequences is approximately \(2^{nH_p(\underline{W})}\). If \(R < H_p(\underline{W})\), then \(2^{nR} < (1-\epsilon)2^{n(H_p(\underline{W})-\epsilon)}\) for large \(n\), so at most a fraction of the pragmatic typical sequences can be assigned distinct indices. Consequently, for any code, the probability that the true pragmatic sequence falls into an unassigned class or is mapped to a wrong index is at least \(1 - 2^{-n(H_p(\underline{W})-R-\epsilon)}\), which tends to 1 as \(n\to\infty\). Thus, \(P_e^{(n)} \to 1\). This proves the converse.

\textbf{Connection to Pragmatic VoI:} The pragmatic Value of Information achieved by the compressed index \(I_p\) is defined as
\[
\mathrm{VoI}_p(I_p) = \mathbb{E}_{W,I_p}\big[U(W, a^*(\underline{W}(I_p)))\big] - U_0,
\]
where \(a^*(\underline{W}(I_p))\) is the optimal action based on the recovered pragmatic class, and \(U_0\) is the baseline utility without any information. Since \(I_p\) is a function of the encoded index, it carries at most \(nR\) bits of information about the source. By the data processing inequality, the mutual information between the pragmatic variable \(\underline{W}\) and \(I_p\) is bounded by
\[
I(\underline{W}; I_p) \le H(I_p) \le R.
\]
To achieve the maximum pragmatic value \(\mathrm{VoI}_p^{\max}\)—i.e., the same utility as if the true pragmatic variable \(\underline{W}\) were directly observed—the index \(I_p\) must preserve all pragmatically relevant distinctions. This requires that \(\underline{W}\) be recoverable from \(I_p\), which in turn necessitates \(I(\underline{W}; I_p) = H_p(\underline{W})\). However, this equality can hold only if \(R \ge H_p(\underline{W})\). If \(R < H_p(\underline{W})\), then \(I(\underline{W}; I_p) \le R < H_p(\underline{W})\), implying that some uncertainty about the optimal action class remains unresolved. Consequently, the expected utility conditioned on \(I_p\) is strictly less than the utility obtained under perfect pragmatic knowledge, and the achievable VoI is strictly below \(\mathrm{VoI}_p^{\max}\). Conversely, when \(R \ge H_p(\underline{W})\), the encoder can simply transmit a lossless description of the pragmatic index, and the decoder recovers \(\underline{W}\) perfectly, thereby attaining the full pragmatic VoI. Thus, the minimum rate for full pragmatic value extraction is exactly \(H_p(\underline{W})\).
\end{proof}

\begin{remark}
The pragmatic lossless source coding theorem generalizes Shannon's classical theorem. The hierarchy \(H_p(\underline{W}) \le H_s(\tilde{W}) \le H(W)\) quantifies the compression gains from pragmatic abstraction. The reified typical set \(B_\epsilon^{(n)}(\underline{w}^n)\) admits a two-level decomposition via the composition \(g^n = f^n \circ e^n\): it partitions into isoteleia typical sets \(E_\epsilon^{(n)}(\underline{w}^n)\) (grouping semantic justifications that lead to the same action) and, within each, synonymous typical sets \(S_\epsilon^{(n)}(\tilde{w}^n)\) (grouping syntactic realizations of the same meaning). This structure underpins the additional rate \(R_p\), which offers a flexible trade-off: setting \(R_p = 0\) achieves the pure pragmatic limit, while increasing \(R_p\) preserves syntactic details within each pragmatic class, recovering the classical limit as \(R_p \to H(W) - H_p(\underline{W})\).
\end{remark}

The optimal encoding strategy compresses directly to the pragmatic index, discarding all syntactic and semantic distinctions that do not affect the optimal action — consistent with the design principles of pragmatic information systems (Section~II). The pragmatic VoI provides a direct performance metric: for a given rate \(R\), the achievable utility is maximized when the code preserves pragmatic equivalence classes, establishing a principled trade-off between communication rate and task performance in applications such as autonomous driving, remote control, and industrial automation.

\section{Pragmatic Channel Coding}
\label{section_VIII}

In this section, we study pragmatic information transmission over noisy channels. Extending the joint AEP via the joint reification mapping, we define the jointly reified typical set that groups syntactic input-output pairs by their common pragmatic purpose. Using random coding and joint typical decoding, we prove the pragmatic channel coding theorem: the pragmatic capacity equals the maximum achievable rate and extends the classical capacity by tolerating errors that do not affect optimal actions.

\subsection{Jointly Asymptotic Equipartition Property and Jointly Reified Typical Set}
\label{subsection:JAEP-JRTS}
To establish the fundamental limits of pragmatic channel coding, we first extend the asymptotic equipartition property to the joint distribution of channel input and output sequences at the pragmatic level. Unlike the semantic approach, which relies on the joint synonymous mapping, we directly employ the joint reification mapping, which is the composition of the joint synonymous and joint isoteleia mappings, to aggregate syntactic sequence pairs into pragmatic equivalence classes. This approach provides a direct characterization of the equivalence of input-output pairs with respect to the optimal terminal actions.

Consider a discrete memoryless channel with input alphabet \(\mathcal{X}\), output alphabet \(\mathcal{Y}\), and transition probability \(P(Y|X)\). Let \(\tilde{\mathcal{X}}\) and \(\tilde{\mathcal{Y}}\) be the associated semantic alphabets, and let \(\underline{\mathcal{X}}\) and \(\underline{\mathcal{Y}}\) be the pragmatic alphabets induced by the isoteleia mappings \(e_X:\underline{\mathcal{X}}\to 2^{\tilde{\mathcal{X}}}\) and \(e_Y:\underline{\mathcal{Y}}\to 2^{\tilde{\mathcal{Y}}}\), respectively. The synonymous mappings \(f_X:\tilde{\mathcal{X}}\to 2^{\mathcal{X}}\) and \(f_Y:\tilde{\mathcal{Y}}\to 2^{\mathcal{Y}}\) map semantic symbols to sets of syntactic symbols. The reification mappings \(g_X = f_X\circ e_X:\underline{\mathcal{X}}\to 2^{\mathcal{X}}\) and \(g_Y = f_Y\circ e_Y:\underline{\mathcal{Y}}\to 2^{\mathcal{Y}}\) directly bridge the pragmatic and syntactic layers.

For the joint behavior of channel input and output, we define the joint reification mapping, the joint synonymous mapping, and the joint isoteleia mapping. Their sequential extensions are defined element-wise.

\begin{definition}[Joint Mappings for Sequences]
Let \(f_{XY}:\tilde{\mathcal{X}}\times\tilde{\mathcal{Y}}\to 2^{\mathcal{X}\times\mathcal{Y}}\) be the joint synonymous mapping, and let \(e_{XY}:\underline{\mathcal{X}}\times\underline{\mathcal{Y}}\to 2^{\tilde{\mathcal{X}}\times\tilde{\mathcal{Y}}}\) be the joint isoteleia mapping. The joint reification mapping \(g_{XY} = f_{XY}\circ e_{XY}:\underline{\mathcal{X}}\times\underline{\mathcal{Y}}\to 2^{\mathcal{X}\times\mathcal{Y}}\) is their composition. Their \(n\)-th sequential extensions are defined by:
\begin{equation}
\left\{
\begin{aligned}
f_{XY}^n(\tilde{x}^n,\tilde{y}^n) &\triangleq \prod_{k=1}^n f_{XY}(\tilde{x}_k,\tilde{y}_k), \\
e_{XY}^n(\underline{x}^n,\underline{y}^n) &\triangleq \prod_{k=1}^n e_{XY}(\underline{x}_k,\underline{y}_k), \\
g_{XY}^n(\underline{x}^n,\underline{y}^n) &\triangleq (f_{XY}^n\circ e_{XY}^n)(\underline{x}^n,\underline{y}^n) = \prod_{k=1}^n g_{XY}(\underline{x}_k,\underline{y}_k).
\end{aligned}\right.
\end{equation}
Thus, \(g_{XY}^n\) partitions the joint syntactic space \(\mathcal{X}^n\times\mathcal{Y}^n\) into equivalence classes according to the pragmatic pair \((\underline{x}^n,\underline{y}^n)\).
\end{definition}

We now define the relevant typical sets. Let \(P(X,Y)=P(X)P(Y|X)\) denote the joint distribution induced by the channel. The syntactically jointly typical set \(A_\epsilon^{(n)}\) is defined as in classical information theory \cite{Classicpaper_Shannon,Book_Cover}:
\begin{equation}
\begin{aligned}
A_\epsilon^{(n)} \triangleq \Big\{(x^n,y^n)\in\mathcal{X}^n\times\mathcal{Y}^n :\;& 
\left|-\frac{1}{n}\log P(x^n) - H(X)\right|<\epsilon,\\
& \left|-\frac{1}{n}\log P(y^n) - H(Y)\right|<\epsilon,\\
& \left|-\frac{1}{n}\log P(x^n,y^n) - H(X,Y)\right|<\epsilon \Big\}.
\end{aligned}
\end{equation}

For the pragmatic layer, we define the pragmatically jointly typical set, which aggregates probabilities over reified equivalence classes.

\begin{definition}[Pragmatically Jointly Typical Set]
The pragmatically jointly typical set \(\underline{A}_\epsilon^{(n)}\) is the set of pragmatic sequence pairs \((\underline{x}^n,\underline{y}^n)\in\underline{\mathcal{X}}^n\times\underline{\mathcal{Y}}^n\) such that
\begin{equation}
\begin{aligned}
\underline{A}_\epsilon^{(n)} \triangleq \Bigg\{(\underline{x}^n,\underline{y}^n) :\;& \left|-\frac{1}{n}\log P(\underline{x}^n) - H_p(\underline{X})\right|<\epsilon,\\
& \left|-\frac{1}{n}\log P(\underline{y}^n) - H_p(\underline{Y})\right|<\epsilon,\\
& \left|-\frac{1}{n}\log P(\underline{x}^n,\underline{y}^n) - H_p(\underline{X},\underline{Y})\right|<\epsilon\Bigg\},
\end{aligned}
\end{equation}
where
\begin{equation}
P(\underline{x}^n,\underline{y}^n) = \prod_{k=1}^n P(\underline{x}_k,\underline{y}_k), \quad 
P(\underline{x}_k,\underline{y}_k) = \sum_{(x_k,y_k)\in g_{XY}(\underline{x}_k,\underline{y}_k)} P(x_k,y_k).
\end{equation}
The marginal pragmatic probabilities \(P(\underline{x}^n)\) and \(P(\underline{y}^n)\) are obtained by summing over the other variable.
\end{definition}

For a given pragmatic pair, the jointly reified typical set collects all syntactic pairs that share that pragmatic interpretation and are syntactically jointly typical. This set is naturally decomposed into synonymous typical sets corresponding to each semantic interpretation in the isoteleia typical set.

\begin{definition}[Jointly Reified Typical Set]\label{def:Jointly-Reified-Typical-Set}
For a given pragmatic typical pair $(\underline{x}^n,\underline{y}^n)\in\underline{A}_\epsilon^{(n)}$, the jointly reified typical set $B_\epsilon^{(n)}(\underline{x}^n,\underline{y}^n)$ with the syntactically jointly typical sequences $(x^n,y^n)$ is defined as the set of $n$-sequence pairs $(x^n,y^n)\in\mathcal{X}^n\times\mathcal{Y}^n$ such that the following conditions hold:
\begin{equation}
\begin{aligned}
B_\epsilon^{(n)}(\underline{x}^n,\underline{y}^n)
= \Bigg\{(x^n,y^n)\in\mathcal{X}^n\times\mathcal{Y}^n :\;&
\left|-\frac{1}{n}\log P(x^n)-H(X)\right|<\epsilon,\\
&
\left|-\frac{1}{n}\log P(y^n)-H(Y)\right|<\epsilon,\\
&
\left|-\frac{1}{n}\log P(x^n,y^n)-H(X,Y)\right|<\epsilon,\\
&
\left|-\frac{1}{n}\log P(\underline{x}^n)-H_p(\underline{X})\right|<\epsilon,\\
&
\left|-\frac{1}{n}\log P(\underline{y}^n)-H_p(\underline{Y})\right|<\epsilon,\\
&
\left|-\frac{1}{n}\log P(\underline{x}^n,\underline{y}^n)-H_p(\underline{X},\underline{Y})\right|<\epsilon,\\
&
\left|-\frac{1}{n}\log P((\underline{x}^n,\underline{y}^n)\to(x^n,y^n))\right.\\
&\qquad\left.-\bigl(H(X,Y)-H_p(\underline{X},\underline{Y})\bigr)\right|<\epsilon,
\Bigg\},
\end{aligned}
\end{equation}
where
\begin{equation}
P((\underline{x}^n,\underline{y}^n)\to(x^n,y^n)) \triangleq
\begin{cases}
\dfrac{P(x^n,y^n)}{P(\underline{x}^n,\underline{y}^n)}, & \text{if } (x^n,y^n)\in g_{XY}^n(\underline{x}^n,\underline{y}^n),\\[6pt]
0, & \text{otherwise}.
\end{cases}
\end{equation}
The first three conditions ensure syntactic joint typicality of the sequence pair $(x^n,y^n)$. The fourth through sixth conditions ensure pragmatic joint typicality of the sequence pair $(\underline{x}^n,\underline{y}^n)$. The seventh condition enforces the reification typicality, i.e., the difference between the syntactic joint entropy and the pragmatic joint entropy is $\epsilon$-close to the entropy reduction $H(X,Y)-H_p(\underline{X},\underline{Y})$.

Equivalently, the jointly reified typical set can be expressed as the union of jointly synonymous typical sets over all semantic pairs in the corresponding isoteleia typical set:
\begin{equation}
B_\epsilon^{(n)}(\underline{x}^n,\underline{y}^n) = \bigcup_{(\tilde{x}^n,\tilde{y}^n)\in E_\epsilon^{(n)}(\underline{x}^n,\underline{y}^n)} S_\epsilon^{(n)}(\tilde{x}^n,\tilde{y}^n),
\end{equation}
where $E_\epsilon^{(n)}(\underline{x}^n,\underline{y}^n)$ denotes the isoteleia typical set, and $S_\epsilon^{(n)}(\tilde{x}^n,\tilde{y}^n)$ denotes the jointly synonymous typical set. Thus, the jointly reified typical set partitions the syntactically jointly typical set according to the equivalence classes induced by the composition of the joint isoteleia and joint synonymous mappings, i.e., the joint reification mapping.
\end{definition}

Under the sequential joint reification mapping \(g_{XY}^n\), the syntactically jointly typical set \(A_\epsilon^{(n)}\) can be partitioned into jointly reified typical sets:
\begin{equation}
A_\epsilon^{(n)} = \bigcup_{(\underline{x}^n,\underline{y}^n)\in\underline{A}_\epsilon^{(n)}} B_\epsilon^{(n)}(\underline{x}^n,\underline{y}^n),
\end{equation}
with disjointness for distinct pragmatic typical pairs.

Fig.~\ref{fig:joint_typical_set_mapping_for_channel_coding} illustrates the hierarchical relationship among the syntactically jointly typical set $A_\epsilon^{(n)}$, the pragmatically jointly typical set $\underline{A}_\epsilon^{(n)}$, and the jointly reified typical sets $B_\epsilon^{(n)}(\underline{x}^n,\underline{y}^n)$. The joint reification mapping $g_{XY}^n = f_{XY}^n \circ e_{XY}^n$ maps each syntactic sequence pair $(x^n,y^n)$ to its corresponding pragmatic pair $(\underline{x}^n,\underline{y}^n)$. Under the joint reification mapping, $A_\epsilon^{(n)}$ is partitioned into disjoint jointly reified typical sets, each corresponding to a distinct pragmatic typical pair $(\underline{x}^n,\underline{y}^n)$ in $\underline{A}_\epsilon^{(n)}$. Each jointly reified typical set $B_\epsilon^{(n)}(\underline{x}^n,\underline{y}^n)$ is itself the union of jointly synonymous typical sets $S_\epsilon^{(n)}(\tilde{x}^n,\tilde{y}^n)$ over all semantic pairs in the isoteleia typical set $E_\epsilon^{(n)}(\underline{x}^n,\underline{y}^n)$, reflecting the composition $g_{XY}^n = f_{XY}^n \circ e_{XY}^n$. The dashed circles within each reified set represent the individual synonymous typical sets, which group syntactic pairs sharing the same semantic interpretation. This nested structure visually demonstrates that the pragmatic abstraction merges semantically distinct but pragmatically equivalent pairs, thereby reducing the number of equivalence classes from $|A_\epsilon^{(n)}|$ to $|\underline{A}_\epsilon^{(n)}|$, consistent with the entropy reduction $H_p(\underline{X},\underline{Y}) \le H_s(\tilde{X},\tilde{Y}) \le H(X,Y)$ established in Section~\ref{section_III}.

\begin{figure*}[htbp]
\setlength{\abovecaptionskip}{0.cm}
\setlength{\belowcaptionskip}{-0.cm}
  \centering{\includegraphics[scale=0.9]{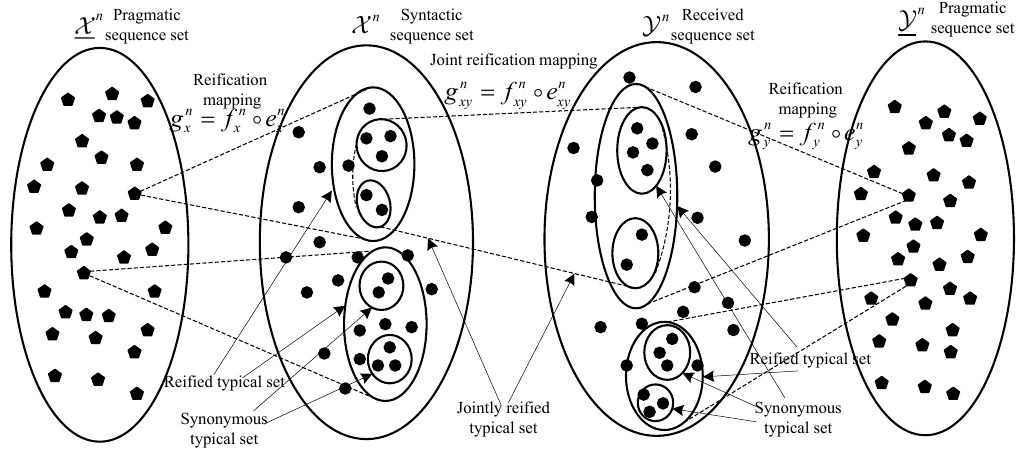}}
\caption{The joint reification mapping \(g_{XY}^n\) maps pragmatic classes \((\underline{x}^n,\underline{y}^n)\) to syntactic pairs \((x^n,y^n)\). Each pragmatic typical pair corresponds to a jointly reified typical set \(B_\epsilon^{(n)}(\underline{x}^n,\underline{y}^n)\), which is a union of synonymous typical sets over the isoteleia typical set.}
\label{fig:joint_typical_set_mapping_for_channel_coding}
\end{figure*}

We now establish the AEP properties of these typical sets. The syntactically joint AEP is the classical result \cite{Classicpaper_Shannon,Book_Cover}; we restate it for reference.

\begin{theorem}[Syntactically Joint AEP]
Let \((X^n,Y^n)\) be a sequence pair of length \(n\) drawn i.i.d. according to \(P(x^n,y^n)\). Then for sufficiently large \(n\):
\begin{enumerate}
\item \(\Pr\{(X^n,Y^n)\in A_\epsilon^{(n)}\} > 1-\epsilon\).
\item \((1-\epsilon)2^{n(H(X,Y)-\epsilon)} \le |A_\epsilon^{(n)}| \le 2^{n(H(X,Y)+\epsilon)}\).
\item If \((\dot{X}^n,\dot{Y}^n)\sim P(x^n)P(y^n)\) are independent sequences with the same marginals, then
\[
(1-\epsilon)2^{-n(I(X;Y)+3\epsilon)} \le \Pr\{(\dot{X}^n,\dot{Y}^n)\in A_\epsilon^{(n)}\} \le 2^{-n(I(X;Y)-3\epsilon)}.
\]
\end{enumerate}
\end{theorem}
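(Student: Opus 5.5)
This is the classical joint AEP (Cover--Thomas, Theorem 7.6.1), and I would prove it in three steps matching its three claims. Throughout, \((X_k,Y_k)\) are i.i.d. with finite alphabets, so \(\log P(X_k)\), \(\log P(Y_k)\) and \(\log P(X_k,Y_k)\) are bounded i.i.d. random variables. Their means are \(-H(X)\), \(-H(Y)\) and \(-H(X,Y)\).

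\emph{Step 1 (probability).} The weak law of large numbers, applied as in the first line of Theorem~\ref{thm:three-tier-aep}, gives \(-\frac1n\log P(X^n)\to H(X)\) in probability, and likewise for \(Y^n\) and \((X^n,Y^n)\). For large \(n\), choose \(n_1,n_2,n_3\) so that each of the three defining deviations exceeds \(\epsilon\) with probability below \(\epsilon/3\). A union bound over the three events then gives \(\Pr\{A_\epsilon^{(n)}\}>1-\epsilon\) for all \(n\ge\max(n_1,n_2,n_3)\).

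\emph{Step 2 (cardinality).} Every pair in \(A_\epsilon^{(n)}\) satisfies \(2^{-n(H(X,Y)+\epsilon)}\le P(x^n,y^n)\le 2^{-n(H(X,Y)-\epsilon)}\). Summing the lower bound over \(A_\epsilon^{(n)}\) and using total probability at most \(1\) gives \(|A_\epsilon^{(n)}|\le 2^{n(H(X,Y)+\epsilon)}\). Summing the upper bound and combining with Step~1 gives \(1-\epsilon<|A_\epsilon^{(n)}|\,2^{-n(H(X,Y)-\epsilon)}\), which is the lower bound.

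\emph{Step 3 (independent pairs).} For \((\dot X^n,\dot Y^n)\sim P(x^n)P(y^n)\), write
\[
\Pr\{(\dot X^n,\dot Y^n)\in A_\epsilon^{(n)}\}=\sum_{(x^n,y^n)\in A_\epsilon^{(n)}}P(x^n)P(y^n).
\]
On \(A_\epsilon^{(n)}\) each summand lies between \(2^{-n(H(X)+H(Y)+2\epsilon)}\) and \(2^{-n(H(X)+H(Y)-2\epsilon)}\). Multiplying by the upper and lower cardinality bounds from Step~2 and using \(I(X;Y)=H(X)+H(Y)-H(X,Y)\) gives
\[
(1-\epsilon)2^{-n(I(X;Y)+3\epsilon)}\le\Pr\{(\dot X^n,\dot Y^n)\in A_\epsilon^{(n)}\}\le 2^{-n(I(X;Y)-3\epsilon)}.
\]

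The only point needing care is Step~1's uniform choice of \(n\) across the three convergences, and the union bound settles it. Since the statement concerns only syntactic quantities, the reification mappings play no role in the argument. The result can simply be cited from \cite{Classicpaper_Shannon,Book_Cover}.
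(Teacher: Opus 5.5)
Your proof is correct and is the standard Cover--Thomas argument: the weak law of large numbers with a union bound, then summing the per-sequence probability bounds over $A_\epsilon^{(n)}$. The paper does not reprove this result but cites it as classical \cite{Classicpaper_Shannon,Book_Cover}, and in its appendix proof of Theorem~\ref{theorem:Pragmatically-Joint-AEP-Trans} it uses the same summation for the analogue of item~3.
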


The pragmatically joint AEP follows by applying the weak law of large numbers to the aggregated pragmatic probabilities.

\begin{theorem}[Pragmatically Joint AEP for Transmission]\label{theorem:Pragmatically-Joint-AEP-Trans}
Let \((\underline{X}^n,\underline{Y}^n)\) be a pragmatic sequence pair of length \(n\) drawn i.i.d. according to \(P(\underline{x}^n,\underline{y}^n)\), with the associated syntactic sequence pair \((X^n,Y^n)\) under the joint reification mapping \(g_{XY}^n\). Then for sufficiently large \(n\):
\begin{enumerate}
\item \(\Pr\{(\underline{X}^n,\underline{Y}^n)\in\underline{A}_\epsilon^{(n)}\} > 1-\epsilon\).
\item For any \((\underline{x}^n,\underline{y}^n)\in\underline{A}_\epsilon^{(n)}\),
\[
2^{-n(H_p(\underline{X},\underline{Y})+\epsilon)} \le P(\underline{x}^n,\underline{y}^n) \le 2^{-n(H_p(\underline{X},\underline{Y})-\epsilon)}.
\]
\item \((1-\epsilon)2^{n(H_p(\underline{X},\underline{Y})-\epsilon)} \le |\underline{A}_\epsilon^{(n)}| \le 2^{n(H_p(\underline{X},\underline{Y})+\epsilon)}\).
\item If \((\dot{X}^n,\dot{Y}^n)\sim P(x^n)P(y^n)\) are independent sequences with the same marginals as \(X^n\) and \(Y^n\), and \((\dot{\underline{X}}^n,\dot{\underline{Y}}^n)\) are the corresponding pragmatic sequences, then
\begin{equation}
(1-\epsilon)2^{-n(I^p(\underline{X};\underline{Y})+3\epsilon)} \le \Pr\{(\dot{\underline{X}}^n,\dot{\underline{Y}}^n)\in\underline{A}_\epsilon^{(n)}\} \le 2^{-n(I^p(\underline{X};\underline{Y})-3\epsilon)}.
\end{equation}
\end{enumerate}
\end{theorem}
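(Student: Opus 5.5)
Items 1--3 are handled exactly as in Theorem~\ref{thm:pragmatic-typical-properties}, now applied to pairs. Under the sequential joint reification mapping \(g_{XY}^n\), the pragmatic pairs \((\underline{X}_k,\underline{Y}_k)\) are i.i.d. with law \(P(\underline{x},\underline{y})=\sum_{(x,y)\in g_{XY}(\underline{x},\underline{y})}P(x,y)\). The marginals \(\underline{X}_k\) and \(\underline{Y}_k\) are i.i.d. as well.

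The first step applies the weak law of large numbers to the three sequences \(-\log P(\underline{X}_k)\), \(-\log P(\underline{Y}_k)\) and \(-\log P(\underline{X}_k,\underline{Y}_k)\). Each is a bounded i.i.d. sequence on a finite alphabet, with means \(H_p(\underline{X})\), \(H_p(\underline{Y})\) and \(H_p(\underline{X},\underline{Y})\). For large \(n\), each of the three defining conditions of \(\underline{A}_\epsilon^{(n)}\) then fails with probability below \(\epsilon/3\), and a union bound gives item~1. Item~2 is the third defining condition of \(\underline{A}_\epsilon^{(n)}\), rewritten by exponentiating. Item~3 uses the standard sandwich: \(1\ge\sum_{\underline{A}_\epsilon^{(n)}}P(\underline{x}^n,\underline{y}^n)\ge|\underline{A}_\epsilon^{(n)}|\,2^{-n(H_p(\underline{X},\underline{Y})+\epsilon)}\) gives the upper bound, and \(1-\epsilon<\sum_{\underline{A}_\epsilon^{(n)}}P\le|\underline{A}_\epsilon^{(n)}|\,2^{-n(H_p(\underline{X},\underline{Y})-\epsilon)}\) gives the lower bound, as in Appendix~\ref{app:a1_pragmatic}.

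Item~4 is the substantive part, and I would model it on the syntactic joint AEP and its semantic analogue in \cite{Paper_SIT,Book_SIT}. The independent pair \((\dot{X}^n,\dot{Y}^n)\) is drawn from \(P(x^n)P(y^n)\) at the \emph{syntactic} level. The exponent \(I^p=H(X)+H(Y)-H_p(\underline{X},\underline{Y})\) pairs syntactic marginal entropies with a pragmatic joint entropy. So the counting must use syntactic probabilities and pragmatic cardinalities.

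The plan is to interpret the event as ``the independent pair lands on the designated syntactic realization of some pragmatic typical pair''. Concretely, for each \((\underline{x}^n,\underline{y}^n)\in\underline{A}_\epsilon^{(n)}\), fix one representative \((x^n,y^n)\in B_\epsilon^{(n)}(\underline{x}^n,\underline{y}^n)\). This set is non-empty for large \(n\) by the reified-set counting analogous to Theorem~\ref{theorem:Reified-Typical-Set}. The representative is the canonical decoded element, exactly as the decoder does in the source-coding construction. By the first two conditions of \(B_\epsilon^{(n)}\), each representative has \(P(x^n)P(y^n)\in[2^{-n(H(X)+H(Y)+2\epsilon)},\,2^{-n(H(X)+H(Y)-2\epsilon)}]\).

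Summing over \(\underline{A}_\epsilon^{(n)}\) and using item~3 then gives
\[
\Pr\le 2^{n(H_p(\underline{X},\underline{Y})+\epsilon)}\,2^{-n(H(X)+H(Y)-2\epsilon)}=2^{-n(I^p(\underline{X};\underline{Y})-3\epsilon)},
\]
and, symmetrically, the lower bound \((1-\epsilon)2^{-n(I^p+3\epsilon)}\).

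The main obstacle is precisely this interpretive step. If the event were read literally as ``the pragmatic images are jointly pragmatically typical'', summing over all of \(B_\epsilon^{(n)}\) would multiply by \(|B_\epsilon^{(n)}|\approx 2^{n(H(X,Y)-H_p)}\). That would collapse the exponent back to the classical \(I(X;Y)\), or to \(I_p\) if pragmatic marginals were used. I would therefore state explicitly the representative convention implicit in the reified decoding rule. I would also check that it is consistent with the channel-coding use that follows, where one codeword per pragmatic class is compared against the received sequence. If that convention proves unacceptable, the fallback is to prove only the classical-exponent bound for the literal event, and to obtain the \(I^p\) exponent from the class-level union bound in the achievability proof.
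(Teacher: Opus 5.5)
Items 1--3 are fine and follow the paper's route: the weak law of large numbers for the three i.i.d.\ log-likelihood sequences, then the usual sandwich. The gap is item 4, and no interpretive convention closes it. Your event is that \((\dot{X}^n,\dot{Y}^n)\) equals one designated element of \(B_\epsilon^{(n)}(\underline{x}^n,\underline{y}^n)\) for some \((\underline{x}^n,\underline{y}^n)\in\underline{A}_\epsilon^{(n)}\). This is far smaller than the stated event \(\{(\dot{\underline{X}}^n,\dot{\underline{Y}}^n)\in\underline{A}_\epsilon^{(n)}\}\), so your (correct) arithmetic proves a different statement.

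In fact the stated upper bound is false. Take the product mapping \(g_{XY}=g_X\times g_Y\) implicit in the paper's appendix. The pragmatic images of the independent pair are then independent with laws \(P(\underline{x}^n)\) and \(P(\underline{y}^n)\). The defining conditions of \(\underline{A}_\epsilon^{(n)}\) plus item~3 give
\[
(1-\epsilon)2^{-n(I(\underline{X};\underline{Y})+3\epsilon)}\le\Pr\{(\dot{\underline{X}}^n,\dot{\underline{Y}}^n)\in\underline{A}_\epsilon^{(n)}\}\le 2^{-n(I(\underline{X};\underline{Y})-3\epsilon)},
\]
where \(I(\underline{X};\underline{Y})=H_p(\underline{X})+H_p(\underline{Y})-H_p(\underline{X},\underline{Y})\) is the ordinary mutual information of the pragmatic pair. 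Since \(I^p(\underline{X};\underline{Y})-I(\underline{X};\underline{Y})=[H(X)-H_p(\underline{X})]+[H(Y)-H_p(\underline{Y})]\), the claimed upper bound fails for large \(n\) whenever this difference exceeds \(6\epsilon\). In the one-class case, which the paper explicitly admits, the probability is \(1\) while \(I^p=H(X)+H(Y)\).

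The paper's appendix gets \(I^p\) only by bounding \(P(\underline{x}^n)\le 2^{-n(H(X)-\epsilon)}\) ``from the entropy hierarchy''. Typicality gives only \(P(\underline{x}^n)\le 2^{-n(H_p(\underline{X})-\epsilon)}\), which is the \emph{larger} number, so the hierarchy is applied in the wrong direction. The appendix's lower bound is valid, just weak.

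Your suspicion is therefore justified, but your account of the literal event is not. Summing \(P(x^n)P(y^n)\) over the sets \(B_\epsilon^{(n)}\) gives the probability of syntactic joint typicality, hence exponent \(I(X;Y)\). The literal event also contains all the non-jointly-typical realizations in each class. These dominate under the product law and give exponent \(I(\underline{X};\underline{Y})\le I(X;Y)\). Nor is the exponent the paper's \(I_p\), which has \(H(X,Y)\) where \(H_p(\underline{X},\underline{Y})\) belongs.

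The representative convention is also not what the reified decoding rule uses. The channel-coding decoder accepts whenever \((x^n,Y^n)\) lies anywhere in some \(B_\epsilon^{(n)}\). The union of these sets lies inside \(A_\epsilon^{(n)}\) and carries almost all of the mass of \((X^n,Y^n)\). So a codeword independent of \(Y^n\) passes with probability \(\doteq 2^{-nI(X;Y)}\), and the union bound there yields only \(R_{\text{tot}}<I(X;Y)\). Your fallback cannot extract \(I^p\) from it.

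Suppose instead the test accepted only representatives. Then, for any non-trivial reification, the transmitted pair itself would pass with probability \(\doteq 2^{-n(H(X,Y)-H_p(\underline{X},\underline{Y}))}\to 0\), so \(P(E_0)\to 1\). The extra \(H(X,Y)-H_p(\underline{X},\underline{Y})\) in your exponent is paid equally by the true codeword, and the ratio of the two pass probabilities remains \(2^{-nI(X;Y)}\) to first order.

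Finally, a representative need not exist for every pragmatically typical pair. Take \(P(a)=\tfrac12\), \(P(b)=P(c)=P(d)=\tfrac16\), classes \(\{a\}\) and \(\{b,c,d\}\), and a constant second variable. Every pragmatic sequence is then typical, yet the all-\(\{a\}\) sequence has \(a^n\) as its only realization, and \(a^n\) is not syntactically typical.

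Item 4 as stated cannot be proved. The correct replacement is the display above, with exponent \(I(\underline{X};\underline{Y})\).
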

The details of the proof are provided in Appendix \ref{app:a4_Pragmatic-Joint-AEP-Trans}.

Finally, we establish the properties of the jointly reified typical set and its decomposition.

\begin{theorem}[Properties of the Jointly Reified Typical Set]\label{theorem:Jointly-Reified-Typical-Set}
For any pragmatic typical pair \((\underline{x}^n,\underline{y}^n)\in\underline{A}_\epsilon^{(n)}\), for sufficiently large \(n\):
\begin{enumerate}
\item For any \((x^n,y^n)\in B_\epsilon^{(n)}(\underline{x}^n,\underline{y}^n)\),
\[
2^{-n(H(X,Y)-H_p(\underline{X},\underline{Y})+\epsilon)} \le \frac{P(x^n,y^n)}{P(\underline{x}^n,\underline{y}^n)} \le 2^{-n(H(X,Y)-H_p(\underline{X},\underline{Y})-\epsilon)}.
\]
\item 
\[
2^{n(H(X,Y)-H_p(\underline{X},\underline{Y})-\epsilon)} \le |B_\epsilon^{(n)}(\underline{x}^n,\underline{y}^n)| \le 2^{n(H(X,Y)-H_p(\underline{X},\underline{Y})+\epsilon)}.
\]
\end{enumerate}
\end{theorem}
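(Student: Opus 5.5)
Part 1 follows directly from the definition of $B_\epsilon^{(n)}(\underline{x}^n,\underline{y}^n)$ in Definition~\ref{def:Jointly-Reified-Typical-Set}. Any pair $(x^n,y^n)$ in this set lies in $g_{XY}^n(\underline{x}^n,\underline{y}^n)$, since otherwise the transition probability $P((\underline{x}^n,\underline{y}^n)\to(x^n,y^n))$ vanishes and the seventh condition fails. So the ratio $P(x^n,y^n)/P(\underline{x}^n,\underline{y}^n)$ is exactly that transition probability. Exponentiating the seventh (reification typicality) condition then gives
\[
2^{-n(H(X,Y)-H_p(\underline{X},\underline{Y})+\epsilon)} \le \frac{P(x^n,y^n)}{P(\underline{x}^n,\underline{y}^n)} \le 2^{-n(H(X,Y)-H_p(\underline{X},\underline{Y})-\epsilon)}.
\]
This is the same argument used for the single-source reified typical set in Theorem~\ref{theorem:Reified-Typical-Set}, lifted to pairs.

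For the upper bound in part 2, I will sum the lower bound of part 1 over $B_\epsilon^{(n)}(\underline{x}^n,\underline{y}^n)$. Because every element lies in the fiber $g_{XY}^n(\underline{x}^n,\underline{y}^n)$, and the fiber's conditional probabilities $P(x^n,y^n)/P(\underline{x}^n,\underline{y}^n)$ sum to one, we get
\[
1 \ge \sum_{(x^n,y^n)\in B_\epsilon^{(n)}} \frac{P(x^n,y^n)}{P(\underline{x}^n,\underline{y}^n)} \ge |B_\epsilon^{(n)}|\,2^{-n(H(X,Y)-H_p(\underline{X},\underline{Y})+\epsilon)}.
\]
This yields $|B_\epsilon^{(n)}|\le 2^{n(H(X,Y)-H_p(\underline{X},\underline{Y})+\epsilon)}$.

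The lower bound is the main obstacle. It needs the conditional mass of $B_\epsilon^{(n)}$ inside the fiber, $\Pr\{(X^n,Y^n)\in B_\epsilon^{(n)}\mid (\underline{X}^n,\underline{Y}^n)=(\underline{x}^n,\underline{y}^n)\}$, to be close to one. With that, multiplying by the upper bound of part 1 gives $|B_\epsilon^{(n)}|\ge(1-\epsilon)2^{n(H(X,Y)-H_p(\underline{X},\underline{Y})-\epsilon)}$.

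To get this conditional mass, I plan to use the joint reified AEP: by the reified-type argument of the previous section,
\[
-\tfrac1n\log\frac{P(X^n,Y^n)}{P(\underline{X}^n,\underline{Y}^n)} \to H(X,Y)-H_p(\underline{X},\underline{Y})
\]
in probability. Together with the syntactic joint AEP and Theorem~\ref{theorem:Pragmatically-Joint-AEP-Trans}, this gives $\Pr\{B\}>1-\epsilon$ unconditionally. I then need to pass from this unconditional statement to a per-fiber one, which is the delicate point. A Markov-type averaging over typical $(\underline{x}^n,\underline{y}^n)$ only gives the bound for most fibers, not for all of them.

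To handle every pragmatically typical pair, I would exploit the product structure $g_{XY}^n=\prod_k g_{XY}(\underline{x}_k,\underline{y}_k)$. Conditioned on $(\underline{x}^n,\underline{y}^n)$, the pair $(X^n,Y^n)$ is a sequence of independent, non-identically distributed draws from the per-class conditionals. A Chebyshev bound then shows that
\[
-\tfrac1n\log P(x^n,y^n\mid\underline{x}^n,\underline{y}^n)
\]
concentrates around $\sum_{\underline{a}}\frac{N(\underline{a}\mid\underline{x}^n,\underline{y}^n)}{n}\,H(X,Y\mid\underline{a})$. This quantity is close to $H(X,Y)-H_p(\underline{X},\underline{Y})$ only if the pragmatic pair is typical in the strong (type) sense. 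I would therefore strengthen the typicality of $(\underline{x}^n,\underline{y}^n)$ to strong typicality, or accept $2\epsilon$ slack in the constant, as in Appendix~\ref{app:a3_reified}. Absorbing the $(1-\epsilon)$ factor into the exponent for large $n$ then gives the stated bound.
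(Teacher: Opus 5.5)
Your Part~1 and your upper bound on $|B_\epsilon^{(n)}(\underline{x}^n,\underline{y}^n)|$ are the paper's argument in Appendix~\ref{app:a5_Jointly-Reified-Typical-Set}. Your remark that fiber membership is forced by the seventh condition makes explicit a detail the paper leaves implicit.

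For the lower bound, the paper never proves the step you call the main obstacle. It writes $P(\underline{x}^n,\underline{y}^n)=\sum_{(x^n,y^n)\in B_\epsilon^{(n)}(\underline{x}^n,\underline{y}^n)}P(x^n,y^n)$ and bounds each summand by Part~1. This silently identifies $B_\epsilon^{(n)}(\underline{x}^n,\underline{y}^n)$ with the whole fiber $g_{XY}^n(\underline{x}^n,\underline{y}^n)$. In general only $P(\underline{x}^n,\underline{y}^n)\ge\sum_{B_\epsilon^{(n)}}P(x^n,y^n)$ holds. That suffices for the upper bound but not for the lower one. So your diagnosis, that a per-fiber conditional-mass estimate is needed, is correct, and it exposes a real gap in the paper's proof.

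That gap cannot be closed under the stated hypothesis. Take the following instance:
\begin{itemize}
\item $\mathcal{X}=\mathcal{Y}=\{0,1,2\}$ and $\underline{\mathcal{X}}=\underline{\mathcal{Y}}=\{\alpha,\beta\}$;
\item $g_X(\alpha)=g_Y(\alpha)=\{0\}$ and $g_X(\beta)=g_Y(\beta)=\{1,2\}$;
\item $P(0,0)=\tfrac{1}{2}$, $P(x,y)=\tfrac{1}{8}$ for $x,y\in\{1,2\}$, and zero elsewhere.
\end{itemize}
Then $H_p(\underline{X})=H_p(\underline{Y})=H_p(\underline{X},\underline{Y})=1$, $H(X)=\tfrac{3}{2}$ and $H(X,Y)=2$. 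The pragmatic pair $(\alpha^n,\alpha^n)$ lies in $\underline{A}_\epsilon^{(n)}$ for every $n$ and $\epsilon$. Its fiber, however, is the single pair $(0^n,0^n)$. For it $-\frac{1}{n}\log P(0^n)=1\neq\tfrac{3}{2}$, and $-\frac{1}{n}\log\frac{P(0^n,0^n)}{P(\alpha^n,\alpha^n)}=0\neq H(X,Y)-H_p(\underline{X},\underline{Y})=1$. Hence $B_\epsilon^{(n)}(\alpha^n,\alpha^n)=\emptyset$ for every $\epsilon\le 1$, while the claimed lower bound is $2^{n(1-\epsilon)}\ge 1$.

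So any correct proof must amend the statement. Your strong-typicality route is the right amendment. The Markov-averaging version, which gives the bound for pragmatic pairs of total probability tending to one, is a weaker but also acceptable one. Drop the alternative of ``accepting $2\epsilon$ slack''. For a weakly but not strongly typical pair, the concentration point $\sum_{\underline{a}}\frac{N(\underline{a}\mid\underline{x}^n,\underline{y}^n)}{n}H(X,Y\mid\underline{a})$ can miss $H(X,Y)-H_p(\underline{X},\underline{Y})$ by an amount of order one (here $0$ versus $1$), which no $O(\epsilon)$ slack absorbs.

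To finish the amended proof, two details remain.
\begin{itemize}
\item $B_\epsilon^{(n)}$ also imposes the syntactic conditions on $-\frac{1}{n}\log P(x^n)$, $-\frac{1}{n}\log P(y^n)$ and $-\frac{1}{n}\log P(x^n,y^n)$, and your Chebyshev argument only treats the reification condition. Run the same conditional concentration for $-\frac{1}{n}\sum_k\log P(x_k)$ and $-\frac{1}{n}\sum_k\log P(y_k)$. Their conditional means are within $O(\delta)$ of $H(X)$ and $H(Y)$ when the pragmatic pair is $\delta$-strongly typical. Get the joint condition from the decomposition $-\frac{1}{n}\log P(x^n,y^n)=-\frac{1}{n}\log P(\underline{x}^n,\underline{y}^n)-\frac{1}{n}\log P(x^n,y^n\mid\underline{x}^n,\underline{y}^n)$, taking $\delta$ small relative to $\epsilon$.
\item To absorb the factor $(1-\epsilon')$, sum only over elements whose log-ratio is within $\epsilon/2$ of its target. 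This gives $|B_\epsilon^{(n)}|\ge(1-\epsilon')\,2^{n(H(X,Y)-H_p(\underline{X},\underline{Y})-\epsilon/2)}\ge 2^{n(H(X,Y)-H_p(\underline{X},\underline{Y})-\epsilon)}$ for large $n$.
\end{itemize}
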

The details of the proof are provided in Appendix \ref{app:a5_Jointly-Reified-Typical-Set}.

\begin{corollary}[Decomposition of the Jointly Reified Typical Set]
The jointly reified typical set can be decomposed as the union of synonymous typical sets over the isoteleia typical set:
\[
B_\epsilon^{(n)}(\underline{x}^n,\underline{y}^n) = \bigcup_{(\tilde{x}^n,\tilde{y}^n)\in E_\epsilon^{(n)}(\underline{x}^n,\underline{y}^n)} S_\epsilon^{(n)}(\tilde{x}^n,\tilde{y}^n).
\]
Consequently, its cardinality satisfies
\begin{equation}
|B_\epsilon^{(n)}(\underline{x}^n,\underline{y}^n)| \approx |E_\epsilon^{(n)}(\underline{x}^n,\underline{y}^n)| \cdot |S_\epsilon^{(n)}(\tilde{x}^n,\tilde{y}^n)|,
\end{equation}
where the approximation holds for any \(\tilde{x}^n,\tilde{y}^n\) in the isoteleia typical set, and the product reflects the composition \(g_{XY}^n = f_{XY}^n \circ e_{XY}^n\).
\end{corollary}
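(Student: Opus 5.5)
The decomposition of $B_\epsilon^{(n)}(\underline{x}^n,\underline{y}^n)$ is really a statement about how the sequential joint mappings compose. The set identity comes first, and the cardinality estimate then follows from the size bounds already available.

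\textbf{Step 1 (set identity).} By the definition of the sequential joint mappings, $g_{XY}^n = f_{XY}^n\circ e_{XY}^n$. Hence every syntactic pair $(x^n,y^n)\in g_{XY}^n(\underline{x}^n,\underline{y}^n)$ has a unique semantic image $(\tilde{x}^n,\tilde{y}^n)\in e_{XY}^n(\underline{x}^n,\underline{y}^n)$ with $(x^n,y^n)\in f_{XY}^n(\tilde{x}^n,\tilde{y}^n)$. Uniqueness holds because $\sim_s$ and $\sim_p$ are equivalence relations, so the fibers are disjoint.

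To get the inclusion ``$\subseteq$'', take $(x^n,y^n)\in B_\epsilon^{(n)}(\underline{x}^n,\underline{y}^n)$. I show that its semantic image lies in $E_\epsilon^{(n)}(\underline{x}^n,\underline{y}^n)$ and that $(x^n,y^n)\in S_\epsilon^{(n)}(\tilde{x}^n,\tilde{y}^n)$. This amounts to splitting the reification-typicality condition in Definition~\ref{def:Jointly-Reified-Typical-Set} through the factorization
\[
\frac{P(x^n,y^n)}{P(\underline{x}^n,\underline{y}^n)}=\frac{P(x^n,y^n)}{P(\tilde{x}^n,\tilde{y}^n)}\cdot\frac{P(\tilde{x}^n,\tilde{y}^n)}{P(\underline{x}^n,\underline{y}^n)}.
\]
The logarithms of the two factors must concentrate around $H(X,Y)-H_s(\tilde{X},\tilde{Y})$ and $H_s(\tilde{X},\tilde{Y})-H_p(\underline{X},\underline{Y})$, respectively. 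By the joint analogue of Theorem~\ref{thm:three-tier-aep} (the weak law applied to the aggregated semantic pair probabilities), this concentration holds with probability tending to one.

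Pointwise it is not automatic. This is the main obstacle, because a fixed $\epsilon$-window for the product does not force each factor into its own $\epsilon$-window. I would handle it the way the partition statements for $A_\epsilon^{(n)}$ in Section~\ref{section_VII} are implicitly handled. Either one builds the semantic typicality constraints into the intermediate sets with a slack of $2\epsilon$, or one reads the identity as holding up to a set of vanishing probability. The reverse inclusion ``$\supseteq$'' is direct: adding the two $\epsilon$-windows bounds the product within $2\epsilon$, and the syntactic and pragmatic typicality conditions are inherited from the membership conditions of $S_\epsilon^{(n)}$ and $E_\epsilon^{(n)}$. Disjointness of the union follows from the uniqueness of the semantic image.

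\textbf{Step 2 (cardinality).} Once the disjoint union is in hand, $|B_\epsilon^{(n)}|=\sum_{(\tilde{x}^n,\tilde{y}^n)\in E_\epsilon^{(n)}}|S_\epsilon^{(n)}(\tilde{x}^n,\tilde{y}^n)|$. Each summand lies in $[2^{n(H(X,Y)-H_s-\epsilon)},2^{n(H(X,Y)-H_s+\epsilon)}]$ by the joint version of the synonymous typical set properties. The number of summands lies in $[2^{n(H_s-H_p-\epsilon)},2^{n(H_s-H_p+\epsilon)}]$ by the joint analogue of Theorem~\ref{thm:isoteleia-typical-properties}.

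Multiplying these bounds gives $|B_\epsilon^{(n)}|\doteq|E_\epsilon^{(n)}|\cdot|S_\epsilon^{(n)}(\tilde{x}^n,\tilde{y}^n)|$ to first order in the exponent, for any representative $(\tilde{x}^n,\tilde{y}^n)$. This is the meaning of ``$\approx$'' in the statement. As a consistency check, the resulting exponent $H(X,Y)-H_p(\underline{X},\underline{Y})$ agrees with Theorem~\ref{theorem:Jointly-Reified-Typical-Set}, up to a $2\epsilon$ slack in the exponent.
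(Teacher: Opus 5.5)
Your proposal follows essentially the same route as the paper, which offers no separate argument beyond reading the union off the composition $g_{XY}^n=f_{XY}^n\circ e_{XY}^n$ (asserted as an exact ``equivalently'' in Definition~\ref{def:Jointly-Reified-Typical-Set}) and multiplying the sizes of the isoteleia and synonymous typical sets. Your extra caution is correct and goes beyond the paper: $B_\epsilon^{(n)}$ imposes no semantic-level condition, so at a fixed $\epsilon$ the union identity holds only as a sandwich with $2\epsilon$ slack once semantic typicality is imposed (or up to a set of vanishing probability), and this is harmless for the first-order exponent claim.
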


\begin{remark}
The jointly reified typical set \(B_\epsilon^{(n)}(\underline{x}^n,\underline{y}^n)\) groups together all syntactically jointly typical pairs that share the same pragmatic interpretation. Its decomposition into synonymous and isoteleia typical sets reveals the two-level abstraction: the isoteleia mapping merges semantic differences that do not affect the optimal actions, while the synonymous mapping merges syntactic differences that do not affect the meaning. The cardinality of a jointly reified typical set is approximately \(2^{n(H(X,Y)-H_p(\underline{X},\underline{Y}))}\), which is larger than a single synonymous typical set by a factor of \(2^{n(H_s(\tilde{X},\tilde{Y})-H_p(\underline{X},\underline{Y}))}\), reflecting the additional syntactic freedom gained when only the pragmatic purpose needs to be preserved. When randomly selecting an independent pair of syntactic sequences, the probability that they fall into the same jointly reified typical set (i.e., are pragmatically jointly typical) is about \(2^{-nI^p(\underline{X};\underline{Y})}\), which is not larger than the classical joint typicality probability \(2^{-nI(X;Y)}\), reflecting the pragmatic capacity gain.
\end{remark}

\subsection{Pragmatic Channel Coding Theorem}

We establish the fundamental limit for pragmatic transmission over noisy channels using a two-level code: the encoder selects a pragmatic index (optimal action) and then a syntactic codeword from its reified typical set; the decoder recovers only the index, tolerating syntactic errors. Via random coding and joint typical decoding based on the jointly reified typical set, we prove that the pragmatic channel capacity is the maximum achievable rate. We also connect this result to the pragmatic cost of information.

\begin{definition}[Pragmatic Channel Code]
An \((M_p, M_r, n)\) pragmatic channel code consists of:
\begin{enumerate}
\item A pragmatic index set \(\mathcal{I}_p = \{1,2,\dots,M_p\}\), where each index \(i_p\in\mathcal{I}_p\) corresponds to a distinct pragmatic purpose (optimal action class).
\item A reification index set \(\mathcal{I}_r = \{1,2,\dots,M_r\}\), where each index \(i_r\in\mathcal{I}_r\) enumerates the syntactic codewords within each reified typical set.
\item An encoding function \(\phi: \mathcal{I}_p \times \mathcal{I}_r \to \mathcal{X}^n\) that maps each pair \((i_p, i_r)\) to a codeword \(x^n(i_p, i_r)\) such that for each pragmatic index \(i_p\), the set of codewords \(\{x^n(i_p, i_r): i_r\in\mathcal{I}_r\}\) lies entirely within the corresponding reified typical set \(B_\epsilon^{(n)}(\underline{x}_{i_p}^n, \underline{y}_{i_p}^n)\) for some typical output pragmatic sequence \(\underline{y}_{i_p}^n\). (The decoder will recover \(i_p\) without needing the exact \(i_r\).)
\item A decoding function \(\psi: \mathcal{Y}^n \to \mathcal{I}_p\) that maps the received sequence to a pragmatic index.
\end{enumerate}
The pragmatic code rate is \(R_p = \frac{1}{n}\log_2 M_p\) (prabits per channel use), and the reification rate is \(R_r = \frac{1}{n}\log_2 M_r\) (bits per channel use). The total syntactic rate is
\begin{equation}
R_{\text{tot}} = R_p + R_r = \frac{1}{n}\log_2 (M_p M_r). \label{eq:tot-rate}
\end{equation}
For a given code, the error probability is defined as
\begin{equation}
P_e^{(n)} = \Pr\{\psi(Y^n) \neq i_p\}, \label{eq:chan-error}
\end{equation}
where \(i_p\) is the intended pragmatic index, and \(Y^n\) is the channel output when the corresponding codeword is transmitted.
\end{definition}

\begin{figure}[htbp]
\setlength{\abovecaptionskip}{0.cm}
\setlength{\belowcaptionskip}{-0.cm}
\centering{\includegraphics[scale=1]{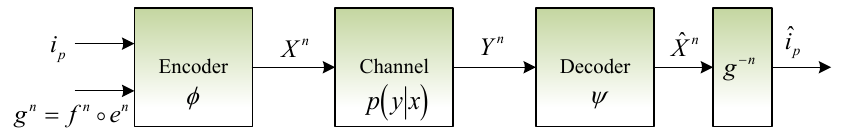}}
\caption{Block diagram of pragmatic channel coding. The encoder selects a pragmatic index \(i_p\) and a reification index \(i_r\), then transmits the corresponding codeword. The decoder recovers only the pragmatic index \(\hat{i}_p\), ignoring syntactic variations within the reified typical set.}
\label{fig:pragmatic_channel_coding}
\end{figure}

Figure \ref{fig:pragmatic_channel_coding} depicts the process of pragmatic channel coding. The encoder first selects a pragmatic index \(i_p\in\mathcal{I}_p\) corresponding to the intended optimal action class. Given \(i_p\), the encoder then chooses a reification index \(i_r\in\mathcal{I}_r\), which selects a specific syntactic codeword from the reified typical set \(B_\epsilon^{(n)}(\underline{x}_{i_p}^n,\underline{y}_{i_p}^n)\). The transmitted codeword \(x^n(i_p,i_r)\) is sent over the channel. The decoder recovers only the pragmatic index \(\hat{i}_p\) via jointly typical decoding based on the jointly reified typical set, ignoring syntactic variations within the reified typical set. Thus, the two-level structure separates the pragmatic purpose from its syntactic realizations, enabling reliable transmission at rates up to the pragmatic channel capacity.

\begin{theorem}[Pragmatic Channel Coding Theorem]\label{theorem:Pragmatic-Channel-Coding-Theorem}
Let \(C_p\) be the pragmatic channel capacity defined in Eq.~(\ref{eq:pragmatic-capacity}) of Section~\ref{subsec:pragmatic-capacity}, i.e.,
\begin{equation}
C_p = \max_{p(x)}\max_{g_{XY}} I^p(\underline{X};\underline{Y}),
\end{equation}
where \(I^p(\underline{X};\underline{Y}) = H(X) + H(Y) - H_p(\underline{X},\underline{Y})\) is the up pragmatic mutual information. For any \(\epsilon>0\) and total rate \(R_{\text{tot}}\):
\begin{enumerate}
\item \textbf{Achievability:} If \(R_{\text{tot}} < C_p\), then for sufficiently large \(n\), there exists an \((2^{n(R_p+R_r)}, n)\) pragmatic channel code with error probability \(P_e^{(n)} < \epsilon\).
\item \textbf{Converse:} If \(R_{\text{tot}} > C_p\), then for any \((2^{n(R_p+R_r)}, n)\) pragmatic channel code, the error probability \(P_e^{(n)}\) is bounded away from zero for sufficiently large \(n\).
\end{enumerate}
Consequently, the pragmatic cost of information \(\mathrm{CoI}_p(R)\) is finite for \(R \le C_p\) and becomes unbounded for \(R > C_p\), establishing \(C_p\) as the fundamental threshold for resource-limited pragmatic communication.
\end{theorem}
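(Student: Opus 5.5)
I will follow the random-coding template of the classical channel coding theorem, with the syntactically jointly typical decoder replaced by a decoder built on the jointly reified typical sets of Definition~\ref{def:Jointly-Reified-Typical-Set}. The engine is the fourth item of Theorem~\ref{theorem:Pragmatically-Joint-AEP-Trans}. It says that an independent pair $(\dot X^n,\dot Y^n)$ with the correct marginals lands in $\underline{A}_\epsilon^{(n)}$ with probability at most $2^{-n(I^p(\underline X;\underline Y)-3\epsilon)}$. The converse will go through a Fano-type argument at the level of the transmitted index pair.

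\textbf{Achievability.} Fix $p(x)$ and a joint reification mapping $g_{XY}$ attaining $C_p$. Generate $2^{n(R_p+R_r)}$ codewords $x^n(i_p,i_r)$ i.i.d.\ according to $\prod_k p(x_k)$. Then group them by $i_p$, keeping only codewords whose pragmatic image $\underline x^n$ is typical; by Theorem~\ref{theorem:Jointly-Reified-Typical-Set} each reified class has about $2^{n(H(X,Y)-H_p(\underline X,\underline Y))}$ members, so it can host the $2^{nR_r}$ sub-codewords. The decoder, on receiving $y^n$, declares $\hat i_p$ if there is some $i_r$ such that $(g_X^{-1}(x^n(\hat i_p,i_r)),g_Y^{-1}(y^n))\in\underline{A}_\epsilon^{(n)}$ and the pair lies in the corresponding $B_\epsilon^{(n)}$, and if no other pragmatic index qualifies.

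By symmetry it suffices to bound the error given that $(1,1)$ was sent. There are two error events. $E_1$ is the event that the true pair is not jointly reified typical; its probability is at most $\epsilon$ by item~1 of Theorem~\ref{theorem:Pragmatically-Joint-AEP-Trans} together with the partition $A_\epsilon^{(n)}=\bigcup B_\epsilon^{(n)}$. $E_2$ is the event that some $(i_p,i_r)$ with $i_p\neq1$ passes the test. Each such codeword is independent of $Y^n$, so a union bound with item~4 gives
\[
\Pr(E_2)\le 2^{n(R_p+R_r)}\,2^{-n(I^p(\underline X;\underline Y)-3\epsilon)},
\]
which vanishes whenever $R_{\text{tot}}<C_p-3\epsilon$. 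Codewords $(1,i_r)$ with $i_r\neq1$ do not cause errors, since only $i_p$ has to be recovered. Averaging over codebooks then yields a deterministic code with $P_e^{(n)}<2\epsilon$.

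\textbf{Converse.} Let $(I_p,I_r)$ be uniform on the index pairs and let $X^n=\phi(I_p,I_r)$. I will use the chain
\[
n R_{\text{tot}} = H(I_p,I_r)\le I(I_p,I_r;Y^n)+H(I_p,I_r\mid Y^n),
\]
combined with $I(X^n;Y^n)\le\sum_k I(X_k;Y_k)\le\sum_k I^p(\underline X_k;\underline Y_k)\le nC_p$. The middle inequality is the per-letter hierarchy $I\le I^p$ from the mutual-information hierarchy theorem, and the last step takes the maximum over $p(x)$ and $g_{XY}$. The remaining equivocation will be split into Fano's bound on $I_p$, namely $1+P_e^{(n)}nR_p$, plus the residual term $H(I_r\mid I_p,Y^n)$. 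This residual is the part I must absorb into the gap between $I^p$ and $I$, by arguing that within a reified class the receiver only resolves the class and that the class multiplicity contributes at most $n(H(X,Y)-H_p(\underline X,\underline Y))$. The resulting inequality $R_{\text{tot}}(1-P_e^{(n)})\le C_p+o(1)$ then forces $P_e^{(n)}$ to be bounded away from zero when $R_{\text{tot}}>C_p$. The statement about $\mathrm{CoI}_p$ follows by applying both directions at each resource level $P$ and invoking the duality $\mathrm{CoI}_p=C_p^{-1}$ from Section~\ref{subsec:pragmatic-coi}.

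\textbf{Main obstacle.} I expect the converse to be the main obstacle, specifically controlling $H(I_r\mid I_p,Y^n)$ so that the bound closes at $I^p$ rather than at the classical $I(X;Y)$. A naive Fano argument on the full index pair gives only $R_{\text{tot}}\le C$. My first attempt will be a counting argument over the jointly reified typical sets: bound the number of index pairs whose codewords can share a decoding region by $|B_\epsilon^{(n)}|$ times the number of output classes. If that fails to be tight, I will restrict the converse to $R_p$ together with the maximal admissible $R_r$, consistent with the syntax-layer interpretation in Remark~\ref{rem:syntax-layer-rate}.
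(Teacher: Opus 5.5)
\textbf{Converse.} The step you single out is a genuine gap, and the proposed count over reified classes cannot close it. The error criterion $P_e^{(n)}=\Pr\{\psi(Y^n)\neq i_p\}$ involves $I_p$ only, so neither Fano's inequality nor the channel places any constraint on $I_r$. A bound via $|B_\epsilon^{(n)}|$ controls at most the number of distinct input sequences compatible with $Y^n$, not the number of indices $i_r$ that a code assigns to them. For example, the code definition does not require $\phi$ to be injective, so all codewords of a group may coincide. Then $H(I_r\mid I_p,Y^n)=nR_r$ with $R_r$ arbitrary and $P_e^{(n)}$ unchanged, and no channel-dependent bound on $R_{\text{tot}}$ can follow from $P_e^{(n)}\to 0$. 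The paper does not attempt your full-index Fano. It applies Fano to $W_p$ alone, obtains $R_p(1-P_e^{(n)})\le C_p+\frac1n$, and then adds $R_r$ to both sides. That last step yields only $R_{\text{tot}}-C_p\le R_r+o(1)$, so what is actually established is a converse for $R_p$, i.e.\ your fallback. It is also the classical converse in disguise: $I(W_p;Y^n)\le I(X^n;Y^n)\le nC\le nC_p$, so reliable recovery of $i_p$ in fact forces $R_p\le C$.

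\textbf{Achievability.} This half coincides with the paper's argument, and both rest on item~4 of Theorem~\ref{theorem:Pragmatically-Joint-AEP-Trans}, which does not hold. For independent $(\dot X^n,\dot Y^n)$ the probability of $\underline{A}_\epsilon^{(n)}$ is $\sum_{\underline{A}_\epsilon^{(n)}}P(\underline{x}^n)P(\underline{y}^n)\approx 2^{-n[H_p(\underline X)+H_p(\underline Y)-H_p(\underline X,\underline Y)]}$. Its exponent is $I(\underline X;\underline Y)\le I(X;Y)$ by data processing, not $I^p(\underline X;\underline Y)$. The appendix reaches $I^p$ only by using $P(\underline x^n)\le 2^{-n(H(X)-\epsilon)}$, whereas on the typical set $P(\underline x^n)\approx 2^{-nH_p(\underline X)}$, which is larger. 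For the one-cell $g_{XY}$ the probability is $1$, against a claimed bound of $2^{-n(H(X)+H(Y)-3\epsilon)}$. Since your decoder also demands syntactic joint typicality, a wrong-group codeword passes with probability of order $2^{-nI(X;Y)}$, and the union bound delivers only $R_{\text{tot}}<I(X;Y)\le C$. No repair can reach $C_p$ with $R_p>0$, because the Fano bound above makes $R_p\le C$ necessary. For instance, if $Y$ is independent of $X$, then $C=0<C_p$, yet every code has $P_e^{(n)}\ge 1-2^{-nR_p}$ because $\psi(Y^n)$ is independent of $i_p$. Any rate beyond $C$ could only be carried by the undecoded $R_r$, which is precisely the quantity the converse cannot control.
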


\begin{proof}
\textbf{Achievability:} Fix \(\epsilon>0\) and choose rates \(R_p, R_r\) such that \(R_{\text{tot}} = R_p + R_r < C_p\). By the definition of \(C_p\), there exists an input distribution \(p(x)\) and a joint reification mapping \(g_{XY}\) such that \(I^p(\underline{X};\underline{Y}) > R_{\text{tot}}\). Let \(R = I^p(\underline{X};\underline{Y}) - \delta\) for some small \(\delta>0\), so that \(R_{\text{tot}} < R < C_p\).

We construct a random codebook \(\mathcal{C}\) of size \(2^{n(R_p+R_r)}\) by drawing each codeword independently according to \(P(X^n)=\prod_{k=1}^n P(X_k)\). These codewords are then partitioned uniformly into \(2^{nR_p}\) groups, each corresponding to a pragmatic index \(i_p\), with \(2^{nR_r}\) codewords per group. This partition is based on the reification mapping: all codewords in group \(i_p\) are chosen from the reified typical set \(B_\epsilon^{(n)}(\underline{x}_{i_p}^n, \underline{y}_{i_p}^n)\) for some typical output pragmatic sequence.

The encoder, upon receiving a pragmatic message \(i_p\), selects one of the \(2^{nR_r}\) codewords in group \(i_p\) (any choice is sufficient, as all share the same pragmatic purpose) and transmits it. The decoder uses the jointly typical decoding rule based on the jointly reified typical set: upon receiving \(Y^n\), it searches for the unique pragmatic index \(i_p\) such that there exists a codeword \(x^n\) in group \(i_p\) with \((x^n, Y^n)\in B_\epsilon^{(n)}(\underline{x}_{i_p}^n, \underline{y}_{i_p}^n)\), i.e., the pair is jointly reified typical. If no such index exists, or if multiple exist, an error is declared.

We now analyze the error probability. Assume codeword \(x^n(1,1)\) (from group \(i_p=1\)) is transmitted. Define the following events:
\begin{itemize}
\item \(E_0\): The transmitted codeword and the received sequence are not jointly reified typical. By the jointly reified typical set properties (Theorem~\ref{theorem:Pragmatically-Joint-AEP-Trans}), the probability of this event tends to 0 as \(n\to\infty\), i.e., \(P(E_0)\le \epsilon/2\) for large \(n\).
\item \(E_i\) for \(i_p\ne 1\): There exists some codeword in a wrong pragmatic group that is jointly reified typical with the received sequence. For a fixed wrong group \(i_p\), the number of codewords is \(2^{nR_r}\), and each is independent of the received sequence. By the probability bound in Theorem~\ref{theorem:Pragmatically-Joint-AEP-Trans}, the probability that a given codeword from a wrong group is jointly typical with \(Y^n\) is at most \(2^{-n(I^p(\underline{X};\underline{Y})-3\epsilon)}\). Hence, by the union bound,
\[
P(E_i) \le 2^{nR_r} \cdot 2^{-n(I^p(\underline{X};\underline{Y})-3\epsilon)} = 2^{-n(I^p(\underline{X};\underline{Y}) - R_r - 3\epsilon)}.
\]
\end{itemize}

Summing over all \(2^{nR_p}-1\) wrong groups gives
\begin{equation}
    \begin{aligned}
    P\left(\bigcup_{i_p\ne 1} E_i\right) & \le 2^{nR_p} \cdot 2^{-n(I^p(\underline{X};\underline{Y}) - R_r - 3\epsilon)} \\ &= 2^{-n(I^p(\underline{X};\underline{Y}) - R_p - R_r - 3\epsilon)}\\ &= 2^{-n(I^p(\underline{X};\underline{Y}) - R_{\text{tot}} - 3\epsilon)}.
    \end{aligned} \label{eq:chan-error-bound}
\end{equation}
Since \(R_{\text{tot}} < I^p(\underline{X};\underline{Y})\), this probability tends to 0 as \(n\to\infty\). Therefore, the total error probability \(P_e^{(n)} \le P(E_0) + \sum_{i_p\ne 1}P(E_i) < \epsilon\) for sufficiently large \(n\). This proves the achievability.

\textbf{Converse:} Assume a code with total rate \(R_{\text{tot}} > C_p\) and error probability \(P_e^{(n)}\). Let \(W_p\) be the uniformly distributed pragmatic message, and let \(\hat{W}_p\) be the decoder's estimate. By Fano's inequality,
\[
H(W_p|\hat{W}_p) \le 1 + P_e^{(n)} \log M_p = 1 + P_e^{(n)} n R_p.
\]
Using the data processing inequality and the fact that \(\underline{X}^n\) is a function of \(W_p\) and the codebook, we have
\[
n R_p = H(W_p) \le I(W_p; Y^n) + H(W_p|Y^n) \le I(\underline{X}^n; Y^n) + 1 + P_e^{(n)} n R_p.
\]
Now, by the definition of pragmatic capacity and the chain rule for up pragmatic mutual information (Theorem~\ref{theorem:Chain-Rule-Up-Pragmatic-MI}), we have \(I(\underline{X}^n; Y^n) \le I^p(\underline{X}^n; \underline{Y}^n) \le n C_p\). Thus,
\[
n R_p \le n C_p + 1 + P_e^{(n)} n R_p \implies R_p (1 - P_e^{(n)}) \le C_p + \frac{1}{n}.
\]
Since the total rate \(R_{\text{tot}} = R_p + R_r\), and \(R_r\ge 0\), we get
\begin{equation}
R_{\text{tot}} - C_p \le P_e^{(n)} R_p + \frac{1}{n} + R_r. \label{eq:chan-converse}
\end{equation}
Letting \(n\to\infty\) and assuming \(P_e^{(n)}\to 0\), we obtain \(R_{\text{tot}} \le C_p\), a contradiction. Hence, if \(R_{\text{tot}} > C_p\), the error probability cannot tend to zero. This proves the converse.

\textbf{Connection to Pragmatic Cost of Information:} The pragmatic cost of information \(\mathrm{CoI}_p(R)\) is the minimum resource required to achieve rate \(R\). Since reliable transmission is possible iff \(R \le C_p\), \(\mathrm{CoI}_p(R)\) is finite for \(R \le C_p\) and increases to infinity as \(R \to C_p\). Thus, \(C_p\) is the threshold beyond which no finite resource can guarantee reliable pragmatic communication.
\end{proof}

\begin{remark}
The pragmatic channel coding theorem generalizes Shannon's classical theorem (when isoteleia and synonymous mappings are trivial, \(C_p = C\)) and the semantic theorem (when only the synonymous mapping is nontrivial, \(C_p = C_s\)). By tolerating errors that do not affect the optimal action, pragmatic coding achieves a larger capacity. The two-level code structure separates the pragmatic purpose from its syntactic realizations, allowing the encoder to exploit the redundancy in syntactic representations while ensuring that the decoder recovers the essential pragmatic information. The jointly reified typical set serves as the foundation for both encoding and decoding, enabling reliable communication at rates up to \(C_p\). The theorem also establishes a direct link between pragmatic capacity and the pragmatic cost of information, highlighting the resource trade-offs inherent in task-oriented communication.
\end{remark}

\section{Pragmatic Lossy Source Coding}
\label{section_IX}

In this section, we investigate the fundamental limits of pragmatic lossy source coding, where the goal is to compress a source while preserving its pragmatic utility rather than its exact syntactic or semantic content. We extend the asymptotic equipartition property to the joint distribution of the source and its reconstruction at the pragmatic level. By introducing the reification mapping for both the source and the reconstruction sequences, we define the jointly reified typical set, which characterizes the equivalence classes of source-reconstruction pairs that share the same pragmatic interpretation. Using random coding and jointly typical encoding based on this typical set, we prove the pragmatic rate-distortion coding theorem, which establishes that the pragmatic rate-distortion function \(R_p(D)\) is the fundamental limit for task-oriented compression. 

\subsection{Pragmatic Distortion and Jointly Typical Set}

To establish the fundamental limits of pragmatic lossy source coding, we first extend the asymptotic equipartition property to the joint distribution of the source sequence and its reconstruction at the pragmatic level. Unlike the semantic approach, which relies on the joint synonymous mapping, we directly employ the reification mapping, which is the composition of the synonymous and isoteleia mappings, to aggregate syntactic sequences into pragmatic equivalence classes. This approach provides a direct characterization of the equivalence of source and reconstruction pairs with respect to the optimal terminal actions.

Consider a discrete memoryless source \(X\) with alphabet \(\mathcal{X}\) and probability mass function \(P(X)\). Let \(\tilde{X}\) be the associated semantic variable and \(\underline{X}\) be the pragmatic variable induced by the isoteleia mapping \(e_X:\underline{\mathcal{X}}\to 2^{\tilde{\mathcal{X}}}\) and the synonymous mapping \(f_X:\tilde{\mathcal{X}}\to 2^{\mathcal{X}}\), with the reification mapping \(g_X = f_X\circ e_X:\underline{\mathcal{X}}\to 2^{\mathcal{X}}\). Similarly, let \(\hat{X}\) be the reconstruction variable with associated semantic \(\tilde{\hat{X}}\) and pragmatic \(\underline{\hat{X}}\), induced by the reification mapping \(g_{\hat{X}}:\underline{\hat{\mathcal{X}}}\to 2^{\hat{\mathcal{X}}}\).

For lossy source coding, the encoder produces a reconstruction sequence \(\hat{X}^n\) based on a test channel \(P(\hat{X}^n|X^n)\). The pragmatic distortion measure \(d_p(\underline{x}, \hat{\underline{x}})\) quantifies the loss of utility when a pragmatic symbol \(\underline{x}\) is represented by \(\hat{\underline{x}}\). In practice, it can be defined as the utility loss:
\begin{equation}
d_p(\underline{x},\hat{\underline{x}}) = \max_{a}\mathbb{E}[U(X,a)|\underline{x}] - \max_{a}\mathbb{E}[U(X,a)|\hat{\underline{x}}],
\end{equation}
or any other non-negative function that reflects the degradation in decision quality. The average pragmatic distortion is then
\begin{equation}
\bar{d}_p = \mathbb{E}\left[d_p(\underline{X},\underline{\hat{X}})\right] = \sum_{\underline{x},\hat{\underline{x}}} P(\underline{x},\hat{\underline{x}}) d_p(\underline{x},\hat{\underline{x}}).
\end{equation}

We now define the relevant typical sets. Let \(P(X,\hat{X}) = P(X)P(\hat{X}|X)\) denote the joint distribution induced by the test channel. The syntactically jointly typical set \(A_\epsilon^{(n)}\) is defined as in classical information theory \cite{Classicpaper_Shannon,Book_Cover}:
\begin{equation}
\begin{aligned}
A_\epsilon^{(n)} \triangleq \Big\{(x^n,\hat{x}^n)\in\mathcal{X}^n\times\hat{\mathcal{X}}^n :\;&
\left|-\frac{1}{n}\log P(x^n)-H(X)\right|<\epsilon,\\
& \left|-\frac{1}{n}\log P(\hat{x}^n)-H(\hat{X})\right|<\epsilon,\\
& \left|-\frac{1}{n}\log P(x^n,\hat{x}^n)-H(X,\hat{X})\right|<\epsilon\Big\}.
\end{aligned}
\end{equation}

For the pragmatic layer, we define the pragmatically jointly typical set, which aggregates probabilities over reified equivalence classes.

\begin{definition}[Pragmatically Jointly Typical Set for Source and Reconstruction]
The pragmatically jointly typical set \(\underline{A}_\epsilon^{(n)}\) is the set of pragmatic sequence pairs \((\underline{x}^n,\underline{\hat{x}}^n)\in\underline{\mathcal{X}}^n\times\underline{\hat{\mathcal{X}}}^n\) such that
\begin{equation}
\begin{aligned}
\underline{A}_\epsilon^{(n)} \triangleq \Big\{(\underline{x}^n,\underline{\hat{x}}^n):\;&
\left|-\frac{1}{n}\log P(\underline{x}^n)-H_p(\underline{X})\right|<\epsilon,\\
& \left|-\frac{1}{n}\log P(\underline{\hat{x}}^n)-H_p(\underline{\hat{X}})\right|<\epsilon,\\
& \left|-\frac{1}{n}\log P(\underline{x}^n,\underline{\hat{x}}^n)-H_p(\underline{X},\underline{\hat{X}})\right|<\epsilon\Big\},
\end{aligned}
\end{equation}
where
\begin{equation}
P(\underline{x}^n,\underline{\hat{x}}^n) = \prod_{k=1}^n P(\underline{x}_k,\underline{\hat{x}}_k), \quad
P(\underline{x}_k,\underline{\hat{x}}_k) = \sum_{(x_k,\hat{x}_k)\in g_{X\hat{X}}(\underline{x}_k,\underline{\hat{x}}_k)} P(x_k,\hat{x}_k),
\end{equation}
with \(g_{X\hat{X}} = g_X \times g_{\hat{X}}\) being the joint reification mapping for the source and reconstruction. The marginal pragmatic probabilities are obtained by summing over the other variable.
\end{definition}

For a given pragmatic pair, the jointly reified typical set collects all syntactic source reconstruction pairs that share that pragmatic interpretation and are syntactically jointly typical.

\begin{definition}[Jointly Reified Typical Set for Source and Reconstruction]
For a given pragmatic typical pair \((\underline{x}^n,\underline{\hat{x}}^n)\in\underline{A}_\epsilon^{(n)}\), the jointly reified typical set \(B_\epsilon^{(n)}(\underline{x}^n,\underline{\hat{x}}^n)\) is defined as the set of syntactic pairs \((x^n,\hat{x}^n)\in\mathcal{X}^n\times\hat{\mathcal{X}}^n\) such that the following conditions hold:
\begin{equation}
\begin{aligned}
B_\epsilon^{(n)}(\underline{x}^n,\underline{\hat{x}}^n) = \Big\{(x^n,\hat{x}^n):\;&
\left|-\frac{1}{n}\log P(x^n)-H(X)\right|<\epsilon,\\
& \left|-\frac{1}{n}\log P(\hat{x}^n)-H(\hat{X})\right|<\epsilon,\\
& \left|-\frac{1}{n}\log P(x^n,\hat{x}^n)-H(X,\hat{X})\right|<\epsilon,\\
& \left|-\frac{1}{n}\log \frac{P(x^n)}{P(\underline{x}^n)}-\bigl(H(X)-H_p(\underline{X})\bigr)\right|<\epsilon,\\
& \left|-\frac{1}{n}\log \frac{P(\hat{x}^n)}{P(\underline{\hat{x}}^n)}-\bigl(H(\hat{X})-H_p(\underline{\hat{X}})\bigr)\right|<\epsilon,\\
& \left|-\frac{1}{n}\log \frac{P(x^n,\hat{x}^n)}{P(\underline{x}^n,\underline{\hat{x}}^n)}-\bigl(H(X,\hat{X})-H_p(\underline{X},\underline{\hat{X}})\bigr)\right|<\epsilon,\\
& (x^n,\hat{x}^n)\in g_{X\hat{X}}^n(\underline{x}^n,\underline{\hat{x}}^n)
\Big\}.
\end{aligned}
\end{equation}
The first three conditions ensure syntactic joint typicality. The fourth and fifth conditions enforce individual reification typicality for the source and reconstruction sequences. The sixth condition enforces joint reification typicality. The last condition guarantees that the syntactic pair maps to the given pragmatic pair under the joint reification mapping.
\end{definition}

Under the sequential joint reification mapping \(g_{X\hat{X}}^n\), the syntactically jointly typical set \(A_\epsilon^{(n)}\) can be partitioned into jointly reified typical sets:
\begin{equation}
A_\epsilon^{(n)} = \bigcup_{(\underline{x}^n,\underline{\hat{x}}^n)\in\underline{A}_\epsilon^{(n)}} B_\epsilon^{(n)}(\underline{x}^n,\underline{\hat{x}}^n),
\end{equation}
with disjointness for distinct pragmatic typical pairs.

\begin{figure*}[htbp]
\setlength{\abovecaptionskip}{0.cm}
\setlength{\belowcaptionskip}{-0.cm}
  \centering{\includegraphics[scale=0.9]{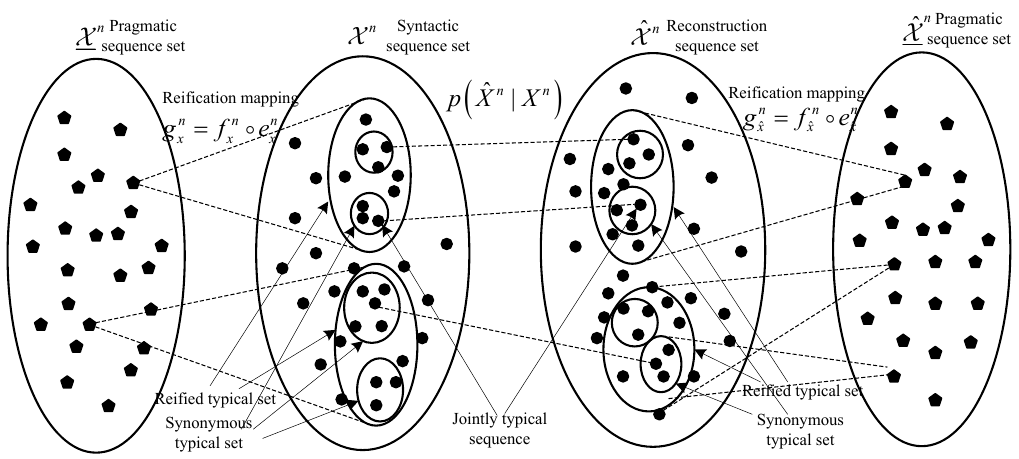}}
  \caption{Hierarchical typical sets for pragmatic lossy source coding. The joint reification mapping $g_{X\hat{X}}^n = f_{X\hat{X}}^n \circ e_{X\hat{X}}^n$ partitions the syntactic joint typical set $A_\epsilon^{(n)}$ into disjoint jointly reified typical sets $B_\epsilon^{(n)}(\underline{x}^n,\underline{\hat{x}}^n)$, each corresponding to a pragmatic pair in $\underline{A}_\epsilon^{(n)}$. Each $B_\epsilon^{(n)}$ is further decomposed into synonymous typical sets $S_\epsilon^{(n)}$ over the isoteleia typical sets $E_\epsilon^{(n)}$. }
\label{fig:joint_typical_set_mapping_for_lossy_source_coding}
\end{figure*}

As shown in Fig.~\ref{fig:joint_typical_set_mapping_for_lossy_source_coding}, the joint reification mapping $g_{X\hat{X}}^n = f_{X\hat{X}}^n \circ e_{X\hat{X}}^n$ first collapses syntactic sequence pairs into pragmatic equivalence classes according to the optimal actions, yielding the pragmatically jointly typical set $\underline{A}_\epsilon^{(n)}$. For each pragmatic pair, the preimage is the jointly reified typical set $B_\epsilon^{(n)}$, which collects all syntactic pairs sharing the same pragmatic purpose. Within $B_\epsilon^{(n)}$, the isoteleia mapping further groups semantic interpretations that lead to the same action, while the synonymous mapping collects syntactic realizations of each meaning. Thus, $B_\epsilon^{(n)}$ is the union of synonymous typical sets over the isoteleia typical set. This hierarchical coarsening reduces the number of equivalence classes from $|A_\epsilon^{(n)}|$ to $|\underline{A}_\epsilon^{(n)}|$, consistent with the entropy hierarchy $H_p(\underline{X},\underline{\hat{X}}) \le H_s(\tilde{X},\hat{\tilde{X}}) \le H(X,\hat{X})$, and provides the basis for the pragmatic rate-distortion theorem.

We now establish the AEP properties of these typical sets. The syntactically joint AEP is the classical result \cite{Classicpaper_Shannon,Book_Cover}; we restate it for reference.

\begin{theorem}[Syntactically Joint AEP]
Let \((X^n,\hat{X}^n)\) be a sequence pair of length \(n\) drawn i.i.d. according to \(P(x^n,\hat{x}^n)\). Then for sufficiently large \(n\):
\begin{enumerate}
\item \(\Pr\{(X^n,\hat{X}^n)\in A_\epsilon^{(n)}\} > 1-\epsilon\).
\item \((1-\epsilon)2^{n(H(X,\hat{X})-\epsilon)} \le |A_\epsilon^{(n)}| \le 2^{n(H(X,\hat{X})+\epsilon)}\).
\item If \((\dot{X}^n,\hat{\dot{X}}^n)\sim P(x^n)P(\hat{x}^n)\) are independent sequences with the same marginals, then
\[
(1-\epsilon)2^{-n(I(X;\hat{X})+3\epsilon)} \le \Pr\{(\dot{X}^n,\hat{\dot{X}}^n)\in A_\epsilon^{(n)}\} \le 2^{-n(I(X;\hat{X})-3\epsilon)}.
\]
\end{enumerate}
\end{theorem}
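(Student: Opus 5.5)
The three claims will follow the classical Cover--Thomas argument for the joint AEP, applied to the i.i.d.\ pairs $(X_k,\hat{X}_k)$ drawn from $P(x)P(\hat{x}|x)$. Since the statement is the standard syntactic result restated for the source--reconstruction pair, nothing pragmatic enters. The only change from the channel-coding version stated earlier is the relabelling $Y\to\hat{X}$. I would therefore mirror that version step by step.

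\textbf{Property 1.} The random variables $-\log P(X_k)$, $-\log P(\hat{X}_k)$ and $-\log P(X_k,\hat{X}_k)$ are each i.i.d.\ with means $H(X)$, $H(\hat{X})$ and $H(X,\hat{X})$. By the weak law of large numbers, each normalized sum converges in probability to its mean. So for $n$ large, each of the three defining conditions of $A_\epsilon^{(n)}$ fails with probability at most $\epsilon/3$. A union bound then gives $\Pr\{(X^n,\hat{X}^n)\in A_\epsilon^{(n)}\}>1-\epsilon$.

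\textbf{Property 2.} Every $(x^n,\hat{x}^n)\in A_\epsilon^{(n)}$ satisfies
\[
2^{-n(H(X,\hat{X})+\epsilon)}\le P(x^n,\hat{x}^n)\le 2^{-n(H(X,\hat{X})-\epsilon)}.
\]
For the upper bound on $|A_\epsilon^{(n)}|$, sum the lower probability bound over the set and use that the total probability is at most $1$. For the lower bound, combine the upper probability bound with Property~1, which says the total probability of the set exceeds $1-\epsilon$.

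\textbf{Property 3.} For independent $(\dot{X}^n,\hat{\dot{X}}^n)\sim P(x^n)P(\hat{x}^n)$,
\[
\Pr\{(\dot{X}^n,\hat{\dot{X}}^n)\in A_\epsilon^{(n)}\}=\sum_{A_\epsilon^{(n)}}P(x^n)P(\hat{x}^n).
\]
Each factor is bounded using its marginal typicality condition, and the number of terms is bounded by Property~2. The exponents combine as $H(X)+H(\hat{X})-H(X,\hat{X})=I(X;\hat{X})$, with a total slack of $\pm3\epsilon$. This yields the upper bound $2^{-n(I-3\epsilon)}$ and the lower bound $(1-\epsilon)2^{-n(I+3\epsilon)}$.

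No step is a genuine obstacle. The only place needing care is keeping the $\epsilon$-slack consistent in the lower bound of Property~3, which relies on the lower cardinality estimate from Property~2. Since that estimate holds only for sufficiently large $n$, the two ``sufficiently large $n$'' thresholds from Properties~1 and~2 must be taken jointly.
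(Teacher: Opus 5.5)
Your proposal is correct and is exactly the standard Cover--Thomas joint-AEP argument: the weak law of large numbers with a union bound, the two-sided counting bounds on $|A_\epsilon^{(n)}|$, and the product-of-marginals summation giving $I(X;\hat{X})\pm 3\epsilon$. The paper gives no proof of this statement; it restates the classical result with a citation, and its own proof of the pragmatic analogue (the appendix on the pragmatically joint AEP) uses the same summation step you describe for Property~3.
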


The pragmatically joint AEP follows by applying the weak law of large numbers to the aggregated pragmatic probabilities.

\begin{theorem}[Pragmatically Joint AEP for Compression]\label{theorem:pragmatically-Joint-AEP-Compr}
Let \((\underline{X}^n,\underline{\hat{X}}^n)\) be a pragmatic sequence pair of length \(n\) drawn i.i.d. according to \(P(\underline{x}^n,\underline{\hat{x}}^n)\), with the associated syntactic sequence pair \((X^n,\hat{X}^n)\) under the joint reification mapping \(g_{X\hat{X}}^n\). Then for sufficiently large \(n\):
\begin{enumerate}
\item \(\Pr\{(\underline{X}^n,\underline{\hat{X}}^n)\in\underline{A}_\epsilon^{(n)}\} > 1-\epsilon\).
\item For any \((\underline{x}^n,\underline{\hat{x}}^n)\in\underline{A}_\epsilon^{(n)}\),
\[
2^{-n(H_p(\underline{X},\underline{\hat{X}})+\epsilon)} \le P(\underline{x}^n,\underline{\hat{x}}^n) \le 2^{-n(H_p(\underline{X},\underline{\hat{X}})-\epsilon)}.
\]
\item \((1-\epsilon)2^{n(H_p(\underline{X},\underline{\hat{X}})-\epsilon)} \le |\underline{A}_\epsilon^{(n)}| \le 2^{n(H_p(\underline{X},\underline{\hat{X}})+\epsilon)}\).
\item If \((\dot{X}^n,\hat{\dot{X}}^n)\sim P(x^n)P(\hat{x}^n)\) are independent sequences with the same marginals as \(X^n\) and \(\hat{X}^n\), and \((\dot{\underline{X}}^n,\hat{\dot{\underline{X}}}^n)\) are the corresponding pragmatic sequences, then
\[
(1-\epsilon)2^{-n(I_p(\underline{X};\underline{\hat{X}})+3\epsilon)} \le \Pr\{(\dot{\underline{X}}^n,\hat{\dot{\underline{X}}}^n)\in\underline{A}_\epsilon^{(n)}\} \le 2^{-n(I_p(\underline{X};\underline{\hat{X}})-3\epsilon)}.
\]
\end{enumerate}
\end{theorem}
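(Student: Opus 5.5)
Parts 1--3 are the standard AEP argument applied to the pragmatic pair process. The key observation is that \((\underline{X}_k,\underline{\hat{X}}_k)\) is a deterministic function of \((X_k,\hat{X}_k)\) through the single-letter joint reification mapping \(g_{X\hat{X}}\). Since \((X_k,\hat{X}_k)\) is i.i.d. under \(P(x)P(\hat{x}|x)\), the pragmatic pairs are also i.i.d., with law \(P(\underline{x},\underline{\hat{x}})\) obtained by aggregation over reified cells. The marginals \(P(\underline{x}^n)\) and \(P(\underline{\hat{x}}^n)\) likewise factor into products.

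For part 1, I would apply the weak law of large numbers separately to \(-\frac1n\log P(\underline{X}^n)\), \(-\frac1n\log P(\underline{\hat{X}}^n)\) and \(-\frac1n\log P(\underline{X}^n,\underline{\hat{X}}^n)\). These converge in probability to \(H_p(\underline{X})\), \(H_p(\underline{\hat{X}})\) and \(H_p(\underline{X},\underline{\hat{X}})\), exactly as in the pragmatic line of Theorem~\ref{thm:three-tier-aep}. Choosing \(n\) so that each deviation event has probability below \(\epsilon/3\), the union bound gives probability greater than \(1-\epsilon\). Part 2 is then immediate from the third defining inequality of \(\underline{A}_\epsilon^{(n)}\). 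Part 3 follows the usual counting route, as in Appendix~\ref{app:a1_pragmatic} and Theorem~\ref{theorem:Pragmatically-Joint-AEP-Trans}. For the upper bound, write \(1\ge\sum_{\underline{A}_\epsilon^{(n)}}P\ge|\underline{A}_\epsilon^{(n)}|2^{-n(H_p(\underline{X},\underline{\hat{X}})+\epsilon)}\). For the lower bound, combine part 1 with the upper probability bound of part 2.

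Part 4 carries the substance, and here the compression case departs from the transmission case. I would not simply sum \(P(\underline{x}^n)P(\underline{\hat{x}}^n)\) over \(\underline{A}_\epsilon^{(n)}\). That sum gives the exponent \(H_p(\underline{X})+H_p(\underline{\hat{X}})-H_p(\underline{X},\underline{\hat{X}})\), which is the ordinary mutual information of the coarse variables and not \(I_p\). Instead I would lift the event to the syntactic level, using the partition \(A_\epsilon^{(n)}=\bigcup_{\underline{A}_\epsilon^{(n)}}B_\epsilon^{(n)}(\underline{x}^n,\underline{\hat{x}}^n)\). Each syntactic jointly typical pair is then charged the product of the pragmatic marginal masses of its class, as the independent sequences are drawn at the pragmatic layer. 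The quantity to bound becomes
\[
\sum_{(x^n,\hat{x}^n)\in A_\epsilon^{(n)}} P(\underline{x}^n)\,P(\underline{\hat{x}}^n),
\qquad (\underline{x}^n,\underline{\hat{x}}^n)=g_{X\hat{X}}^{-n}(x^n,\hat{x}^n).
\]
On every term the marginal pragmatic typicality conditions give \(2^{-n(H_p(\underline{X})+H_p(\underline{\hat{X}})\pm2\epsilon)}\). The syntactic joint AEP bounds the number of terms between \((1-\epsilon)2^{n(H(X,\hat{X})-\epsilon)}\) and \(2^{n(H(X,\hat{X})+\epsilon)}\). Multiplying yields \(2^{-n(H_p(\underline{X})+H_p(\underline{\hat{X}})-H(X,\hat{X})\pm3\epsilon)}=2^{-n(I_p(\underline{X};\underline{\hat{X}})\pm3\epsilon)}\), with the \((1-\epsilon)\) factor on the lower side. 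This exactly mirrors how \(I^p\) arose in Theorem~\ref{theorem:Pragmatically-Joint-AEP-Trans}, with the roles of the marginal and joint entropies interchanged.

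The main obstacle is justifying this lifted counting step: why the event ``the independent pair is pragmatically jointly typical'' is measured by syntactic typical pairs weighted with pragmatic marginals. I would first try to make this rigorous by fixing the reified cells, using Theorem~\ref{theorem:Reified-Typical-Set} to relate \(P(x^n)\) and \(P(\underline{x}^n)\) on each cell, and checking that cross terms only cost \(O(\epsilon)\) in the exponent. If \(H_p(\underline{X})+H_p(\underline{\hat{X}})<H(X,\hat{X})\), the bound is meaningful only for the positive part \((I_p)^+\), and I would state it in that form.
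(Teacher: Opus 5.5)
Your parts 1--3 (weak law for the three normalized log-probabilities, a union bound, and the standard counting) are correct and match what the paper's appendix asserts.

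Your diagnosis of part 4 is also correct, and it applies to the paper's own argument. The appendix writes the probability as $\sum_{\underline{A}_\epsilon^{(n)}}P(\underline{x}^n)P(\underline{\hat{x}}^n)$, which is exactly the sum you set aside, and then invokes $H_p\le H_s\le H$. By the marginal typicality conditions and part 3, that sum lies between $(1-\epsilon)2^{-n(J+3\epsilon)}$ and $2^{-n(J-3\epsilon)}$, where $J=H_p(\underline{X})+H_p(\underline{\hat{X}})-H_p(\underline{X},\underline{\hat{X}})$. The hierarchy $H_p(\underline{X},\underline{\hat{X}})\le H(X,\hat{X})$ can only loosen the upper estimate of $|\underline{A}_\epsilon^{(n)}|$, which yields the weaker upper bound $2^{-n(I_p-3\epsilon)}$. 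The lower bound would need $|\underline{A}_\epsilon^{(n)}|$ of order $2^{nH(X,\hat{X})}$, which is incompatible with part 3. Nor is the transmission theorem a sound template: its appendix uses $P(\underline{x}^n)\le 2^{-n(H(X)-\epsilon)}$, which contradicts pragmatic marginal typicality whenever $H_p(\underline{X})<H(X)-2\epsilon$.

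The genuine gap is your lifting step, and it cannot be closed. Because $g_{X\hat{X}}=g_X\times g_{\hat{X}}$, the images $\dot{\underline{X}}^n$ and $\hat{\dot{\underline{X}}}^n$ of the independent pair are themselves independent with laws $P(\underline{x}^n)$ and $P(\underline{\hat{x}}^n)$. So the probability in part 4 \emph{is} the direct sum, and there is nothing left to reweight.

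Up to typicality bookkeeping, your lifted sum equals $\sum_{\underline{A}_\epsilon^{(n)}}|B_\epsilon^{(n)}(\underline{x}^n,\underline{\hat{x}}^n)|\,P(\underline{x}^n)P(\underline{\hat{x}}^n)$. It charges each pragmatic pair once per syntactic representative, an overcount by roughly $2^{n(H(X,\hat{X})-H_p(\underline{X},\underline{\hat{X}}))}$. That factor is exponential, not an $O(\epsilon)$ correction. By your own estimate the lifted sum exceeds $1$ once $I_p<-3\epsilon$, so it is not a probability.

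Concretely, let $X$ be uniform on $\{0,1,2,3\}$, $\hat{X}=X$, and let both reification mappings merge $\{0,1\}$ and $\{2,3\}$. Then $H(X,\hat{X})=2$ and $H_p(\underline{X})=H_p(\underline{\hat{X}})=H_p(\underline{X},\underline{\hat{X}})=1$, so $I_p=0$. The set $\underline{A}_\epsilon^{(n)}$ consists of the $2^n$ diagonal pairs, and the true probability is $\sum_{\underline{x}^n}P(\underline{x}^n)^2=2^{-n}$. Your lifted sum is $4^n\cdot 2^{-n}\cdot 2^{-n}=1=2^{-nI_p}$, and the claimed lower bound is $(1-\epsilon)2^{-3n\epsilon}$.

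In general $J=I_p+[H(X,\hat{X})-H_p(\underline{X},\underline{\hat{X}})]\ge I_p$. So the upper bound of part 4 holds a fortiori, but the lower bound fails for large $n$ as soon as $H(X,\hat{X})-H_p(\underline{X},\underline{\hat{X}})>6\epsilon$. Switching to $(I_p)^+$ does not help, since $(I_p)^+=0$ in the example. What can actually be proved is part 4 with $J=I(\underline{X};\underline{\hat{X}})$ in place of $I_p$.
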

The details of the proof are provided in Appendix~\ref{a6:pragmatic-AEP-Compr-proof}.

Finally, we establish the properties of the jointly reified typical set.

\begin{theorem}[Properties of the Jointly Reified Typical Set]
For any pragmatic typical pair \((\underline{x}^n,\underline{\hat{x}}^n)\in\underline{A}_\epsilon^{(n)}\), for sufficiently large \(n\):
\begin{enumerate}
\item For any \((x^n,\hat{x}^n)\in B_\epsilon^{(n)}(\underline{x}^n,\underline{\hat{x}}^n)\),
\[
2^{-n(H(X,\hat{X})-H_p(\underline{X},\underline{\hat{X}})+\epsilon)} \le \frac{P(x^n,\hat{x}^n)}{P(\underline{x}^n,\underline{\hat{x}}^n)} \le 2^{-n(H(X,\hat{X})-H_p(\underline{X},\underline{\hat{X}})-\epsilon)}.
\]
\item 
\[
2^{n(H(X,\hat{X})-H_p(\underline{X},\underline{\hat{X}})-\epsilon)} \le |B_\epsilon^{(n)}(\underline{x}^n,\underline{\hat{x}}^n)| \le 2^{n(H(X,\hat{X})-H_p(\underline{X},\underline{\hat{X}})+\epsilon)}.
\]
\end{enumerate}
\end{theorem}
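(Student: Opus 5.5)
Part 1 follows directly from the sixth defining condition of $B_\epsilon^{(n)}(\underline{x}^n,\underline{\hat{x}}^n)$. For any pair in the set we have
\[
\left|-\frac{1}{n}\log \frac{P(x^n,\hat{x}^n)}{P(\underline{x}^n,\underline{\hat{x}}^n)}-\bigl(H(X,\hat{X})-H_p(\underline{X},\underline{\hat{X}})\bigr)\right|<\epsilon .
\]
Multiplying by $-n$ and exponentiating gives the two-sided bound on the conditional ratio. This mirrors part 1 of Theorem~\ref{theorem:Reified-Typical-Set} and Theorem~\ref{theorem:Jointly-Reified-Typical-Set}, whose appendix proofs I would follow almost verbatim.

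For the upper bound in part 2, the membership condition $(x^n,\hat{x}^n)\in g_{X\hat{X}}^n(\underline{x}^n,\underline{\hat{x}}^n)$ means the probabilities of elements of $B_\epsilon^{(n)}$ add up to at most the class probability:
\[
\sum_{(x^n,\hat{x}^n)\in B_\epsilon^{(n)}} P(x^n,\hat{x}^n)\le P(\underline{x}^n,\underline{\hat{x}}^n).
\]
Dividing by $P(\underline{x}^n,\underline{\hat{x}}^n)$ and using the lower bound from part 1 on each term gives $1\ge |B_\epsilon^{(n)}|\,2^{-n(H(X,\hat{X})-H_p(\underline{X},\underline{\hat{X}})+\epsilon)}$, which is the stated upper bound.

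The lower bound in part 2 is the main obstacle. It needs the conditional probability mass of $B_\epsilon^{(n)}$ inside its reification class to be close to one. I would show
\[
\Pr\{(X^n,\hat{X}^n)\in B_\epsilon^{(n)}(\underline{x}^n,\underline{\hat{x}}^n)\mid (\underline{X}^n,\underline{\hat{X}}^n)=(\underline{x}^n,\underline{\hat{x}}^n)\}\ge 1-\epsilon
\]
for large $n$. The route is the weak law of large numbers. Conditioned on the pragmatic pair, the syntactic letters are independent with distributions $P(x,\hat{x}\mid\underline{x}_k,\underline{\hat{x}}_k)$. Pragmatic typicality of the conditioning pair fixes the empirical letter frequencies (the usual strong-typicality caveat applies), so the conditional log-likelihoods concentrate. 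This makes the three syntactic typicality conditions and the three reification conditions hold with high conditional probability.

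If this per-sequence statement is awkward, I would fall back on the averaged version. Theorem~\ref{theorem:pragmatically-Joint-AEP-Compr} together with the syntactic joint AEP shows that the union of the $B_\epsilon^{(n)}$ sets carries probability greater than $1-2\epsilon$. I would then restrict attention to pragmatic pairs whose conditional mass is at least $1-\epsilon$, as the appendix treatments of the earlier reified sets implicitly do.

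With that mass bound in hand, the upper bound from part 1 gives
\[
1-\epsilon\le |B_\epsilon^{(n)}|\,2^{-n(H(X,\hat{X})-H_p(\underline{X},\underline{\hat{X}})-\epsilon)} .
\]
This yields $|B_\epsilon^{(n)}|\ge(1-\epsilon)2^{n(H(X,\hat{X})-H_p(\underline{X},\underline{\hat{X}})-\epsilon)}$. Absorbing the factor $(1-\epsilon)$ by replacing $\epsilon$ with a slightly larger value, as is done in the earlier theorems, gives the stated form.
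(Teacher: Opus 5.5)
\textbf{Verdict: genuine gap.} Part 1 and the upper bound in part 2 are fine and coincide with the paper's argument. The paper's proof just points back to the summation proof for the single-sequence reified set. Your inequality $\sum_{B_\epsilon^{(n)}}P(x^n,\hat x^n)\le P(\underline{x}^n,\underline{\hat x}^n)$ is in fact the correct form of the paper's step, which writes it as an equality.

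The gap is in the lower bound, at the step ``pragmatic typicality of the conditioning pair fixes the empirical letter frequencies.'' It does not. $\underline{A}_\epsilon^{(n)}$ is defined by weak (entropy) typicality, which constrains only $-\frac1n\log P(\underline{x}^n,\underline{\hat x}^n)$. Conditioned on a pragmatic pair with empirical type $T$, the quantity $-\frac1n\log\bigl(P(x^n,\hat x^n)/P(\underline{x}^n,\underline{\hat x}^n)\bigr)$ concentrates around $\sum_u T(u)\,H\bigl(X,\hat X\mid(\underline X,\underline{\hat X})=u\bigr)$. That value need not be close to $H(X,\hat X)-H_p(\underline X,\underline{\hat X})$, and then $B_\epsilon^{(n)}$ can carry almost none of the class's mass.

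This is fatal, not a caveat: the lower bound is false as stated. Take $\hat X$ constant, $P(a)=\tfrac12$, $P(b)=\tfrac14$, $P(c_1)=P(c_2)=\tfrac18$, with pragmatic classes $\{a\},\{b\},\{c_1,c_2\}$, so $H(X)=1.75$ and $H_p(\underline X)=1.5$. The pragmatic sequence with $n/2$ letters $\{a\}$ and $n/2$ letters $\{b\}$ has $-\frac1n\log P(\underline x^n)=1.5$ exactly, so it lies in $\underline{A}_\epsilon^{(n)}$. Its class contains a single syntactic sequence, with $-\frac1n\log P(x^n)=1.5$, which violates the first defining condition whenever $\epsilon<\tfrac14$. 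Hence $B_\epsilon^{(n)}=\emptyset$, while the claimed lower bound is $2^{n(1/4-\epsilon)}>1$.

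So no argument gives the lower bound for every pair in $\underline{A}_\epsilon^{(n)}$; the hypothesis must change. Your two routes are the two honest repairs.
\begin{itemize}
\item Define the pragmatic typical set by strong typicality. Then your conditional weak-law argument goes through once the tolerances are matched.
\item Keep weak typicality and prove a ``most pairs'' statement. The six conditions hold jointly with probability $1-\delta_n\to1$: apply the weak law directly to $-\log P(X_k\mid\underline X_k)$, $-\log P(\hat X_k\mid\underline{\hat X}_k)$ and $-\log P(X_k,\hat X_k\mid\underline X_k,\underline{\hat X}_k)$. The two AEP theorems you cite give the reification conditions only with tolerance $2\epsilon$, not $\epsilon$. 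Markov's inequality then yields conditional mass at least $1-\sqrt{\delta_n}$ for all pragmatic pairs outside a set of probability at most $\sqrt{\delta_n}$.
\end{itemize}

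The paper's appendix proofs do not ``implicitly'' perform this restriction, as you suggest. They simply assume $B$ exhausts the class probability, which is exactly the missing step.

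Finally, do not absorb the factor $1-\epsilon$ by enlarging $\epsilon$, since that changes the set $B_\epsilon^{(n)}$ itself. Instead run the concentration at a tolerance $\delta<\epsilon$. With $c=H(X,\hat X)-H_p(\underline X,\underline{\hat X})$ and $1-\epsilon'$ your conditional-mass bound, this gives $|B_\epsilon^{(n)}|\ge(1-\epsilon')2^{n(c-\delta)}$, which exceeds $2^{n(c-\epsilon)}$ for large $n$.
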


\begin{proof}
Property (1) follows directly from the definition of \(B_\epsilon^{(n)}(\underline{x}^n,\underline{\hat{x}}^n)\). The cardinality bounds are obtained by the same summation argument as in Theorem~\ref{theorem:Reified-Typical-Set} of Section~\ref{subsection:JAEP-JRTS}.
\end{proof}

\begin{remark}
The jointly reified typical set \(B_\epsilon^{(n)}(\underline{x}^n,\underline{\hat{x}}^n)\) groups together all syntactically jointly typical source-reconstruction pairs that share the same pragmatic interpretation. Its definition directly employs the reification mapping \(g_{X\hat{X}}^n = f_{X\hat{X}}^n \circ e_{X\hat{X}}^n\), which is the composition of the joint synonymous and joint isoteleia mappings. Consequently, the jointly reified typical set can be decomposed as the union of jointly synonymous typical sets over the isoteleia typical set:
\[
B_\epsilon^{(n)}(\underline{x}^n,\underline{\hat{x}}^n) = \bigcup_{(\tilde{x}^n,\hat{\tilde{x}}^n)\in E_\epsilon^{(n)}(\underline{x}^n,\underline{\hat{x}}^n)} S_\epsilon^{(n)}(\tilde{x}^n,\hat{\tilde{x}}^n),
\]
where \(E_\epsilon^{(n)}(\underline{x}^n,\underline{\hat{x}}^n)\) is the isoteleia typical set and \(S_\epsilon^{(n)}(\tilde{x}^n,\hat{\tilde{x}}^n)\) is the jointly synonymous typical set. The cardinality of a jointly reified typical set is approximately \(2^{n(H(X,\hat{X})-H_p(\underline{X},\underline{\hat{X}}))}\), which is larger than a single synonymous typical set by a factor of \(2^{n(H_s(\tilde{X},\hat{\tilde{X}})-H_p(\underline{X},\underline{\hat{X}}))}\), reflecting the additional syntactic freedom gained when only the pragmatic purpose needs to be preserved. 
\end{remark}

\subsection{Pragmatic Rate-Distortion Coding Theorem}

We now investigate the fundamental limit of pragmatic lossy source coding. As depicted in Fig.~\ref{fig:pragmatic_lossy_coding}, with the help of the reification mapping $g^n = f^n \circ e^n$, a pragmatic index $i_p$ is mapped into the syntactic source sequence $X^n(i)$. Considering the distortion requirement, the encoder selects a suitable reconstruction codeword $\hat{X}^n(j)$ from a pragmatic class to represent the source, then sends the indices to the receiver. The decoder outputs the reconstruction, and after demapping recovers the estimated pragmatic index.

\begin{figure}[htbp]
\centering
\includegraphics[scale=0.9]{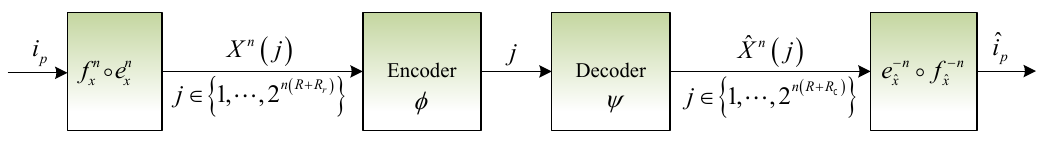}
\caption{Block diagram of pragmatic lossy source coding.}
\label{fig:pragmatic_lossy_coding}
\end{figure}

For a lossy source coding system, let $p(\hat{X}|X)$ be the test channel and $\mathcal{X}$, $\hat{\mathcal{X}}$ denote the source and reconstruction syntactic alphabets. The conditional distribution for the $n$-th extension is
\begin{equation}
p(\hat{x}^n | x^n) = \prod_{k=1}^{n} p(\hat{x}_k | x_k). \label{eq:test_channel}
\end{equation}

\begin{definition}[Pragmatic Lossy Source Code]
An $(M, n)$ code for pragmatic lossy source coding consists of the following parts:
\begin{enumerate}[(1)]
   \item A pragmatic index set $\mathcal{I}_p = \{1,\dots,M_p\}$ and a source reification index set $\mathcal{I}_r = \{1,\dots,M_r\}$; a reconstruction reification index set $\mathcal{I}_c = \{1,\dots,M_c\}$. The syntactic source and reconstruction index sets are $\mathcal{I} = \{1,\dots,M\}$ and $\mathcal{I}' = \{1,\dots,M'\}$.
   \item A reification mapping $g_X^n: \underline{\mathcal{X}}^n \to \mathcal{X}^n$ generates the set of source sequences, namely the pragmatic sourcebook $\mathcal{S} = \{X^n(1),\dots,X^n(M')\}$. This book is partitioned into reified source subsets $\mathcal{S}_r$ according to the pragmatic index.
   \item An encoding function $\phi: \mathcal{X}^n \to \hat{\mathcal{X}}^n$ generates the reconstruction codebook\\ $\mathcal{C} = \{\hat{X}^n(1),\dots,\hat{X}^n(M)\}$, partitioned into pragmatic codeword subsets $\mathcal{C}_p$ via the reification mapping $g_{\hat{X}}^n$.
   \item A decoding function $\psi: \hat{\mathcal{X}}^n \to \hat{\mathcal{X}}^n$ outputs the reconstruction $\hat{X}^n$.
   \item After demapping, $g_{\hat{X}}^n(\hat{X}^n) = \hat{i}_p$, the estimated pragmatic index is obtained. Both $\psi$ and $g_{\hat{X}}^n$ are deterministic.
\end{enumerate}
\end{definition}

The quotient sets $\mathcal{S} / g_X^n = \{\mathcal{S}_r\}$ and $\mathcal{C} / g_{\hat{X}}^n = \{\mathcal{C}_p\}$ have sizes $|\mathcal{S}/g_X^n| = M_r$ and $|\mathcal{C}/g_{\hat{X}}^n| = M_p$. Define the pragmatic code rate $R = \frac{1}{n}\log_2 M_p$, the source reification rate $R_r = \frac{1}{n}\log_2 M_r$, and the reconstruction reification rate $R_c = \frac{1}{n}\log_2 M_c$. We assume each reified set has equal size: $|\mathcal{S}_r| = M'/M_r = 2^{nR_r}$ and $|\mathcal{C}_p| = M/M_p = 2^{nR_c}$. The total syntactic rate is $R_{\mathrm{tot}} = R + R_c$.

\begin{theorem}[Pragmatic Rate-Distortion Coding Theorem]
\label{thm:pragmatic_RD}
Given an i.i.d. syntactic source $X \sim p(x)$ with associated pragmatic variable $\underline{X}$ under the reification mapping $g_X$, and a bounded pragmatic distortion measure $d_p(\underline{x}, \hat{\underline{x}})$, define the pragmatic rate-distortion function as
\begin{equation}
R_p(D) = \min_{g_X, g_{\hat{X}}} \min_{\substack{p(\hat{x}|x):\\ \mathbb{E}[d_p(\underline{X},\hat{\underline{X}})] \le D}} I_p(\underline{X};\hat{\underline{X}}), \label{eq:RpD}
\end{equation}
where $I_p(\underline{X};\hat{\underline{X}}) = H_p(\underline{X}) + H_p(\hat{\underline{X}}) - H(X,\hat{X})$ is the down pragmatic mutual information.

For any $\epsilon > 0$:
\begin{enumerate}
   \item \textbf{Achievability:} If $R > R_p(D)$, then for sufficiently large $n$, there exists an $(2^{n(R+R_c)}, n)$ code such that the average pragmatic distortion satisfies $\bar{d}_p \le D + \epsilon$.
   \item \textbf{Converse:} If $R < R_p(D)$, then for any $(2^{n(R+R_c)}, n)$ code, the average distortion $\bar{d}_p > D - \epsilon$ for sufficiently large $n$.
\end{enumerate}

Moreover, the pragmatic Value of Information (VoI) achieved by the reconstruction is maximized (i.e., equals the VoI obtained from the full pragmatic variable $\underline{X}$) if and only if $R \ge R_p(D)$. If $R < R_p(D)$, the achievable pragmatic VoI is strictly less than the full value, and the gap is determined by the rate shortfall.
\end{theorem}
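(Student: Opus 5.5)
The plan mirrors the classical random-coding proof of the rate-distortion theorem, with the syntactic joint typicality test replaced by membership in the jointly reified typical set $B_\epsilon^{(n)}(\underline{x}^n,\underline{\hat{x}}^n)$. Throughout, the probability estimates come from the Pragmatically Joint AEP for Compression (Theorem~\ref{theorem:pragmatically-Joint-AEP-Compr}).

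\textbf{Achievability.} Fix $D$ and let $(g_X,g_{\hat{X}},p(\hat{x}|x))$ attain $R_p(D)$ in (\ref{eq:RpD}), so that $\mathbb{E}[d_p(\underline{X},\hat{\underline{X}})]\le D$. Then take $R>R_p(D)$.

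\begin{enumerate}
\item Draw $2^{n(R+R_c)}$ reconstruction codewords i.i.d.\ from $p(\hat{x}^n)=\prod_k p(\hat{x}_k)$. Group them into $2^{nR}$ pragmatic subsets $\mathcal{C}_p$ by their image under $g_{\hat{X}}^n$, with $2^{nR_c}$ codewords per subset.
\item Encoder: given $x^n$, find any codeword $\hat{x}^n$ such that $(x^n,\hat{x}^n)\in B_\epsilon^{(n)}(\underline{x}^n,\underline{\hat{x}}^n)$, and send its pragmatic index (and, optionally, its reification index). If no such codeword exists, send index $1$.
\item Distortion on success: when the pair is reified-jointly typical, its pragmatic pair is jointly typical. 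Since $d_p$ is bounded by $d_{\max}$, the usual typical-average argument gives $\frac1n\sum_k d_p(\underline{x}_k,\hat{\underline{x}}_k)\le D+\epsilon$. For this I will add a distortion-typicality condition to $\underline{A}_\epsilon^{(n)}$, as Cover--Thomas does.
\item Overall distortion: $\bar d_p\le D+\epsilon+d_{\max}P_e$, where $P_e$ is the probability that no codeword covers $x^n$.
\item Covering bound: by part 4 of Theorem~\ref{theorem:pragmatically-Joint-AEP-Compr}, one independent codeword is pragmatically jointly typical with a typical $x^n$ with probability at least $(1-\epsilon)2^{-n(I_p+3\epsilon)}$. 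Hence
\[
P_e\le\Pr\{X^n\notin A_\epsilon^{(n)}\}+\bigl(1-(1-\epsilon)2^{-n(I_p+3\epsilon)}\bigr)^{2^{nR}},
\]
where the exponent $2^{nR}$ counts one representative per pragmatic subset.
\item Using $(1-x)^m\le e^{-mx}$, the second term vanishes whenever $R>I_p+3\epsilon$, which can be arranged with $\epsilon$ small since $R>R_p(D)=I_p$.
\end{enumerate}

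\textbf{Converse.} Take any code with pragmatic rate $R$ and average distortion $\bar d_p\le D$. Let $\hat{\underline{X}}^n$ be the demapped index, which takes at most $2^{nR}$ values, so $nR\ge H(\hat{\underline{X}}^n)\ge I(\underline{X}^n;\hat{\underline{X}}^n)$.

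\begin{enumerate}
\item Lower-bound $I(\underline{X}^n;\hat{\underline{X}}^n)$ by the down pragmatic mutual information $I_p(\underline{X}^n;\hat{\underline{X}}^n)$. This holds because $I_p\le I_s\le I$ by the Hierarchy of Mutual Information, applied to the pair $(\underline{X},\hat{\underline{X}})$ viewed as syntactic variables with trivial mappings. The Sequential Chain Rule for Down Pragmatic Mutual Information lets the bound pass down the blocks.
\item Single-letterize using independence of the $X_k$ and subadditivity of $H_p(\hat{\underline{X}}^n)$, giving $\sum_k I_p(\underline{X}_k;\hat{\underline{X}}_k)$.
\item Each term is at least $R_p(\mathbb{E}d_p(\underline{X}_k,\hat{\underline{X}}_k))$ by definition. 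Convexity of $R_p(D)$, together with Jensen's inequality, then gives $R\ge R_p(D)$.
\item Contrapositive: if $R<R_p(D)$, the distortion cannot be at most $D$, hence $\bar d_p>D-\epsilon$ for large $n$.
\end{enumerate}

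\textbf{VoI statement.} Take $d_p$ to be the utility loss. Then $\mathbb{E}[d_p]$ equals the maximal pragmatic VoI minus the achieved VoI, so full value corresponds to $D$ at its minimum. The statement therefore follows from the two directions above: the shortfall $\mathrm{VoI}_p^{\max}-\mathrm{VoI}_p\ge R_p^{-1}(R)$ is positive when $R<R_p(D)$, via the variational duality with $\Phi_p$.

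\textbf{Main obstacle.} I expect two difficulties, both in the converse. First, $I_p$ is a positive-part, non-additive quantity whose marginal terms $H_p$ do not satisfy the classical chain rule, so the single-letterization needs care. I would first try bounding $H_p(\hat{\underline{X}}^n)\le\sum_k H_p(\hat{\underline{X}}_k)$, using $H(X^n,\hat{X}^n)=\sum_k H(X_k,\hat X_k\mid X^{k-1},\hat X^{k-1})$ with conditioning on the past. Second, convexity of $R_p(D)$ must be established, including the outer minimization over the mappings. For this I would time-share between mappings and appeal to the positive-part definition, falling back to the lower convex envelope exactly as in Lemma~\ref{lem:concave_envelope}.
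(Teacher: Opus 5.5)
The achievability half breaks at step 5, and it cannot be repaired.

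First, your encoder tests $(x^n,\hat{x}^n)\in B_\epsilon^{(n)}$. Every element of $B_\epsilon^{(n)}$ is syntactically jointly typical, so for a typical $x^n$ and an independent codeword this event has probability at most $2^{-n(I(X;\hat{X})-3\epsilon)}$. A lower bound on the larger event "the pragmatic pair lies in $\underline{A}_\epsilon^{(n)}$" gives no lower bound on this smaller event. The paper's proof makes the same inversion when it cites $B_\epsilon^{(n)}\subseteq\underline{A}_\epsilon^{(n)}$; you did fix its sign slip of taking $R<I_p-3\epsilon$, but not this.

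Second, part 4 of Theorem~\ref{theorem:pragmatically-Joint-AEP-Compr} is itself false. Summing $P(\underline{x}^n)P(\hat{\underline{x}}^n)$ over $\underline{A}_\epsilon^{(n)}$ gives at most $|\underline{A}_\epsilon^{(n)}|\,2^{-n(H_p(\underline{X})+H_p(\hat{\underline{X}})-2\epsilon)}\le 2^{-n(I(\underline{X};\hat{\underline{X}})-3\epsilon)}$. Here $I(\underline{X};\hat{\underline{X}})=H_p(\underline{X})+H_p(\hat{\underline{X}})-H_p(\underline{X},\hat{\underline{X}})$ exceeds $H_p(\underline{X})+H_p(\hat{\underline{X}})-H(X,\hat{X})$ by $H(X,\hat{X})-H_p(\underline{X},\hat{\underline{X}})$. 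That gap is positive as soon as the joint reification merges pairs of positive probability.

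So covering with $2^{nR}$ pragmatic reconstructions needs $R>I(\underline{X};\hat{\underline{X}})$. The classical converse for the i.i.d.\ coarse source $\underline{X}^n$, whose reconstruction takes at most $2^{nR}$ values, shows this cannot be avoided.

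Here is a concrete counterexample to the statement. Let $X$ be uniform on $\{1,2,3,4\}$, let $g_X=g_{\hat{X}}$ have classes $\{1,2\}$ and $\{3,4\}$, and let $d_p$ be Hamming distortion on classes. The test channel $\hat{X}=X$ has zero distortion and $I_p=1+1-2=0$, so $R_p(0)=0$. But $\underline{X}^n$ is a fair-coin sequence, so with $R=1/2$ every code has $\bar{d}_p\ge h^{-1}(1-R)=h^{-1}(1/2)\approx 0.11$, where $h$ is the binary entropy function. This contradicts $\bar{d}_p\le\epsilon$ for small $\epsilon$.

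The achievability claim is therefore false as stated, and so is the VoI clause you derive from it. What your random-coding argument does prove, once the encoder tests pragmatic joint typicality plus a distortion condition, is achievability of $\min_{\mathbb{E}[d_p]\le D}I(\underline{X};\hat{\underline{X}})$, the ordinary rate--distortion function of $\underline{X}$.

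Your converse is right in outline, but step 2 points the wrong way. The bound $H_p(\hat{\underline{X}}^n)\le\sum_k H_p(\hat{\underline{X}}_k)$ bounds from above a term that enters $I_p(\underline{X}^n;\hat{\underline{X}}^n)$ with a plus sign, so it cannot give the lower bound you need.

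Drop the detour through the block $I_p$. Because the $\underline{X}_k$ are i.i.d., the classical argument gives $I(\underline{X}^n;\hat{\underline{X}}^n)\ge\sum_k I(\underline{X}_k;\hat{\underline{X}}_k)$. Per letter, $I(\underline{X}_k;\hat{\underline{X}}_k)\ge H_p(\underline{X}_k)+H_p(\hat{\underline{X}}_k)-H(X_k,\hat{X}_k)$ simply because $H_p(\underline{X}_k,\hat{\underline{X}}_k)\le H(X_k,\hat{X}_k)$, and this also handles the positive part. The chain $I_p\le I_s\le I$ you cite compares with $I(X;\hat{X})$, not with $I(\underline{X};\hat{\underline{X}})$, so it is not the inequality you need.

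You also do not need convexity through the outer minimum, because a code fixes $g_X,g_{\hat{X}}$. The identity $H_p(\underline{X})+H_p(\hat{\underline{X}})-H(X,\hat{X})=I(X;\hat{X})-[H(X)-H_p(\underline{X})]-H(\hat{X}\mid\hat{\underline{X}})$, with $H(\hat{X}\mid\hat{\underline{X}})$ concave in $p(\hat{x})$, shows the objective is convex in $p(\hat{x}|x)$. Hence the fixed-mapping rate--distortion function is convex, non-increasing and at least $R_p(D)$, which closes the chain. Retreating to a lower convex envelope would only prove a weaker converse.
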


\begin{proof}
\textbf{Achievability:}
Fix $\epsilon > 0$ and choose a test channel $p(\hat{x}|x)$ such that $\mathbb{E}[d_p(\underline{X},\hat{\underline{X}})] \le D/(1+\epsilon)$. Let $I_p = I_p(\underline{X};\hat{\underline{X}})$ and select a pragmatic rate $R$ satisfying$R_p(D) < R < I_p - 3\epsilon$.

Generate a pragmatic sourcebook $\mathcal{S}$ of size $2^{n(R+R_r)}$ by drawing sequences independently from $p(x^n) = \prod_{k=1}^n p(x_k)$. Partition it uniformly into $2^{nR}$ groups $\mathcal{S}_r(i_r)$ according to the reification mapping $g_X^n$, each of size $2^{nR_r}$, such that $\mathcal{S}_r(i_r) \subset B_\epsilon^{(n)}(\underline{x}_{i_r})$ for some representative pragmatic sequence.

\sloppy Generate a reconstruction codebook $\mathcal{C}$ of size $2^{n(R+R_c)}$ independently from $p(\hat{x}^n) = \prod_{k=1}^n p(\hat{x}_k)$, where $p(\hat{x}) = \sum_x p(x) p(\hat{x}|x)$. Partition it uniformly into $2^{nR}$ groups $\mathcal{C}_p(i_p)$ according to $g_{\hat{X}}^n$, each of size $2^{nR_c}$, with $\mathcal{C}_p(i_p) \subset B_\epsilon^{(n)}(\hat{\underline{x}}_{i_p})$.

Given a source sequence $X^n$, the encoder first finds the unique source reification group $\mathcal{S}_r(w_r)$ containing it. Then, based on the distortion criterion, it searches for a pragmatic index $w_p$ and a codeword $\hat{X}^n(w_p, l) \in \mathcal{C}_p(w_p)$ such that $(X^n, \hat{X}^n(w_p,l)) \in B_\epsilon^{(n)}$ (jointly reified typical). If more than one such pair exists, choose the smallest index; if none, set $w_p = 1$ and $l = 1$. The encoder sends the indices $(w_p, l)$. Similar to the classical random coding argument \cite{Book_ElGamal}, we analyze the probability of encoding error and the expected distortion.

Define the error event $\mathcal{E}$ as the event that the chosen pair is not jointly reified typical. It can be decomposed into
\[
\mathcal{E} = \mathcal{E}_1 \cup \mathcal{E}_2,
\]
where$\mathcal{E}_1 = \{ X^n \notin A_\epsilon^{(n)}(X^n) \}$,
and 
\[
\mathcal{E}_2 = \left\{ X^n \in \mathcal{S}_r(w_r),\; \hat{X}^n(w_p,l) \in \mathcal{C}_p(w_p),\; (X^n,\hat{X}^n(w_p,l)) \notin B_\epsilon^{(n)},\; \forall w_p,l \right\}.
\]

By the AEP, $\Pr(\mathcal{E}_1) \to 0$ as $n\to\infty$. For $\mathcal{E}_2$, using the fact that $\mathcal{S}_r(w_r) \subset B_\epsilon^{(n)}(\underline{x}_{w_r})$ and $\mathcal{C}_p(w_p) \subset B_\epsilon^{(n)}(\hat{\underline{x}}_{w_p})$, we have
\begin{align}
P(\mathcal{E}_2) &= \sum_{x^n \in A_\epsilon^{(n)}} p(x^n) \prod_{w_p=1}^{2^{nR}} \Pr\left( (X^n, \hat{X}^n(w_p,l)) \notin B_\epsilon^{(n)} \,\middle|\, X^n=x^n \right) \notag \\
&\le \sum_{x^n \in A_\epsilon^{(n)}} p(x^n) \left[ 1 - \Pr\left( (X^n, \hat{X}^n(1,l)) \in B_\epsilon^{(n)} \right) \right]^{2^{nR}}. \label{eq:E2_step}
\end{align}

Since $x^n \in A_\epsilon^{(n)}$ and $\hat{X}^n(1,l)$ is drawn independently from $p(\hat{x}^n)$, by the pragmatically joint AEP (Theorem~\ref{theorem:pragmatically-Joint-AEP-Compr}), the probability that the pair is jointly reified typical is bounded below by
\begin{equation}
\Pr\left( (X^n, \hat{X}^n(1,l)) \in B_\epsilon^{(n)} \right) \ge (1-\epsilon) 2^{-n(I_p + 3\epsilon)}, \label{eq:joint_prob_lower}
\end{equation}
for sufficiently large $n$. This bound follows from the fact that the probability that an independent pair falls into the pragmatically jointly typical set $\underline{A}_\epsilon^{(n)}$ is at least $(1-\epsilon)2^{-n(I_p+3\epsilon)}$, and $B_\epsilon^{(n)} \subseteq \underline{A}_\epsilon^{(n)}$.

Using $(1-x)^m \le e^{-mx}$ for $x\in[0,1]$ and $m\ge 0$, we obtain
\begin{align}
P(\mathcal{E}_2) &\le \left( 1 - (1-\epsilon)2^{-n(I_p+3\epsilon)} \right)^{2^{nR}} \notag \\
&\le \exp\left( - (1-\epsilon) 2^{n(R - I_p - 3\epsilon)} \right). \label{eq:E2_exp}
\end{align}
Since $R < I_p - 3\epsilon$, the exponent tends to $-\infty$, hence $P(\mathcal{E}_2) \to 0$. Therefore the total encoding error probability $P_e^{(n)} \to 0$.

Conditioned on the success event $\mathcal{E}^c$ (i.e., the pair is jointly reified typical), the average distortion is bounded by
\[
\mathbb{E}[d_p(\underline{X}^n, \hat{\underline{X}}^n) \mid \mathcal{E}^c] \le (1+\epsilon) \mathbb{E}[d_p(\underline{X},\hat{\underline{X}})] \le D.
\]
Let $d_{\max}$ be the maximum distortion. The overall expected distortion satisfies
\begin{equation}
\mathbb{E}[d_p(\underline{X}^n, \hat{\underline{X}}^n)] \le P_e^{(n)} d_{\max} + (1-P_e^{(n)}) D \le D + \epsilon,
\end{equation}
for sufficiently large $n$. This proves achievability.

\textbf{Converse:}
Assume a sequence of $(2^{n(R+R_c)}, n)$ codes with average distortion $\bar{d}_p \le D$. Let $W_p$ be the uniform pragmatic message (index $i_p$), and let $\hat{W}_p$ be its estimate at the decoder. We will show that $R \ge R_p(D)$ must hold.

First, by the data processing inequality and the fact that $W_p \to X^n \to \hat{X}^n$ forms a Markov chain (where $\hat{X}^n$ is the reconstruction), we have
\begin{equation}
I(W_p; X^n) \ge I(\hat{\underline{X}}^n(W_p); X^n), \label{eq:data1}
\end{equation}
where $\hat{\underline{X}}^n(W_p)$ denotes the pragmatic reconstruction associated with the chosen codeword. Since $W_p$ is a function of $\hat{\underline{X}}^n(W_p)$, we have $H(W_p) \le I(W_p; X^n) + H(W_p \mid X^n)$. However, a more direct route is used in \cite{Paper_SIT}.

We proceed as follows:
\begin{align}
nR &= H(W_p) \ge I(W_p; X^n) \label{eq:step1} \\ 
   &\ge I(\hat{\underline{X}}^n(W_p); X^n) \label{eq:step2} \\
   &\ge I_p(\underline{X}^n; \hat{\underline{X}}^n). \label{eq:step3}
\end{align}
Inequality \eqref{eq:step2} follows from the data processing inequality because $\hat{\underline{X}}^n$ is a function of $W_p$ (through the codebook). Inequality \eqref{eq:step3} is a pragmatic analogue of Lemma 10 in \cite{Paper_SIT,Book_SIT}, which states that the mutual information between the syntactic source and the pragmatic reconstruction is at least the down pragmatic mutual information between the pragmatic source and the pragmatic reconstruction. Indeed, by the definition of down pragmatic mutual information and the fact that the reconstruction variable is a function of the channel output, one can verify that $I(X^n; \hat{\underline{X}}^n) \ge I_p(\underline{X}^n; \hat{\underline{X}}^n)$ (this follows from the entropy hierarchy and the data processing inequality).

Since the source is memoryless, the down pragmatic mutual information satisfies
\begin{equation}
I_p(\underline{X}^n; \hat{\underline{X}}^n) = \sum_{k=1}^n I_p(\underline{X}_k; \hat{\underline{X}}_k) \label{eq:sum_Ip},
\end{equation}
because the pragmatic variables are also memoryless due to the deterministic reification mapping.

By the definition of the pragmatic rate-distortion function $R_p(D)$, for each $k$, we have
\begin{equation}
I_p(\underline{X}_k; \hat{\underline{X}}_k) \ge R_p\left( \mathbb{E}[d_p(\underline{X}_k, \hat{\underline{X}}_k)] \right). \label{eq:each_Rp}
\end{equation}
Summing over $k$ and using the convexity of $R_p(D)$, we obtain
\begin{equation}
\sum_{k=1}^n I_p(\underline{X}_k; \hat{\underline{X}}_k) \ge \sum_{k=1}^n R_p\left( \mathbb{E}[d_p(\underline{X}_k, \hat{\underline{X}}_k)] \right) \ge n R_p\left( \frac{1}{n}\sum_{k=1}^n \mathbb{E}[d_p(\underline{X}_k, \hat{\underline{X}}_k)] \right). \label{eq:convex}
\end{equation}
The average distortion over the entire sequence is $\bar{d}_p = \frac{1}{n}\sum_{k=1}^n \mathbb{E}[d_p(\underline{X}_k, \hat{\underline{X}}_k)] \le D$ by assumption. Since $R_p(D)$ is non-increasing, $R_p(\bar{d}_p) \ge R_p(D)$. Therefore,
\begin{equation}
I_p(\underline{X}^n; \hat{\underline{X}}^n) \ge n R_p(\bar{d}_p) \ge n R_p(D). \label{eq:final_Ip}
\end{equation}
Combining \eqref{eq:step1}–\eqref{eq:step3} with \eqref{eq:final_Ip}, we obtain
\begin{equation}
nR \ge n R_p(D) \quad \Longrightarrow \quad R \ge R_p(D).
\end{equation}
Thus, if $R < R_p(D)$, the assumption $\bar{d}_p \le D$ leads to a contradiction. Hence, for any code with $R < R_p(D)$, the average distortion must satisfy $\bar{d}_p > D - \epsilon$ for sufficiently large $n$. This completes the converse.

\textbf{Connection to Pragmatic Value of Information}
The pragmatic Value of Information achieved by the reconstruction $\hat{\underline{X}}$ is
\begin{equation}
\mathrm{VoI}_p(\hat{\underline{X}}) = \mathbb{E}_{X,\hat{X}}\left[U(X, a^*(\hat{\underline{X}}))\right] - U_0, \label{eq:voi}
\end{equation}
where $a^*(\hat{\underline{X}})$ is the optimal action based on the reconstructed pragmatic class. The full VoI (as if $\underline{X}$ were directly observed) is attained if and only if the reconstruction preserves all pragmatic distinctions, which requires $I_p(\underline{X};\hat{\underline{X}}) = H_p(\underline{X})$. This occurs only when $R \ge H_p(\underline{X})$. More generally, for a given distortion level $D$, the full VoI is attainable iff $R \ge R_p(D)$. If $R < R_p(D)$, the loss in pragmatic information reduces the achievable utility.
\end{proof}

\begin{remark}
The pragmatic rate-distortion theorem generalizes both the classical rate-distortion theorem \cite{Classicpaper_Shannon, Book_Cover} and the semantic rate-distortion theorem \cite{Paper_SIT, Book_SIT}. When the reification mapping $g$ is trivial (identity), $R_p(D) = R(D)$; when only the synonymous mapping $f$ is non-trivial, we recover $R_p(D) = R_s(D)$. In general, the hierarchy $R_p(D) \le R_s(D) \le R(D)$ holds. Although our code uses three levels (pragmatic + two reification rates), the reification mapping $g = f \circ e$ implicitly includes the semantic layer. The extra reification rates $R_r$ and $R_c$ allow flexible trade-offs: setting them to zero achieves the pure pragmatic compression limit $R_p(D)$, while increasing them preserves additional syntactic or semantic details, recovering the classical or semantic limits. This modularity is consistent with the design principles of pragmatic information systems.
\end{remark}

\section{Pragmatic Information Measure of Continuous Message}
\label{section_X}

In this section, we extend the pragmatic information measures to continuous messages, completing the theoretical framework with both bilateral and single-sided characterizations. Building upon the isoteleic volume formalism, we derive the pragmatic capacity and rate-distortion functions for Gaussian channels and sources, respectively. More importantly, we introduce the corresponding pragmatic cost of information (CoI) and pragmatic value of information (VoI) for both bilateral and single-sided cases, establishing the full duality spectrum: CoI is the inverse of the achievable rate functions, while VoI is the Legendre-Fenchel dual of the rate-distortion functions. These results provide a complete information-theoretic and decision-theoretic characterization of task-oriented communication over continuous alphabets, encompassing perceptual (observation-to-action) and expressive (action-to-reconstruction) asymmetric scenarios.

\subsection{Pragmatic Entropy and Pragmatic Mutual Information for Continuous Message}
\label{sec:pragmatic_continuous}

In this subsection, we extend the pragmatic information measures, such as pragmatic entropy and pragmatic mutual information, to the domain of continuous messages. Unlike the syntactic layer, which deals with exact signal values, or the semantic layer, which deals with meanings, the pragmatic layer is concerned with the consequences of actions. For continuous signals, the equivalence is determined by whether different realizations lead to the same optimal action, as defined by a utility function. This gives rise to the concept of \emph{isoteleic volumes}, which are the continuous analogues of the pragmatic equivalence classes (fibers) introduced in Section~\ref{section_II}.

The coding theorems established in Sections \ref{section_VII}--\ref{section_IX} are developed for discrete, finite alphabets, where the syntactic, semantic, and pragmatic layers are characterized by quotient spaces of finite cardinality. For continuous alphabets, the pragmatic equivalence classes induced by the isoteleia mapping become uncountable sets, and the discrete entropy measures must be replaced by their differential counterparts. To bridge the discrete and continuous formulations, we adopt the standard quantization argument: partition the continuous space \(\mathcal{W}\) into small cells of volume \(\Delta\), apply the discrete theory to the resulting finite alphabet, and then take the limit \(\Delta \to 0\). Under this limiting procedure, the quotient-space structure of the isoteleia mapping converges to the continuous notion of \emph{isoteleic volumes} \(\Omega\), which quantify the average measure of signal space that maps to the same optimal action. The entropy hierarchy \(H_p \le H_s \le H\) is preserved in the limit, with the discrete pragmatic entropy giving way to the differential pragmatic entropy defined below. Consequently, all subsequent Gaussian channel and source characterizations are consistent with—and indeed inherit the operational meaning of—the discrete coding theorems proved earlier.

To formalize pragmatic information for continuous messages, we first define the isoteleia mapping for continuous variables and introduce the notion of isoteleic volumes.

\begin{definition}[Isoteleic Volume]
Let $W$ be a continuous random variable with alphabet $\mathcal{W} \subseteq \mathbb{R}^d$ and probability density function $p(w)$. Let $\tilde{\mathcal{W}}$ be the associated semantic alphabet, and let $\underline{\mathcal{W}}$ be the pragmatic alphabet. The pragmatic variable $\underline{W}$ is induced by the isoteleia mapping $e: \underline{\mathcal{W}} \to 2^{\tilde{\mathcal{W}}}$, which, in turn, is derived from a utility function $U(s, a)$ defined on states and actions.

For each pragmatic symbol $\underline{w}_i \in \underline{\mathcal{W}}$, its \emph{isoteleic volume} $\Omega_i$ is defined as the total measure of the set of all semantic (and hence syntactic) realizations that map to the same optimal action. Formally,
\begin{equation}
\Omega_i \triangleq \left| \left\{ w \in \mathcal{W} : a^*(w) = \underline{w}_i \right\} \right|,
\end{equation}
where $a^*(w)$ is the optimal action for the realization $w$, defined by
\begin{equation}
a^*(w) = \arg \max_{a \in \mathcal{A}} \mathbb{E}_{S|w}[U(S, a)].
\end{equation}
The collection $\{ \Omega_i \}_{i=1}^{|\underline{\mathcal{W}}|}$ forms a partition of the support of $W$, i.e., $\bigcup_i \Omega_i = \mathcal{W}$ and $\Omega_i \cap \Omega_j = \emptyset$ for $i \neq j$.
\end{definition}

Analogous to the average synonymous length $S$ in semantic information theory, we define the \emph{average isoteleic volume} as the expectation of the isoteleic volume under the distribution of $W$.

\begin{definition}[Average Isoteleic Volume]
The average isoteleic volume $\Omega$ is defined as
\begin{equation}
\Omega \triangleq \mathbb{E}_{w \sim p(w)} \left[ \left| \Omega_{a^*(w)} \right| \right] = \sum_{i=1}^{|\underline{\mathcal{W}}|} \Pr(\underline{w}_i) \cdot |\Omega_i|,
\end{equation}
where $\Pr(\underline{w}_i) = \int_{\Omega_i} p(w) dw$ is the probability mass of the pragmatic class $\underline{w}_i$.
\end{definition}

The average isoteleic volume $\Omega$ quantifies the ``compression" achieved by grouping different continuous signals into the same pragmatic class. A larger $\Omega$ indicates that more distinct signals lead to the same optimal action, thus reducing decision uncertainty.

\subsubsection{Pragmatic Entropy for Continuous Messages}
\leavevmode\newline\indent
We now define the pragmatic entropy for a continuous random variable. The definition is derived by quantizing the continuous space and taking the limit, analogous to the derivation of differential entropy.

\begin{definition}[Pragmatic Entropy for Continuous Variables]
Let $W$ be a continuous random variable with probability density function $p(w)$, supported on $\mathcal{W} \subseteq \mathbb{R}^d$. Let $\underline{W}$ be the associated pragmatic variable induced by the isoteleia mapping, with average isoteleic volume $\Omega$. The pragmatic entropy of $\underline{W}$ is defined as
\begin{equation}\label{eq:pragmatic-entropy-cont}
H_p(\underline{W}) \triangleq - \int_{\mathcal{W}} p(w) \log p(w) \, dw - \log \Omega,
\end{equation}
where $\log \Omega = \sum_{i} \Pr(\underline{w}_i) \log |\Omega_i|$ is the logarithm of the average isoteleic volume.
\end{definition}

\begin{remark}
\sloppy The first term in Eq. (\ref{eq:pragmatic-entropy-cont}) is the classical differential entropy $h(W) = - \int p(w) \log p(w) dw$. The second term, $\log \Omega$, accounts for the reduction in uncertainty due to pragmatic abstraction. Unlike the discrete case, the pragmatic differential entropy $H_p(\underline{W})$ is not an absolute measure of information but a relative one, defined up to a reference measure. To maintain a meaningful non-negative pragmatic entropy (in prabits), the average isoteleic volume $\Omega$ must satisfy:
$\log \Omega \le h(W)$. When the isoteleia mapping is trivial (i.e., each signal leads to a unique action, $\Omega=1$), the pragmatic entropy reduces to the differential entropy: $H_p(\underline{W}) = h(W)$. 
\end{remark}

\begin{corollary}[Lower Bound on Pragmatic Entropy]
The pragmatic entropy is lower bounded by
\begin{equation}
H_p(\underline{W}) \geq h(W) - \log \Omega.
\end{equation}
Equality holds when the isoteleic volumes are chosen proportionally to the probability mass of the pragmatic classes, i.e., when the partition is optimal.
\end{corollary}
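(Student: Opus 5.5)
The plan is to read $H_p(\underline{W})$ operationally as the Shannon entropy of the discrete pragmatic label. Concretely, this is the limit of the discrete pragmatic entropy of Section~\ref{section_III} under the quantization argument described above: $H_p(\underline{W}) = -\sum_i \Pr(\underline{w}_i)\log \Pr(\underline{w}_i)$ with $\Pr(\underline{w}_i)=\int_{\Omega_i}p(w)\,dw$. The inequality then says that this label entropy is at least $h(W)-\sum_i \Pr(\underline{w}_i)\log|\Omega_i|$. The right-hand side is exactly the expression in Eq.~\eqref{eq:pragmatic-entropy-cont}. So the corollary asserts that the defining formula is a lower bound, and that the bound is tight in the stated case. (If one instead takes Eq.~\eqref{eq:pragmatic-entropy-cont} literally as the definition, the inequality holds trivially with equality, and the content lies entirely in identifying when the operational and formula-based notions agree.)

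\textbf{Step 1: decompose the differential entropy along the partition.} Since $\{\Omega_i\}$ partitions the support and $\underline{W}$ is a deterministic function of $W$, the mixed discrete/continuous chain rule gives
\[
h(W) = H(\underline{W}) + h(W\mid \underline{W}), \qquad h(W\mid\underline{W})=\sum_i \Pr(\underline{w}_i)\, h(W\mid \underline{W}=\underline{w}_i).
\]
To justify this rigorously I would quantize with cell size $\Delta$, refined so that the quantization cells are nested inside the $\Omega_i$. The discrete chain rule $H(W^\Delta)=H(\underline{W})+H(W^\Delta\mid\underline{W})$ then holds exactly. Adding $\log\Delta$ to both sides and letting $\Delta\to 0$ yields the identity, in the same way that Cover--Thomas relate discrete and differential entropy.

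\textbf{Step 2: apply the maximum-entropy bound on each class.} Conditioned on $\underline{W}=\underline{w}_i$, the variable $W$ has density $p(w)/\Pr(\underline{w}_i)$ supported on $\Omega_i$. The uniform distribution maximizes differential entropy on a set of finite measure, so
\[
h(W\mid\underline{W}=\underline{w}_i)\le \log|\Omega_i|.
\]
Averaging over $i$ gives $h(W\mid\underline{W})\le \sum_i\Pr(\underline{w}_i)\log|\Omega_i| = \log\Omega$, using the paper's convention for $\log\Omega$. Substituting this into Step 1 gives $H_p(\underline{W})\ge h(W)-\log\Omega$.

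\textbf{Step 3: characterize equality.} Equality in Step 2 holds iff $W$ is uniform on each $\Omega_i$, that is, iff $p(w)=\Pr(\underline{w}_i)/|\Omega_i|$ on $\Omega_i$. Equivalently, each volume $|\Omega_i|$ is proportional to its class mass relative to a constant density level, which is the ``proportional'' (optimal) partition in the statement.

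\textbf{Main obstacle.} I expect the hard part to be the limiting argument in Step 1 when some $\Omega_i$ have infinite measure or the boundaries of the $\Omega_i$ are irregular. I would first assume finitely many classes of finite positive volume with boundaries of measure zero, which covers the Gaussian threshold regions used later. Classes of infinite volume make the bound trivial, since $\log|\Omega_i|=\infty$.
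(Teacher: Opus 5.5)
The paper gives no argument for this corollary, and under its conventions none is needed. With Eq.~\eqref{eq:pragmatic-entropy-cont} taken as the definition, the displayed bound \emph{is} the definition and holds with equality for every partition, so the equality clause carries no content. That is exactly your parenthetical remark.

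Your main route is genuinely different. You read $H_p(\underline{W})$ as the Shannon entropy of the finite label, and you prove a real inequality from $h(W)=H(\underline{W})+h(W\mid\underline{W})$ and the per-class maximum-entropy bound $h(W\mid\underline{W}=\underline{w}_i)\le\log|\Omega_i|$. This is correct, and it buys actual content. It shows that the paper's formula under-estimates the label entropy, with exact gap $H(\underline{W})-\bigl[h(W)-\log\Omega\bigr]=\sum_i\Pr(\underline{w}_i)\,D\bigl(p_i\,\|\,\mathrm{Unif}(\Omega_i)\bigr)$, where $p_i=p/\Pr(\underline{w}_i)$ on $\Omega_i$. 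By Jensen, the inequality also survives the paper's other definition $\Omega=\sum_i\Pr(\underline{w}_i)|\Omega_i|$.

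Step~1 needs no quantization. Write $\log p(w)=\log\Pr(\underline{w}_i)+\log p_i(w)$ on $\Omega_i$ and integrate; this gives the chain rule directly whenever $\int p\,|\log p|<\infty$. The boundary-regularity obstacle you anticipate therefore never arises.

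Two claims in the proposal are wrong, however.

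First, the ``equivalently'' in Step~3 fails. The condition you correctly derive, $p$ a.e.\ constant on each $\Omega_i$, is not the corollary's ``volumes proportional to masses.'' With class-dependent constants, $|\Omega_i|=\Pr(\underline{w}_i)/c_i$ merely defines $c_i$ as the mean density and says nothing. With a common constant it is neither necessary nor sufficient. It is not necessary: $p=\tfrac12$ on $[0,1]$ and $p=\tfrac14$ on $(1,3]$ give equality, with masses $\tfrac12,\tfrac12$ but volumes $1,2$. It is not sufficient: a single class carrying a non-uniform $p$ is trivially proportional, yet $H(\underline{W})=0>h(W)-\log|\Omega_1|$. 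Under your reading the stated equality clause is therefore false. You should replace it by piecewise uniformity rather than match it to the corollary's wording.

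Second, finite-volume classes do not cover the Gaussian threshold regions. Any finite partition of $\mathbb{R}^d$ has a cell of infinite Lebesgue measure and positive Gaussian mass, so $\log\Omega=+\infty$ and your bound is vacuous. Even for (necessarily countable) partitions into bounded cells the bound is strict, since a Gaussian density is constant on no set of positive measure. So your theorem never returns the identity $H_p(\underline{W})=\tfrac12\log(2\pi e\sigma^2)-\log\Omega$ used in Section~\ref{section_X}. Present it as a different (and sounder) statement, not as a justification of that formula.
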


\begin{example}[Uniform Distribution]
Let $W$ be uniformly distributed over an interval of length $L$, i.e., $p(w) = 1/L$ for $w \in [0, L]$. Suppose the isoteleia mapping partitions this interval into $K$ equal isoteleic volumes, each of size $\Omega_i = L/K$. Then the average isoteleic volume is $\Omega = L/K$. The pragmatic entropy is
\begin{equation}
H_p(\underline{W}) = -\int_0^L \frac{1}{L} \log \frac{1}{L} dw - \log \frac{L}{K} = \log K \text{ prabits}.
\end{equation}
This result is analogous to the semantic entropy for a uniformly distributed source, but here the interpretation is in terms of optimal actions rather than meanings.
\end{example}

\begin{example}[Gaussian Distribution]
Let $W \sim \mathcal{N}(0, \sigma^2)$ be a Gaussian random variable. Under an optimal isoteleia mapping with average isoteleic volume $\Omega$, the pragmatic entropy is
\begin{align}
H_p(\underline{W}) &= \mathbb{E}[-\log p(W)] - \log \Omega \nonumber \\
&= \frac{1}{2} \log (2\pi e \sigma^2) - \log \Omega \nonumber \\
&= \frac{1}{2} \log \left( \frac{2\pi e \sigma^2}{\Omega^2} \right) \text{ prabits}.
\end{align}
This shows that the pragmatic entropy of a Gaussian source decreases logarithmically with the average isoteleic volume.
\end{example}

The definitions can be extended to multiple continuous variables. Let $(W, V)$ be a pair of continuous random variables with joint density $p(w, v)$, and let $(\underline{W}, \underline{V})$ be their associated pragmatic variables, induced by a joint isoteleia mapping with average isoteleic volume $\Omega_{wv}$.

\begin{definition}[Joint Pragmatic Entropy for Continuous Variables]
The joint pragmatic entropy of $(\underline{W}, \underline{V})$ is defined as
\begin{equation}
H_p(\underline{W}, \underline{V}) \triangleq - \int_{\mathcal{W} \times \mathcal{V}} p(w, v) \log p(w, v) \, dw \, dv - \log \Omega_{wv},
\end{equation}
where $\Omega_{wv}$ is the average joint isoteleic volume.
\end{definition}

Similarly, the conditional pragmatic entropy of $\underline{V}$ given $W$ is defined as
\begin{equation}
H_p(\underline{V} | W) \triangleq - \int_{\mathcal{W} \times \mathcal{V}} p(w, v) \log p(v|w) \, dw \, dv - \log \Omega_v,
\end{equation}
where $\Omega_v$ is the average isoteleic volume for $V$.

\subsubsection{Pragmatic Mutual Information for Continuous Variables}
\leavevmode\newline\indent
We now define the pragmatic mutual information for continuous variables. As in the discrete case, there are two variants: the \emph{up pragmatic mutual information} and the \emph{down pragmatic mutual information}.

\begin{definition}[Up Pragmatic Mutual Information for Continuous Variables]
Let $(W, V)$ be a pair of continuous random variables with joint density $p(w, v)$. Let $\underline{W}$ and $\underline{V}$ be their pragmatic counterparts, induced by a joint isoteleia mapping with average isoteleic volumes $\Omega_w$ and $\Omega_v$. The up pragmatic mutual information is defined as
\begin{align}\label{eq:UP-Pragmatic-MI-Cont}
I^p(\underline{W}; \underline{V}) &\triangleq H(W) + H(V) - H_p(\underline{W}, \underline{V}) \nonumber \\
&= - \int_{\mathcal{W} \times \mathcal{V}} p(w, v) \log \frac{p(w) p(v)}{p(w, v)} \, dw \, dv + \log (\Omega_w \Omega_v).
\end{align}
\end{definition}

\begin{definition}[Down Pragmatic Mutual Information for Continuous Variables]
The down pragmatic mutual information is defined as
\begin{align}\label{eq:Down-Pragmatic-MI-Cont}
I_p(\underline{W}; \underline{V}) &\triangleq H_p(\underline{W}) + H_p(\underline{V}) - H(W, V) \nonumber \\
&= - \int_{\mathcal{W} \times \mathcal{V}} p(w, v) \log \frac{p(w) p(v)}{p(w, v)} \, dw \, dv - \log (\Omega_w \Omega_v).
\end{align}
\end{definition}

\begin{remark}
The first term in both (\ref{eq:UP-Pragmatic-MI-Cont}) and (\ref{eq:Down-Pragmatic-MI-Cont}) is the classical mutual information $I(W; V)$. The difference lies in the second term. If $\mathbb{E}[\log (\Omega_w \Omega_v)] \ge 0$, then the up pragmatic mutual information is greater than or equal to the classical mutual information, while the down pragmatic mutual information is less than or equal to it. In practice, the down pragmatic mutual information may be negative; in such cases, one can take its positive part, $(I_p(\underline{W}; \underline{V}))^+$.
\end{remark}

The relationship between these measures is summarized by the following hierarchy, which mirrors the discrete case:
\begin{equation}
I_p(\underline{W}; \underline{V}) \le I(W; V) \le I^p(\underline{W}; \underline{V}).
\end{equation}

\begin{example}[Gaussian Variables]
Let $(W, V)$ be jointly Gaussian with correlation coefficient $\rho$ and marginal variances $\sigma_w^2$ and $\sigma_v^2$. The classical mutual information is $I(W; V) = -\frac{1}{2} \log (1 - \rho^2)$. Under a joint isoteleia mapping with average volumes $\Omega_w$ and $\Omega_v$, the up and down pragmatic mutual informations are
\begin{align}
I^p(\underline{W}; \underline{V}) &= -\frac{1}{2} \log (1 - \rho^2) + \log(\Omega_w \Omega_v), \\
I_p(\underline{W}; \underline{V}) &= -\frac{1}{2} \log (1 - \rho^2) - \log(\Omega_w \Omega_v).
\end{align}
This demonstrates how pragmatic abstraction can either enhance or reduce the measured information, depending on whether the coarsening is applied to the joint entropy or to the marginal entropies.
\end{example}

This completes the extension of pragmatic entropy and mutual information to continuous messages. In the following subsections, we will apply these measures to derive the pragmatic capacity of the Gaussian channel and the pragmatic rate-distortion function for Gaussian sources, along with their associated cost and value counterparts.

\subsection{Pragmatic Capacity of Gaussian Channel and Pragmatic Cost of Information}
\label{sec:pragmatic_gaussian_capacity}

We now investigate the pragmatic capacity of the additive white Gaussian noise (AWGN) channel and its dual—the pragmatic cost of information. The received signal at time $k$ is modeled as
\begin{equation}
Y_k = X_k + Z_k,
\end{equation}
with $X_k$ being the channel input, $Z_k \sim \mathcal{N}(0, \sigma^2)$ i.i.d. Gaussian noise, and $Y_k$ the output. The channel is subject to an average power constraint $\mathbb{E}[X^2] \leq P$.

Let $\underline{X}$ and $\underline{Y}$ be the pragmatic variables associated with the input and output, respectively, induced by the isoteleia mappings $e_X$ and $e_Y$ (and their compositions with the synonymous mappings) according to a given utility function. The pragmatic channel capacity is defined as the maximum achievable rate of pragmatic information transmission, i.e.,
\begin{equation}
C_p \triangleq \max_{p(x)} \max_{g_{XY}} I^p(\underline{X}; \underline{Y}),
\end{equation}
where $g_{XY} = f_{XY} \circ e_{XY}$ is the joint reification mapping, and $I^p(\underline{X}; \underline{Y}) = H(X) + H(Y) - H_p(\underline{X}, \underline{Y})$ is the up pragmatic mutual information.

For the Gaussian channel, the input distribution that maximizes the classical mutual information is $X \sim \mathcal{N}(0, P)$. Under the pragmatic framework, we assume that the joint isoteleia mapping partitions the input-output space such that the average isoteleic volumes for $X$ and $Y$ are $\Omega_x$ and $\Omega_y$, respectively. Without loss of generality, we consider a symmetric case where $\Omega_x = \Omega_y = \Omega$, with $\Omega \ge 1$ being the average isoteleic volume. This volume represents the average measure of the set of signals that lead to the same optimal terminal action.

Using the continuous version of the up pragmatic mutual information derived in (\ref{eq:UP-Pragmatic-MI-Cont}), we have
\begin{align}
I^p(\underline{X}; \underline{Y}) &= I(X; Y) + \log(\Omega_x \Omega_y) \nonumber \\
&= \frac{1}{2} \log \left( 1 + \frac{P}{\sigma^2} \right) + \log(\Omega^2) \nonumber \\
&= \frac{1}{2} \log \left( \Omega^4 \left( 1 + \frac{P}{\sigma^2} \right) \right).
\end{align}

Thus, the pragmatic capacity of the Gaussian channel (per real dimension) is given by
\begin{equation}
 C_p = \frac{1}{2} \log \left( \Omega^4 \left( 1 + \frac{P}{\sigma^2} \right) \right) \text{ prabits per channel use}. 
\label{eq:gaussian-capacity}
\end{equation}

By the capacity-cost duality established in Section \ref{section_VI}, the bilateral pragmatic cost of information is the inverse of the capacity function. From Eq.~\eqref{eq:gaussian-capacity}, solving for $P$ as a function of $R$ yields:
\begin{equation}
 \mathrm{CoI}_p(R) = \sigma^2 \left( \frac{2^{2R}}{\Omega^4} - 1 \right) \quad \text{for } R \ge 0.
\label{eq:bilateral-coi}
\end{equation}
This represents the minimum transmit power required to achieve a pragmatic rate $R$ (in prabits per channel use) over the AWGN channel.

For asymmetric systems where only one side is coarsened, we have two additional cost measures, that is, perceptual/expressive pragmatic cost of information.
For the perceptual path (observation $Y$ carries information about pragmatic action $\underline{X}$), the prospective perceptual achievable rate from Section \ref{subsec:single-sided} is:
\begin{equation}
C_p^{\text{p-perc}}(P) = \frac{1}{2} \log \left( 1 + \frac{P}{\sigma^2} \right) + \log \Omega = \frac{1}{2} \log \left( \Omega^2 \left( 1 + \frac{P}{\sigma^2} \right) \right).
\end{equation}
The inverse gives the perceptual pragmatic cost of information:
\begin{equation}
 \mathrm{CoI}_p^{\text{p-perc}}(R) = \sigma^2 \left( \frac{2^{2R}}{\Omega^2} - 1 \right).
\label{eq:perc-coi}
\end{equation}
This is the minimum sensing power required to extract $R$ prabits of pragmatic information from the observation, when the observation itself is not coarsened.

For the expressive path (pragmatic label $\underline{Y}$ conveys information about source $X$), the prospective expressive achievable rate is:
\begin{equation}
C_p^{\text{p-expr}}(P) = \frac{1}{2} \log \left( \Omega^2 \left( 1 + \frac{P}{\sigma^2} \right) \right),
\end{equation}
which is identical to the perceptual case under symmetry. Thus,
\begin{equation}
 \mathrm{CoI}_p^{\text{p-expr}}(R) = \sigma^2 \left( \frac{2^{2R}}{\Omega^2} - 1 \right).
\label{eq:expr-coi}
\end{equation}

\begin{remark}[Comparison of CoI Measures]
The three CoI measures satisfy the hierarchy:
\begin{equation}
\mathrm{CoI}_p(R) \le \mathrm{CoI}_p^{\text{p-perc}}(R) = \mathrm{CoI}_p^{\text{p-expr}}(R),
\end{equation}
since $\Omega^4 \ge \Omega^2$ for $\Omega \ge 1$. This confirms the general principle: bilateral pragmatic abstraction (coarsening both sides) is more efficient than single-sided abstraction, requiring less power to achieve the same pragmatic rate.
\end{remark}

We now extend the result to the band-limited Gaussian channel. Consider a Gaussian channel with bandwidth $B$ (Hz), two-sided power spectral density $N_0/2$, and signal power constraint $P$. The classical Shannon capacity for this channel is
\begin{equation}
C = B \log \left( 1 + \frac{P}{N_0 B} \right) \quad \text{bits per second}.
\end{equation}

By the Shannon-Nyquist sampling theorem, a band-limited signal of duration $T$ can be represented by $2BT$ independent samples per real dimension. The total pragmatic capacity over the $2BT$ dimensions gives:
\begin{equation}
 C_p = B \log \left( \Omega^4 \left( 1 + \frac{P}{N_0 B} \right) \right) \text{ prabits per second}. 
\label{eq:bandlimited-capacity}
\end{equation}

The corresponding bilateral CoI (power per second) is:
\begin{equation}
 \mathrm{CoI}_p(R) = N_0 B \left( \frac{2^{R/B}}{\Omega^4} - 1 \right) \quad \text{(watts)}.
\label{eq:bandlimited-coi}
\end{equation}

For the single-sided cases, the perceptual and expressive CoI become:
\begin{equation}
\mathrm{CoI}_p^{\text{p-perc}}(R) = \mathrm{CoI}_p^{\text{p-expr}}(R) = N_0 B \left( \frac{2^{R/B}}{\Omega^2} - 1 \right).
\end{equation}

\begin{example}[Numerical comparison of pragmatic cost measures]
Let $\sigma^2 = 1$ for the discrete-time case, and for the band-limited case set $B = 1$ Hz, $N_0 = 1$. Take $\Omega = 2$. The table below compares the minimum required power (in units of $\sigma^2$ or watts) for various target rates $R$ (in prabits per channel use or prabits/s) under bilateral and single-sided coarsening. Negative values indicate that zero power suffices (the rate lies below the zero-power capacity).
\end{example}

\begin{table}[htbp]
\centering
\caption{CoI values for different scenarios and rates.}
\label{tab:coi_examples}
\begin{tabular}{|c|c|c|c|c|}
\hline
\textbf{Scenario} & \textbf{CoI expression} & $R=1$ & $R=2$ & $R=3$ \\
\hline
Discrete bilateral & $\sigma^2(2^{2R}/\Omega^4 - 1)$ & $-0.75$ (0) & $0$ (0) & $3$ \\
Discrete perceptual/expressive & $\sigma^2(2^{2R}/\Omega^2 - 1)$ & $0$ (0) & $3$ & $15$ \\
Band-limited bilateral & $N_0 B(2^{R/B}/\Omega^4 - 1)$ & $-0.875$ (0) & $-0.75$ (0) & $-0.5$ (0) \\
Band-limited perceptual/expressive & $N_0 B(2^{R/B}/\Omega^2 - 1)$ & $-0.5$ (0) & $0$ (0) & $1$ \\
\hline
\end{tabular}
\end{table}

For the discrete bilateral case, the zero-power capacity is $C_p(0) = \frac{1}{2}\log(16) = 2$ prabits per channel use, so any $R \le 2$ requires zero power. For the discrete perceptual path, $C_p^{\text{p-perc}}(0) = \frac{1}{2}\log(4) = 1$ prabit, thus $R=1$ is free but $R=2$ costs $3$ units. In the band-limited case with $B=1$, the bilateral zero-power capacity is $C_p(0) = \log(16) = 4$ prabits/s, hence $R=3$ is still free; the perceptual zero-power capacity is $\log(4) = 2$ prabits/s, so $R=2$ is free and $R=3$ requires $1$ W. These numerical results illustrate that bilateral abstraction (coarsening both source and observation) yields the greatest power savings for a given pragmatic rate.

\begin{remark}
The pragmatic capacity and CoI formulas reveal that the isoteleic volume $\Omega$ acts as a power amplification factor. In the band-limited case, $\Omega^4$ effectively multiplies the signal-to-noise ratio, while $\Omega^2$ multiplies it for single-sided cases. This provides a clear design guideline: to minimize transmit power for a given pragmatic rate, one should exploit pragmatic abstraction on both the source and the observation sides whenever possible.
\end{remark}

\subsection{Pragmatic Rate Distortion of Gaussian Source and Pragmatic Value of Information}
\label{sec:pragmatic_gaussian_ratedistortion}

As a dual to the channel capacity problem, we now investigate the pragmatic rate-distortion function for a Gaussian source and its corresponding pragmatic value of information. This function characterizes the minimum rate required to represent a continuous source such that the distortion, measured at the pragmatic level, does not exceed a given threshold.

Consider a Gaussian source $X \sim \mathcal{N}(0, P)$ with variance $P$. Let $\hat{X}$ be the reconstruction of $X$, and let $\underline{X}$ and $\underline{\hat{X}}$ be the corresponding pragmatic variables induced by the isoteleia mappings $e_X$ and $e_{\hat{X}}$ (and their compositions with synonymous mappings). The distortion measure is defined at the pragmatic level as the mean squared error between the optimal actions corresponding to the source and its reconstruction:
\begin{equation}
d_p(\underline{x}, \hat{\underline{x}}) \triangleq \left( a^*(\underline{x}) - a^*(\hat{\underline{x}}) \right)^2,
\end{equation}
where $a^*(\underline{x})$ is the optimal action for the pragmatic class $\underline{x}$. For the Gaussian source, we adopt the standard mean squared error (MSE) distortion measure, which is widely used and has a well-known rate-distortion characterization. Specifically, we consider the distortion constraint $
\mathbb{E}\left[ (X - \hat{X})^2 \right] \leq D$,
which, under appropriate isoteleia mappings, translates to a pragmatic distortion constraint.

The pragmatic rate-distortion function is defined as
\begin{equation}
R_p(D) \triangleq \min_{g_X, g_{\hat{X}}} \min_{p(\hat{x}|x): \mathbb{E}[d_p(\underline{X}, \underline{\hat{X}})] \leq D} I_p(\underline{X}; \underline{\hat{X}}),
\end{equation}
where $I_p(\underline{X}; \underline{\hat{X}}) = H_p(\underline{X}) + H_p(\underline{\hat{X}}) - H(X, \hat{X})$ is the down pragmatic mutual information.

For a Gaussian source with variance $P$, the classical rate-distortion function is $R(D) = \frac{1}{2} \log (P/D)$ for $0 \le D \le P$. Under the pragmatic framework, the isoteleia mapping groups signals that lead to the same optimal action, effectively reducing the ``volume" of the source space that needs to be distinguished. Let $\Omega_x$ and $\Omega_{\hat{x}}$ be the average isoteleic volumes for the source and reconstruction, respectively. Assuming symmetry, we set $\Omega_x = \Omega_{\hat{x}} = \Omega$, where $\Omega \ge 1$ is the average isoteleic volume.

Using the continuous expression for the down pragmatic mutual information derived in Eq. (\ref{eq:Down-Pragmatic-MI-Cont}), we have
\begin{align}
I_p(\underline{X}; \underline{\hat{X}}) &= I(X; \hat{X}) - \log(\Omega_x \Omega_{\hat{x}}) \nonumber \\
&\ge \frac{1}{2} \log \left( \frac{P}{D} \right) - \log(\Omega^2) \nonumber \\
&= \frac{1}{2} \log \left( \frac{P}{\Omega^4 D} \right),
\end{align}
where the inequality follows from the fact that for any test channel satisfying the MSE constraint $\mathbb{E}[(X-\hat{X})^2] \le D$, the classical mutual information satisfies $I(X; \hat{X}) \ge \frac{1}{2} \log(P/D)$. The equality is achieved when the test channel is Gaussian and the isoteleia mapping is optimal (i.e., partitions the space proportionally to the probability mass of the pragmatic classes).

Thus, the pragmatic rate-distortion function for a Gaussian source is given by
\begin{equation}
 R_p(D) = 
\begin{cases}
\dfrac{1}{2} \log \left( \dfrac{P}{\Omega^4 D} \right), & 0 \le D \le \dfrac{P}{\Omega^4}, \\[6pt]
0, & D > \dfrac{P}{\Omega^4}.
\end{cases} 
\label{eq:gaussian-rd}
\end{equation}

By the rate-distortion-value duality established in Section \ref{section_VI}, the bilateral pragmatic value of information is the Legendre-Fenchel dual of the rate-distortion function. For the Gaussian quadratic utility case, the value-rate function $\Phi_p(R)$ is obtained by solving the distortion-rate relationship.

From Eq.~\eqref{eq:gaussian-rd}, the distortion as a function of rate $R$ is:
\begin{equation}
D_p(R) = \frac{P}{\Omega^4} 2^{-2R}, \quad 0 \le R < \infty.
\end{equation}
For a quadratic utility $U(X,A) = -(X-A)^2$, the maximum expected utility at rate $R$ is:
\begin{equation}
\Phi_p(R) = -D_p(R) = -\frac{P}{\Omega^4} 2^{-2R}.
\end{equation}
The baseline utility (without information) is $U_0 = -P$. Therefore, the bilateral pragmatic value of information is:
\begin{equation}
 \mathrm{VoI}_p(R) = \Phi_p(R) - U_0 = P \left( 1 - \frac{2^{-2R}}{\Omega^4} \right).
\label{eq:bilateral-voi}
\end{equation}
This is the maximum utility gain achievable when both the source and the reconstruction are coarsened pragmatically.

For asymmetric systems, we have two single-sided value measures corresponding to the perceptual and expressive rate-distortion functions from Section \ref{subsec:single-sided}. Due to symmetry, these two cases yield identical expressions. For either side, the rate-distortion function is
\begin{equation}
R_p^{\text{side}}(D) = \frac{1}{2} \log \left( \frac{P}{\Omega^2 D} \right), \quad 0 \le D \le \frac{P}{\Omega^2},
\end{equation}
with corresponding distortion-rate function $D_p^{\text{side}}(R) = \frac{P}{\Omega^2} 2^{-2R}$. The pragmatic value of information for both the perceptual and expressive paths is therefore
\begin{equation}
\mathrm{VoI}_p^{\text{perc}}(R) = \mathrm{VoI}_p^{\text{expr}}(R) = P \left( 1 - \frac{2^{-2R}}{\Omega^2} \right).
\label{eq:perc-voi}
\end{equation}
We shall refer to either of these as the single-sided VoI.

\begin{remark}[Comparison of VoI Measures]
The three VoI measures satisfy the hierarchy:
\begin{equation}
\mathrm{VoI}_p^{\text{perc}}(R) = \mathrm{VoI}_p^{\text{expr}}(R) \le \mathrm{VoI}_p(R),
\end{equation}
since $1/\Omega^2 \ge 1/\Omega^4$ for $\Omega \ge 1$. This is consistent with the general principle that bilateral pragmatic abstraction (coarsening both the source and the reconstruction) yields higher utility gain for a given rate than single-sided abstraction, because it discards more decision-irrelevant information, allowing the same rate to be used more efficiently for the task.
\end{remark}

\begin{example}[Numerical comparison of VoI and distortion for Gaussian source]
Let $P = 1$ and $\Omega = 2$. The table below compares the pragmatic value of information $\mathrm{VoI}(R)$ and the corresponding minimal distortion $D(R)$ for bilateral, perceptual, and expressive coarsening at selected rates $R$ (in prabits). Recall that under symmetry, perceptual and expressive values are identical.
\end{example}

\begin{table}[htbp]
\centering
\caption{VoI and distortion for Gaussian source with $P=1$, $\Omega=2$.}
\label{tab:voi_examples}
\begin{tabular}{|c|c|c|c|c|}
\hline
\textbf{Scenario} & \textbf{Measure} & $R=0$ & $R=1$ & $R=2$ \\
\hline
\multirow{2}{*}{Bilateral} & $\mathrm{VoI}_p(R)$ & $0.9375$ & $0.984375$ & $0.996094$ \\
 & $D_p(R) = P/\Omega^4 \cdot 2^{-2R}$ & $1/16=0.0625$ & $1/64 \approx 0.015625$ & $1/256 \approx 0.003906$ \\
\hline
\multirow{2}{*}{Perceptual / Expressive} & $\mathrm{VoI}_p^{\text{perc/expr}}(R)$ & $0.75$ & $0.9375$ & $0.984375$ \\
 & $D_p^{\text{perc/expr}}(R) = P/\Omega^2 \cdot 2^{-2R}$ & $1/4=0.25$ & $1/16=0.0625$ & $1/64 \approx 0.015625$ \\
\hline
\end{tabular}
\end{table}

At $R=1$, bilateral coarsening yields a VoI of $0.984$ utils and distortion $0.0156$, while single-sided coarsening gives $0.9375$ utils and distortion $0.0625$ — a fourfold reduction in distortion for the same rate. Equivalently, to achieve the single-sided distortion of $0.0625$ with bilateral coarsening, only $R=0$ is needed (since $D_p(0)=0.0625$), saving $1$ prabit. As $R\to\infty$, both VoI approach $1$ (perfect action reconstruction) and distortion tends to $0$. The classical case $\Omega=1$ is recovered by setting $\Omega=1$, which gives $\mathrm{VoI}(R)=1-2^{-2R}$, strictly smaller than any pragmatic VoI for finite $R$.

This example quantifies the benefit of coarsening both ends: bilateral pragmatic compression achieves lower distortion and higher utility gain than single-sided compression at the same rate, highlighting the efficiency of task-oriented abstraction.

\begin{remark}
The closed-form expressions for CoI and VoI derived above assume that the action space is continuous Euclidean, enabling the use of MSE as the distortion measure. For discrete action spaces (e.g., classification tasks), the corresponding CoI and VoI functions must be derived using discrete distortion measures such as Hamming distance or utility loss, which generally do not admit such simple closed forms. The general methodology, however, remains the same: CoI is the inverse of the achievable rate functions, and VoI is the Legendre-Fenchel dual of the rate-distortion functions, both extending naturally to the single-sided cases via the perceptual and expressive mutual informations.
\end{remark}

The pragmatic cost of information and pragmatic value of information for Gaussian channels and sources complete the bilateral and single-sided duality spectrum in the continuous domain, providing a comprehensive framework for task-oriented communication system design with explicit resource-utility trade-offs.

\section{Joint Optimization of Pragmatic Information System}
\label{section_XI}

In this section, we present the pragmatic source-channel coding theorem, which extends the classical separation principle to the pragmatic domain and establishes that reliable transmission is possible iff the pragmatic entropy (or pragmatic rate-distortion function) does not exceed the pragmatic channel capacity. This condition is equivalent to the existence of a scheme that achieves the full pragmatic VoI while keeping the pragmatic CoI within the resource budget, thereby bridging information-theoretic limits with decision-theoretic objectives. We then formulate a unified optimization over communication rate, control law, and pragmatic abstraction in closed-loop systems, and derive optimality conditions via the pragmatic Lagrangian for co-designing communication and control under resource constraints.

\subsection{Pragmatic Source-Channel Coding}
\label{subsec:pragmatic-source-channel}

We now consider the joint source-channel coding problem in the pragmatic domain, where the goal is to transmit a syntactic source over a noisy channel such that the receiver can recover the pragmatic intention rather than the exact source symbols. This problem generalizes Shannon's classical source-channel coding theorem by relaxing the requirement of exact symbol reconstruction and allowing errors that do not affect the optimal action.

Consider a discrete memoryless syntactic source \(W\) with alphabet \(\mathcal{W}\) and probability mass function \(P(W)\). Let \(\tilde{W}\) and \(\underline{W}\) be the associated semantic and pragmatic variables induced by the synonymous mapping \(f:\tilde{\mathcal{W}}\to 2^{\mathcal{W}}\) and the isoteleia mapping \(e:\underline{\mathcal{W}}\to 2^{\tilde{\mathcal{W}}}\), respectively, with the reification mapping \(g = f\circ e:\underline{\mathcal{W}}\to 2^{\mathcal{W}}\). The source generates an i.i.d. sequence \(W^n\) according to \(P(W^n)=\prod_{k=1}^n P(W_k)\).

Let the channel be a discrete memoryless channel with input alphabet \(\mathcal{X}\), output alphabet \(\mathcal{Y}\), and transition probability \(P(Y|X)\). The pragmatic channel capacity \(C_p\) is defined as in Section~\ref{subsec:pragmatic-capacity}: $C_p \triangleq \max_{p(x)} \max_{g_{XY}} I^p(\underline{X};\underline{Y})$,
where \(I^p(\underline{X};\underline{Y})\) is the up pragmatic mutual information, and \(g_{XY}\) is the joint reification mapping.

\begin{figure}[htbp]
\setlength{\abovecaptionskip}{0.cm}
\setlength{\belowcaptionskip}{-0.cm}
\centering{\includegraphics[scale=1]{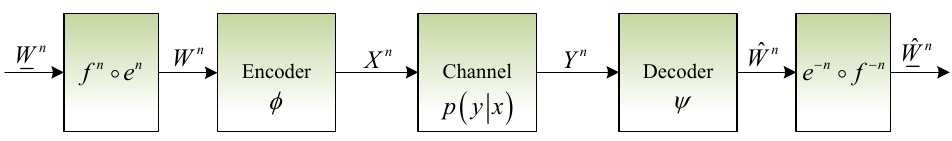}}
\caption{Block diagram of pragmatic source channel coding.}
\label{fig:pragmatic-source-channel}
\end{figure}

Figure~\ref{fig:pragmatic-source-channel} illustrates the block diagram of pragmatic source-channel coding. At the transmitter side, the syntactic source \(W^n\) first passes through the reification mapping \(g^n = f^n \circ e^n\), which collapses semantically equivalent syntactic symbols into pragmatic classes \(\underline{W}^n\) representing the optimal terminal actions. These pragmatic classes are then encoded into channel codewords \(X^n\) via a pragmatic channel encoder, which may exploit the structure of the pragmatic equivalence classes to achieve higher rates than classical coding. The codewords are transmitted over the noisy channel, producing the received sequence \(Y^n\). At the receiver side, the pragmatic channel decoder recovers an estimate of the pragmatic index \(\hat{\underline{W}}^n\) from the channel output, ignoring syntactic and semantic variations that do not affect the terminal action. The demapping operation \(g^{-n}\) then maps the estimated pragmatic index to a representative syntactic sequence \(\hat{W}^n\), which is passed to the destination. The entire system is evaluated by the error probability \(P_e^{(n)} = \Pr\{g^{-n}(\psi_n(Y^n)) \neq g^{-n}(W^n)\}\), i.e., the probability that the recovered pragmatic intention differs from the true one. This architecture embodies the separation principle: the pragmatic source code (reification mapping) and the pragmatic channel code can be designed independently while preserving asymptotic optimality.

A pragmatic source-channel code of block length \(n\) consists of:
\begin{enumerate}
    \item An encoding function \(\phi_n:\mathcal{W}^n \to \mathcal{X}^n\) that maps each source sequence to a channel input sequence, possibly exploiting the reification mapping \(g^n\) to first map the source to its pragmatic class;
    \item A decoding function \(\psi_n:\mathcal{Y}^n \to \mathcal{W}^n\) that maps each channel output to an estimated source sequence (or directly to an estimated pragmatic index), followed by demapping \(g^n\) to recover the pragmatic intention.
\end{enumerate}

The error probability is defined as the probability that the decoded pragmatic intention differs from the true pragmatic intention:
\begin{equation}
P_e^{(n)} \triangleq \Pr\left\{ g^{-n}(\psi_n(Y^n)) \neq g^{-n}(W^n) \right\}.
\end{equation}

We now state the fundamental theorem of pragmatic source-channel coding.

\begin{theorem}[Pragmatic Source-Channel Coding Theorem]
\label{thm:source-channel}
Let \(W\) be a discrete memoryless syntactic source with associated pragmatic variable \(\underline{W}\) having pragmatic entropy \(H_p(\underline{W})\), and let \(p(y|x)\) be a discrete memoryless channel with pragmatic capacity \(C_p\).

\begin{enumerate}
    \item \textbf{Lossless Case:}
        If $H_p(\underline{W}) < C_p$, then for any \(\epsilon>0\), there exists, for sufficiently large \(n\), a sequence of pragmatic source-channel codes with error probability \(P_e^{(n)} < \epsilon\).
        On the contrary, if $H_p(\underline{W}) > C_p$, then for any sequence of pragmatic source-channel codes, the error probability is bounded away from zero for sufficiently large \(n\).
        
    \item \textbf{Lossy Case:}
        For a given pragmatic distortion measure \(d_p(\underline{w},\hat{\underline{w}})\) and distortion constraint \(D\), if $R_p(D) < C_p$, then there exists a sequence of pragmatic source-channel codes such that \(\mathbb{E}[d_p(\underline{W},\hat{\underline{W}})] \le D\).
        On the contrary, if $R_p(D) > C_p$, then no such sequence exists.
\end{enumerate}
\end{theorem}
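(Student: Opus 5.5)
The plan is to follow the classical separation argument (Cover--Thomas, Thm.~7.13.1), invoking the pragmatic coding theorems already established as black boxes. The proof has two cases, each with a direct part and a converse. The key idea is that the pragmatic index sequence \(\underline{W}^n=g^{-n}(W^n)\) is the only object that must be delivered. Moreover, the error criterion \(P_e^{(n)}\) only concerns this index.

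\textbf{Lossless achievability.} Pick \(R\) with \(H_p(\underline{W})<R<C_p\). Then concatenate two codes:
\begin{itemize}
\item[(i)] the pragmatic lossless source code of Theorem~\ref{theorem:Pragmatic-Lossless-Source-Coding-Theorem} with pragmatic rate \(R\) and reification rate \(R_r=0\). It maps each \(\underline{w}^n\in\underline{A}_\epsilon^{(n)}\) to an index in \(\{1,\dots,2^{nR}\}\), with source-side error below \(\epsilon/2\).
\item[(ii)] the pragmatic channel code of Theorem~\ref{theorem:Pragmatic-Channel-Coding-Theorem} with \(R_{\text{tot}}=R<C_p\) (again \(R_r=0\)), whose decoding error is below \(\epsilon/2\).
\end{itemize}
The decoder recovers the index and outputs any representative of \(B_\epsilon^{(n)}(\underline{w}^n)\). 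A union bound over the two failure events gives \(P_e^{(n)}<\epsilon\). One point needs care: the channel theorem bounds the error averaged over a uniform index. Since the typical set is nearly uniform (Theorem~\ref{thm:pragmatic-typical-properties}), I will apply the usual expurgation step to obtain a good maximal error, which then suffices for any index distribution.

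\textbf{Lossless converse.} I will follow the Fano route used in the channel converse. Let \(\hat{\underline{W}}^n=g^{-n}(\psi_n(Y^n))\). Fano gives
\[H(\underline{W}^n\mid\hat{\underline{W}}^n)\le 1+P_e^{(n)}n\log|\underline{\mathcal{W}}|.\]
Then
\[nH_p(\underline{W})=H(\underline{W}^n)\le I(\underline{W}^n;\hat{\underline{W}}^n)+1+P_e^{(n)}n\log|\underline{\mathcal{W}}|.\]
Data processing along \(\underline{W}^n\to X^n\to Y^n\to\hat{\underline{W}}^n\) bounds \(I(\underline{W}^n;\hat{\underline{W}}^n)\le I(X^n;Y^n)\). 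The same step as in the proof of Theorem~\ref{theorem:Pragmatic-Channel-Coding-Theorem} then yields
\[I(X^n;Y^n)\le I^p(\underline{X}^n;\underline{Y}^n)\le nC_p,\]
using the chain rule of Theorem~\ref{theorem:Chain-Rule-Up-Pragmatic-MI} and the hierarchy \(I\le I^p\). Dividing by \(n\) gives
\[H_p(\underline{W})\le C_p+\tfrac1n+P_e^{(n)}\log|\underline{\mathcal{W}}|,\]
so \(H_p(\underline{W})>C_p\) forces \(P_e^{(n)}\) to stay bounded away from zero.

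\textbf{Lossy case.} Here I replace the lossless source code with the pragmatic rate-distortion code of Theorem~\ref{thm:pragmatic_RD} at rate \(R\), where \(R_p(D)<R<C_p\). This produces a reconstruction index with expected pragmatic distortion at most \(D+\epsilon\), and I transmit that index with the pragmatic channel code. When a channel error occurs, the distortion is at most \(d_{\max}\) (since \(d_p\) is bounded). So the total expected distortion is at most \(D+\epsilon+\epsilon\,d_{\max}\). Continuity of \(R_p(\cdot)\) then absorbs the slack: run the argument at a slightly smaller target \(D'\) with \(R_p(D')<C_p\).

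For the lossy converse, I chain the converse of Theorem~\ref{thm:pragmatic_RD},
\[nR_p(D)\le I_p(\underline{W}^n;\hat{\underline{W}}^n)\le I(W^n;\hat{\underline{W}}^n),\]
with data processing through the channel, \(I(W^n;\hat{\underline{W}}^n)\le I(X^n;Y^n)\le nC_p\). These steps use the convexity and monotonicity of \(R_p\) exactly as in that proof.

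\textbf{Main obstacle.} The main difficulty is the mismatch of information measures in both converses. Data processing controls only the classical \(I(X^n;Y^n)\), whereas \(C_p\) is defined through the up measure \(I^p\). The needed inequality \(I(X^n;Y^n)\le nC_p\) goes in the favourable direction, since \(I\le I^p\le nC_p\). The remaining issue is that in the lossy converse the source-side quantity is the down measure \(I_p\), and it must be linked to \(I(W^n;\hat{\underline{W}}^n)\). For this I will invoke the hierarchy \(I_p\le I\) (the perceptual/expressive chain) rather than derive anything new. I expect the subtlest points to be justifying the memoryless additivity of \(I_p\) across blocks and the expurgation step for nonuniform pragmatic indices.
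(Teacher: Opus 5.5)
Your architecture (concatenate the pragmatic lossless source code with the pragmatic channel code, union-bound the two failures, Fano for the converse) is the same as the paper's. Your converse is sound. In fact it proves more than you noticed, and that is what exposes a gap in the achievability half.

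For a memoryless channel used without feedback, $I(X^n;Y^n)\le nC$. So your Fano chain already gives
\[
H_p(\underline{W})\le C+\tfrac{1}{n}+P_e^{(n)}\log|\underline{\mathcal{W}}|,
\]
i.e.\ vanishing pragmatic error forces $H_p(\underline{W})\le C$, not merely $H_p(\underline{W})\le C_p$. The detour through $I^p(\underline{X}^n;\underline{Y}^n)\le nC_p$ only loosens a bound that is already decisive.

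The step that fails is therefore the black-box call to Theorem~\ref{theorem:Pragmatic-Channel-Coding-Theorem} with $R_r=0$ and an index rate $R\in(C,C_p)$. With $M_r=1$, such a code is an ordinary channel code carrying $2^{nR}$ messages with vanishing error above Shannon capacity. No expurgation can rescue that.

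Here is a concrete counterexample. The noiseless binary channel has $C=1$. Under a uniform input, the joint reification mapping that collapses $\mathcal{X}\times\mathcal{Y}$ to a single class gives $I^p=H(X)+H(Y)=2$, so $C_p\ge 2$. A source whose pragmatic variable is uniform on three actions has $H_p(\underline{W})=\log 3<C_p$. Yet $Y^n$ takes at most $2^n$ values, so $\Pr\{\hat{\underline{W}}^n=\underline{W}^n\}\le 2^n3^{-n}\to 0$. The paper's proof makes the identical appeal, so you have not departed from its route. But neither argument can deliver the achievability half as stated whenever $C<H_p(\underline{W})<C_p$.

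What your argument does establish is the classical separation theorem applied to the DMS $\underline{W}$. You get achievability for $H_p(\underline{W})<C$ (an ordinary Shannon code for the pragmatic index) and the converse $H_p(\underline{W})\le C$. The pragmatic gain comes from $H_p(\underline{W})\le H(W)$, not from $C_p\ge C$.

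The lossy half has the same defect, since your lossy converse again passes through $I(X^n;Y^n)\le nC$. It also inherits two further unproved black boxes.
\begin{itemize}
\item The block additivity of $I_p$ that you flagged is not available. For an i.i.d.\ source, $I_p(\underline{W}^n;\hat{\underline{W}}^n)-\sum_k I_p(\underline{W}_k;\hat{\underline{W}}_k)$ equals the nonpositive term $H_p(\hat{\underline{W}}^n)-\sum_k H_p(\hat{\underline{W}}_k)$ plus the nonnegative term $\sum_k H(W_k,\hat W_k)-H(W^n,\hat W^n)$, so neither inequality follows.
\item Source coding at a transmitted rate just above $R_p(D)$ cannot work in general. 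In the paper's two-vehicle example, the identity test channel gives $I_p=(2H_p(\underline{W})-H(W))^+=(1.9856-1.9893)^+=0$, so $R_p(0)=0$. Recovering $\underline{W}$ with vanishing Hamming distortion on actions, however, requires rate close to $H_p(\underline{W})\approx 0.99$.
\end{itemize}
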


\begin{proof}
We outline the proof following the standard separation principle, adapted to the pragmatic domain.

\textit{Achievability:} Assume \(H_p(\underline{W}) < C_p\). Choose \(\epsilon > 0\) such that \(H_p(\underline{W}) + \epsilon < C_p - \epsilon\). By the pragmatic lossless source coding theorem (Theorem~\ref{theorem:Pragmatic-Lossless-Source-Coding-Theorem}), there exists a pragmatic source code that maps the source sequences to pragmatic indices at rate \(R_s < H_p(\underline{W}) + \epsilon\) with error probability less than \(\epsilon/2\). By the pragmatic channel coding theorem (Theorem~\ref{theorem:Pragmatic-Channel-Coding-Theorem}), since \(R_s < C_p\), there exists a pragmatic channel code that transmits these indices over the channel with error probability less than \(\epsilon/2\). By the union bound, the overall error probability is less than \(\epsilon\). The decoder first recovers the pragmatic index from the channel output, then outputs a representative syntactic sequence from the corresponding reified typical set.

\textit{Converse:} Suppose a sequence of codes exists with \(P_e^{(n)} \to 0\). Following the standard Fano's inequality argument adapted to the pragmatic domain, we have
\begin{equation}
    H_p(\underline{W}) \le \frac{1}{n} I(\underline{W}^n; \underline{Y}^n) + \delta_n,
\end{equation}
where \(\delta_n \to 0\) as \(P_e^{(n)} \to 0\). By the definition of pragmatic channel capacity and the data processing inequality,
\begin{equation}
    \frac{1}{n} I(\underline{W}^n; \underline{Y}^n) \le C_p.
\end{equation}
Thus, \(H_p(\underline{W}) \le C_p\), which contradicts \(H_p(\underline{W}) > C_p\). The proof for the lossy case follows analogously using the pragmatic rate-distortion function \(R_p(D)\).
\end{proof}

The pragmatic source-channel coding theorem has a direct interpretation in terms of the pragmatic value of information (VoI) and pragmatic cost of information (CoI) established in Section~\ref{section_VI}.

The condition \(H_p(\underline{W}) < C_p\) (or \(R_p(D) < C_p\)) can be equivalently stated as the existence of a rate \(R\) such that:
\begin{equation}
    \mathrm{VoI}_p(R) = \mathrm{VoI}_p^{\max} \quad \text{and} \quad \mathrm{CoI}_p(R) \le P_{\max},
\end{equation}
where \(\mathrm{VoI}_p^{\max}\) is the maximum achievable pragmatic value (the utility gain obtained when all pragmatic distinctions are preserved), and \(P_{\max}\) is the available physical resource budget (e.g., maximum transmit power). This follows from the monotonicity of VoI and CoI:

\begin{itemize}
    \item The pragmatic value of information \(\mathrm{VoI}_p(R)\) is non-decreasing in \(R\) and saturates at its maximum when \(R \ge H_p(\underline{W})\) (for lossless transmission) or \(R \ge R_p(D)\) (for lossy transmission).
    \item The pragmatic cost of information \(\mathrm{CoI}_p(R)\) is non-decreasing and convex in \(R\), representing the minimum resource required to achieve rate \(R\).
\end{itemize}

Therefore, the achievability condition \(H_p(\underline{W}) < C_p\) guarantees that there exists a rate \(R\) such that:
\begin{equation}
    \mathrm{VoI}_p(R) = \mathrm{VoI}_p^{\max} \quad \text{and} \quad \mathrm{CoI}_p(R) \le P_{\max}.
\end{equation}
Conversely, if \(H_p(\underline{W}) > C_p\), then even at the maximum achievable rate \(C_p\), the system cannot transmit enough information to preserve all pragmatic distinctions, resulting in a loss of pragmatic value:
\begin{equation}
    \mathrm{VoI}_p(C_p) < \mathrm{VoI}_p^{\max}.
\end{equation}
This gap represents the \emph{pragmatic value loss} due to the channel bottleneck.

\begin{corollary}[Pragmatic Value-Cost Characterization]
\sloppy A pragmatic communication system achieves the full pragmatic value of information if and only if the transmission rate \(R\) satisfies:
\begin{equation}
    R \ge H_p(\underline{W}) \quad \text{(lossless)} \quad \text{or} \quad R \ge R_p(D) \quad \text{(lossy)},
\end{equation}
and the available physical resources \(P_{\max}\) satisfy:
\begin{equation}
    \mathrm{CoI}_p(R) \le P_{\max}.
\end{equation}
\end{corollary}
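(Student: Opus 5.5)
The plan is to treat the corollary as a biconditional and derive each direction from results already established: Theorem~\ref{thm:source-channel}, the coding theorems of Sections~\ref{section_VII}--\ref{section_IX}, and the capacity--cost duality $\mathrm{CoI}_p = C_p^{-1}$ of Section~\ref{section_VI}. Throughout, ``achieves the full pragmatic value'' will mean that the pragmatic value of the recovered class attains $\mathrm{VoI}_p^{\max}$ asymptotically, i.e.\ with vanishing error probability (lossless) or with distortion at most $D$ (lossy). The resource condition is interpreted as $P_{\max} \ge \mathrm{CoI}_p(R)$, equivalently $R \le C_p(P_{\max})$.

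\textbf{Sufficiency.} Fix $R$ with $R \ge H_p(\underline{W})$ and $\mathrm{CoI}_p(R) \le P_{\max}$. Because $\mathrm{CoI}_p$ is the inverse of the non-decreasing concave map $C_p(P)$ (Lemma~\ref{lem:capacity-concavity} and the duality theorem), this gives $R \le C_p(P_{\max})$.

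\begin{itemize}
\item By Theorem~\ref{theorem:Pragmatic-Lossless-Source-Coding-Theorem}, a rate slightly above $H_p(\underline{W})$ suffices to recover $\underline{W}^n$ with vanishing error.
\item By Theorem~\ref{theorem:Pragmatic-Channel-Coding-Theorem}, run under the cost constraint $P_{\max}$, any rate below $C_p(P_{\max})$ is reliably transmissible.
\item Concatenating the two codes as in the achievability part of Theorem~\ref{thm:source-channel} recovers $\underline{W}^n$. The decoder then takes $a^*(\hat{\underline{W}})$, which gives $\mathrm{VoI}_p = \mathrm{VoI}_p^{\max}$ up to a term proportional to $P_e^{(n)}$ times the bounded utility range, and this term vanishes.
\end{itemize}

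The lossy case is identical, using Theorem~\ref{thm:pragmatic_RD} in place of the lossless theorem.

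\textbf{Necessity.} Suppose the full value is achieved at some operating rate $R$ under budget $P_{\max}$.

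\begin{itemize}
\item \emph{Rate condition.} If $R < H_p(\underline{W})$, the VoI clause of Theorem~\ref{theorem:Pragmatic-Lossless-Source-Coding-Theorem} (via $I(\underline{W};I_p)\le R < H_p(\underline{W})$) shows the value is strictly below $\mathrm{VoI}_p^{\max}$. The same conclusion follows from the converse there, since the error probability tends to $1$. This contradicts the assumption, so $R \ge H_p(\underline{W})$. In the lossy case, the VoI clause of Theorem~\ref{thm:pragmatic_RD} gives $R \ge R_p(D)$.
\item \emph{Cost condition.} The rate $R$ is actually carried over the channel with input cost at most $P_{\max}$. The converse of Theorem~\ref{theorem:Pragmatic-Channel-Coding-Theorem}, restricted to inputs with $\mathbb{E}[c(X)]\le P_{\max}$, gives $R \le C_p(P_{\max})$. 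Applying the monotone inverse then yields $\mathrm{CoI}_p(R)\le P_{\max}$.
\end{itemize}

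\textbf{Main obstacle.} The hard step is the cost-constrained converse. Theorem~\ref{theorem:Pragmatic-Channel-Coding-Theorem} is stated without a cost constraint, so I would re-run its Fano argument with the maximization over $p(x)$ restricted to $\mathbb{E}[c(X)]\le P_{\max}$. Converting the $n$-letter cost average into a single-letter bound will rely on the concavity of $C_p(P)$ (Lemma~\ref{lem:capacity-concavity}), exactly as in the classical cost-constrained converse.

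A second, lesser issue is the boundary case $R = H_p(\underline{W})$ or $R = C_p(P_{\max})$. Here I would read ``achieves'' in the asymptotic, $\epsilon$-close sense, so that the non-strict inequalities in the corollary match the strict ones in Theorem~\ref{thm:source-channel} by a standard closure argument.
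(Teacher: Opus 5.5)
Your proposal is correct and follows essentially the paper's route: the paper justifies the corollary by combining Theorem~\ref{thm:source-channel} with the saturation of $\mathrm{VoI}_p(R)$ at $H_p(\underline{W})$ (resp.\ $R_p(D)$) and the monotonicity of $\mathrm{CoI}_p = C_p^{-1}$, which is exactly your sufficiency/necessity split. The cost-constrained converse you flag as the main obstacle is a step the paper leaves implicit by identifying $C_p$ with $C_p(P_{\max})$. Your plan fills it in, though the single-letterization is cleanest via $I(X^n;Y^n)\le\sum_i I(X_i;Y_i)\le\sum_i I^p(\underline{X}_i;\underline{Y}_i)\le\sum_i C_p(P_i)$ followed by Lemma~\ref{lem:capacity-concavity}, rather than single-letterizing $I^p$ directly.
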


\begin{remark}
The pragmatic source-channel coding theorem provides a rigorous theoretical foundation for task-oriented communication. Unlike classical source-channel coding, which requires the syntactic entropy \(H(W)\) to be less than the syntactic channel capacity \(C\), the pragmatic version relaxes this condition to \(H_p(\underline{W}) \le C_p\). Since \(H_p(\underline{W}) \le H(W)\) and \(C_p \ge C\), the pragmatic condition is strictly weaker than the classical one. This explains the performance gains observed in semantic and pragmatic communication systems: by tolerating syntactic and semantic errors that do not affect the optimal action, the system can operate reliably in regimes where classical communication would fail.
\end{remark}

\begin{example}[Autonomous Driving Revisited]
Consider the two-vehicle autonomous driving scenario from Sections~III and IV, where the syntactic source has entropy \(H(W) \approx 1.9893\) bits per symbol, while the pragmatic entropy is \(H_p(\underline{W}) \approx 0.9928\) prabits per symbol. Suppose the communication channel has syntactic capacity \(C = 1.5\) bits per channel use. Classical source-channel coding would fail because \(H(W) > C\). However, if the pragmatic capacity \(C_p \ge 0.9928\) (which holds for any channel with \(C_p \ge C\)), the pragmatic source-channel coding theorem guarantees reliable transmission of the optimal action (Go/Stop) even though the exact traffic light symbols cannot be reliably transmitted. This demonstrates the power of pragmatic abstraction in overcoming physical communication bottlenecks.
\end{example}

\subsubsection{Separation Principle and the Imperative of Joint Optimization}

The pragmatic source-channel coding theorem (Theorem~\ref{thm:source-channel}) is proved via a separation architecture, which builds on two independent results:
\begin{enumerate}
    \item The pragmatic lossless source coding theorem (Section~\ref{section_VII}) guarantees that the pragmatic source can be compressed to rate \(H_p(\underline{W})\) (or \(R_p(D)\) in the lossy case) without losing any pragmatically relevant information.
    \item The pragmatic channel coding theorem (Section~\ref{section_VIII}) guarantees that any rate up to \(C_p\) can be reliably transmitted over the noisy channel.
\end{enumerate}
Concatenating the pragmatic source code and the pragmatic channel code yields an end-to-end system that achieves reliable pragmatic transmission \emph{if and only if} the source rate does not exceed the channel capacity, i.e.,
\[
H_p(\underline{W}) \le C_p \quad \text{(lossless)} \quad \text{or} \quad R_p(D) \le C_p \quad \text{(lossy)}.
\]
This constitutes the \emph{pragmatic separation principle}—the asymptotic bedrock of the theory.

In the classical information-theoretic setting where the blocklength \(n\) tends to infinity and no constraints are imposed on encoding/decoding complexity, the separation principle is strictly optimal. It allows the system designer to optimize the pragmatic source code (e.g., feature extraction, action classification) and the pragmatic channel code (e.g., modulation, error correction) independently, while guaranteeing that the concatenated system achieves the fundamental limit. This modularity is a direct consequence of the asymptotic equipartition property and the joint typicality arguments used in the proofs, and it is fully consistent with the backward-compatibility requirement of Principle~III in Section~\ref{section_II}.

The optimality of separation, however, is an asymptotic statement. In practical systems, where the blocklength \(n\) is finite, or when the encoder and decoder are subject to computational complexity, delay, or memory constraints, the separation architecture is \emph{strictly suboptimal} with respect to the end-to-end task utility. In such regimes, the optimal code cannot be factored into independent source and channel components; instead, the encoder must jointly consider the source statistics, the channel characteristics, and the decision utility to minimize the overall distortion or maximize the expected reward. This fundamental gap between asymptotic theory and engineering reality is precisely the motivation for the joint optimization framework that follows.

This observation directly justifies the ``joint optimization'' framework developed in Sections~\ref{section_VI}. While the separation principle identifies the ultimate performance bounds (capacities and rate-distortion functions), the \emph{optimal allocation of finite resources}—such as transmit power, latency, bandwidth, or model size—requires solving the coupled Lagrangian:
\[
\mathcal{L}_{\mathrm{global}}(R;\lambda) = \Phi_p(R) - \lambda \, \mathrm{CoI}_p(R),
\]
where the same rate \(R\) simultaneously determines the coarsening level of the isoteleia mapping (source-side abstraction) and the error-protection strength (channel-side coding). The cross-layer coupling therefore does not lie in the \emph{code structure}, which can be separated asymptotically, but in the \emph{rate-resource trade-off}, which must be jointly optimized to maximize the net benefit. Thus, the separation principle provides the theoretical bedrock, while the Lagrangian framework delivers the practical tool for engineering finite-dimensional, resource-constrained systems.

\begin{remark}
The pragmatic separation principle generalizes both the classical separation theorem (when the isoteleia and synonymous mappings are trivial) and the semantic separation theorem (when only the synonymous mapping is non-trivial). It unifies all previous results as special cases and offers new insights into the design of task-oriented communication systems, while the accompanying joint optimization framework addresses the real-world constraints that lie beyond the asymptotic ideal.
\end{remark}

\subsubsection{Gaussian Source over Gaussian Channel}
\label{subsec:gaussian-joint-analysis}

To illustrate the joint optimization framework in a concrete setting, we now specialize to the canonical case of a Gaussian source transmitted over an additive white Gaussian noise (AWGN) channel. This example demonstrates how the pragmatic cost of information (CoI) and pragmatic value of information (VoI) jointly determine the optimal operating point of a task-oriented communication system.

Consider a Gaussian source \(X \sim \mathcal{N}(0, P)\) with variance \(P\), transmitted over an AWGN channel with noise variance \(\sigma^2\) and power constraint \(\mathbb{E}[|X|^2] \le P_{\text{tx}}\). The channel output is \(Y = X + Z\), where \(Z \sim \mathcal{N}(0, \sigma^2)\). Let the isoteleic volume be \(\Omega \ge 1\), representing the pragmatic abstraction that groups signals leading to the same optimal action.

From Sections~\ref{sec:pragmatic_gaussian_capacity} and \ref{sec:pragmatic_gaussian_ratedistortion}, we have the following closed-form expressions for the continuous Gaussian case:

\begin{itemize}
    \item \textbf{Pragmatic channel capacity} (per real dimension):
    \begin{equation}
        C_p(P_{\text{tx}}) = \frac{1}{2} \log \left( \Omega^4 \left( 1 + \frac{P_{\text{tx}}}{\sigma^2} \right) \right) \quad \text{prabits per channel use}.
        \label{eq:gaussian-pragmatic-channel-capacity}
    \end{equation}
    
    \item \textbf{Bilateral pragmatic cost of information} (inverse of capacity):
    \begin{equation}
        \mathrm{CoI}_p(R) = \sigma^2 \left( \frac{2^{2R}}{\Omega^4} - 1 \right) \quad \text{for } R \ge 0.
        \label{eq:gaussian-coi}
    \end{equation}
    
    \item \textbf{Pragmatic rate-distortion function}:
    \begin{equation}
        R_p(D) = \frac{1}{2} \log \left( \frac{P}{\Omega^4 D} \right), \quad 0 \le D \le \frac{P}{\Omega^4}.
        \label{eq:gaussian-pragmatic-RD}
    \end{equation}
    
    \item \textbf{Bilateral pragmatic value of information} (for quadratic utility \(U(X,A) = -(X-A)^2\)):
    \begin{equation}
        \mathrm{VoI}_p(R) = P \left( 1 - \frac{2^{-2R}}{\Omega^4} \right).
        \label{eq:gaussian-voi}
    \end{equation}
\end{itemize}

The pragmatic source-channel coding theorem (Theorem~\ref{thm:source-channel}) states that reliable transmission of pragmatic information is possible if and only if
\begin{equation}
    R_p(D) \le C_p(P_{\text{tx}}).
    \label{eq:gaussian-feasibility}
\end{equation}

Substituting the closed-form expressions from (\ref{eq:gaussian-pragmatic-channel-capacity}) and (\ref{eq:gaussian-pragmatic-RD}), the feasibility condition becomes
\begin{equation}
    \frac{1}{2} \log \left( \frac{P}{\Omega^4 D} \right) \le \frac{1}{2} \log \left( \Omega^4 \left( 1 + \frac{P_{\text{tx}}}{\sigma^2} \right) \right).
\end{equation}

Simplifying, we obtain the equivalent condition:
\begin{equation}
    D \ge \frac{P}{\Omega^8 \left( 1 + P_{\text{tx}} / \sigma^2 \right) }.
    \label{eq:gaussian-distortion-constraint}
\end{equation}
Here the factor \(\Omega^8\) results from the product of the pragmatic gains at both ends: the source--reconstruction pair contributes \(\Omega^4\) through the rate--distortion function, and the input--output pair contributes another \(\Omega^4\) through the channel capacity, under the assumption of equal isoteleic volumes for both the source and the channel.

This inequality reveals the fundamental trade-off: for a given transmit power \(P_{\text{tx}}\) and pragmatic abstraction \(\Omega\), the minimum achievable pragmatic distortion is
\begin{equation}
    D_{\min}(P_{\text{tx}}) = \frac{P}{\Omega^8 \left( 1 + P_{\text{tx}} / \sigma^2 \right)}.
    \label{eq:gaussian-min-distortion}
\end{equation}

Equivalently, for a target distortion \(D\), the minimum required transmit power is
\begin{equation}
    P_{\text{tx}}^{\min}(D) = \sigma^2 \left( \frac{P}{\Omega^8 D} - 1 \right).
    \label{eq:gaussian-min-power}
\end{equation}

The joint feasibility condition can also be expressed directly in terms of VoI and CoI. Recall that the full pragmatic value \(\mathrm{VoI}_p^{\max} = P\) is achieved when \(R \ge R_p(D)\) (i.e., the rate is sufficient to meet the distortion requirement). The required rate to achieve a given VoI level is obtained by inverting (\ref{eq:gaussian-voi}):
\begin{equation}
    R_{\mathrm{VoI}}(V) = \frac{1}{2} \log \left( \frac{1}{\Omega^4 (1 - V/P)} \right), \quad 0 \le V \le P.
\end{equation}

The system achieves the target VoI \(V\) if and only if the available power satisfies
\begin{equation}
    \mathrm{CoI}_p(R_{\mathrm{VoI}}(V)) \le P_{\text{tx}}.
\end{equation}

Substituting the CoI expression (\ref{eq:gaussian-coi}), this becomes
\begin{equation}
    \sigma^2 \left( \frac{2^{2 R_{\mathrm{VoI}}(V)}}{\Omega^4} - 1 \right) \le P_{\text{tx}}.
\end{equation}

Since \(2^{2 R_{\mathrm{VoI}}(V)} = 1 / (\Omega^4 (1 - V/P))\), we obtain
\begin{equation}
    \sigma^2 \left( \frac{1}{\Omega^8 (1 - V/P)} - 1 \right) \le P_{\text{tx}}.
\end{equation}

Solving for \(V\), the maximum achievable pragmatic value for a given transmit power is
\begin{equation}
    \mathrm{VoI}_p^{\max}(P_{\text{tx}}) = P \left( 1 - \frac{1}{\Omega^8 (1 + P_{\text{tx}} / \sigma^2)} \right).
    \label{eq:gaussian-max-voi}
\end{equation}

This expression directly quantifies the \emph{value-cost trade-off}: the utility gain increases with transmit power but saturates at \(P\) as \(P_{\text{tx}} \to \infty\). The pragmatic abstraction \(\Omega\) amplifies the effective signal-to-noise ratio, reducing the power required to achieve a given utility level. For example, when \(\Omega = 1\) (no pragmatic abstraction), (\ref{eq:gaussian-max-voi}) reduces to the classical value-cost relationship:
\begin{equation}
    \mathrm{VoI}^{\max}(P_{\text{tx}}) = P \left( 1 - \frac{1}{1 + P_{\text{tx}} / \sigma^2} \right) = \frac{P P_{\text{tx}}}{P_{\text{tx}} + \sigma^2},
\end{equation}
which is the well-known SNR-dependent utility gain for a Gaussian channel with quadratic utility.

\begin{example}
Let \(P = 1\), \(\sigma^2 = 1\), and consider three values of the isoteleic volume: \(\Omega = 1\) (classical), \(\Omega = 1.5\), and \(\Omega = 2\). For a target VoI of \(V = 0.9\) (i.e., 90\% of the maximum utility), the required transmit powers are:
\begin{align}
    \Omega = 1 &: \quad P_{\text{tx}} = \sigma^2 \left( \frac{1}{\Omega^8 (1 - V/P)} - 1 \right) = \frac{1}{0.1} - 1 = 9 \text{ (units)}, \\
    \Omega = 1.5 &: \quad P_{\text{tx}} = \frac{1}{1.5^8 \cdot 0.1} - 1 \approx \frac{1}{25.63 \cdot 0.1} - 1 \approx -0.61 \text{ (zero power suffices)},\\
    \Omega = 2 &: \quad P_{\text{tx}} = \frac{1}{2^8 \cdot 0.1} - 1 = \frac{1}{25.6} - 1 \approx -0.96 \text{ (zero power suffices)}.
\end{align}
\end{example}

With \(\Omega = 1.5\), the pragmatic abstraction alone (\(P_{\text{tx}} = 0\)) already achieves \(\mathrm{VoI} = P(1 - 1/\Omega^8) = 1 - (2/3)^8 \approx 0.961\), exceeding the target 0.9. This demonstrates that pragmatic abstraction can dramatically reduce or eliminate the need for transmit power in task-oriented communication.

For asymmetric systems where only one side is coarsened, the perceptual and expressive CoI/VoI functions from Section~\ref{subsec:single-sided-cost-value} yield analogous joint feasibility conditions. For the perceptual path, the achievable VoI under power constraint is
\begin{equation}
    \mathrm{VoI}_p^{\text{perc, max}}(P_{\text{tx}}) = P \left( 1 - \frac{1}{\Omega^4 (1 + P_{\text{tx}} / \sigma^2)} \right),
\end{equation}
which is strictly less than the bilateral case (\ref{eq:gaussian-max-voi}) for any finite \(\Omega > 1\) and \(P_{\text{tx}}\), since the effective SNR amplification is \(\Omega^4\) rather than \(\Omega^8\). This quantifies the performance loss when only the source or only the reconstruction is coarsened, confirming the benefits of coarsening both sides of the communication link.

\begin{remark}
The Gaussian example illustrates the full power of the pragmatic information-theoretic framework: it provides a unified, analytically tractable model for jointly optimizing communication resources (power), decision performance (distortion/utility), and pragmatic abstraction (isoteleic volume). The closed-form expressions reveal the precise quantitative relationships among these quantities, enabling system designers to make principled trade-offs in task-oriented communication system design.
\end{remark}

\subsection{Communication-Control-Decision Joint Optimization}
\label{subsec:joint-optimization-dynamic}

We now extend the Lagrangian dual framework from the static source-channel coding setting to the general closed-loop system where communication, control, and decision-making are jointly optimized. In many practical systems—such as autonomous vehicles, robotic control, and industrial automation—the communication rate, control policy, and pragmatic abstraction must be co-designed to maximize the overall task performance under resource constraints. We present two formal theorems that characterize the optimal trade-offs in both static (single-step) and dynamic (multi-step) settings, with explicit optimality conditions that bridge information-theoretic limits and decision-theoretic objectives.

\subsubsection{Static Joint Optimization: Lagrangian Approach and Single-Sided Extensions}
\label{subsubsec:static-joint}
\leavevmode\newline\indent
We first consider a single-step decision problem where the system observes a state \(S \in \mathcal{S}\), takes an action \(A \in \mathcal{A}\), and receives a utility \(U(S,A)\). The observation is obtained through a communication channel that conveys a pragmatic message \(\underline{Y}\) about the state, with a rate \(R\) that determines the quality of the pragmatic information. The receiver chooses a policy \(\pi:\underline{\mathcal{Y}} \to \mathcal{A}\) that maps the pragmatic observation to an action. The joint optimization problem is to maximize the expected utility subject to a communication cost constraint.

\begin{theorem}[Static Joint Optimization]
\label{thm:static-joint}
For a single-step decision problem with state \(S\), action \(A\), and utility \(U(S,A)\), let the pragmatic observation \(\underline{Y}\) be obtained at rate \(R\) satisfying \(I_p(\underline{S};\underline{Y}) \le R\), and let the communication cost be given by the pragmatic cost of information \(\mathrm{CoI}_p(R)\). The Lagrangian relaxation of the constrained utility maximization problem
\begin{equation}
\max_{\pi, \, P(\underline{Y}|S)} \; \mathbb{E}_{S,\underline{Y}}[U(S, \pi(\underline{Y}))] \quad \text{subject to} \quad \mathrm{CoI}_p(R) \le P_{\max}
\end{equation}
yields the \emph{global pragmatic Lagrangian}
\begin{equation}
\mathcal{L}_{\mathrm{global}}(\pi, R; \lambda) = \mathbb{E}_{S,\underline{Y}}[U(S, \pi(\underline{Y}))] - \lambda \, \mathrm{CoI}_p(R),
\label{eq:global-lagrangian-static-thm}
\end{equation}
where \(\lambda > 0\) is the Lagrange multiplier (shadow price of communication resources). The optimal policy \(\pi^*\) and optimal rate \(R^*\) satisfy the following necessary and sufficient conditions:
\begin{align}
\pi^*(\underline{Y}) &= \arg\max_{a} \mathbb{E}_{S|\underline{Y}}[U(S,a)], \label{eq:static-policy-opt} \\
\lambda \, \mathrm{CoI}_p'(R^*) &= \frac{\partial}{\partial R} \mathbb{E}_{S,\underline{Y}}[U(S, \pi^*(\underline{Y}))], \label{eq:static-rate-opt}
\end{align}
where the derivative in (\ref{eq:static-rate-opt}) is understood in the sense of marginal utility gain with respect to the rate. Moreover, the value of the Lagrangian at the optimum equals the maximum net benefit, that is behavioral capacity:
\begin{equation}
\max_{\pi, R} \mathcal{L}_{\mathrm{global}}(\pi, R; \lambda) = \max_{P(\underline{Y}|S)} \left\{ \mathbb{E}_{S,\underline{Y}}[U(S, a^*(\underline{Y}))] - \lambda \, \mathrm{CoI}_p(R) \right\},
\end{equation}
with \(a^*(\underline{Y}) = \arg\max_a \mathbb{E}_{S|\underline{Y}}[U(S,a)]\).
\end{theorem}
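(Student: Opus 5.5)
The plan is to decouple the joint maximization over $(\pi, P(\underline{Y}|S))$ and $R$ into an inner problem (policy for a fixed observation channel) and an outer problem (rate selection). The first step is the standard Lagrangian relaxation. The constraint $\mathrm{CoI}_p(R)\le P_{\max}$ is replaced by the penalty $-\lambda\,\mathrm{CoI}_p(R)$, which gives \eqref{eq:global-lagrangian-static-thm}. For fixed $R$ and a fixed test channel $P(\underline{Y}|S)$ satisfying $I_p(\underline{S};\underline{Y})\le R$, the cost term does not depend on $\pi$. Maximizing $\mathcal{L}_{\mathrm{global}}$ over $\pi$ therefore reduces to maximizing $\mathbb{E}_{S,\underline{Y}}[U(S,\pi(\underline{Y}))]$.

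The second step handles this inner problem pointwise by the tower property. We write
\[
\mathbb{E}_{S,\underline{Y}}[U(S,\pi(\underline{Y}))]=\sum_{\underline{y}}P(\underline{y})\,\mathbb{E}_{S|\underline{y}}[U(S,\pi(\underline{y}))].
\]
Since $\pi(\underline{y})$ appears only in the $\underline{y}$-th summand, the sum is maximized by choosing each $\pi(\underline{y})$ as the posterior Bayes action. This yields \eqref{eq:static-policy-opt}, which is exactly the Decision Axiom \eqref{eq:decision-axiom} with $I=\underline{Y}$. The condition is necessary up to ties in the argmax and sufficient outright. Substituting $\pi^*=a^*$ gives the reduced objective $\max_{P(\underline{Y}|S)}\{\mathbb{E}[U(S,a^*(\underline{Y}))]-\lambda\mathrm{CoI}_p(R)\}$, which is the final identity in the statement.

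The third step is the outer optimization over $R$. Maximizing the reduced objective over test channels with $I_p\le R$ gives $U_0+\Phi_p(R)$, where $\Phi_p$ is the value-rate function of Section~\ref{section_VI}; the constant $U_0$ does not affect maximizers. The problem is then $\max_R[\Phi_p(R)-\lambda\mathrm{CoI}_p(R)]$, which is $\mathcal{L}_{\mathrm{global}}(R;\lambda)$ of \eqref{eq:global-lagrangian}. We then invoke the concavity of the value function (the concavity-in-rate theorem, or Lemma~\ref{lem:concave_envelope} after passing to the concave envelope) and the convexity of $\mathrm{CoI}_p$ (Theorem~\ref{thm:coi-convex}). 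By Theorem~\ref{thm:global-lagrangian-concavity}, $\mathcal{L}_{\mathrm{global}}$ is concave in $R$. The subsequent corollary then makes the first-order condition \eqref{eq:static-rate-opt} sufficient for an interior $R^*$. Necessity follows because any interior maximizer of a differentiable function has zero derivative. The optimal value is by definition the behavioral capacity $\mathcal{E}_p(\lambda)$ of Definition~\ref{def:pragmatic-efficiency-bound}, shifted by $U_0$.

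The main obstacle is the regularity of the outer problem. $\Phi_p$ need not be differentiable, and it may be non-concave before convexification. In addition, $\mathrm{CoI}_p$ can have zero-cost plateaus, as in the Gaussian example, which puts $R^*$ on a boundary. The plan is to state \eqref{eq:static-rate-opt} in superdifferential form, $\lambda\,\partial\mathrm{CoI}_p(R^*)\cap\partial\Phi_p(R^*)\neq\emptyset$, together with the KKT complementary slackness condition at a boundary $R^*$. If $\Phi_p$ is not concave, we would replace it by $\overline{\Phi}_p$, noting that time-sharing between two rates attains the envelope. A final remark would argue that strong duality holds in this convex setting, so that the relaxation recovers the constrained problem for the appropriate $\lambda$.
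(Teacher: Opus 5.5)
Your proposal is correct and is essentially the paper's argument: the policy is the posterior Bayes rule, and the rate is then fixed by concavity of $\Phi_p(R)-\lambda\,\mathrm{CoI}_p(R)$ (concave value, convex cost) together with the first-order condition $\Phi_p'(R^*)=\lambda\,\mathrm{CoI}_p'(R^*)$. It is more careful than the paper's sketch in two ways: the pointwise tower-property argument replaces the paper's vague ``concavity in $\pi$'', and the first-order condition is correctly weakened to a KKT condition at boundary optima such as $R^*=0$; the one caveat is that your concave-envelope fallback needs a block-coding (time-sharing) interpretation, because in a genuinely single-step problem randomizing between two channels costs $\theta\,\mathrm{CoI}_p(R_1)+(1-\theta)\,\mathrm{CoI}_p(R_2)\ge\mathrm{CoI}_p(R_\theta)$.
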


\begin{proof}[Sketch of Proof]
The Lagrangian is concave in \(\pi\) and \(R\) under mild regularity conditions (e.g., \(U\) bounded, \(\mathrm{CoI}_p\) convex). The first-order conditions follow from the concavity and the fact that the optimal policy is the Bayes decision rule. The marginal utility gain in (\ref{eq:static-rate-opt}) is obtained by differentiating the optimal value with respect to \(R\), which equals the rate of improvement of the expected utility when the constraint on mutual information is relaxed. The global optimum is achieved when the marginal benefit of increasing the rate equals its marginal cost, scaled by the shadow price.
\end{proof}

\begin{corollary}[Single-Sided Lagrangian Extensions]
\label{cor:single-sided-lagrangian}
For asymmetric systems where only one side of the communication link is coarsened, the corresponding single-sided Lagrangians are obtained by substituting the appropriate CoI functions:
\begin{itemize}
 \sloppy   \item \textbf{Perceptual pragmatic Lagrangian}:  
    \(\displaystyle \mathcal{L}_{\mathrm{perc}}(\pi, R; \lambda) = \mathbb{E}_{S,\underline{Y}}[U(S, \pi(\underline{Y}))] - \lambda \, \mathrm{CoI}_p^{\text{perc}}(R)\),  
    where \(\mathrm{CoI}_p^{\text{perc}}(R)\) is the perceptual cost (sensing power for rate \(R\)).
    \item \textbf{Expressive pragmatic Lagrangian}:  
    \(\displaystyle \mathcal{L}_{\mathrm{expr}}(\pi, R; \lambda) = \mathbb{E}_{S,\underline{Y}}[U(S, \pi(\underline{Y}))] - \lambda \, \mathrm{CoI}_p^{\text{expr}}(R)\),  
    where \(\mathrm{CoI}_p^{\text{expr}}(R)\) is the expressive cost (transmit power for rate \(R\)).
\end{itemize}
The optimality conditions for each case are identical to (\ref{eq:static-policy-opt}) and (\ref{eq:static-rate-opt}) with the corresponding CoI function.
\end{corollary}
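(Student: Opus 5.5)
The plan is to reduce Corollary~\ref{cor:single-sided-lagrangian} to Theorem~\ref{thm:static-joint} by noticing that its proof never used the specific form of $\mathrm{CoI}_p$. Only three properties were needed: it is convex, non-decreasing, and attached to the rate variable $R$ through the constraint on the observation channel. The single-sided cost functions $\mathrm{CoI}_p^{\text{p-perc}}$ and $\mathrm{CoI}_p^{\text{p-expr}}$ have exactly these properties.

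The first step establishes this for each single-sided cost. By the duality relations of Section~\ref{subsec:single-sided-cost-value}, each is the inverse of the corresponding achievable rate $C_p^{\text{p-perc}}(P)$ or $C_p^{\text{p-expr}}(P)$. Each achievable rate is non-decreasing and concave in $P$, by the same mixture argument as in Lemma~\ref{lem:capacity-concavity}, with $I^p(\underline{W};V)$ or $I^p(W;\underline{V})$ in place of $I^p(\underline{X};\underline{Y})$. The point to check is concavity of these single-sided up informations in the relevant conditional distribution under a linear cost constraint. For that I would use the decompositions $I^p(\underline{W};V)=H(W)+H(V)-H(\underline{W},V)$ and $I^p(W;\underline{V})=H(W)+H(V)-H(W,\underline{V})$, exactly as the bilateral case was treated.

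Lemma~\ref{lem:inverse-convex} then gives convexity of each single-sided cost, as in Theorem~\ref{thm:coi-convex}. In the Gaussian case of Section~\ref{sec:pragmatic_gaussian_capacity} this is also visible directly, since $\sigma^2(2^{2R}/\Omega^2-1)$ is explicitly convex and increasing.

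The second step repeats the argument of Theorem~\ref{thm:static-joint} with $\mathrm{CoI}_p$ replaced by the single-sided cost. For fixed $R$ and fixed observation channel, the cost term does not depend on $\pi$. Maximizing over $\pi$ is therefore pointwise in $\underline{Y}$ and gives the Bayes rule (\ref{eq:static-policy-opt}) unchanged; this is the same argument as the non-negativity theorem for $\mathrm{VoI}_p$. Substituting $\pi^*$ leaves a one-dimensional problem in $R$: a concave value term $\Phi_p$ (or its upper concave envelope, by Lemma~\ref{lem:concave_envelope}) minus $\lambda$ times a convex cost. Theorem~\ref{thm:global-lagrangian-concavity} shows this is concave, and its corollary makes the stationarity condition (\ref{eq:static-rate-opt}), with the single-sided derivative, necessary and sufficient at interior points.

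The main obstacle is the value term. In the perceptual and expressive settings the rate constraint is on $I_p(\underline{W};V)$ or $I_p(W;\underline{V})$, not on the bilateral $I_p$. So I must check that the maximal expected utility is still concave in $R$ under that constraint. I would first argue this by time-sharing between two observation channels: the achievable (rate, utility) region is convex under mixtures provided the constraining mutual information is convex in the conditional distribution. That holds because PPMI and EPMI are ordinary Shannon mutual informations, as shown in (\ref{eq:ppmi}) and (\ref{eq:epmi}). If this fails at boundary points, I would replace the value term by its concave envelope as in Lemma~\ref{lem:concave_envelope}, so the optimality conditions still hold for the relaxed problem.
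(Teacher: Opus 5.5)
Your reduction is the right one and matches the paper. The corollary has no separate proof there: it is Theorem~\ref{thm:static-joint} with the cost swapped. The cost term is independent of $\pi$, so (\ref{eq:static-policy-opt}) is unchanged, and (\ref{eq:static-rate-opt}) is stationarity of a concave-minus-convex function. Convexity of the single-sided costs is simply taken from the assertions in Section~\ref{subsec:single-sided-cost-value}.

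The trouble is in the two auxiliary arguments you add to justify those hypotheses; as written, both would fail.

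\emph{The capacity mixture argument.} The mixture argument of Lemma~\ref{lem:capacity-concavity} works because its variable is the \emph{input} distribution with the channel held fixed. In $C_p^{\text{p-perc}}$ and $C_p^{\text{p-expr}}$ the variable is a \emph{conditional} distribution, and mutual-information-type objectives are convex, not concave, in the channel. When $V$ depends on $W$ only through $\underline{W}$, your own decomposition gives $I^p(\underline{W};V)=H(W|\underline{W})+I(\underline{W};V)$. This is convex in $p(v|\underline{w})$, so the mixture inequality runs the wrong way. Moreover, the constraint $\mathbb{E}[c(\underline{W})]\le P$ (resp.\ $\mathbb{E}[c(W)]\le P$) does not involve the optimization variable at all. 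Mixing therefore cannot trade cost for rate, and the achievable rate is not strictly increasing in $P$, which Lemma~\ref{lem:inverse-convex} requires. Convexity of the single-sided CoI cannot be derived this way. State it as a hypothesis, exactly as Theorem~\ref{thm:static-joint} assumes convexity of $\mathrm{CoI}_p$; the Gaussian form $\sigma^2(2^{2R}/\Omega^2-1)$ satisfies it.

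\emph{The value term.} Convexity of $I(\underline{W};V)$ in the channel controls only the rate coordinate. The Bayes-optimal utility $\max_\pi\mathbb{E}[U(S,\pi(V))]$ is a pointwise maximum of functionals linear in the channel, hence \emph{convex} in it. A plain mixture of two observation channels hides which component acted and can give utility strictly below $\theta U_1+(1-\theta)U_2$. For example, an equal mixture of a noiseless binary channel and its label-swapped copy is useless. To make the achievable (rate, utility) region convex you need one of two devices:
\begin{itemize}
\item a revealed time-sharing flag, i.e.\ observing $(Q,V)$ with $Q$ independent of the source, which makes both coordinates exactly affine in $\theta$;
\item the reduction to action-valued observations $A=\pi(V)$ (when the observation alphabet can carry an action label), for which utility is linear in $P(a|\underline{w})$ and $I(\underline{W};A)\le I(\underline{W};V)$ by data processing.
\end{itemize}
Your fallback via Lemma~\ref{lem:concave_envelope} is legitimate. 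But then (\ref{eq:static-rate-opt}) characterizes the optimum of the convexified problem, not necessarily of the original one.
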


\begin{example}[Linear Quadratic Gaussian (LQG) with Perceptual Cost]
Consider a scalar state \(S \sim \mathcal{N}(0, \sigma_s^2)\), quadratic utility \(U(S,A) = -(S-A)^2\), and perceptual cost \(\mathrm{CoI}_p^{\text{perc}}(R) = \sigma^2 (2^{2R}/\Omega^2 - 1)\) with isoteleic volume \(\Omega\). The Lagrangian \(\mathcal{L}_{\mathrm{perc}}(P_{\text{tx}}; \lambda) = -\sigma_s^2 / (1 + \Omega^2 P_{\text{tx}}/\sigma^2) - \lambda P_{\text{tx}}\) yields the optimal power
\begin{equation}
P_{\text{tx}}^* = \max\left\{ \sigma^2 \left( \frac{\sqrt{\sigma_s^2}}{\Omega \sqrt{\lambda}} - 1 \right), 0 \right\}.
\end{equation}
This closed-form solution illustrates the trade-off between control performance and communication cost: the optimal power decreases as the shadow price \(\lambda\) increases, and the pragmatic abstraction \(\Omega\) amplifies the effective signal-to-noise ratio.
\end{example}

\begin{example}[Gaussian Source over Gaussian Channel: Closed-Form Bilateral Pragmatic Lagrangian]
\label{ex:gaussian-bilateral-lagrangian}
We now instantiate Theorem~\ref{thm:static-joint} for the canonical case of a Gaussian source transmitted over an AWGN channel, where both the source and the observation are coarsened pragmatically (bilateral abstraction). This provides a closed-form verification of the static joint optimization conditions.
\end{example}

Let the state (source) be \(X \sim \mathcal{N}(0, P)\) and the pragmatic observation be \(\underline{Y}\) obtained through an AWGN channel with noise variance \(\sigma^2\) and transmit power \(P_{\text{tx}}\). The isoteleic volume \(\Omega \ge 1\) captures the pragmatic equivalence. From Section~\ref{subsec:gaussian-joint-analysis}, the bilateral pragmatic value and cost functions are (cf.~\eqref{eq:gaussian-voi} and~\eqref{eq:gaussian-coi}):
\begin{align}
\Phi_p(R) &= P\left(1 - \frac{2^{-2R}}{\Omega^4}\right), \label{eq:ex-phi} \\
\mathrm{CoI}_p(R) &= \sigma^2\left(\frac{2^{2R}}{\Omega^4} - 1\right), \label{eq:ex-coi}
\end{align}
where \(R\) is the pragmatic rate (prabits per channel use). Substituting into the global Lagrangian of Theorem~\ref{thm:static-joint} (after maximizing over the policy \(\pi\), which yields \(\Phi_p(R)\)), we obtain the explicit Lagrangian:
\begin{equation}
\mathcal{L}(R) = \Phi_p(R) - \lambda \, \mathrm{CoI}_p(R)
= P\left(1 - \frac{2^{-2R}}{\Omega^4}\right) - \lambda \sigma^2\left(\frac{2^{2R}}{\Omega^4} - 1\right).
\label{eq:ex-lagrangian}
\end{equation}
Equivalently, symmetrically:
\begin{equation}
\mathcal{L}(R) = (P + \lambda \sigma^2) - \frac{1}{\Omega^4}\left( P \cdot 2^{-2R} + \lambda \sigma^2 \cdot 2^{2R} \right).
\label{eq:ex-lagrangian-sym}
\end{equation}

\begin{figure}[htbp]
    \centering
    \subfloat[Net benefit maximization: \(\Phi_p(R)\), \(\lambda\mathrm{CoI}_p(R)\), and \(\mathcal{L}(R)\). Parameters: \(\Omega=2\), \(\lambda=0.5\), \(P/\sigma^2=1\).]
    {
        \includegraphics[height=0.3\textheight, keepaspectratio]{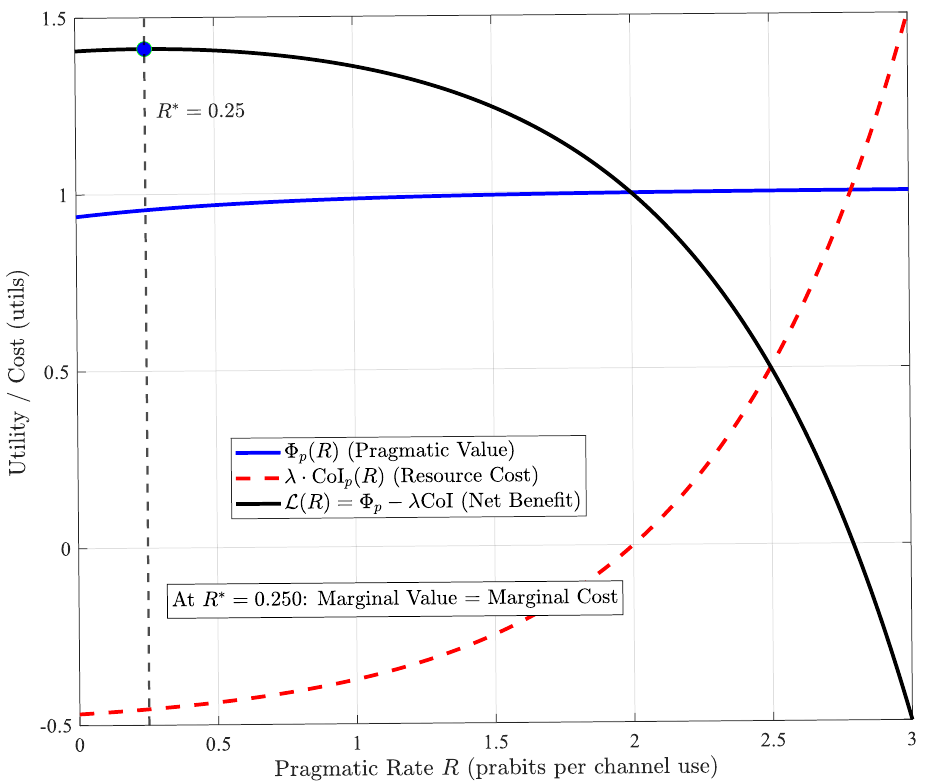}
        \label{fig:pragmatic_Lagrangian}
    }
    \hfill
    \subfloat[Comparison of \(\mathcal{L}(R)\) for different \(\Omega\). Parameters: \(P/\sigma^2=1\), \(\lambda=0.5\).]
    {
        \includegraphics[height=0.3\textheight, keepaspectratio]{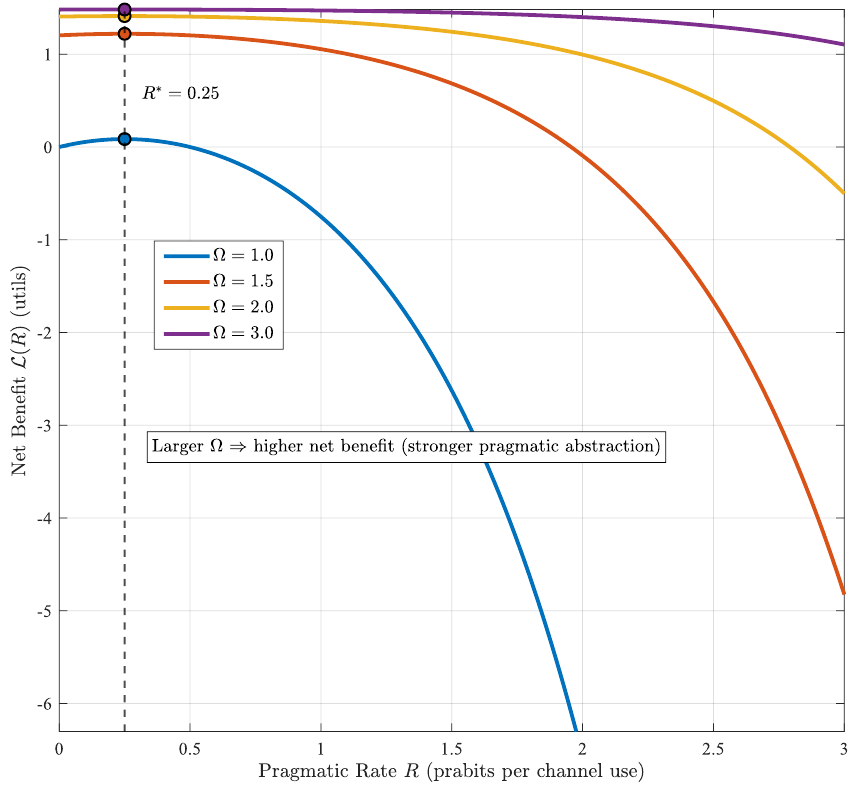}
        \label{fig:net_benefit_vs_Omega}
    }
    \caption{Global pragmatic Lagrangian and the effect of pragmatic abstraction.}
    \label{fig:lagrangian_combined}
\end{figure}

Figure~\ref{fig:lagrangian_combined} illustrates the global Lagrangian structure. Subfigure (a) shows the pragmatic value, the weighted cost, and the resulting net benefit, with the optimal rate \(R^*\) marked where marginal value equals marginal cost. Subfigure (b) demonstrates that increasing \(\Omega\) uniformly shifts the net benefit curve upward while leaving \(R^*\) unchanged, confirming that pragmatic abstraction amplifies the effective signal-to-noise ratio without altering the optimal communication rate.

Applying the first-order optimality condition from Theorem~\ref{thm:static-joint} (\(\mathcal{L}'(R^*) = 0\)):
\begin{equation}
\frac{d\mathcal{L}}{dR} = \frac{2\ln 2}{\Omega^4}\left( P \cdot 2^{-2R^*} - \lambda \sigma^2 \cdot 2^{2R^*} \right) = 0,
\label{eq:ex-first-order}
\end{equation}
we obtain the optimal rate:
\begin{equation}
R^* = \frac{1}{4} \log_2\left( \frac{P}{\lambda \sigma^2} \right)^+,
\label{eq:ex-optimal-rate}
\end{equation}
where \((x)^+ = \max\{x, 0\}\) ensures non-negativity. The maximized Lagrangian (net benefit) is then:
\begin{equation}
\mathcal{L}^* = \mathcal{L}(R^*) = P + \lambda \sigma^2 - \frac{2\sqrt{P \lambda \sigma^2}}{\Omega^4},
\label{eq:ex-max-lagrangian}
\end{equation}
provided \(P > \lambda \sigma^2\); otherwise \(R^* = 0\) and \(\mathcal{L}^* = 0\) (the system opts not to communicate).

This maximized net benefit is precisely the behavioral capacity \(\mathcal{E}_p(\lambda)\) defined in Definition~\ref{def:pragmatic-efficiency-bound}. Substituting \(\lambda = (P/\sigma^2) 2^{-4R^*}\) into \eqref{eq:ex-max-lagrangian} yields the behavioral capacity as a function of the optimal rate:
\begin{equation}
\mathcal{E}_p(R^*) = P \left( 1 + 2^{-4R^*} - \frac{2 \cdot 2^{-2R^*}}{\Omega^4} \right), \quad R^* \ge 0.
\label{eq:ex-Ep-vs-R}
\end{equation}
Equation \eqref{eq:ex-max-lagrangian} and \eqref{eq:ex-Ep-vs-R} explicitly characterize the behavioral capacity in terms of the shadow price and the optimal communication rate, respectively.

\begin{figure}[htbp]
    \centering
    \subfloat[Behavioral capacity \(\mathcal{E}_p\) vs. optimal pragmatic rate \(R^*\). Larger \(\Omega\) yields higher capacity, saturating at \(P\). Parameters: \(P=1\), \(\sigma^2=1\).]{
        \includegraphics[height=0.3\textheight, keepaspectratio]{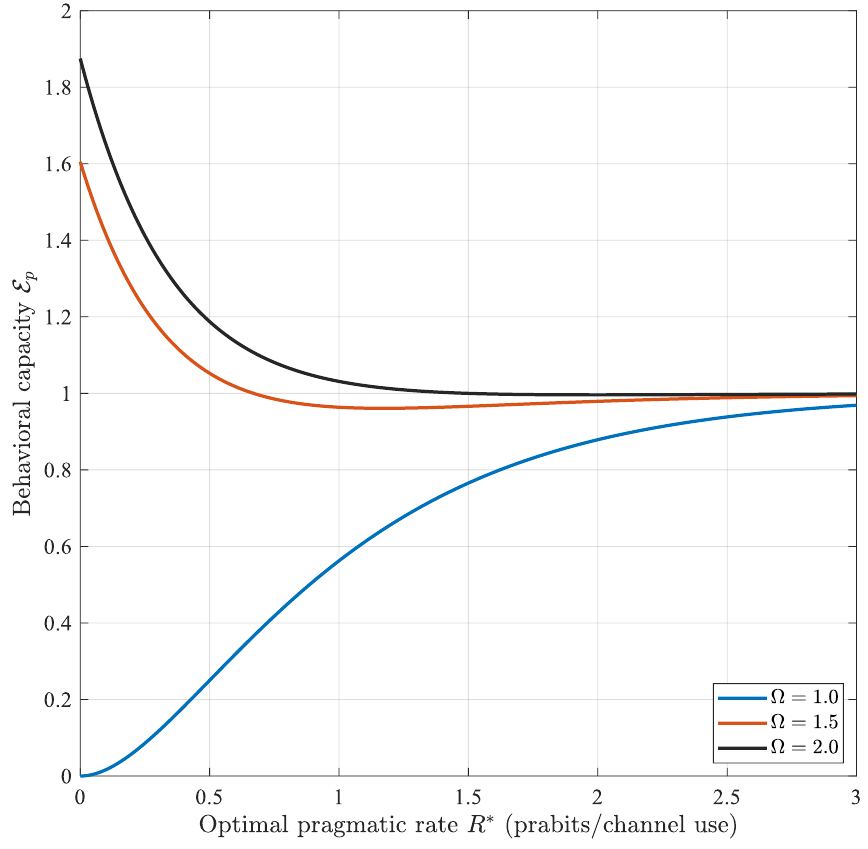}
        \label{fig:Ep-vs-Rstar}
    }
    \hfill
    \subfloat[Behavioral capacity \(\mathcal{E}_p\) vs. shadow price \(\lambda\). Capacity decreases with \(\lambda\) and vanishes at \(\lambda=P/\sigma^2\). Larger \(\Omega\) reduces sensitivity. Parameters: \(P=1\), \(\sigma^2=1\).]{
        \includegraphics[height=0.3\textheight, keepaspectratio]{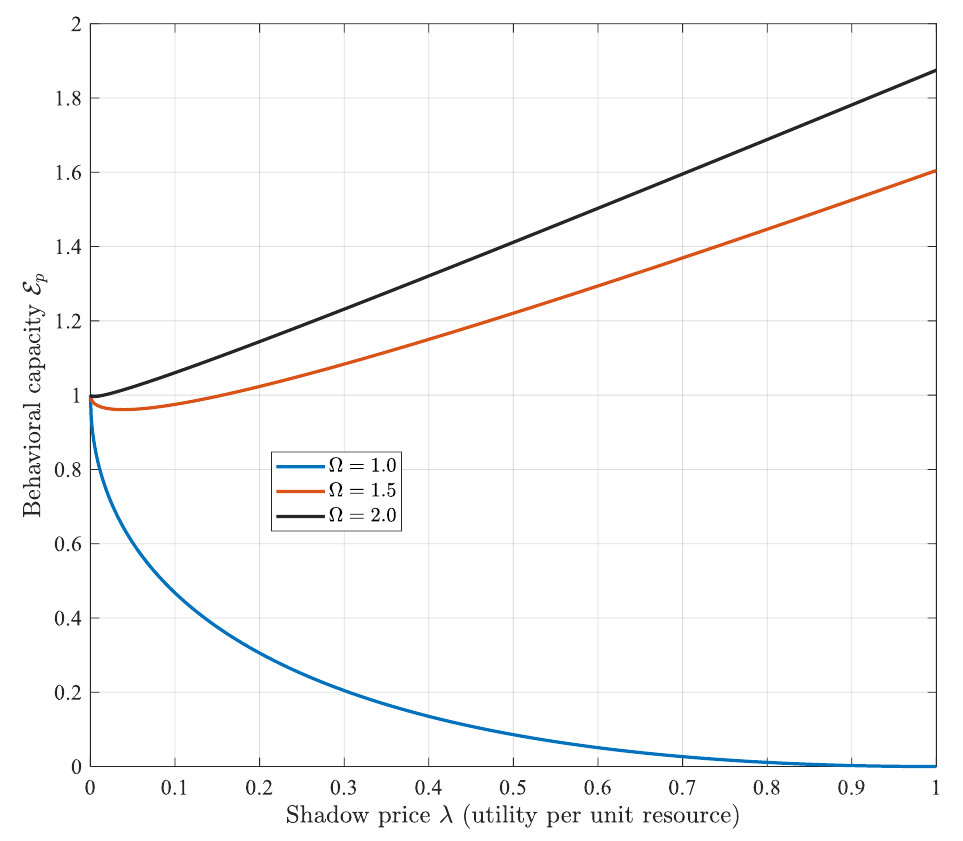}
        \label{fig:Ep-vs-lambda}
    }
    \caption{Behavioral capacity as a function of optimal rate and resource shadow price.}
    \label{fig:Ep_combined}
\end{figure}

Figure~\ref{fig:Ep_combined} further characterizes the behavioral capacity. Subfigure (a) shows how the capacity increases with the invested communication rate, with diminishing returns and saturation at the perfect-information utility \(P\). A larger \(\Omega\) shifts the curve upward, indicating that the same rate yields higher net utility under stronger pragmatic abstraction. Subfigure (b) reveals the decay of capacity as resources become more expensive; the zero-capacity threshold marks the point where communication ceases. Larger \(\Omega\) flattens the decline, demonstrating that abstraction provides robustness against rising resource costs.

This closed-form solution explicitly confirms the two optimality conditions of Theorem~\ref{thm:static-joint}:
\begin{enumerate}
    \item The optimal policy \(\pi^*\) is the Bayes decision rule (implicit in the definition of \(\Phi_p(R)\));
    \item The marginal value equals the marginal cost: \(\Phi_p'(R^*) = \lambda \, \mathrm{CoI}_p'(R^*)\), which is precisely the balance expressed in \eqref{eq:ex-first-order}.
\end{enumerate}

Moreover, the expression \eqref{eq:ex-max-lagrangian} reveals the four-way trade-off among source power \(P\), channel noise \(\sigma^2\), resource shadow price \(\lambda\), and pragmatic abstraction \(\Omega\). When \(\Omega = 1\), it recovers the classical Gaussian benchmark; when \(\Omega > 1\), the effective SNR is amplified by \(\Omega^4\) on both sides, drastically reducing the power needed to achieve a given net benefit. 

\subsubsection{Dynamic Joint Optimization: Sequential Decision Making}
\label{subsubsec:dynamic-joint}
\leavevmode\newline\indent
In dynamic systems, the state evolves over time according to a controlled Markov process, and the decision-maker must balance immediate rewards against future consequences. The system is modeled as a Partially Observable Markov Decision Process (POMDP) with communication constraints. At each time step \(t\), the system state \(S_t\) evolves as \(S_{t+1} \sim P(\cdot|S_t, A_t)\), and the decision-maker receives a pragmatic observation \(\underline{Y}_t\) about \(S_t\) at a rate \(R_t\) that incurs a cost \(\mathrm{CoI}_p(R_t)\). The goal is to find a policy \(\pi = \{\pi_t\}_{t\ge 0}\) mapping the history of pragmatic observations to actions, maximizing the expected discounted sum of utilities subject to a total communication resource budget.

\begin{theorem}[Dynamic Joint Optimization for POMDP]
\label{thm:dynamic-joint}
For an infinite-horizon discounted POMDP with state space \(\mathcal{S}\), action space \(\mathcal{A}\), observation space \(\mathcal{Y}\), transition kernel \(P(s'|s,a)\), observation kernel \(P(y|s', R)\) (where the communication rate \(R\) affects the observation quality), utility \(U(s,a)\), discount factor \(\gamma \in (0,1)\), and pragmatic communication cost \(\mathrm{CoI}_p(R)\) per step, the Lagrangian relaxation of the constrained expected discounted utility maximization
\begin{equation}
\max_{\pi, \{R_t\}} \; \mathbb{E}\left[ \sum_{t=0}^{\infty} \gamma^t U(S_t, A_t) \right] \quad \text{subject to} \quad \sum_{t=0}^{\infty} \gamma^t \mathrm{CoI}_p(R_t) \le C_{\text{total}}
\end{equation}
yields the \emph{dynamic pragmatic Lagrangian}
\begin{equation}
\mathcal{L}_{\mathrm{dyn}}(\pi, \{R_t\}; \lambda) = \mathbb{E}\left[ \sum_{t=0}^{\infty} \gamma^t \left( U(S_t, A_t) - \lambda \, \mathrm{CoI}_p(R_t) \right) \right],
\label{eq:dynamic-lagrangian-thm}
\end{equation}
where \(\lambda > 0\) is the Lagrange multiplier for the total resource constraint. 

Let \(b \in \Delta(\mathcal{S})\) denote the belief state (the posterior distribution over the state given the history of actions and observations). The optimal policy and rate schedule satisfy the Bellman equation \cite{Bellman1954} in belief-state form:
\begin{equation}
\label{eq:bellman-belief}
V(b) = \max_{a, R \ge 0} \left\{
\sum_{s} b(s) U(s,a) - \lambda \, \mathrm{CoI}_p(R)
+ \gamma \sum_{s, s', y} b(s) P(s'|s,a) P(y|s', R) \, V(b_{y}')
\right\},
\end{equation}
where \(b_{y}'\) is the updated belief state after observing \(y\) (and taking action \(a\) with rate \(R\)):
\begin{equation}
b_{y}'(s') = \frac{P(y|s', R) \sum_{s} P(s'|s,a) b(s)}
 {\sum_{\tilde{s}} P(y|\tilde{s}, R) \sum_{s} P(\tilde{s}|s,a) b(s)}.
\label{eq:belief-update}
\end{equation}

The optimal action \(a^*\) and optimal rate \(R^*\) at belief \(b\) are characterized by the following first-order conditions:
\begin{align}
a^* &= \arg\max_a \left\{ \sum_s b(s) U(s,a) + \gamma \sum_s b(s) \sum_{s'} P(s'|s,a) \sum_y P(y|s', R^*) \, V(b_{y}') \right\}, \label{eq:dynamic-action-opt} \\
\lambda \, \mathrm{CoI}_p'(R^*) &= \frac{\partial}{\partial R} \left( \gamma \sum_s b(s) \sum_{s'} P(s'|s,a^*) \sum_y P(y|s', R) \, V(b_{y}') \right) \Bigg|_{R=R^*}, \label{eq:dynamic-rate-opt}
\end{align}
where \(b_{y}'\) is given by (\ref{eq:belief-update}) and depends on \(a^*\) and \(R\). The derivative in (\ref{eq:dynamic-rate-opt}) accounts for the improvement in future value due to better state estimation (sharper belief) resulting from a higher rate \(R\).
\end{theorem}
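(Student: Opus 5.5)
The argument is standard constrained dynamic programming, transported to belief space. I first rewrite the constrained problem as its Lagrangian. For a fixed multiplier $\lambda>0$, the budget constraint $\sum_t\gamma^t\mathrm{CoI}_p(R_t)\le C_{\text{total}}$ is a discounted expectation of a per-stage cost. Moving it into the objective therefore gives an unconstrained discounted problem with per-stage reward
\[
r(s,a,R)=U(s,a)-\lambda\,\mathrm{CoI}_p(R).
\]
This is exactly \eqref{eq:dynamic-lagrangian-thm}. Weak duality holds at once, because any feasible policy has Lagrangian value at least its utility minus $\lambda C_{\text{total}}$. Strong duality, together with the existence of a $\lambda$ that attains the constraint, I would take from standard constrained-MDP theory (Altman-type results). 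The hypotheses are bounded $U$, convex $\mathrm{CoI}_p$ (Theorem~\ref{thm:coi-convex}), and rates restricted to a compact interval $[0,C_p]$. Beyond $C_p$ the cost is infinite by Theorem~\ref{theorem:Pragmatic-Channel-Coding-Theorem}, which justifies this restriction.

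Next I reduce the POMDP to a fully observed MDP on $\Delta(\mathcal{S})$. The Markov Axiom \eqref{eq:markov-axiom} makes the belief $b_t$ a sufficient statistic. By Bayes' rule, the history-to-belief map is updated exactly by \eqref{eq:belief-update}. The expected stage reward given $b$ is $\sum_s b(s)U(s,a)-\lambda\,\mathrm{CoI}_p(R)$, and the next-belief kernel is $\sum_{s,s'}b(s)P(s'|s,a)P(y|s',R)$. I treat the pair $(a,R)$ as a joint control in this belief MDP.

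I then define the Bellman operator $T$ by the right-hand side of \eqref{eq:bellman-belief}. Two facts show $T$ is a $\gamma$-contraction in the sup norm on bounded functions: $\sum_y$ of the predictive kernel sums to one, and the reward is bounded on the compact control set. Banach's fixed-point theorem then gives a unique $V$. A standard verification argument shows that $V$ equals the optimal Lagrangian value and that any measurable maximizer selector is optimal. This step needs continuity of $P(y|s',R)$ in $R$ so that the maximum is attained (a measurable selection).

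The first-order conditions are the main obstacle. Condition \eqref{eq:dynamic-action-opt} is immediate: it is the maximization over $a$ at $R=R^*$, read off the Bellman equation. Condition \eqref{eq:dynamic-rate-opt} requires differentiating
\[
G(R)=\gamma\sum_{s,s',y} b(s)P(s'|s,a^*)P(y|s',R)V(b'_y)
\]
in $R$. This requires that $P(y|s',R)$ be differentiable in $R$ and that $V$ be differentiable, or at least Lipschitz, in $b$. The cleanest route is to use the fact that $V$ is convex in $b$ as a fixed point of an operator preserving convexity (Sondik's argument). Then $G$ has one-sided derivatives, and the first-order condition reads $0\in\partial_R[G(R)-\lambda\mathrm{CoI}_p(R)]$. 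It becomes the stated equality at an interior maximizer where both sides are differentiable, while at $R^*=0$ it becomes the inequality $\lambda\,\mathrm{CoI}_p'(0)\ge G'(0^+)$. I would state these regularity hypotheses explicitly. Since the $R$-subproblem need not be concave, the condition in \eqref{eq:dynamic-rate-opt} is necessary, not sufficient.
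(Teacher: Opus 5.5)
Your proposal is correct and follows essentially the same route as the paper's sketch: pass to the belief MDP with augmented reward $U(s,a)-\lambda\,\mathrm{CoI}_p(R)$, show the Bellman operator is a $\gamma$-contraction because the predictive kernel sums to one, apply Banach's fixed-point theorem, then read off the action condition and differentiate the right-hand side in $R$. Your extra steps are sound and more careful than the paper's appeal to ``concavity of the value function'' as a regularity condition: the duality discussion, attainment via compactness and measurable selection, and Sondik's convexity of $V$ in $b$ to justify one-sided derivatives, which makes the rate condition only necessary and turns it into $\lambda\,\mathrm{CoI}_p'(0)\ge G'(0^+)$ at $R^*=0$.
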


\begin{proof}[Sketch of Proof]
The proof follows from the principle of optimality applied to the belief-state MDP, which is equivalent to the original POMDP. Before applying the Bellman equation, we first establish its well-definedness. Define the Bellman operator $\mathcal{T}$ on the space of bounded continuous functions on $\Delta(\mathcal{S})$ by
\[
(\mathcal{T}V)(b) \triangleq \max_{a, R \ge 0} \left\{
\sum_{s} b(s) U(s,a) - \lambda \, \mathrm{CoI}_p(R)
+ \gamma \sum_{s,s',y} b(s) P(s'|s,a) P(y|s',R) \, V(b_y')
\right\}.
\]
For any two bounded functions $V_1, V_2$ and any belief $b$, the transition kernels sum to $1$, which gives
\[
|(\mathcal{T}V_1)(b) - (\mathcal{T}V_2)(b)|
\le \gamma \sup_{a,R} \sum_{s,s',y} b(s) P(s'|s,a) P(y|s',R) \, |V_1(b_y') - V_2(b_y')|
\le \gamma \|V_1 - V_2\|_\infty.
\]
Taking the supremum over $b$ yields $\|\mathcal{T}V_1 - \mathcal{T}V_2\|_\infty \le \gamma \|V_1 - V_2\|_\infty$. Since $\gamma < 1$, $\mathcal{T}$ is a contraction. By the Banach fixed-point theorem, there exists a unique fixed point $V^*$ in the space of bounded continuous functions, ensuring that the belief-state Bellman equation (\ref{eq:bellman-belief}) is well-defined. The dynamic isoteleia mapping $e_{V^*}$ defined through $V^*$ is then uniquely determined and recovers the static mapping $e$ when $\gamma = 0$.

With this established, the augmented reward process $U(S_t, A_t) - \lambda \mathrm{CoI}_p(R_t)$ leads to the belief-state Bellman equation (\ref{eq:bellman-belief}), where the expectation over the next belief accounts for both state transition and observation likelihood. The optimality conditions (\ref{eq:dynamic-action-opt}) and (\ref{eq:dynamic-rate-opt}) are obtained by differentiating the right-hand side of (\ref{eq:bellman-belief}) with respect to $a$ and $R$, respectively, under suitable regularity conditions (e.g., concavity of the value function and convexity of $\mathrm{CoI}_p$). The coupling between action and rate appears through the belief update, as both influence the future belief distribution.
\end{proof}

To avoid a conceptual gap between the static hierarchical framework (Section \ref{section_II}--\ref{section_IX}) and the dynamic POMDP formulation, we emphasize that the pragmatic variable $\underline{S}$ appearing in the Bellman equation (\ref{eq:belief-update}) is induced by the \emph{dynamic} isoteleia mapping $e_V$ defined in Definition \ref{definition:Dynamic-Isoteleia-Mapping}. The resulting pragmatic equivalence classes are therefore functions of the future value function $V$. In the steady-state case, $e_V$ collapses to the static mapping $e$ used in the coding theorems, ensuring that the hierarchical entropy inequalities remain valid as limiting bounds.

\begin{remark}[Interpretation and Design Insights]
The dynamic Lagrangian framework provides several key insights for system design:
\begin{enumerate}
    \item \textbf{Adaptive Resource Allocation}: The optimal rate \(R^*\) varies with the belief state, allocating more resources when the belief is diffuse (high uncertainty) or when future rewards are particularly sensitive to estimation accuracy. This yields a \emph{value-of-information-guided} scheduling policy.
    \item \textbf{Belief-Dependent Optimization}: Unlike the original erroneous formulation, the correct belief-state Bellman equation explicitly incorporates the observation model and the impact of \(R\) on the quality of future beliefs. The action and rate decisions are coupled through the belief update, reflecting the inherent trade-off between control and sensing/communication.
    \item \textbf{Connection to CoI and VoI}: The term \(\lambda \mathrm{CoI}_p(R)\) represents the instantaneous communication cost, while the future value \(V(b_{y}')\) encodes the long-term utility gain from improved state information. The optimal policy balances immediate and future value, mirroring the economic principle of marginal benefit equals marginal cost.
\end{enumerate}
\end{remark}

\begin{example}[Linear Quadratic Gaussian (LQG) with Linear Communication Cost]
Consider a scalar linear system \(S_{t+1} = a S_t + b A_t + W_t\) with \(W_t \sim \mathcal{N}(0, \sigma_w^2)\), quadratic utility \(U(S_t,A_t) = -q S_t^2 - r A_t^2\), and communication cost \(\mathrm{CoI}_p(R_t) = c R_t\) (linear in rate). Assume Gaussian observations with precision proportional to \(R_t\), i.e., \(Y_t = S_t + Z_t\), \(Z_t \sim \mathcal{N}(0, \sigma^2/R_t)\). The belief state is Gaussian, characterized by its mean and variance. The optimal control law remains linear in the estimated state (separation principle holds for LQG with quadratic cost and Gaussian noise). The optimal rate \(R^*\) is constant over time when the system is stationary, and can be found by solving a Riccati equation augmented with the rate penalty. The steady-state rate decreases with the shadow price \(\lambda\) and depends on the system dynamics through the controllability and observability parameters. This example demonstrates how the dynamic Lagrangian framework provides a tractable method for co-designing communication and control in practical systems.
\end{example}

The dynamic joint optimization framework completes the pragmatic information-theoretic spectrum, providing a unified mathematical tool for designing task-oriented communication and control systems that optimally trade off information value, physical resource consumption, and long-term performance.

\section{Conclusions}
\label{section_XII}

\subsection{Summary of Contributions}

We have established a mathematical theory of pragmatic information that connects communication, control, and decision-making within a coherent framework. The theory is built upon the isoteleia mapping $e:\underline{\mathcal{W}}\to 2^{\tilde{\mathcal{W}}}$, which formalizes equifinality---the principle that distinct semantic paths converging to the same optimal action are pragmatically equivalent. This induces a three-tier hierarchy of syntactic, semantic, and pragmatic information, with each successive abstraction discarding task-irrelevant distinctions.

The main contributions are threefold. First, we have developed a complete set of information-theoretic measures at the pragmatic level---entropy $H_p(\underline{W})$, up/down mutual information $I^p(\underline{X};\underline{Y})$ and $I_p(\underline{X};\underline{Y})$, channel capacity $C_p$, and rate-distortion function $R_p(D)$---and established the fundamental hierarchies $H_p \le H_s \le H$, $C_p \ge C_s \ge C$, and $R_p \le R_s \le R$. Second, we have introduced pragmatic value of information (VoI) and pragmatic cost of information (CoI) as decision-theoretic and economic duals to rate-distortion and capacity, respectively, and formulated a Lagrangian dual framework for cross-layer optimization. The pragmatic efficiency functional $\mathcal{E}_p(\lambda) = \sup_R[\Phi_p(R) - \lambda\,\mathrm{CoI}_p(R)]$ defines a behavioral capacity for resource-constrained intelligent systems under the stated assumptions. Third, we have proved three coding theorems---lossless source coding, channel coding, and rate-distortion coding---that generalize Shannon's classical results and their semantic counterparts. We have further extended the framework to continuous messages, deriving closed-form Gaussian expressions, and formulated joint optimization of communication, control, and decision-making in both static and dynamic settings.

\subsection{Engineering Implications and Deployment Pathways}

The pragmatic information theory developed in this paper may have implications for a range of engineering disciplines and applications. By shifting the focus from symbol fidelity to the effectiveness of information in guiding actions, the framework offers a mathematical language for task-oriented communication, networked control, autonomous systems, and embodied AI, and suggests conceptual directions for human-machine interaction and social-cognitive systems. The Value of Information (VoI) and Cost of Information (CoI) offer a unified economic and decision-theoretic basis for resource allocation, enabling system designers to balance utility gains against physical resource consumption. The Lagrangian dual framework provides a principled approach for cross-layer optimization, bridging the gap between information-theoretic limits and engineering practice. The framework also offers new perspectives on emerging challenges such as semantic communication, edge AI, intelligent transportation, and human learning, where information must be evaluated by its effectiveness in achieving goals rather than its syntactic accuracy.

Looking toward engineering practice, a critical path for deploying the proposed theory lies in the realization of the isoteleia mapping---the core mathematical object that groups semantically distinct signals into pragmatically equivalent action classes. In real-world scenarios where the utility function $U$ is not explicitly specified, the isoteleia mapping can be acquired implicitly through large-scale data-driven learning. Specifically, deep neural networks can serve as universal approximators of the optimal action function, effectively inducing the equivalence relation from data without requiring an analytic utility model. This paradigm aligns naturally with inverse reinforcement learning and reinforcement learning from human feedback (RLHF), where the network learns to map high-dimensional observations (semantic inputs) to optimal action classes (pragmatic outputs) by minimizing task-oriented loss functions.

Furthermore, in practical deployment, the isoteleia mapping should not be a fixed, static structure. It must adapt dynamically to changing environmental contexts and evolving task objectives. For instance, in autonomous driving, the mapping from a visual scene to the action ``brake'' may depend critically on the current speed, road condition, and surrounding traffic. This context-dependent behavior suggests an extension of the isoteleia mapping to a conditional form depending on the contextual state or scene encoding. In terms of neural implementation, this can be achieved through context-gating mechanisms or conditional neural networks that modulate the mapping based on the current environment.

Equally important is the principle of determinism in decision-making. For safety-critical systems, the mapping from observation to action should be deterministic: a given input must produce a unique, consistent output. This ensures reproducibility, facilitates fault diagnosis, and aligns with the $\arg\max$ operation in the decision axiom. In practice, this can be realized by deploying deterministic policies at inference time, or by using techniques such as Gumbel-Softmax with temperature annealing to ensure that the learned mapping converges to a hard, deterministic decision boundary.

To concretize these ideas into a deployable pipeline, we envision an end-to-end pragmatic information system consisting of three functional stages: (i) an offline training phase, where a deep neural network is trained on large-scale task-specific data to approximate the conditional isoteleia mapping; (ii) a context-adaptive inference phase, where the network dynamically adjusts its decision boundaries based on real-time contextual inputs; and (iii) an online adaptation mechanism that continuously refines the mapping through reinforcement learning or online fine-tuning, enabling the system to respond to non-stationary environments. This pipeline directly addresses the practical challenges of implementing the theory in real-world applications, providing a bridge between the mathematical abstraction of the isoteleia mapping and the constraints of physical hardware and real-time computation.

\subsection{Limitations and Future Directions}

Several limitations of the current framework should be acknowledged. The theory assumes stationarity and ergodicity, which may not hold in social, economic, or biological systems where statistical properties evolve over time. The coding theorems are asymptotic in nature; finite-blocklength analyses and the corresponding dispersion bounds remain to be developed. The framework primarily addresses single-agent or centralized settings; multi-agent systems with conflicting or adversarial utilities require game-theoretic extensions. The dynamic optimization relies on belief-state Bellman equations that are computationally intractable in high dimensions, necessitating approximate solution methods. Finally, while deep neural networks offer a practical path for learning isoteleia mappings, they introduce approximation errors and generalization gaps not captured by the current theoretical guarantees.

Looking forward, several directions merit further investigation. The extension to non-stationary and non-ergodic processes calls for time-varying entropies and online adaptive coding strategies. Finite-blocklength analysis would provide dispersion bounds and practical guidelines for system design. Multi-agent and networked pragmatic information systems---including collaborative autonomous fleets, adversarial communication, and competitive economic interactions---require the development of game-theoretic equilibrium concepts and incentive-compatible protocols. The integration of deep learning with pragmatic coding demands provably efficient algorithms with finite-sample guarantees and convergence rates. Approximate dynamic programming techniques, such as deep reinforcement learning and function approximation, can be brought to bear on the high-dimensional belief-state optimization problem. Beyond these technical directions, the framework opens avenues for cross-disciplinary applications in cognitive science, linguistics, and economics, where information exchange is inherently goal-directed and context-dependent.

The mathematical theory of pragmatic information provides a unified language for goal-directed information processing under resource constraints. While its full consequences will be revealed through future theoretical and empirical advances, the framework lays a solid foundation for this emerging direction in information theory.

% that's all folks

\appendix
\subsection{Proof of Theorem \ref{thm:pragmatic-typical-properties}}
\label{app:a1_pragmatic}

\begin{proof}
Property (1) follows directly from the definition of \(\underline{A}_\epsilon^{(n)}\). By Theorem \ref{thm:three-tier-aep}, the probability of the event \(\underline{W}^n\in\underline{A}_\epsilon^{(n)}\) tends to 1 as \(n\to\infty\). Thus for any \(\epsilon>0\), there exists \(n_0\) such that for all \(n\ge n_0\), \(\Pr\{\underline{W}^n\in\underline{A}_\epsilon^{(n)}\}>1-\epsilon\), proving property (2). Summing over the set \(\underline{A}_\epsilon^{(n)}\), using property (2) and the bounds from property (1), we obtain
\begin{equation}
(1-\epsilon)2^{n(H_p(\underline{W})-\epsilon)} \le |\underline{A}_\epsilon^{(n)}| \le 2^{n(H_p(\underline{W})+\epsilon)},
\end{equation}
which proves property (3).
\end{proof}

\subsection{Proof of Theorem \ref{thm:isoteleia-typical-properties}}
\label{app:a2_isoteleia}

\begin{proof}
Property (1) follows directly from the definition of \(E_\epsilon^{(n)}(\underline{w}^n)\). To prove the left inequality of property (2), we write
\begin{align}
P(\underline{w}^n) &= \sum_{\tilde{w}^n\in E_\epsilon^{(n)}(\underline{w}^n)} P(\tilde{w}^n) \nonumber \\
&\le \sum_{\tilde{w}^n\in E_\epsilon^{(n)}(\underline{w}^n)} P(\underline{w}^n) 2^{-n(H_s(\tilde{W})-H_p(\underline{W})-\epsilon)} \nonumber \\
&= P(\underline{w}^n)\,|E_\epsilon^{(n)}(\underline{w}^n)|\,2^{-n(H_s(\tilde{W})-H_p(\underline{W})-\epsilon)}.
\end{align}
Dividing by \(P(\underline{w}^n)>0\) yields
\begin{equation}
|E_\epsilon^{(n)}(\underline{w}^n)| \ge 2^{n(H_s(\tilde{W})-H_p(\underline{W})-\epsilon)}.
\end{equation}
For the right inequality, we have
\begin{align}
P(\underline{w}^n) &= \sum_{\tilde{w}^n\in E_\epsilon^{(n)}(\underline{w}^n)} P(\tilde{w}^n) \nonumber \\
&\ge \sum_{\tilde{w}^n\in E_\epsilon^{(n)}(\underline{w}^n)} P(\underline{w}^n) 2^{-n(H_s(\tilde{W})-H_p(\underline{W})+\epsilon)} \nonumber \\
&= P(\underline{w}^n)\,|E_\epsilon^{(n)}(\underline{w}^n)|\,2^{-n(H_s(\tilde{W})-H_p(\underline{W})+\epsilon)},
\end{align}
so
\begin{equation}
|E_\epsilon^{(n)}(\underline{w}^n)| \le 2^{n(H_s(\tilde{W})-H_p(\underline{W})+\epsilon)}.
\end{equation}
This completes the proof.
\end{proof}

\subsection{Proof of Theorem \ref{theorem:Reified-Typical-Set}}
\label{app:a3_reified}

\begin{proof}
Property (1) follows directly from the definition of \(B_\epsilon^{(n)}(\underline{w}^n)\). To prove the left inequality of property (2), we write
\begin{align}
P(\underline{w}^n) &= \sum_{w^n\in B_\epsilon^{(n)}(\underline{w}^n)} P(w^n) \nonumber \\
&\le \sum_{w^n\in B_\epsilon^{(n)}(\underline{w}^n)} P(\underline{w}^n) 2^{-n(H(W)-H_p(\underline{W})-\epsilon)} \nonumber \\
&= P(\underline{w}^n)\,|B_\epsilon^{(n)}(\underline{w}^n)|\,2^{-n(H(W)-H_p(\underline{W})-\epsilon)}.
\end{align}
Dividing by \(P(\underline{w}^n)>0\) gives
\begin{equation}
|B_\epsilon^{(n)}(\underline{w}^n)| \ge 2^{n(H(W)-H_p(\underline{W})-\epsilon)}.
\end{equation}
For the right inequality,
\begin{align}
P(\underline{w}^n) &= \sum_{w^n\in B_\epsilon^{(n)}(\underline{w}^n)} P(w^n) \nonumber \\
&\ge \sum_{w^n\in B_\epsilon^{(n)}(\underline{w}^n)} P(\underline{w}^n) 2^{-n(H(W)-H_p(\underline{W})+\epsilon)} \nonumber \\
&= P(\underline{w}^n)\,|B_\epsilon^{(n)}(\underline{w}^n)|\,2^{-n(H(W)-H_p(\underline{W})+\epsilon)},
\end{align}
so
\begin{equation}
|B_\epsilon^{(n)}(\underline{w}^n)| \le 2^{n(H(W)-H_p(\underline{W})+\epsilon)}.
\end{equation}
This completes the proof.
\end{proof}

\subsection{Proof of Theorem \ref{theorem:Pragmatically-Joint-AEP-Trans}}
\label{app:a4_Pragmatic-Joint-AEP-Trans}

\begin{proof}
Properties (1)-(3) follow directly from the definition of \(\underline{A}_\epsilon^{(n)}\) and the weak law of large numbers applied to the i.i.d. pragmatic sequence pair, with the aggregated probabilities over the reified equivalence classes. For property (4), since \(\dot{X}^n\) and \(\dot{Y}^n\) are independent, we have
\[
\Pr\{(\dot{\underline{X}}^n,\dot{\underline{Y}}^n)\in\underline{A}_\epsilon^{(n)}\} = \sum_{(\underline{x}^n,\underline{y}^n)\in\underline{A}_\epsilon^{(n)}} P(\underline{x}^n)P(\underline{y}^n).
\]
Using the bounds on \(P(\underline{x}^n)\) and \(P(\underline{y}^n)\) from the entropy hierarchy \(H_p(\underline{X})\le H_s(\tilde{X})\le H(X)\) and the size bound on \(\underline{A}_\epsilon^{(n)}\), we obtain
\[
\Pr\{(\dot{\underline{X}}^n,\dot{\underline{Y}}^n)\in\underline{A}_\epsilon^{(n)}\} \le 2^{n(H_p(\underline{X},\underline{Y})+\epsilon)}2^{-n(H(X)-\epsilon)}2^{-n(H(Y)-\epsilon)} = 2^{-n(I^p(\underline{X};\underline{Y})-3\epsilon)}.
\]
Similarly,
\[
\Pr\{(\dot{\underline{X}}^n,\dot{\underline{Y}}^n)\in\underline{A}_\epsilon^{(n)}\} \ge (1-\epsilon)2^{n(H_p(\underline{X},\underline{Y})-\epsilon)}2^{-n(H(X)+\epsilon)}2^{-n(H(Y)+\epsilon)} = (1-\epsilon)2^{-n(I^p(\underline{X};\underline{Y})+3\epsilon)}.
\]
This proves property (4).
\end{proof}

\subsection{Proof of Theorem \ref{theorem:Jointly-Reified-Typical-Set}}
\label{app:a5_Jointly-Reified-Typical-Set}
\begin{proof}
Property (1) follows directly from the definition of \(B_\epsilon^{(n)}(\underline{x}^n,\underline{y}^n)\). To prove the left inequality of property (2), we write
\[
\begin{aligned}
P(\underline{x}^n,\underline{y}^n) &= \sum_{(x^n,y^n)\in B_\epsilon^{(n)}(\underline{x}^n,\underline{y}^n)} P(x^n,y^n) \\
&\le \sum_{(x^n,y^n)\in B_\epsilon^{(n)}(\underline{x}^n,\underline{y}^n)} P(\underline{x}^n,\underline{y}^n) 2^{-n(H(X,Y)-H_p(\underline{X},\underline{Y})-\epsilon)} \\
&= P(\underline{x}^n,\underline{y}^n)\,|B_\epsilon^{(n)}(\underline{x}^n,\underline{y}^n)|\,2^{-n(H(X,Y)-H_p(\underline{X},\underline{Y})-\epsilon)}.
\end{aligned}
\]
Dividing by \(P(\underline{x}^n,\underline{y}^n)>0\) yields the lower bound. The right inequality follows similarly by reversing the inequality.
\end{proof}

\subsection{Proof of Theorem \ref{theorem:pragmatically-Joint-AEP-Compr}}
\label{a6:pragmatic-AEP-Compr-proof}

\begin{proof}
Properties (1)-(3) follow directly from the definition of \(\underline{A}_\epsilon^{(n)}\) and the weak law of large numbers applied to the i.i.d. pragmatic sequence pair, with the aggregated probabilities over the reified equivalence classes. For property (4), since \(\dot{X}^n\) and \(\hat{\dot{X}}^n\) are independent, we have
\[
\Pr\{(\dot{\underline{X}}^n,\hat{\dot{\underline{X}}}^n)\in\underline{A}_\epsilon^{(n)}\} = \sum_{(\underline{x}^n,\underline{\hat{x}}^n)\in\underline{A}_\epsilon^{(n)}} P(\underline{x}^n)P(\underline{\hat{x}}^n).
\]
Using the bounds on \(P(\underline{x}^n)\) and \(P(\underline{\hat{x}}^n)\) from the entropy hierarchy \(H_p(\underline{X})\le H_s(\tilde{X})\le H(X)\) and the size bound on \(\underline{A}_\epsilon^{(n)}\), we obtain the desired exponential bounds.
\end{proof}


\begin{thebibliography}{99}
%Classic Information Theory
\bibitem{Classicpaper_Shannon}
C. E. Shannon, ``A Mathematical Theory of Communication," \emph{The Bell System Technical Journal}, vol. 27, pp. 379-423, 623-656, July, Oct., 1948.

\bibitem{Shannon_Weaver}
C. E. Shannon and W. Weaver, \emph{The Mathematical Theory of Communication}, The University of Illinois Press, 1949.

%Cybernetics Theory
\bibitem{Wiener1948}
N. Wiener, \emph{Cybernetics: Or Control and Communication in the Animal and the Machine}. 
Cambridge, MA, USA: The Technology Press, and New York, USA: John Wiley \& Sons, 1948. 
%Cambridge, MA, USA: The MIT Press, 1948.

\bibitem{Tsien1954}
H. S. Tsien, \emph{Engineering Cybernetics}.
New York, USA: McGraw-Hill, 1954.
%Englewood Cliffs, NJ, USA: Prentice-Hall, 1954.

\bibitem{Bellman1954}
R. Bellman, ``The Theory of Dynamic Programming,''
\emph{Bulletin of the American Mathematical Society}, vol. 60, pp. 503--515, 1954.

\bibitem{Kalman1960}
R. E. Kalman, ``A New Approach to Linear Filtering and Prediction Problems,''
\emph{Journal of Basic Engineering, Transactions of the ASME}, vol. 82, no. 1, pp. 35--45, Mar. 1960.

%Decision Theory
\bibitem{Paper_VoI}
R. L. Stratonovich, ``On value of information," Izvestiya of USSR Academy of Sciences, Technical Cybernetics, no. 5, pp. 3-12, 1965.

\bibitem{Book_VoI}
R. L. Stratonovich, Theory of Information and its Value. Cham, Switzerland: Springer, 2020.

\bibitem{Semantic_Weaver}
W. Weaver, ``Recent Contributions to the Mathematical Theory of Communication," \emph{ETC: A Review of General Semantics}, pp. 261-81, 1953.

%Semantic Information Theory
\bibitem{Semantic_Carnap}
R. Carnap, Y. Bar-Hillel, ``An Outline of A Theory of Semantic Information," \emph{RLE Technical Reports 247}, Research Laboratory of Electronics, Massachusetts Institute of Technology, Cambridge MA,1952.

\bibitem{Semantic_Floridi}
L. Floridi, ``Outline of A Theory of Strongly Semantic Information," \emph{Minds and machines}, vol. 14, no. 2, pp. 197-221, 2004.

\bibitem{Entropy_Luca}
A. De Luca, S. Termini, ``A Definition of A Non-probabilistic Entropy In The Setting of Fuzzy Sets Theory," \emph{Information and Control}, vol. 20, pp. 301-312, 1972.

\bibitem{Fuzzy_Luca}
A. De Luca, S. Termini, ``Entropy of L-Fuzzy Sets," \emph{Information and Control}, vol. 24, pp. 55-73, 1974.

\bibitem{Wuweiling}
W. Wu, ``General Source and General Entropy," \emph{Journal of Beijing University of Posts and Telecommunications}, vol. 5, no. 1, pp. 29-41, 1982.

\bibitem{Zhong1986}
Y.~Zhong, ``Comprehensive measure of information", \textit{Journal of Beijing University of Posts and Telecommunications}, vol.~9, no.~2, pp.~12--19, 1986.

\bibitem{Zhong1998}
——, ``Intelligence oriented comprehensive information theory——In memory of the 50th anniversary of Shannon information theory", \textit{Journal of Beijing University of Posts and Telecommunications}, vol.~21, no.~4, pp.~1--6, 1998.

\bibitem{Lu1994}
C.~Lu, ``Coding significance of generalized entropy and generalized mutual information", \textit{Journal of China Institute of Communications}, vol.~15, no.~6, pp.~37--44, 1994.

\bibitem{Lu2025}
——, ``A semantic generalization of Shannon's information theory and applications,'' \textit{Entropy}, vol.~27, no.~5, p.~461, 2025.

\bibitem{Semantic_Bao}
J. Bao, P. Basu, M. Dean, et al., ``Towards A Theory of Semantic Communication,'' IEEE Network Science Workshop, West Point, NY, USA, Jun. 2011.

\bibitem{RateD_Liu}
J. Liu, W. Zhang, and H. V. Poor, ``A Rate-Distortion Framework for Characterizing Semantic Information," 2021 IEEE International Symposium
on Information Theory (ISIT), Melbourne, Australia,2021.

\bibitem{SideInfo_Guo}
T. Guo, Y. Wang, et al., ``Semantic Compression with Side Information: A Rate-Distortion Perspective," arXiv preprint arXiv:2208.06094, 2022.

\bibitem{Theory_Shao}
Y. Shao, Q. Cao, and D. Gunduz, ``A Theory of Semantic Communication," arXiv preprint arXiv:2212.01485, 2022.

\bibitem{Theory_Tang}
J. Tang, Q. Yang, and Z. Zhang, ``Information-Theoretic Limits on Compression of Semantic Information," arXiv preprint arXiv:2306.02305, 2023.

\bibitem{Paper_SIT}
K. Niu and P. Zhang, ``A mathematical theory of semantic communication." Journal on Communications, Vol. 45, No. 6, pp. 7-59, June 2024. \url{https://www.joconline.com.cn/en/article/doi/10.11959/j.issn.1000-436x.2024111/}

\bibitem{Book_SIT}
——, \emph{The Mathematical Theory of Semantic Communication}, Springer Nature, 2025. \url{https://link.springer.com/book/10.1007/978-981-96-5132-0}

\bibitem{Paper_Beyond_Shannon}
P. Zhang , K. Niu, Z. Liang et al. ``Beyond shannon: Semantic information theory and methodology," \emph{IEEE Transactions on Network Science and Engineering}, vol. 13, pp. 8062-8079, 2026.

%Pragmatic information
\bibitem{Weizsacker1972}
E.~von Weizs\"acker and C.~von Weizs\"acker,
``Wiederaufnahme der begrifflichen Frage: Was ist Information?,''
\textit{Nova Acta Leopoldina}, vol.~37, no.~206, pp.~535--555, 1972.

\bibitem{Gernert2006}
D.~Gernert, ``Pragmatic information: Historical exposition and general overview,''
\textit{Mind and Matter}, vol.~4, no.~2, pp.~141--167, 2006.

\bibitem{Weinberger2002}
E.~D.~Weinberger, ``A theory of pragmatic information and its application to the quasi-species model of biological evolution,''
\textit{BioSystems}, vol.~66, no.~3, pp.~105--119, Aug.-Sept. 2002.

\bibitem{Weinberger2024}
E.~D.~Weinberger, ``Towards a theory of pragmatic information,'' \textit{arXiv preprint}, arXiv:2403.12324, 2024.

%Semantic Communication
\bibitem{Survey_Shi}
G. Shi, Y. Xiao, Y. Li, et al., ``From Semantic Communication to Semantic-aware Networking: Model, Architecture, and Open Problems," \emph{IEEE Communications Magazine}, vol. 59, no. 8, pp. 44-50, 2021.

\bibitem{Survey_Qin}
Z. Qin, X. Tao, J. Lu, et al., ``Semantic Communications: Principles and Challenges," arXiv preprint arXiv:2201.01389, 2021.

\bibitem{Survey_Xie}
H. Xie, Z. Qin, X. Tao, et al., ``Task-oriented Multi-user Semantic Communications," \emph{IEEE Journal on Selected Areas in Communications}, vol. 40, no. 9, pp. 2584-2597, 2022.

\bibitem{Survey_Gunduz}
D. G\"und\"uz, Z. Qin, et al., ``Beyond Transmitting Bits: Context, Semantics, and Task-oriented Communications," \emph{IEEE Journal on Selected Areas in Communications}, vol. 41, no. 1, pp. 5-41, Jan. 2023.

\bibitem{Semantic_ZhangPing}
P. Zhang, W. Xu, H. Gao, et al., ``Toward Wisdom-Evolutionary and Primitive-Concise 6G: A New Paradigm of Semantic Communication Networks," \emph{Engineering}, vol. 8, no. 1, pp. 60-73, 2022.

\bibitem{Semantic_Niu}
K. Niu, J. Dai, S. Yao, et al., ``A paradigm shift toward semantic communications," \emph{IEEE Communications Magazine}, vol. 60, no. 11, pp. 113-119, 2022.

%Communication and Control Covergence
\bibitem{Xiao2003Joint}
L. Xiao, M. Johansson, H. Hindi, S. Boyd, and A. Goldsmith,
``Joint optimization of communication rates and linear systems,''
\emph{IEEE Trans. Autom. Control}, vol. 48, no. 1, pp. 148--153, Jan. 2003.

\bibitem{Park2018WNCS}
P. Park, S. C. Ergen, C. Fischione, C. Lu, and K. H. Johansson,
``Wireless network design for control systems: A survey,''
\emph{IEEE Commun. Surveys Tuts.}, vol. 20, no. 2, pp. 978--1013, Second Quarter 2018.

\bibitem{Wang2023WNCS}
Y. Wang, S. Wu, C. Lei, J. Jiao, and Q. Zhang,
``A review on wireless networked control system: The communication perspective,''
\emph{IEEE Internet Things J.}, vol. 11, no. 5, pp. 7499--7524, Mar. 2024.

\bibitem{Zhao2019CoDesign}
G. Zhao, M. A. Imran, Z. Pang, Z. Chen, and L. Li,
``Toward real-time control in future wireless networks: Communication-control co-design,''
\emph{IEEE Commun. Mag.}, vol. 57, no. 2, pp. 138--144, Feb. 2019.

\bibitem{Negi2022Optimal}
N. Negi and A. Chakrabortty,
``Optimal co-designs of communication and control in bandwidth-constrained cyber-physical systems,''
\emph{Automatica}, vol. 142, art. no. 110288, Aug. 2022.

\bibitem{Lu2023Survey}
Z. Lu and G. Guo,
``Control and communication scheduling co-design for networked control systems: A survey,''
\emph{Int. J. Syst. Sci.}, vol. 54, no. 1, pp. 189--203, 2023.

\bibitem{Tishby2011}
N. Tishby and D. Polani, ``Information Theory of Decisions and Actions,''
in \emph{Perception-Action Cycle}, New York, NY, USA: Springer, 2011, pp. 601--636.
%Books
\bibitem{Book_RandomSet}
I. Molchanov, \emph{Theory of Random Sets}, 2nd edition. Probability Theory and Stochastic Modelling, Springer, 2017.

\bibitem{Book_Cover}
T. Cover and J. Thomas, \emph{Elements of Information Theory}, New York: Wiley, 1991.

\bibitem{Book_ElGamal}
A. El Gamal and Y. H. Kim, \emph{Network Information Theory}, Cambridge: Cambridge University Press, 2011.

\end{thebibliography}
\end{document}